\newif\iflonglong\longlongfalse
\newif\iflong\longtrue
\DeclareMathOperator{\poly}{poly}
\DeclareMathOperator{\cen}{cen}
\DeclareMathOperator{\can}{cand}
\newcommand{\Oh}{\ensuremath{\mathcal{O}}}
\newtheorem{fact}{Fact}{\upshape\itshape}{\upshape\rmfamily}
\newtheorem{property}{Property}{\bfseries}{\itshape}
\newtheorem{operation}{Operation}{\bfseries}{\itshape}
\newtheorem{clm}{Claim}{\upshape\itshape}{\upshape\rmfamily}
\crefname{red}{Reduction Rule}{Reduction Rules}
\crefname{brn}{Branching Rule}{Branching Rules}
\crefname{clm}{Claim}{Claims}
\crefname{property}{Property}{Properties}
\crefname{operation}{Operation}{Operations}
\crefname{observation}{Observation}{Observations}
\newcommand{\UCon}{\textsc{Unweighted Connectivity NWS}\xspace}
\newcommand{\Con}{\textsc{Connectivity NWS}\xspace}
\newcommand{\UStars}{\textsc{Unweighted Stars NWS}\xspace}
\newcommand{\Stars}{\textsc{Stars NWS}\xspace}
\newcommand{\W}[1]{\ensuremath{\mathrm{W}[#1]}\xspace}
\newcommand\NP{\ensuremath{\mathrm{NP}}\xspace}
\newcommand{\mc}{\mathcal{C}}
\newcommand{\problemdef}[3]{
    \begin{quote}
      \normalsize\textsc{#1} \smallskip \\
      \textbf{Input:} #2 \\
       \textbf{Question:} #3
    \end{quote}
}
\newcommand{\todomi}[1]{\todo[inline,color=green!50]{#1}}
\title{On the Complexity of Community-aware Network~Sparsification} 
\titlerunning{On the Complexity of Community-aware Network Sparsification} 
\author{Emanuel Herrendorf}{Philipps-Universität Marburg, Fachbereich Mathematik und Informatik,  Marburg, Germany}{}{}{}
\author{Christian Komusiewicz}{Friedrich-Schiller-Universität Jena, Fakultät für Mathematik und Informatik,  Jena, Germany}{c.komusiewicz@uni-jena.de}{https://orcid.org/0000-0003-0829-7032}{}
\author{Nils Morawietz}{Friedrich-Schiller-Universität Jena, Fakultät für Mathematik und Informatik,  Jena, Germany}{nils.morawietz@uni-jena.de}{https://orcid.org/0000-0002-7283-4982}{}
\author{Frank Sommer}{Friedrich-Schiller-Universität Jena, Fakultät für Mathematik und Informatik,  Jena, Germany}{frank.sommer@uni-jena.de}{https://orcid.org/0000-0003-4034-525X}{}
\authorrunning{E.~Herrendorf, C.~Komusiewicz, N.~Morawietz, and F.~Sommer}
\keywords{parameterized complexity, hypergraph support, feedback edge number, exponential-time-hypothesis} %TODO mandatory; please add comma-separated list of keywords
\begin{document}

%\title{On Community-aware Network Sparsification\thanks{Some results of this work are also contained in the first author's Master thesis~\cite{EH22}.}}
%\titlerunning{On Community-aware Network Sparsification}
%\authorrunning{E.~Herrendorf, C.~Komusiewicz, N.~Morawietz, and F.~Sommer}

%\author{Emanuel Herrendorf \and Christian Komusiewicz\orcidID{0000-0003-0829-7032} \and Nils Morawietz\thanks{Supported by the DFG, project OPERAH (KO 3669/5-1).}\orcidID{0000-0002-7283-4982}\and Frank Sommer \thanks{Supported by the DFG, project EAGR (KO 3669/6-1).}
%  \orcidID{0000-0003-4034-525X}}

%\institute{Fachbereich Mathematik und Informatik, Philipps-Universität Marburg,\\ Marburg, Germany \\ \email{\{komusiewicz,morawietz,fsommer\}@informatik.uni-marburg.de}}

\maketitle

\begin{abstract}
  Network sparsification is the task of reducing the number of edges of a given graph while preserving some crucial graph property. In community-aware network sparsification, the preserved property concerns the subgraphs that are induced by the communities of the graph which are given as vertex subsets. 
    This is formalized in the $\Pi$-\textsc{Network Sparsification} problem: given an edge-weighted graph~$G$, a collection~$\mathcal{C}$ of $c$~subsets of~$V(G)$, called communities, and two numbers~$\ell$ and~$b$,  
    the question is whether there exists a spanning subgraph~$G'$ of~$G$ with at most $\ell$~edges of total weight at most~$b$ such that~$G'[C]$ fulfills~$\Pi$ for each community~$C\in \mathcal{C}$. 
    In this work, we consider two graph properties~$\Pi$: the connectivity property and the property of having a spanning star. The corresponding problems are called \Con and \Stars. Since both problems are NP-hard, we study their parameterized and fine-grained complexity.
 
First, we provide a tight $2^{\Omega(n^2+c)} \cdot \poly(n+|\mathcal{C}|)$-time running time lower bound based on the ETH for both problems, where~$n$ is the number of vertices in~$G$. The lower bound holds even in the restricted case when all communities have size at most~4,~$G$ is a clique, and every edge has unit weight. For the connectivity property, the unit weight case with~$G$ being a clique is the well-studied problem of computing a hypergraph support with a minimum number of edges.    
We then study the complexity of both problems parameterized by the feedback edge number~$t$ of the solution graph~$G'$.  
For \Stars, we present an XP-algorithm for~$t$. This answers an open question by Korach and Stern [Discret. Appl. Math.~'08] who asked for the existence of polynomial-time algorithms for~$t=0$. Our result implies polynomial-time algorithms for all constant values of~$t$.
In contrast, we show for \Con that known polynomial-time algorithms for~$t=0$ [Korach and Stern, Math.~Program.~'03; Klemz et al.,~SWAT~'14] cannot be extended to larger values of~$t$ by showing that \Con is NP-hard for~$t=1$.

\end{abstract}
%\newpage

\section{Introduction}
%    \todo[inline]{mail von emanuel angeben?}
%\todo[inline]{weitere coole Anwendung: placing green bridges~\cite{FK21,HFGK22} (auch in keywords?)} 
%\todo[inline]{sometimes way say solution and sometimes we say sparsified graph}

%    Graphs are a natural data model for a wide range of applications.
%    For example in transport planning road networks or rail networks are modelled as graphs~\cite{yang1998models}.
%    Other applications where graphs are omnipresent are biological networks~\cite{pavlopoulos2011using} and social networks~\cite{wasserman1994social}.

  %  In this work, we focus on a family of problems in the context of social networks.
    % In social network analysis, a common task is to identify the central actors of a social network and important connections between these actors~\cite{wasserman1994social}.
    % A typical extension to the graph model in the context of social networks is the notion of communities which are groups of actors that interact closely or a have a common agenda.
    % Communities can overlap and might not be disjoint; for example if edges model relations and communities model common interests.
 %   For example, a community in a social network might represent a group of people who are interested in the same topic.
    A common goal in network analysis is to decrease the size of a given network to speed up downstream analysis algorithms or to decrease the memory footprint of the graphs. This leads to the task of network sparsification where one wants to reduce the number of edges of a network while preserving some important property~$\Pi$~\cite{chekuri2015element,lindner2015structure,zhou2010network}. Similarly, in network design the task is often to construct a minimum-size or minimum-weight network fulfilling a given property, the most famous example being \textsc{Minimum-Weight Spanning Tree}.
    
    In many applications the input contains, in addition to a network, a hypergraph on the same vertex set~\cite{FK21,HosodaHIOSW12,NHJ+18}. 
    The hyperedges of this hypergraph represent, for example, communities that are formed within the network.
    In presence of such community data, the sparsified network should preserve a property not for the whole network but instead for each community, that is, for each hyperedge of the hypergraph. This leads to the family of \emph{community-aware network sparsification} problems, subsumed % which aims to reduce the number of edges of a graph while preserving some graph property for each community.
 by the following problem introduced by Gionis et al.~\cite{gionis2017community}.\iflong \footnote{Compared to Gionis et al.~\cite{gionis2017community}, we consider the more general scenario where we simultaneously constrain the edge number and total edge weight of the solution graph.}\fi
    %In the corresponding problem \textsc{$\Pi$-Network Sparsification}, introduced by  we are given an edge-weighted graph~$G$ and a set of vertex subsets, referred to as \emph{communities} and numbers~$\ell$ and~$b$, and the task is to find a spanning subgraph~$G'$ of~$G$ consisting of at most $\ell$~edges of total weight at most~$b$ such that each community satisfies property~$\Pi$ in~$G'$.
   % This leads to the family of $\Pi$-\textsc{Network Sparsification} ($\Pi$-\textsc{NWS}) problems which was introduced by Gionis et al.~\cite{gionis2017community}:
 \problemdef{$\Pi$-Network Sparsification ($\Pi$-NWS)}{A graph~$G$, a collection~$\mathcal{C}$ of~$c$ subsets of~$V(G)$, called \emph{communities}, an edge-weight function~$\omega: E(G)\to \mathds{R}_+$, an integer $\ell$, and a positive real number~$b$.}{Is there a graph~$G' = (V(G), E')$ with~$E' \subseteq E(G)$, $|E'|\le\ell$, and total edge weight at most~$b$ such that for all communities $C_i \in \mathcal{C}$ the subgraph of~$G$ induced by~$C_i$ satisfies~$\Pi$?}

     We say that a graph~$G'$ fulfilling the requirements is a~\emph{solution} for the instance~$I$.
 An example instance of~\textsc{$\Pi$-NWS} and solutions for the two specific properties~$\Pi$ studied in this work is given in~\Cref{fig:example-probdef}.
A very well-studied property~$\Pi$, considered by~Gionis et al.~\cite{gionis2017community} but also in many previous works~\cite{angluin2015network,chen2015polynomial,DuM88,fan2008algorithms,KMN14}, in this context is that every community should induce a connected subgraph. When a graph~$G$ admits this property for some hypergraph~$H$, then~$G$ is called a \emph{support} for~$H$~\cite{BrandesCPS12,BKMSV11,KMN14}.
 %   A problem belonging the problem family $\Pi$-\textsc{NWS} gets as input an undirected graph $G=(V,E)$, a set of subsets over $V$ called communities and an integer $\ell$.
  %  In other words, the input consists of an undirected graph and a hypergraph over the same set of vertices where the hyperedges are the communities and an integer $\ell$.
   % It then asks whether there exists a spanning subgraph $G'$ of $G$ with at most $\ell$ edges such that each subgraph in $G'$ induced by a community satisfies the graph property $\Pi$.
%In this work, we assume that the graph property~$\Pi$ is computable in polynomial time.
 We denote the corresponding special case of \textsc{$\Pi$-NWS} as \Con. Another variant of~\textsc{$\Pi$-NWS}, also studied by Gionis et al.~\cite{gionis2017community}, is to demand that every community not only induces a connected subgraph but more strongly that it contains a \emph{spanning star}. In other words, in the solution graph~$G'$, every community must be contained in the neighborhood of at least one of its vertices, called a \emph{center vertex}. We refer to this variant as \Stars.

\Con and \Stars are both NP-hard~\cite{DuM88,CDS04,CHM+18,gionis2017community}. Motivated by this, we study both problems in terms of their parameterized and fine-grained complexity. 
We also investigate the versions of both problems where each edge has unit weight and refer to them as \UCon and \UStars.

Our two main results are as follows:
\begin{itemize}
\item We show that, based on the Exponential Time Hypothesis (ETH), \Con{} and \Stars{} do not admit algorithms with running time~$2^{o(n^2)+c} \poly(n+c)$, even if the input graph is a clique with unit weights and each community has size at most~4.  
This bound is matched by simple brute-force algorithms.
  \item We show that \Stars{} admits an XP-algorithm when parameterized by~$t$, the feedback edge number of the solution graph. This positively answers the question of Korach and Stern~\cite{korach2008complete} who asked whether there is a polynomial-time algorithm for finding an optimal solution for \Stars{} that is a tree. In fact, our algorithm extends the polynomial-ime solvable cases to solutions that are tree-like.
  \end{itemize}
We obtain several further results, for example a complexity dichotomy for \Stars{} and \UStars{} parameterized by~$c$, the number of communities.  
% Clearly, hardness results of the unweighted versions also transfer to the weighted versions.
\iflonglong
\todo{edge-weight vs edge weight}\fi
%    Gionis et al.~\cite{gionis2017community} studied $\Pi$-\textsc{NWS} with respect to three specific graph properties in their work.
%    The first property is a minimum density requirement, the second property is the containment of a spanning star, and third property is the property of being connected.
%    In this work, we study the complexity of $\Pi$-\textsc{NWS} for these graph properties specificall.
%    Since $\Pi$-\textsc{NWS} is NP-hard for these three graph properties, we study $\Pi$-\textsc{NWS} for these three graph properties in the framework of parameterized complexity.
%    In addition to the parameterized complexity, we study the fine-grained complexity for these three graph properties.
    %Moreover, we study the complexity of $\Pi$-\textsc{NWS} restricted to communities of size at most $3$ in general.
\begin{figure}[t]

\centering
\begin{tikzpicture}[x=0.75pt,y=0.75pt,yscale=-1.25,xscale=1.25]
\draw (60,30) node [anchor=north west][inner sep=0.75pt]    {$a)$};

\draw   [thick]  (130.07,99.93) -- (70.07,99.93) ;
\draw   [thick]  (100.07,40.54) -- (70.07,99.93) ;
\draw   [thick]  (100.07,40.54) -- (130.07,99.93) ;
\draw   [thick]  (130.07,99.93) -- (100.07,119.93) ;
\draw   [thick]  (100.08,80.04) -- (100.07,119.93) ;
\draw   [thick]  (130.07,99.93) -- (100.08,80.04) ;
\draw   [thick]  (70.07,99.93) -- (100.07,119.93) ;
\draw   [thick]  (70.07,99.93) -- (100.08,80.04) ;

\draw  [fill={rgb, 255:red, 255; green, 255; blue, 255 }  ,fill opacity=1 ] (102.58,80.1) .. controls (102.54,81.48) and (101.4,82.57) .. (100.02,82.54) .. controls (98.64,82.5) and (97.55,81.35) .. (97.58,79.97) .. controls (97.61,78.59) and (98.76,77.5) .. (100.14,77.54) .. controls (101.52,77.57) and (102.61,78.72) .. (102.58,80.1) -- cycle ;
\draw  [fill={rgb, 255:red, 255; green, 255; blue, 255 }  ,fill opacity=1 ] (72.57,100) .. controls (72.53,101.38) and (71.38,102.47) .. (70,102.43) .. controls (68.62,102.39) and (67.53,101.24) .. (67.57,99.86) .. controls (67.61,98.48) and (68.76,97.39) .. (70.14,97.43) .. controls (71.52,97.47) and (72.61,98.62) .. (72.57,100) -- cycle ;
\draw  [fill={rgb, 255:red, 255; green, 255; blue, 255 }  ,fill opacity=1 ] (132.57,100) .. controls (132.53,101.38) and (131.38,102.47) .. (130,102.43) .. controls (128.62,102.39) and (127.53,101.24) .. (127.57,99.86) .. controls (127.61,98.48) and (128.76,97.39) .. (130.14,97.43) .. controls (131.52,97.47) and (132.61,98.62) .. (132.57,100) -- cycle ;
\draw  [fill={rgb, 255:red, 255; green, 255; blue, 255 }  ,fill opacity=1 ] (102.57,120) .. controls (102.53,121.38) and (101.38,122.47) .. (100,122.43) .. controls (98.62,122.39) and (97.53,121.24) .. (97.57,119.86) .. controls (97.61,118.48) and (98.76,117.39) .. (100.14,117.43) .. controls (101.52,117.47) and (102.61,118.62) .. (102.57,120) -- cycle ;
\draw  [blue,line width=0.3mm,dashed] (97.14,74.54) .. controls (108.39,69.46) and (135.06,85.46) .. (135.72,100.12) .. controls (136.39,114.79) and (106.39,130.46) .. (97.39,125.46) .. controls (88.39,120.46) and (85.9,79.62) .. (97.14,74.54) -- cycle ;
\draw   [blue,line width=0.3mm] (104.72,74.79) .. controls (93.06,69.12) and (63.9,85.2) .. (64.57,99.86) .. controls (65.24,114.53) and (93.39,129.12) .. (104.06,125.79) .. controls (114.72,122.46) and (116.39,80.46) .. (104.72,74.79) -- cycle ;
\draw  [fill={rgb, 255:red, 255; green, 255; blue, 255 }  ,fill opacity=1 ] (102.57,40.61) .. controls (102.53,41.99) and (101.38,43.08) .. (100,43.04) .. controls (98.62,43) and (97.53,41.85) .. (97.57,40.47) .. controls (97.61,39.09) and (98.76,38) .. (100.14,38.04) .. controls (101.52,38.08) and (102.61,39.23) .. (102.57,40.61) -- cycle ;
\draw  [blue,line width=0.4mm,dotted] (100,34.33) .. controls (87.28,34.27) and (62.14,83.06) .. (63.57,98.86) .. controls (65,114.67) and (133.33,115.67) .. (135.33,100.33) .. controls (137.33,85) and (112.72,34.4) .. (100,34.33) -- cycle ;

% Figure 2       
\draw (160,30) node [anchor=north west][inner sep=0.75pt]    {$b)$};  
          
\draw   [thick]  (230.07,99.93) -- (170.07,99.93) ;
\draw   [thick]  (200.07,40.54) -- (230.07,99.93) ;
\draw   [thick]  (230.07,99.93) -- (200.07,119.93) ;
\draw   [thick]  (170.07,99.93) -- (200.07,119.93) ;
\draw   [red,very thick]  (200.07,40.54) -- (170.07,99.93) ;
\draw   [red,very thick]  (200.08,80.04) -- (200.07,119.93) ;
\draw   [red,very thick]  (230.07,99.93) -- (200.08,80.04) ;
\draw   [red,very thick]  (170.07,99.93) -- (200.08,80.04) ;

\draw  [fill={rgb, 255:red, 255; green, 255; blue, 255 }  ,fill opacity=1 ] (202.58,80.1) .. controls (202.54,81.48) and (201.4,82.57) .. (200.02,82.54) .. controls (198.64,82.5) and (197.55,81.35) .. (197.58,79.97) .. controls (197.61,78.59) and (198.76,77.5) .. (200.14,77.54) .. controls (201.52,77.57) and (202.61,78.72) .. (202.58,80.1) -- cycle ;
\draw  [fill={rgb, 255:red, 255; green, 255; blue, 255 }  ,fill opacity=1 ] (172.57,100) .. controls (172.53,101.38) and (171.38,102.47) .. (170,102.43) .. controls (168.62,102.39) and (167.53,101.24) .. (167.57,99.86) .. controls (167.61,98.48) and (168.76,97.39) .. (170.14,97.43) .. controls (171.52,97.47) and (172.61,98.62) .. (172.57,100) -- cycle ;
\draw  [fill={rgb, 255:red, 255; green, 255; blue, 255 }  ,fill opacity=1 ] (232.57,100) .. controls (232.53,101.38) and (231.38,102.47) .. (230,102.43) .. controls (228.62,102.39) and (227.53,101.24) .. (227.57,99.86) .. controls (227.61,98.48) and (228.76,97.39) .. (230.14,97.43) .. controls (231.52,97.47) and (232.61,98.62) .. (232.57,100) -- cycle ;
\draw  [fill={rgb, 255:red, 255; green, 255; blue, 255 }  ,fill opacity=1 ] (202.57,120) .. controls (202.53,121.38) and (201.38,122.47) .. (200,122.43) .. controls (198.62,122.39) and (197.53,121.24) .. (197.57,119.86) .. controls (197.61,118.48) and (198.76,117.39) .. (200.14,117.43) .. controls (201.52,117.47) and (202.61,118.62) .. (202.57,120) -- cycle ;
\draw  [blue,line width=0.3mm,dashed] (197.14,74.54) .. controls (208.39,69.46) and (235.06,85.46) .. (235.72,100.12) .. controls (236.39,114.79) and (206.39,130.46) .. (197.39,125.46) .. controls (188.39,120.46) and (185.9,79.62) .. (197.14,74.54) -- cycle ;
\draw  [blue,line width=0.3mm] (204.72,74.79) .. controls (193.06,69.12) and (163.9,85.2) .. (164.57,99.86) .. controls (165.24,114.53) and (193.39,129.12) .. (204.06,125.79) .. controls (214.72,122.46) and (216.39,80.46) .. (204.72,74.79) -- cycle ;
\draw  [fill={rgb, 255:red, 255; green, 255; blue, 255 }  ,fill opacity=1 ] (202.57,40.61) .. controls (202.53,41.99) and (201.38,43.08) .. (200,43.04) .. controls (198.62,43) and (197.53,41.85) .. (197.57,40.47) .. controls (197.61,39.09) and (198.76,38) .. (200.14,38.04) .. controls (201.52,38.08) and (202.61,39.23) .. (202.57,40.61) -- cycle ;
\draw  [blue,line width=0.4mm,dotted] (200,34.33) .. controls (187.28,34.27) and (162.14,83.06) .. (163.57,98.86) .. controls (165,114.67) and (233.33,115.67) .. (235.33,100.33) .. controls (237.33,85) and (212.72,34.4) .. (200,34.33) -- cycle ;

% Figure 3            
\draw (260,30) node [anchor=north west][inner sep=0.75pt]    {$c)$};  
\draw   [thick]  (300.07,40.54) -- (330.07,99.93) ;
\draw   [thick]  (300.08,80.04) -- (300.07,119.93) ;
\draw    [red,very thick] (230.07+100,99.93) -- (170.07+100,99.93) ;
\draw   [red,very thick]  (200.07+100,40.54) -- (170.07+100,99.93) ;
\draw   [red,very thick]  (230.07+100,99.93) -- (200.07+100,119.93) ;
\draw   [red,very thick]  (200.08+100,80.04) -- (230.07+100,99.93) ;
\draw    [red,very thick] (170.07+100,99.93) -- (200.07+100,119.93) ;
\draw    [red,very thick] (170.07+100,99.93) -- (200.08+100,80.04) ;

\draw  [fill={rgb, 255:red, 255; green, 255; blue, 255 }  ,fill opacity=1 ] (202.58+100,80.1) .. controls (202.54+100,81.48) and (201.4+100,82.57) .. (200.02+100,82.54) .. controls (198.64+100,82.5) and (197.55+100,81.35) .. (197.58+100,79.97) .. controls (197.61+100,78.59) and (198.76+100,77.5) .. (200.14+100,77.54) .. controls (201.52+100,77.57) and (202.61+100,78.72) .. (202.58+100,80.1) -- cycle ;
\draw  [fill={rgb, 255:red, 255; green, 255; blue, 255 }  ,fill opacity=1 ] (172.57+100,100) .. controls (172.53+100,101.38) and (171.38+100,102.47) .. (170+100,102.43) .. controls (168.62+100,102.39) and (167.53+100,101.24) .. (167.57+100,99.86) .. controls (167.61+100,98.48) and (168.76+100,97.39) .. (170.14+100,97.43) .. controls (171.52+100,97.47) and (172.61+100,98.62) .. (172.57+100,100) -- cycle ;
\draw  [fill={rgb, 255:red, 255; green, 255; blue, 255 }  ,fill opacity=1 ] (232.57+100,100) .. controls (232.53+100,101.38) and (231.38+100,102.47) .. (230+100,102.43) .. controls (228.62+100,102.39) and (227.53+100,101.24) .. (227.57+100,99.86) .. controls (227.61+100,98.48) and (228.76+100,97.39) .. (230.14+100,97.43) .. controls (231.52+100,97.47) and (232.61+100,98.62) .. (232.57+100,100) -- cycle ;
\draw  [fill={rgb, 255:red, 255; green, 255; blue, 255 }  ,fill opacity=1 ] (202.57+100,120) .. controls (202.53+100,121.38) and (201.38+100,122.47) .. (200+100,122.43) .. controls (198.62+100,122.39) and (197.53+100,121.24) .. (197.57+100,119.86) .. controls (197.61+100,118.48) and (198.76+100,117.39) .. (200.14+100,117.43) .. controls (201.52+100,117.47) and (202.61+100,118.62) .. (202.57+100,120) -- cycle ;
\draw  [blue,line width=0.3mm,dashed] (197.14+100,74.54) .. controls (208.39+100,69.46) and (235.06+100,85.46) .. (235.72+100,100.12) .. controls (236.39+100,114.79) and (206.39+100,130.46) .. (197.39+100,125.46) .. controls (188.39+100,120.46) and (185.9+100,79.62) .. (197.14+100,74.54) -- cycle ;
\draw  [blue,line width=0.3mm] (204.72+100,74.79) .. controls (193.06+100,69.12) and (163.9+100,85.2) .. (164.57+100,99.86) .. controls (165.24+100,114.53) and (193.39+100,129.12) .. (204.06+100,125.79) .. controls (214.72+100,122.46) and (216.39+100,80.46) .. (204.72+100,74.79) -- cycle ;
\draw  [fill={rgb, 255:red, 255; green, 255; blue, 255 }  ,fill opacity=1 ] (202.57+100,40.61) .. controls (202.53+100,41.99) and (201.38+100,43.08) .. (200+100,43.04) .. controls (198.62+100,43) and (197.53+100,41.85) .. (197.57+100,40.47) .. controls (197.61+100,39.09) and (198.76+100,38) .. (200.14+100,38.04) .. controls (201.52+100,38.08) and (202.61+100,39.23) .. (202.57+100,40.61) -- cycle ;
\draw  [blue,line width=0.4mm,dotted] (200+100,34.33) .. controls (187.28+100,34.27) and (162.14+100,83.06) .. (163.57+100,98.86) .. controls (165+100,114.67) and (233.33+100,115.67) .. (235.33+100,100.33) .. controls (237.33+100,85) and (212.72+100,34.4) .. (200+100,34.33) -- cycle ;
\end{tikzpicture}
\caption{$a)$ The communities (depicted in \textcolor{blue}{blue}) and the input graph of an instance of \textsc{$\Pi$-NWS} with unit weights. $b)$ and~$c)$  Optimal solutions (in \textcolor{red}{red}) for \UCon, and \UStars, respectively.}
\label{fig:example-probdef}
\end{figure}
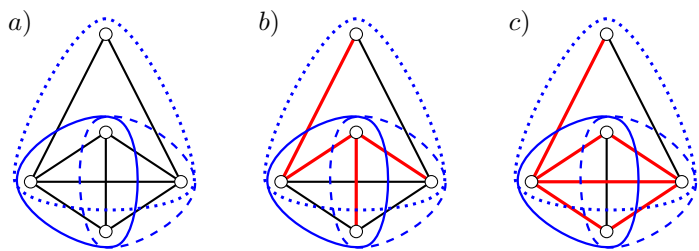

\subparagraph{Known Results.}

Already the most basic variant of \Con, where the edges have unit weights and
the input graph~$G$ is a clique, appears in many applications, ranging from explanation of protein complexes~\cite{NHJ+18} to combinatorial auctions~\cite{CDS04} to the construction of P2P overlay networks in publish/subscribe systems~\cite{ChocklerMTV07,HosodaHIOSW12}.
Consequently, the problem has been studied intensively under various names~\cite{angluin2015network,chen2015polynomial,ChocklerMTV07,DuM88,fan2008algorithms,HosodaHIOSW12}
from a parameterized complexity~\cite{chen2015polynomial,fan2008algorithms,HosodaHIOSW12}
and an approximation algorithms~\cite{angluin2015network,ChocklerMTV07,HosodaHIOSW12} perspective. 
For example, Du and Miller~\cite{DuM88} showed NP-hardness even for instances with maximum community size~$3$, and Chen et
al.~\cite{chen2015polynomial} presented an~FPT-algorithm for the
number of communities and an FPT-algorithm for the largest community size plus
the feedback edge number~$t$ of a solution. From a more practical perspective, (Mixed) Integer Linear Programming formulations were proposed~\cite{bonnet2019constraint,dar2021integer}. A particular restriction of the problem is to determine whether there is an acyclic solution. Such solutions are called \emph{tree supports} or \emph{clustered spanning trees} and hypergraphs that have a tree support are also known as \emph{hypertrees}. It can be determined in polynomial-time whether a hypergraph is a hypertree and different polynomial-time algorithms have been described over the years~\cite{BFMY83,CDS04,duchet1978,flament1978,GRS21,JohnsonP87,slater1978,TY84}. 

\UCon{}, with unit weights but with general input graphs~$G$, has applications in the context of placing green bridges~\cite{FK21,FK21A,HFGK22}.
\UCon{} is NP-hard even when the maximum degree of~$G$ is
3~\cite{HFGK22} and even for seven communities~\cite{FK21A}. 
%  Moreover, \UCon{} does not admit a polynomial kernel for~$\ell$, the number of edges in the solution graph~$G'$, unless NP
% $\subseteq$ coNP/poly~\cite{FK21A}.  
On the positive side, one can construct in polynomial time a tree support if one exists~\cite{GuttmannBeckSS19,KMN14,KorachS03}. 
%On the positive side, one can decide in polynomial time whether a \UCon instance has a solution graph that is a forest~\cite{GuttmannBeckSS19,KMN14,KorachS03}. 
   
For~\Con{} where we may have arbitrary edge-weights, the distinction whether or not~$G$
is restricted to be a clique vanishes: any
non-clique input graph~$G$ may be transformed into a clique by adding
the missing edges with a prohibitively large edge weight. The problem of finding a minimum-weight tree support received attention due to its applications in network visualization~\cite{KMN14}. As shown by~Korach and Stern~\cite{KorachS03} and~Klemz et al.~\cite{KMN14}, one can compute minimum-weight tree supports in polynomial time. Gioinis et al.~\cite{gionis2017community} provided approximation algorithms for the general problem.

\Stars{} has received less attention than \Con. Gionis et al.~\cite{gionis2017community} showed NP-hardness and provided approximation algorithms.
Korach and Stern~\cite{korach2008complete} studied a variant of \Stars{} where the input graph is a clique and the solution is constrained to be a tree~$T$ where the closed neighborhood of the center vertex of a community~$C_i$ is exactly the community~$C_i$.
This implies that two different communities need to have different center vertices and thus restricts the allowed set of solution graphs strictly compared to~\Stars.
    Korach and Stern~\cite{korach2008complete} showed that this problem is solvable in polynomial time~\cite{korach2008complete}.
As an open question, they ask whether this positive result can be lifted to \Stars.

Cohen et al.~\cite{CHM+18} studied the \textsc{Minimum $\mathcal{F}$-Overlay} problem which can be viewed as the following special case of~$\Pi$-NWS: The input graph~$G$ is a clique and all edges have unit weight; $\mathcal{F}$ is a family of graphs and the property~$\Pi$ is to have some spanning subgraph which is contained in~$\mathcal{F}$. It should be noted that~\Con{} and~\UStars{} with clique input graphs are special cases of \textsc{Minimum $\mathcal{F}$-Overlay}. Cohen et al.~\cite{CHM+18} provide a complexity dichotomy with respect to properties of~$\mathcal{F}$. For most cases of~$\mathcal{F}$,  \textsc{Minimum $\mathcal{F}$-Overlay} is NP-hard. In particular, the dichotomy of Cohen et al.~\cite{CHM+18} shows that~\UStars{} is NP-hard even when~$G$ is a clique.
    Gionis et al.~\cite{gionis2017community} also studied a third property~$\Pi$ where each community needs to induce a subgraph exceeding some prespecified density. 
    \iflong They showed NP-hardness and provided some approximation algorithms. \fi
    Fluschnik and Kellerhals~\cite{FK21} considered several further graph properties~$\Pi$, for example the property of having small diameter.

\subparagraph{Our Results and Organization of the Work.}
In order to put our main results into context, we first summarize in \Cref{sec:basic-results} some hardness and tractability results that follow either from  simple observations or from previous work. They imply in particular that \Stars and~\Con have an FPT-algorithm for the parameter solution size~$\ell$ and that they are W[1]-hard with respect to~$k\coloneqq m-\ell$, the number of edges not in the solution even in the unit weight case when~$G$ is a clique. 

Then, in \Cref{sec:eth-bound} we show that \UCon{} and \UStars do not admit algorithms with running time $2^{o(n^2+c)}\cdot\poly(n+c)$ even when~$G$ is a clique. This running time lower bound is based on the Exponential Time Hypothesis (ETH)~\cite{impagliazzo2001problems}. To show the bound, we develop a compression of \textsc{3-SAT} instances~$\phi$ into graphs with $\Oh(\sqrt{|\phi|})$~vertices.

In~\Cref{sec:fes}, we then consider parameterization by~$t$, the feedback edge number of the solution graph~$G'$. This is the minimum number of edges that need to be deleted to transform the solution into a forest.\footnote{The parameter~$t$ can be computed in polynomial time as discussed in~\Cref{sec:basic-results}.} 
The study of this parameter is  motivated by the following observation: The solution size parameter $\ell$ is essentially at least as large as~$n-1$, and thus neither small in practice nor particularly interesting from an algorithmic point of view. The parameter~$t$ can thus essentially be seen as a parameterization above the lower bound~$n-1$.
Our first main result for the parameter~$t$ is an XP-algorithm for \Stars parameterized by~$t$. 
%In this algorithm we distinguish two types of cycles in the solution; the first type is unavoidable due to the structure of the hypergraph and all cycles of the second type have constant length.
%After all cycles of the second type have been found, the problem can be solved in polynomial time, as we show.
Our result positively answers the question of Korach and Stern~\cite{korach2008complete} who asked whether there is a polynomial-time algorithm for~$t=0$ and substantially extends this tractability further to every constant value of~$t$. 
We then consider the parameter~$t$ for \Con. 
% This algorithm uses the classic Kruskal algorithm~\cite{K56} on an auxiliary graph.
We show that \UCon{} is NP-hard already if~$t=1$. Thus, the polynomial-time algorithms of Korach and Stern~\cite{KorachS03} and Klemz et al.~\cite{KMN14} for~$t=0$ cannot be lifted to larger values of~$t$. 

Finally, in~\Cref{sec:hyperedges} we study the complexity of \Stars{} with respect to the number~$c$ of input communities. The problem is easily seen to be in XP via an~$n^{\Oh(c)}$-time algorithm. In light of this, we obtain the following dichotomy: \UStars{} is FPT with respect to~$c$ and \Stars{} is W[1]-hard in the most restricted case when~$G$ is a clique and all edges have weight~$1$ or~$2$. 

\iflong
For an overview of the parameterized complexity results, refer to Table~\ref{tbl:parameterized-complexity-results}.
\fi

\iflong
    \begin{table}[t]
        \caption{An overview of the parameterized complexity results.
        A~$\ddagger$ indicates that this result also holds in the unweighted case and a~$\dagger$ indicates that this result only holds in the unweighted case.
        }
        \label{tbl:parameterized-complexity-results}
        \begin{tabularx}{\textwidth}{c X X}
            Parameter  & \textsc{Stars NWS} & \textsc{Connectivity NWS} \\
            \hline
            $\ell$ & \multicolumn{2}{c}{FPT~(\Cref{prop-fpt-algos-ell-con-and-stars},~\cite{FK21}), no polynomial kernel$^\ddagger$~(\Cref{cor-hardness-star+con},~\cite{FK21})} \\ \hline
            $k\coloneqq m-\ell$ & \multicolumn{2}{c}{$\W{1}$-hard$^\ddagger$~(\Cref{cor-hardness-star+con})} \\ \hline
           \multirow{2}{*}{$t$}                            & \multirow{2}{*}{XP~(\Cref{xp t})}                   & P for~$t=0$~(\cite{KMN14}, \Cref{thm:sparse-conn-with-l-equals-n-minus-1})\\%(\Cref{thm:sparse-conn-with-l-equals-n-minus-1})\\ 
                                    &  & NP-h for~$t=1^\ddagger$~(\Cref{thm-con-np-h-for-t=1})\\ \hline
            \multirow{3}{*}{$c$}  & FPT$^\dagger$~(\Cref{thm-stars-fpt-c}) & \multirow{3}{*}{NP-h for~$c=7^\ddagger$~(\cite{FK21A})} \\
             &  no polynomial kernel$^\ddagger$~(\Cref{thm-stars-no-poly-kernel-c}) &  \\
            & \W{1}-h~(\Cref{thm-weighted-stars-w-hard-for-c}) & \\ \hline
            $\Delta$ & NP-h for~$\Delta=6^\ddagger$~(\Cref{cor-stars-np-h-const-delta}) & NP-h for~$\Delta=3^\ddagger$~(\cite{HFGK22})
        \end{tabularx}
    \end{table} 
\fi

    \section{Preliminaries and Basic Observations}
    \label{sec:basic-results}

\subparagraph{Preliminaries.}
For a set~$X$, we denote by~$\binom{X}{2}$ the collection of all size-two subsets of~$X$. 
Moreover, for positive integers~$i$ and~$j$ with~$i\leq j$, we denote by~$[i,j] \coloneqq \{k\in \mathbb{N} : i \leq k \leq j\}$.

    An~\emph{undirected graph}~$G=(V,E)$ consists of a set of vertices~$V$ and a set of edges~$E\subseteq \binom{V}{2}$.
    We denote by~$V(G)$  and~$E(G)$ the vertex and edge set of~$G$, respectively.
    Furthermore, we let~$n=n(G)\coloneqq |V(G)|$ and~$m\coloneqq |E(G)|$.
    For an edge~$e = \{u,v\}\in E$, we say that~$u$ and~$v$ are~\emph{adjacent} and that~$e$ is~\emph{incident} with~$u$ and~$v$.  
    For a vertex set~$V'\subseteq V$, we denote by~$E_G(V')\coloneqq \{\{u,v\}\in E : u\in V', v\in V'\}$ the edges between the vertices of~$V'$ in~$G$.
If~$G$ is clear from the context, we may omit the subscript.    
A graph~$G'$ is a~\emph{subgraph} of~$G$ if~$V(G') \subseteq V(G)$ and~$E(G')\subseteq E(G)$.
Moreover, a subgraph $G'$ of~$G$ is \emph{spanning} if~$V(G') = V(G)$.
For a vertex set~$V'$, we denote by~$G[V']\coloneqq (V',E_G(V'))$ the~\emph{subgraph of~$G$ induced by~$V'$}.
    For a graph $G$, a set $S \subseteq V(G)$ with $E(S) = {\binom{S}{2}}$ is called a \emph{clique}.
    A size-three clique is also called \emph{triangle}.
    \iflong A set $X \subseteq V(G)$ is an \emph{independent set} of~$G$ if the vertices of~$X$ are pairwise non-adjacent in $G$, that is, if~$E_G(X) = \emptyset$. \fi
    A graph $G$ is a \emph{star} of size~$n-1$ with \emph{center} $z \in V(G)$ if $E(G) = \{\{z, v\} : v \in V(G) \setminus \{z\}\}$.
    We say that~$G$ contains a \emph{spanning star} if some subgraph~$G'$ of~$G$  is a star of size~$n-1$.
    The center of this star is \emph{universal} for~$G$.

\iflong
    A sequence of distinct vertices $(v_1, \dots, v_\kappa)$ of a graph $G$ is a \emph{$(v_1,v_\kappa)$-path}, if $\{v_i, v_{i+1}\} \in E(G)$ for each $i \in [1, \kappa-1]$.
    Let~$(v_1, \dots, v_\kappa)$ be a path with~$\kappa \geq 3$ and $\{v_1, v_\kappa\} \in E(G)$, then we call $(v_1, \dots, v_\kappa)$ a \emph{cycle} in $G$.
    A graph without a cycle is called \emph{acyclic}.
    \fi
    An edge set~$E'\subseteq E(G)$ is a~\emph{feedback edge set} of~$G$, if the graph~$G'\coloneqq (V(G),E(G)\setminus E')$ is acyclic.
    We say that two vertices $u$ and $v$ are \emph{connected} in $G$ if~$G$ contains a path between~$u$ and~$v$.
    A graph~$G$ is \emph{connected} if each pair of vertices $u,v \in V(G)$ is connected.
    A set $S \subseteq V(G)$ is a \emph{connected component} of $G$, if $G[S]$ is connected and $S$ is inclusion-maximal with this property.
    Connectivity in hypergraphs is defined similarly:
 Two vertices~$u$ and~$v$ are \emph{connected} if there exists a sequence~$C_1, C_2, \ldots, C_p$ of hyperedges, such that~$u\in C_1$,~$v\in C_p$, and consecutive communities have nonempty intersection.
A \emph{connected component} of this hypergraph is a maximal set of vertices which are connected.
\iflong We say that the hypergraph is \emph{connected} if it has is exactly one connected component. \fi
The number~$x$ of connected components of a hypergraph can be computed in polynomial time, for example by BFS. Observe that for a minimal solution graph~$G'$ for~\Stars and~\Con, the connected components of~$G'$ are exactly the connected components of the community hypergraph.  
Thus,~$t = \ell-n+x$ and the parameter~$t$ can be computed in polynomial time for a given input instance.

    For details about parameterized complexity and the exponential-time-hypothesis (ETH), we refer to the standard monographs~\cite{downey2012parameterized,cygan2015parameterized}.
    \iflonglong\todo[inline]{More Details about FPT!}\fi

\subparagraph{Basic Observations.}
    To put our main results for \Con and \Stars into context, we state some results that either follow easily from previous work or from simple observations.
% \subsection{A generic Brute-Force Algorithm for $\Pi$-\textsc{NWS}}

    \iflong 
    The naive brute-force approach for each $\Pi$-\textsc{NWS} is to perform an exhaustive search over the $\Oh(2^m)$ possibilities to select at most~$\ell$~edges from the input graph~$G$.
    This leads to the following general statement for $\Pi$-\textsc{NWS} problems.
    \begin{proposition}
        \label{thm:trivial-brute-force-result}
        Let~$\Pi$ be a property which can be decided in $\poly(n)$~time.
        Then, $\Pi$-\textsc{NWS} is solvable in $2^m \cdot c \cdot \poly(n)$~time.
      \end{proposition}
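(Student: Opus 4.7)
The plan is to perform an exhaustive search over all candidate solution graphs and verify each of them against the problem's requirements. First I would enumerate all subsets~$E' \subseteq E(G)$. Since~$|E(G)| = m$, there are~$2^m$ such subsets, and each can be enumerated in polynomial time in~$m$ (and therefore in~$n$ plus~$m$).

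For each candidate edge set~$E'$, I would next check in polynomial time the following three conditions: (i)~$|E'| \le \ell$; (ii)~$\sum_{e \in E'} \omega(e) \le b$; and (iii)~for every community~$C \in \mathcal{C}$, the subgraph~$(V(G), E')[C]$ satisfies~$\Pi$. Conditions (i) and (ii) are clearly checkable in~$\Oh(m) \subseteq \poly(n)$ time. For condition (iii), extracting the induced subgraph on~$C$ takes~$\poly(n)$ time, and by assumption the property~$\Pi$ can then be decided in~$\poly(n)$ time. Since there are~$c$ communities, the total verification time per candidate is~$c \cdot \poly(n)$.

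If some candidate~$E'$ passes all three checks, we report the graph~$G' = (V(G), E')$ as a solution; otherwise, after exhausting all~$2^m$ candidates, we report that no solution exists. Correctness is immediate from the problem definition, since any solution must be a spanning subgraph of~$G$ and is therefore determined by its edge set, which is one of the enumerated subsets. The overall running time is~$2^m \cdot c \cdot \poly(n)$, as claimed.

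There is essentially no obstacle here; the only minor point to be careful about is absorbing the edge-counting and weight-summation work into the~$\poly(n)$ factor, which is routine since~$m \le \binom{n}{2}$. No nontrivial insight is required beyond observing that the solution space has size~$2^m$ and that the stated property~$\Pi$ together with the edge budget and weight budget can each be checked in polynomial time per community.
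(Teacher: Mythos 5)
Your proof is correct and uses exactly the same brute-force enumeration over the $2^m$ edge subsets that the paper describes (the paper gives only a one-sentence sketch rather than a detailed proof). Nothing to add.
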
      
%      \begin{proof}
%Let~$I=(G, \mathcal{C}, \omega,\ell,b)$ be an instance of a $\Pi$-\textsc{NWS}.
%The algorithm is as follows:
%Check for each graph~$G'\coloneqq(V(G),E')$ with~$E' \subseteq E(G)$, $|E'| \leq \ell$, and~$\omega(E')\le b$ whether each subgraph induced by a community~$C \in \mathcal{C}$ fulfills the property~$\Pi$.
%If such a graph~$G'$ exists, then~$I$ is a yes-instance of~$\Pi$-\textsc{NWS}.
%If no such graph~$G'$ exists, then~$I$ is a no-instance of~$\Pi$-\textsc{NWS}.
%Recall that we assumed that the graph property~$\Pi$ is verifiable in $\poly(n)$~time for each subgraph induced by a community.
%Since there are at most $2^m$~different subsets of~$E$, the described algorithm has a running time of~$\Oh(2^m \cdot c \cdot \poly(n))$.
%\end{proof}
    \fi
For the solution size parameter~$\ell$, one can obtain the following running time.
\iflong
\begin{proposition}
\label{prop-fpt-algos-ell-con-and-stars}
% \begin{enumerate}
% \item \Con{} can be solved in $\Oh(d^{2\ell}\cdot n^2m)$~time, and in $\Oh(\ell^{2\ell}\cdot n^2m)$~time, respectively.
% \item \Stars{} can be solved in $\Oh(d^{\ell}\cdot\poly(n+c))$~time, and in $\Oh(\ell^{\ell}\cdot\poly(n+c))$~time, respectively.
% \end{enumerate}
\Con{} and~\Stars{} can be solved in $\ell^{\Oh(\ell)}\cdot\poly(n+c)$ time.
\end{proposition}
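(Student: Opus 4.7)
The plan is to exploit the fact that both \Stars{} and \Con{} impose local degree constraints: in any feasible solution~$G'$, every vertex~$v$ that lies in some community~$C$ of size~$\geq 2$ must be incident to at least one edge of~$G'$ (in \Stars{} because~$v$ is either a center or a leaf in the spanning star of~$C$, and in \Con{} because~$G'[C]$ must be connected and therefore~$v$ has a neighbor in~$C$).

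First I would preprocess: discard any community of size~$\leq 1$ (trivially satisfied); if some community~$C$ contains a vertex~$v$ with no neighbor in~$C$ inside~$G$, return no; and then define $V^* \subseteq V(G)$ to be the set of vertices appearing in at least one community of size~$\geq 2$. By the observation above, every feasible solution~$G'$ leaves every vertex of~$V^*$ non-isolated, so $|V^*|\le 2\ell$; otherwise we can safely reject. Vertices outside~$V^*$ can only belong to size-$1$ communities and so any edge incident to them is useless for satisfying the property; hence we may assume without loss of generality that the solution satisfies $E'\subseteq E_G(V^*)$.

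Now the search space is tiny: the number of candidate edges is at most $\binom{|V^*|}{2}\le \binom{2\ell}{2}=\Oh(\ell^2)$, so the number of edge subsets of cardinality at most~$\ell$ is bounded by
\[
\sum_{i=0}^{\ell}\binom{\Oh(\ell^2)}{i}\;\le\; \ell\cdot \binom{\Oh(\ell^2)}{\ell}\;\le\; \ell^{\Oh(\ell)}.
\]
For each such candidate subset~$E'$ one checks in $\poly(n+c)$ time whether $|E'|\le\ell$, whether $\omega(E')\le b$, and whether for each community~$C\in\mathcal{C}$ the induced subgraph $(V(G),E')[C]$ is connected (for~\Con) or contains a spanning star (for~\Stars, by testing each vertex of~$C$ as a potential center). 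Multiplying gives the claimed $\ell^{\Oh(\ell)}\cdot\poly(n+c)$ bound.

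The one step that needs care is the bound $|V^*|\le 2\ell$: it relies on the fact that every edge of~$E'$ can make at most two vertices non-isolated, combined with the property-induced lower bound on each vertex's degree in~$G'$. Everything else is routine enumeration and verification, so no genuine obstacle arises; in particular the argument is uniform across \Stars{} and~\Con{} (and their unweighted restrictions) because both properties enforce the same ``no isolated vertex inside a nontrivial community'' condition.
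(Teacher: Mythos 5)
Your proof is correct, but it takes a genuinely different route from the paper's. The paper's proof uses a \emph{bounded search tree}: starting from the empty partial solution, it repeatedly picks an unsatisfied community and branches over which edge to add (or, for \Stars, which vertex to make the center). The key observation there is that any community of size at least~$\ell+2$ makes the instance a trivial no-instance, so one may assume each community has size at most~$\ell+1$, giving branching factor~$\binom{\ell+1}{2}$ and depth~$\ell$, hence a search tree of size~$\ell^{\Oh(\ell)}$. Your argument is instead a \emph{kernelization followed by brute force}: you bound the set~$V^*$ of ``relevant'' vertices by~$2\ell$ using the observation that every vertex in a nontrivial community must be non-isolated in any solution, restrict the candidate edge set to~$E_G(V^*)$ of size~$\Oh(\ell^2)$, and enumerate all subsets of cardinality at most~$\ell$. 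Both approaches give the same~$\ell^{\Oh(\ell)}\cdot\poly(n+c)$ bound. Notably, your vertex bound~$|V^*|\le 2\ell$ is essentially the content of the Fluschnik--Kellerhals kernel that the paper cites immediately after the proposition for \UCon{}; you rederive it uniformly for both problems and in the weighted setting. The paper's branching approach has the advantage that it is the ``standard'' technique used in the prior works it cites (Fan et al., Cohen et al.), whereas your kernel-and-enumerate argument is arguably cleaner to verify and makes the polynomial reliance on~$n$ and~$c$ more transparent since the entire exponential blowup is isolated to a set of~$\Oh(\ell^2)$ edges.
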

\begin{proof}
  We may use the following branching strategy first observed by Fan et al.~\cite{fan2008algorithms} for the special case of \Con{} when~$G$ is a clique and the community size is at most~$d$. Later,~Cohen et al.~\cite{CHM+18} described the same branching strategy for some special cases of \textsc{Minimum $\textsc{F}$ Overlay} including \UStars{} with clique graph~$G$. We recall the strategy for sake of completeness.  %  also showed that \textsc{Interconnection Graph Problem} admits an FPT-algorithm with respect to~$d+\ell$ by presenting a search-tree algorithm with running-time~$\Oh(d^{2\ell}\cdot n^2m)$.
% %Here, $d$ is the maximal size of any community and~$\ell$ is the number of edges in the solution.
% This algorithm is based on the observation that each community can have at most $d^2$~edges.
% In each node of the tree a set of edges~$T$ is assumed to be contained in the solution.
% Furthermore, in each search tree node a community~$C$ such that~$C$ is not connected in~$G[T]$ is picked, and all up-to $d^2$~possibilities of adding an edge in~$E(C)\setminus T$ to~$T$ are checked.
Start with an initially empty partial solution. 
If some community is not yet connected by the current partial solution, then branch into the at most~$\binom{d}{2}$ possibilities to add some edge with both endpoints in the community. The depth of the branching is~$\ell$, the overall search tree size~$d^{\Oh(\ell)}$. The branching can be naturally adapted to \UStars{} by branching into the choice of one of~$d$ centers for communities without a star.
Now observe that instances with a community of size at least~$\ell+2$ are no-instances for~\Con{} and~\Stars. Thus, we may safely assume that~$d\le \ell+1$.
% Hence, the algorithm of Fan et al.~\cite{fan2008algorithms} implies that \textsc{Interconnection Graph Problem} admits an FPT-algorithm for~$\ell$ with running time~$\Oh(\ell^{2\ell}\cdot n^2m)$. 
% A follow-up question is whether \textsc{Interconnection Graph Problem} admits a polynomial kernel for~$\ell$.
% This is unlikely since~$\ell\le n^2$, which implies that~$\ell$ is a smaller parameter than~$n$ and, according to \Cref{cor-no-poly-kern-n-incon}, a polynomial kernel for~$n$, and thus also for~$\ell$, is unlikely.
% The FPT-algorithm for~$d+\ell$ of Fan et al.~\cite{fan2008algorithms} for \textsc{Interconnection Graph Problem} can also be adapted for \Con: only add those edges to~$T$ which are contained in the input graph and finally check whether the total sum of the edge weights of the solution is at most~$b$.
% This does not change the overall running time of the algorithm.
% Also, a similar algorithm is possible for \Stars:
% Again, let~$T$ denote the set of edges contained in every solution.
% Then, the algorithm picks a community~$C$ such that~$G[T]$ does not contain a spanning star.
% Afterwards, all $d$~possibilities of a potential center are checked by adding the corresponding edges to~$T$.
% This results in an algorithm with running time~$\Oh(d^\ell\cdot\poly(n+c))$.
% Clearly, for both \Con{} and \Stars{} the observation that instances with communities of size at least~$\ell+2$ are no-instances is true.
\end{proof}
\else
\begin{proposition}[$\star$]
\label{prop-fpt-algos-ell-con-and-stars}
% \begin{enumerate}
% \item \Con{} can be solved in $\Oh(d^{2\ell}\cdot n^2m)$~time, and in $\Oh(\ell^{2\ell}\cdot n^2m)$~time, respectively.
% \item \Stars{} can be solved in $\Oh(d^{\ell}\cdot\poly(n+c))$~time, and in $\Oh(\ell^{\ell}\cdot\poly(n+c))$~time, respectively.
% \end{enumerate}
\Con{} and~\Stars{} can be solved in $\ell^{\Oh(\ell)}\cdot\poly(n+c)$ time.
\end{proposition}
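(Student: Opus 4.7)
The plan is to give a bounded-depth branching algorithm whose branching factor and depth are both controlled by $\ell$. The first step will be a preprocessing observation: if some community~$C \in \mathcal{C}$ contains more than $\ell+1$ vertices, then the instance is a no-instance, because even a spanning star on~$C$ (which is the sparsest structure making~$G'[C]$ connected or containing a spanning star) already requires~$|C|-1 \ge \ell+1$ edges. Thus we may assume from the outset that the maximum community size~$d$ satisfies~$d \le \ell+1$.

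Next, I will run a branching procedure that maintains a partial solution~$E'' \subseteq E(G)$, initially empty, and extends it greedily while respecting the budgets. At each node of the search tree, I check whether every community~$C \in \mathcal{C}$ already satisfies the required property in~$(V(G),E'')$. If some community~$C$ violates the property, we branch as follows. For \Con{}, pick an arbitrary connected component~$K$ of~$(C, E''\cap \binom{C}{2})$ that is not the only component; any solution must contain at least one edge of~$G[C]$ with exactly one endpoint in~$K$, so we branch on the at most~$\binom{d}{2} = \Oh(\ell^2)$ candidate edges and recurse after adding the chosen edge to~$E''$. For \Stars{}, we branch on the choice of the center vertex~$z \in C$ (at most~$d \le \ell+1$ options) and then add every missing edge~$\{z,v\}$ for~$v \in C \setminus \{z\}$ that is present in~$G$ to~$E''$ in one shot; if any such edge is missing in~$G$ or pushes the cost over~$\ell$ or~$b$, the branch is rejected.

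A branch is aborted whenever $|E''| > \ell$ or the total weight of~$E''$ exceeds~$b$; in particular the recursion depth is at most~$\ell$ for \Con{} (each branching step adds one edge) and at most~$c$ steps of depth~$\le \ell+1$ branching for \Stars{}, but in both cases the total number of edges committed along any root-to-leaf path is bounded by~$\ell$. Since the branching factor at each node is at most~$\Oh(\ell^2)$ and the depth is at most~$\ell$, the search tree has size~$\ell^{\Oh(\ell)}$, and each node performs only polynomial work in~$n+c$ (checking the property per community and updating~$E''$). Correctness follows because in both problems every optimal solution must, for each violating community, include at least one edge compatible with one of the branches (a cut-edge across~$K$ for connectivity, respectively the set of star edges for some center).

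The only mildly delicate point I anticipate is the case analysis for \Stars{}: one must ensure that branching on a full center assignment rather than a single edge is sound, which follows from the fact that in any feasible solution each community has a center whose star edges are forced. Beyond that, the argument is a direct branching analysis, and combining it with the bound~$d \le \ell+1$ immediately yields the claimed~$\ell^{\Oh(\ell)} \cdot \poly(n+c)$ running time for both problems.
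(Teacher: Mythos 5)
Your proposal is correct and follows essentially the same strategy as the paper's proof: observe that any community of size greater than~$\ell+1$ makes the instance a trivial no-instance (so community size is bounded by~$\ell+1$), then run a depth-bounded branching that, for \Con, adds one candidate edge per violating community (branching factor~$\Oh(\ell^2)$), and for \Stars, branches on the center of a violating community (branching factor~$\Oh(\ell)$). The only cosmetic difference is that you restrict the \Con{} branching to cut edges of one component rather than all~$\binom{d}{2}$ pairs; this does not change the asymptotics, and your observation that every branching step for \Stars{} necessarily commits at least one new edge (so the search-tree depth is bounded by~$\ell$, not by~$c$) is exactly the point the paper relies on as well.
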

\fi
The fixed-parameter tractability of \UCon{} with respect to~$\ell$ was also shown by Fluschnik and Kellerhals via a kernelization \iflong that gives kernels with at most $2\ell$~vertices\fi~\cite{FK21}.
% Analogously to \textsc{Interconnection Graph Problem}, since~$\ell$ is a smaller parameter than~$n$, a polynomial kernel for \Con{} or \Stars{} with respect to~$\ell$ is unlikely.

% Also, note that from \Cref{prop-eth-bound-m-and-c} and from~$\ell\le m$, we obtain that the simple brute-force algorithms  in \Cref{prop-fpt-algos-ell-con-and-stars} are tight. 

%     \begin{corollary}
%         If the ETH is true, then \UStars and \UCon cannot be solved in $d^{o(\ell)} \cdot \poly(n+c)$~time.
%     \end{corollary}

% \subparagraph*{Kernelization.}
% Fluschnik and Kellerhals showed that based on the observation that each edge is adjacent with 2~vertices, a kernel with at most $2\ell$~vertices (but exponential many communities) can be computed for \UCon~\cite{FK21}.
% By the same argument, one also obtains a polynomial kernel of size~$\Oh(c\ell+\ell^2)$ for \UCon~\cite{FK21}.
% Recall that~$c$ is the number of communities.
% These arguments can also be directly used for (the weighted version of) \Con and \Stars, implying a $2\ell$~vertex kernel and a kernel of size~$\Oh(c\ell+\ell^2)$.

A further natural parameter that can be considered is~$k\coloneqq m-\ell$, the number of edges that are not in the solution.
    The following proposition summarizes the complexity of the problems with respect to this parameter.
    \begin{proposition}[\iflong\else$\star$\fi]
      \label{thm:brute-force-k}
      \begin{itemize}
      \item \Con{} and \Stars{} are NP-hard for~$k=0$.
      \item  \UCon{} and \UStars{} can be solved in~$n^{2k}\cdot \poly(n)$ time and are \W{1}-hard with respect to~$k$ even if~$G$ is a clique and if each community has size at most~$3$.
      \end{itemize}
    \end{proposition}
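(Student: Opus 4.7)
For the first claim, observe that $k = 0$ forces $\ell = m$, making the edge-count constraint $|E'| \le \ell$ vacuous; one then seeks any $E' \subseteq E(G)$ of total weight at most~$b$ satisfying~$\Pi$. I would obtain NP-hardness by a trivial reduction from the NP-hard \UCon{} (resp., \UStars{}): given an unweighted instance $(G, \mathcal{C}, \ell^*)$, output the weighted instance $(G, \mathcal{C}, \omega \equiv 1, \ell = m, b = \ell^*)$. Under unit weights the weight bound~$b$ exactly encodes the original edge-count bound~$\ell^*$, so the two instances are equivalent, and in the new instance $k = m - \ell = 0$.

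For the algorithmic upper bound, I would enumerate all subsets $F \subseteq E(G)$ of size exactly~$k$ to remove. Since both connectivity and ``contains a spanning star'' are monotone under edge addition, no generality is lost by assuming $|E'| = \ell$, i.e., $|F| = k$. Because $m \le \binom{n}{2} \le n^2$, the number of such subsets is $\binom{m}{k} \le n^{2k}$, and for each one we can verify in $\poly(n + c)$ time whether $G - F$ satisfies the property on every community (breadth-first search for \UCon{}; for \UStars{}, check each community for a vertex adjacent to all other community members).

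For the W[1]-hardness, I would design a parameterized reduction from \textsc{Multicolored Clique}. Given $(H = (V_1 \sqcup \cdots \sqcup V_p, E_H), p)$, the constructed instance has as input graph the clique on $V(H) \cup \{s\}$ for a fresh selector vertex~$s$, and size-$3$ communities $\{u, v, s\}$ for every pair $\{u, v\}$ either contained in a single color class or between distinct color classes with $\{u, v\} \notin E_H$. The intended solution removes $k = p$ edges, one selector edge $\{v_i, s\}$ per color class, corresponding to a multicolored $p$-clique $\{v_1, \dots, v_p\}$ in~$H$; a single construction handles both problems since on size-$3$ communities \UCon{} and \UStars{} impose the identical constraint that at most one edge of each community-triangle be removed. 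The main obstacle is that a clique input graph also admits ``free'' removals inside $\binom{V(H)}{2}$---for instance a $k$-edge matching whose triples happen to be non-communities---which could be feasible without encoding any clique in~$H$; I would rule these out by enriching the community hypergraph with additional size-$3$ triples wholly inside $V(H)$, chosen so that the intended solution remains valid while every $k$-element removal deviating from the intended form violates some new community constraint, after which a case analysis on the distribution of removed edges across the (few) edge types recovers a multicolored $p$-clique in~$H$.
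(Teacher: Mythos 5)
Your first two parts match the paper's proof essentially verbatim: the paper also obtains NP-hardness for $k=0$ by the trivial translation of the unweighted instance into unit weights with $\ell = m$ and $b = \ell^*$, and also obtains the $n^{2k}\cdot\poly(n)$ algorithm by noting that in the unweighted case one may assume $|E'| = \ell$ and enumerate the $\binom{m}{k}$ candidate removal sets.

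Your W[1]-hardness argument, however, has a genuine gap, and it is precisely the one you flag yourself. You observe that ``free'' removals inside $\binom{V(H)}{2}$ -- for instance a $k$-edge matching -- could make the instance feasible without encoding a multicolored clique, and you propose to block them by adding further size-$3$ triples wholly inside $V(H)$. That plan cannot succeed: a size-$3$ community only forbids deleting \emph{two or more} of its three edges, so if the $k$ removed edges form a matching inside $V(H)$, every such triple loses at most one edge and remains connected, no matter how you choose the triples. Consequently no collection of size-$3$ communities confined to $V(H)$ can rule out matching-type removals, and your construction does not establish the forward direction of the equivalence. The clean repair -- and the one the paper uses -- is to add a \emph{size-$2$} community $\{u,v\}$ for every pair $u,v$ of vertices other than the selector vertex, which pins down all those edges outright; this is permitted because the statement only requires communities of size \emph{at most} $3$. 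The paper also departs from your approach in its choice of source problem: it reduces from \textsc{Independent Set} with parameter $\widetilde{k}$ (an adaptation of Du and Miller's reduction), takes $G$ to be a clique on $V(\widetilde{G})\cup\{x\}$, adds a size-$2$ community for every pair inside $V(\widetilde{G})$ and a size-$3$ community $\{u,v,x\}$ for every edge $\{u,v\}$ of $\widetilde{G}$, and sets $k = \widetilde{k}$; the removed edges incident to $x$ then directly correspond to an independent set, giving a substantially shorter argument than your multicolored-clique route even after the repair.
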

    \iflong
      \begin{proof}
      For \Con{} and \Stars{} the NP-hardness for~$k=0$ can be seen as
      follows:  reduce from the unweighted version, set the weight of
      each edge to one, $k\coloneqq 0$, and the weight bound~$b$ to
      the solution size bound of the unweighted instance.

      For \UCon{} and \UStars{} the existence of a solution with at
      most~$\ell$ edges implies the existence of one with
      exactly~$\ell$ edges. Thus, it is sufficient to consider all
      $\binom{m}{k}$ possibilities to remove~$k$ edges from~$G$. 

      The W[1]-hardness for \UCon{} can be seen via an adaption of reduction from Du and Miller~\cite[Theorem~3.1]{DuM88}. 
      More precisely, we reduce from \textsc{Independent Set} where we are given a graph~$\widetilde{G}$ on~$\widetilde{n}$ vertices and~$\widetilde{m}$ edges and integer~$\widetilde{k}$ and want to decide whether it contains a set of at least~$\widetilde{k}$ pairwise nonadjacent vertices. 
      The graph~$G$ of the constructed \UCon{} instance is a clique on the vertex set~$V(\widetilde{G})\cup \{x\}$ where~$x$ is a new vertex. 
      For each pair~$u,v$ of vertices in~$V(\widetilde{G})$, add a community~$\{u,v\}$. 
      Then, for each edge~$\{u,v\}\in E(\widetilde{G})$, add a community~$\{u,v,x\}$. 
      Finally, set the budget~$\ell$ to~$\binom{\widetilde{n}}{2}+\widetilde{n}-\widetilde{k}$. 
      Since the total number of edges in the constructed graph~$G$ is~$\binom{\widetilde{n}}{2}+\widetilde{n}$, we have~$k=m-\ell=\widetilde{k}$. 
      The equivalence of the instance can be seen as follows: 
      every solution contains all~$\widetilde{n}^2$ edges between vertices of~$V(\widetilde{G})$. 
      Thus, every solution misses at least~$k$ edges that are incident with~$x$. 
      These edges correspond to a vertex set~$S\subseteq V(G)$ of size at least~$k$. 
      The set~$S$ is an independent set in~$\widetilde{G}$: 
      if~$S$ contains two adjacent vertices~$u$ and~$v$, then the community~$\{u,v,x\}$ does not induce a  connected subgraph. 
      The converse direction can be seen analagously.
      
The results for \UStars{} follow since for communities of size at most~$3$ the properties of being connected and having a spanning star coincide.
% This reduction can be adapted in two ways to give further hardness results.     
% First, by reducing from \textsc{Independent Set} instead of \textsc{Vertex Cover}, adding a community~$\{u,v\}$ for each pair~$u,v$ of vertices in~$V(G)$ instead of only adding a community~$\{u,v\}$ for each~$\{u,v\}\in E(G)$, and by setting~$\ell= \binom{n}{2}+k$, one gets hardness also for the parameter~$k\coloneqq m-\ell$, the number of edges not in the solution. 
\end{proof}
\fi

    % A special case of \UCon, where the input graph~$G$ is a clique, was studied under the names \textsc{Subset Interconnection Design}~\cite{chen2015polynomial}, \textsc{Network Construction}~\cite{angluin2015network} and \textsc{Interconnection Graph Problem}~\cite{fan2008algorithms}.
\iflong
    By a reduction from \textsc{Vertex Cover}, Du and Miller~\cite{DuM88} showed  that the special case of \UCon{} where~$G$ is a clique is NP-hard even if restricted to communities of size at most~$3$.
% \begin{proposition}
% \label{cor-w-hard-k-incon}
% \textsc{Interconnection Graph Problem} is \W{1}-hard with respect to~$k$ even if each community has size at most~$3$.
% \end{proposition}
% \begin{proposition}
% \label{cor-no-poly-kern-n-incon}
% \textsc{Interconnection Graph Problem} does not admit a polynomial kernel for~$n$ even if each community has size at most~$3$, unless NP $\subseteq$ coNP/poly.
% \end{proposition}
% Since \textsc{Interconnection Graph Problem} is a special case of \Con,  \Cref{cor-w-hard-k-incon,cor-no-poly-kern-n-incon} also hold for \Con.
\else
% Du and Miller~\cite[Theorem~3.1]{DuM88} showed by a reduction from \textsc{Vertex Cover} that the special case of \UCon{} where~$G$ is a clique is NP-hard even if restricted to communities of size at most~$3$. 
% Since the problem was studied under several names, a similar reduction was presented by several papers~\cite{ChocklerMTV07,fan2008algorithms,HosodaHIOSW12,HFGK22}.
% Next, we adapt this reduction to obtain the following results:
%  By adapting a reduction from \textsc{Vertex Cover} due to Fan et al.~\cite{fan2008algorithms}, we can get the following hardness results for the parameters~$k$ and~$n$.
\fi
For \UCon, Fluschnik and Kellerhals showed that a polynomial kernel for~$\ell$ and (thus for~$n$) is unlikely, even on planar series-parallel graphs~\cite{FK21A}. This result can be also given for the case when the input graph~$G$ is a clique.  
\iflong
\begin{proposition}[]
\label{cor-hardness-star+con}
\begin{enumerate}
 \UCon{} and \UStars{} do not admit a polynomial kernel for~$n$ even if~$G$ is a clique, unless NP $\subseteq$ coNP/poly.
\end{enumerate}
\end{proposition}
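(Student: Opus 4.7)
The plan is to establish the no-polynomial-kernel lower bound via a cross-composition (in the sense of Bodlaender, Jansen, and Kratsch) from 3-SAT to \UCon{} (and \UStars{}) on cliques parameterized by~$n$. After fixing a standard polynomial equivalence relation that groups 3-SAT instances sharing the same number of variables~$N$ and clauses, the goal is, given~$t$ equivalent instances~$\phi_1,\dots,\phi_t$, to build a single clique instance of \UCon{} on~$n=\poly(N+\log t)$ vertices that is a yes-instance if and only if at least one of the~$\phi_i$ is satisfiable.

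The construction will reuse the ETH-style compression machinery from \Cref{sec:eth-bound}. The essential feature of that machinery is that the~$2^{\Theta(n^2)}$ subgraphs of an $n$-vertex clique can, via a carefully designed family of constant-size communities, encode~$\Theta(n^2)$ independent binary choices. First, I would set up a selector gadget on~$O(\sqrt{\log t})$ vertices whose edge choices encode an index~$i\in[1,t]$, forcing the selector to take a valid form via suitable ``validity'' communities over only the selector vertices. Second, I would place, on a disjoint~$O(\sqrt{N})$-vertex part, the assignment-encoding gadget from the ETH reduction, whose clique edges encode an assignment to the~$N$ variables. Third, for each~$i\in[1,t]$ I would add a group of instance-specific communities that, via clauses spanning the selector and assignment parts, are simultaneously satisfied exactly when the selector encodes~$i$ and the assignment encoding satisfies~$\phi_i$.

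Because the whole construction uses~$n=O(\sqrt{N+\log t})$ vertices, the parameter bound required by cross-composition is met, and we conclude that \UCon{} on cliques has no polynomial kernel for~$n$ unless~$\text{NP}\subseteq\text{coNP}/\poly$. The extension to \UStars{} comes for free if one keeps the community sizes bounded by a constant (as in the Section~\ref{sec:eth-bound} construction, where each community has size at most~$4$): for such small communities, inducing a connected subgraph and containing a spanning star coincide, so the same instance serves both problems. Alternatively, one can route through a polynomial parameter transformation from \UCon{} on general graphs (where the Fluschnik--Kellerhals result already rules out a polynomial kernel for~$n$, hence also for~$\ell$) into the clique setting by adding only~$\poly(n)$ auxiliary vertices.

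The main obstacle is the design of the instance-specific community groups so that only the group associated with the index chosen by the selector imposes an active satisfiability constraint, using purely combinatorial (unit-weight) means. This requires a gadget that switches between a cheap ``satisfied'' configuration and an overbudget ``unsatisfied'' configuration depending on the selector edges: I expect the cleanest realisation to mirror the interaction between variable and clause communities used in \Cref{sec:eth-bound}, where being able to connect a community with a single clique edge (rather than via a longer detour) is contingent on a specific pattern of selector edges being present.
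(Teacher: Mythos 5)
Your proposal does not resolve the key difficulty and is far more elaborate than the argument actually needed. The paper proves this by a short polynomial parameter transformation from \textsc{Hitting Set} parameterized by the size of the universe (which is known not to admit a polynomial kernel unless $\text{NP}\subseteq\text{coNP}/\poly$): take a clique on $U\cup\{x\}$, add a size-two community for every pair in $U$ (fixing all edges inside $U$), add a community $F\cup\{x\}$ for every set $F\in\mathcal{F}$, and set $\ell=\binom{|U|}{2}+k$. Then a solution must consist of all internal $U$-edges plus at most $k$ edges from $x$ into $U$ that hit every $F$, so the instances are equivalent and $n=|U|+1$ is polynomial in the source parameter.

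Your cross-composition plan, by contrast, has a genuine gap. The ``selector + per-instance communities'' scheme requires that the community groups associated with unchosen indices $j\neq i$ become inactive, but in the unweighted, clique setting every community must still induce a connected subgraph (or spanning star) in every solution; there is no mechanism to make a community's constraint conditional on selector edges. You explicitly name this as the ``main obstacle'' and propose that it should ``mirror the interaction between variable and clause communities'' of~\Cref{sec:eth-bound}, but in that construction the tight budget is what forces the intended behavior, and with $t$ simultaneous copies of clause gadgets the budget bookkeeping no longer disentangles into an OR. So the central idea is missing, not merely deferred. The alternative route you sketch (PPT from \UCon{} on general graphs into cliques by adding $\poly(n)$ auxiliary vertices) is also not justified: in the unweighted setting you cannot use prohibitive weights on the added non-edges, and the claim that adding these edges preserves the answer and budget would require an argument you do not give. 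Note finally that the ETH-compression machinery is optimized to produce few vertices from one instance; it is not designed to encode a disjunction over $t$ instances, which is what a cross-composition needs, so reusing it here is not a plug-and-play step.
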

\else
\begin{proposition}[$\star$]
\label{cor-hardness-star+con}
 \UCon{} and \UStars{} do not admit a polynomial kernel for~$n$ even if~$G$ is a clique, unless NP $\subseteq$ coNP/poly.
\end{proposition}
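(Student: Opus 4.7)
The goal is to rule out polynomial kernels in $n$ for \UCon{} and \UStars{} on clique inputs under the standard assumption $\mathrm{NP}\not\subseteq\mathrm{coNP/poly}$. My plan is to give a polynomial parameter transformation (PPT) from the general-graph version of \UCon{}—where the kernel lower bound is known from~\cite{FK21A}—to the clique-input version, preserving the parameter~$n$ up to a polynomial factor. Composing such a PPT with the result of Fluschnik and Kellerhals then immediately yields the claim.

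Given an arbitrary instance $(G,\mathcal{C},\ell)$ on a planar series-parallel graph, the PPT constructs an equivalent clique-input instance $(G^\ast,\mathcal{C}^\ast,\ell^\ast)$ as follows. For each non-edge $\{u,v\}\notin E(G)$, I would introduce a constant-size \emph{blocker} gadget consisting of fresh auxiliary vertices and small communities whose role is to force a fixed number of gadget-internal edges into every solution. The budget $\ell^\ast$ is calibrated so that the forced edges fill it tightly on top of the original budget~$\ell$. With this calibration, any solution that uses a non-edge $\{u,v\}$ of $G$ (which is promoted to an edge in the clique) disrupts the tight packing inside some blocker and must pay at least one additional edge elsewhere, making such solutions strictly suboptimal. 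Since the number of non-edges is at most $\binom{n}{2}$ and the gadgets have constant size, $|V(G^\ast)|=\mathrm{poly}(n)$, so the transformation is a valid PPT.

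For \UStars{}, I would first observe that whenever all communities have size at most three, the spanning-star and connectivity properties coincide. The FK construction uses small communities, so it already excludes polynomial kernels for \UStars{} on planar series-parallel graphs. Designing the blocker gadget so that every added community has size at most three then transfers the lower bound to the clique case through the same PPT.

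The main obstacle is that in the unweighted regime the blockers cannot literally ``forbid'' an edge; the crux of the proof is therefore a local exchange argument showing that any clique solution using an edge outside $E(G)$ can be modified into a solution with strictly fewer edges, contradicting optimality. I expect a constant-size blocker per non-edge to suffice, mirroring the blocker constructions used in reductions for \textsc{Minimum $\mathcal{F}$-Overlay}~\cite{CHM+18}. Once the blocker correctness is in place, the vertex-count bound $|V(G^\ast)|=\mathrm{poly}(n)$ is immediate and the standard PPT framework delivers the desired kernel lower bound.
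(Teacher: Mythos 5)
Your route differs from the paper's, and it has a conceptual gap that I do not see how to close. Both connectivity and the spanning-star property are monotone under edge addition, so a blocker gadget built from fresh vertices cannot \emph{penalise} the use of a specific non-edge $\{u,v\}$ of $G$: adding that edge to a candidate solution never increases the number of edges any gadget needs, so the gadget's forced-edge count is unaffected and tight budget calibration does not detect it. The exchange argument you defer would therefore have to show that no non-edge ever saves edges, which is false in general---$\{u,v\}$ might directly connect two vertices of a community that over $E(G)$ would need a much longer path, strictly beating any solution restricted to $E(G)$, which means the clique instance can become a yes-instance even when the planar series-parallel instance is not. Furthermore, the claim that the construction of Fluschnik and Kellerhals uses communities of size at most three (needed to transfer the bound to \UStars{}) is asserted without support.

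The paper instead gives a direct polynomial parameter transformation from \textsc{Hitting Set} parameterized by the universe size $|U|$, which starts on a clique and needs no blocking at all: the clique has vertex set $U\cup\{x\}$, every pair $\{u,v\}\subseteq U$ is placed in a size-$2$ community to fix all $U$-internal edges, each set $F\in\mathcal{F}$ becomes the community $F\cup\{x\}$, and the budget $\binom{|U|}{2}+k$ leaves at most $k$ edges incident with $x$, which encode a hitting set. Because the $U$-part of each $F\cup\{x\}$ is a forced clique, that community is connected iff $x$ has a neighbor in $F$, and this coincides with having a spanning star, so the same reduction covers \UStars{} directly (the ``community size at most three'' shortcut is not what does the work here). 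The parameter $n=|U|+1$ is exactly what the lower-bound framework requires.
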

\fi
\iflong
  \begin{proof}
    We give a proof for \UCon{}, the result for \UStars{} follows again since for communities of size at most~$3$ the two properties coincide.

    To show the claim we give a polynomial-parameter reduction from \textsc{Hitting Set} parameterized by the size of the universe to \UCon{} parameterized by~$n$.
    \textsc{Hitting Set} parameterized by the size of the universe is known to not admit a polynomial kernel unless NP $\subseteq$ coNP/poly~\cite{cygan2015parameterized}.
        This reduction is again an adaption of the \textsc{Vertex Cover} reduction of Du and Miller~\cite{DuM88}.
        In \textsc{Hitting Set} we are given a set family~$\mathcal{F}$ over a universe~$U$ and an integer~$k$ and ask if there is a set~$S\subseteq U$ of size at most~$k$ such that for each set~$F\in \mathcal{F}$, we have~$S\cap F\neq \emptyset$. 
        For a given \textsc{Hitting Set} instance~$(\mathcal{F},U,k)$ we construct a graph~$G$ with communities as follows. 
        The graph~$G$ is a clique with vertex set~$U\cup \{x\}$ for some~$x\notin U$. For each pair of elements~$u,v\in U$, add a community~$\{u,v\}$. Moreover, for each set~$F\in  \mathcal{F}$, add the community~$F\cup \{x\}$. Finally, set the budget~$\ell$ to~$\binom{|U|}{2}+k$. Observe that~$|V(G)|=|U|+1$. 
        
The correctness of the reduction follows from the fact that all edges between vertices of~$U$ are fixed, that is, are contained in a community of size~$2$, and the up to~$k$ edges that are added between~$x$ and~$U$ must have one endpoint in~$F$ for each~$F\in \mathcal{F}$. Thus, these edges correspond to a solution for~$(\mathcal{F},U,k)$.
\end{proof}
\fi

\iflong
  We can also show that the simple brute-force algorithm behind \Cref{thm:trivial-brute-force-result} cannot be improved substantially.
  \begin{proposition}
\label{prop-eth-bound-m-and-c}
If the ETH is true, then \UStars and \UCon cannot be solved in $2^{o(n+m+c)} \cdot \poly(n+c)$~time, even if restricted to instances with community size at most~$3$.
\end{proposition}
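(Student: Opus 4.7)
The plan is to reduce from 3-SAT. By the Sparsification Lemma together with the ETH, 3-SAT with $N$ variables and $M$ clauses admits no $2^{o(N+M)}$-time algorithm, and we may further assume $M = \Oh(N)$. A polynomial-time reduction turning a 3-CNF formula $\phi$ into an equivalent \UCon instance $(G,\mathcal{C},\ell)$ with $n + m + c = \Oh(N+M)$ and $|C| \le 3$ for every $C \in \mathcal{C}$ immediately gives the claim: a $2^{o(n+m+c)} \cdot \poly(n+c)$ algorithm for the constructed instance would yield a $2^{o(N+M)}$ algorithm for $\phi$, contradicting ETH. Since for communities of size at most~3 the ``induces a connected subgraph'' and ``contains a spanning star'' properties coincide, the very same construction simultaneously handles \UStars.

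For the construction itself, I would use constant-size gadgets and an additive edge budget. For each variable $x_i$, I would introduce $\Oh(1)$ vertices and $\Oh(1)$ size-3 communities whose joint connectivity requirements, under a tight local budget, force the solution into one of exactly two cheapest configurations; these two configurations are distinguished by which of two dedicated ``literal edges'' $e_i^T, e_i^F$ is picked, corresponding to $x_i = \text{true}$ resp.~$x_i = \text{false}$. For each clause $C_j$ with literals $\ell_{j,1},\ell_{j,2},\ell_{j,3}$, I would attach a clause gadget of $\Oh(1)$ new vertices and $\Oh(1)$ size-3 communities, each of them connecting the clause gadget to one of the three variable gadgets through the respective literal edge. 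The communities in the clause gadget are designed so that the clause gadget can be made feasible within its local budget contribution precisely when at least one of the three incident literal edges is a ``true'' edge. Summing the per-variable and per-clause contributions yields the global budget~$\ell$, and by construction a solution of size $\le \ell$ exists iff $\phi$ is satisfiable.

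The main obstacle is engineering the variable gadget so that its chosen truth value is globally consistent across all clause gadgets in which $x_i$ or $\neg x_i$ appears, while at the same time keeping each community of size at most~3. Size-3 communities are restrictive because connecting a size-3 community costs at least two edges among its three pairs, leaving very little room for propagation. This can be handled by chaining small ``copy'' gadgets that forward the chosen literal edge to each of its clause occurrences, with the overall budget $\ell$ tuned so that copying is free exactly for the picked truth value and strictly too expensive for the other. As an alternative, one may try to lift a known linear-size NP-hardness reduction for \UCon with constant-size communities (e.g., adaptations of the constructions of~\cite{HFGK22,FK21,FK21A}) and verify that its size blow-up is indeed $\Oh(N+M)$ and that all communities it produces have size at most~$3$; either route gives the desired ETH-tight bound.
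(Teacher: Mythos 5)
Your proposal has a genuine gap. The primary route you sketch—a direct reduction from \textsc{3-SAT} with bespoke variable and clause gadgets—is not worked out, and you yourself put your finger on exactly the obstacle that makes it hard: with communities of size at most~$3$, connecting a community already eats two of its three possible edges, leaving essentially no slack for ``copy'' or ``propagation'' gadgets to transmit a truth value to multiple clause occurrences while keeping the budget tight in both directions. You assert that chaining small copy gadgets ``can be handled'' so that copying is free for one truth value and too expensive for the other, but this is precisely the step that needs to be exhibited; a plausible-sounding plan is not a proof, and the restriction to size-$3$ communities is exactly what makes such gadgets delicate. (Notice that the paper's own direct \textsc{3-SAT} reduction in Theorem~\ref{thm:eth-based-lowerbound-for-n-square} allows communities of size~$4$, which gives it the room your sketch lacks.)

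The paper sidesteps all of this by reducing from \textsc{Vertex Cover} rather than \textsc{3-SAT}, adapting the Du--Miller construction~\cite{DuM88}: keep the \textsc{Vertex Cover} input graph~$G$ as is (it need not be a clique), add one new universal vertex~$x$, and for each edge~$\{u,v\}\in E(G)$ add the two communities~$\{u,v\}$ and~$\{u,v,x\}$; set~$\ell\coloneqq m+k$. The size-$2$ communities force every edge of~$G$ into the solution, so only~$k$ further edges, all incident with~$x$, may be added, and the size-$3$ communities then demand that these extra edges hit every edge of~$G$—i.e., form a vertex cover. This is linear in~$|V(G)|+|E(G)|$, uses only communities of size at most~$3$, and inherits the $2^{o(n+m)}$ ETH lower bound for \textsc{Vertex Cover}; the \UStars{} case follows because connectivity and having a spanning star coincide on sets of size at most~$3$, just as you observe. \textsc{Vertex Cover} is simply a much more natural source problem for size-$3$ communities than \textsc{3-SAT}: the ``pick at least one endpoint per edge'' constraint maps directly onto a triple~$\{u,v,x\}$ with the edge~$\{u,v\}$ pre-paid, whereas a clause-satisfaction gadget would need extra structure that size-$3$ communities struggle to provide. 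Your ``alternative'' of lifting a known NP-hardness reduction is closest in spirit to what the paper does, but you cite the wrong sources (\cite{FK21,FK21A,HFGK22} rather than~\cite{DuM88}) and do not verify the linear blowup or the community-size bound, so that branch of the proposal also stops short of a proof.
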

\begin{proof}
  This can again be shown by adaption of the reduction of Du and Miller~\cite{DuM88}.
The constructed graph is not a clique; instead it consists of the graph~$G$ of the \textsc{Vertex Cover} instance plus the new universal vertex~$x$.
Now  for each edge~$\{u,v\}\in E(G)$, add the two communities~$\{u,v\}$ and~$\{u,v,x\}$. Finally, we set~$\ell\coloneqq m+k$.
Since the size of the resulting graph and the number of communities is linear in the size of~$G$, and \textsc{Vertex Cover} cannot be solved in time~$2^{o(|V(G)|+|E(G)|}$ if the ETH is true, we obtain the statement. 
\end{proof}
\else
  One can also show that the simple brute-force algorithm that considers all~$\Oh(2^m)$ subsets of~$E(G)$ cannot be improved substantially.
\begin{proposition}[$\star$]
\label{prop-eth-bound-m-and-c}
If the ETH is true, then \UStars and \UCon cannot be solved in $2^{o(n+m+c)} \cdot \poly(n+c)$~time, even if restricted to instances with community size at most~$3$.
\end{proposition}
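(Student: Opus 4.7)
The plan is to adapt the classic reduction of Du and Miller~\cite{DuM88} from \textsc{Vertex Cover}, making sure that the resulting instance has size and community count linear in the size of the source instance. Since \textsc{Vertex Cover} on a graph~$\widetilde{G}$ with $\widetilde{n}$ vertices and $\widetilde{m}$ edges cannot be solved in $2^{o(\widetilde{n}+\widetilde{m})}\cdot\poly(\widetilde{n}+\widetilde{m})$ time under the ETH (via the sparsification lemma applied to the standard \textsc{3-SAT}-to-\textsc{Vertex Cover} chain), any reduction whose parameters~$n$, $m$ and~$c$ are linear in $\widetilde{n}+\widetilde{m}$ transfers the lower bound as required.

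Given $(\widetilde{G},\widetilde{k})$, I would construct $G$ by adding a fresh vertex~$x$ and all edges $\{x,v\}$ for $v\in V(\widetilde{G})$, i.e., $V(G)\coloneqq V(\widetilde{G})\cup\{x\}$ and $E(G)\coloneqq E(\widetilde{G})\cup\{\{x,v\}: v\in V(\widetilde{G})\}$, crucially \emph{not} turning $G$ into a clique on $V(\widetilde{G})$. The community collection~$\mathcal{C}$ contains, for every edge $\{u,v\}\in E(\widetilde{G})$, the two communities $\{u,v\}$ and $\{u,v,x\}$. All edges receive unit weight and I would set $\ell\coloneqq \widetilde{m}+\widetilde{k}$ (with weight budget $b\coloneqq \ell$). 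Then $n=\widetilde{n}+1$, $m=\widetilde{n}+\widetilde{m}$, and $c=2\widetilde{m}$, so $n+m+c\in\Theta(\widetilde{n}+\widetilde{m})$, and every community has size at most~$3$.

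For correctness, the size-$2$ communities $\{u,v\}$ for $\{u,v\}\in E(\widetilde{G})$ force every solution to include all of $E(\widetilde{G})$. Then, for each edge $\{u,v\}\in E(\widetilde{G})$, the size-$3$ community $\{u,v,x\}$ requires at least one of $\{x,u\},\{x,v\}$ to be present in the solution, so the set $S\subseteq V(\widetilde{G})$ of endpoints of the chosen $x$-incident edges forms a vertex cover of~$\widetilde{G}$; conversely, any vertex cover $S$ of size at most $\widetilde{k}$ yields a valid solution of size $\widetilde{m}+|S|\le\ell$ by picking exactly the edges $\{x,v\}$ with $v\in S$. Since on communities of size at most~$3$ the connectivity property and the spanning-star property coincide, the same construction works for both \UCon{} and \UStars{}. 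The only thing to be careful about is to start from a version of \textsc{Vertex Cover} for which ETH rules out $2^{o(\widetilde{n}+\widetilde{m})}$ time (not merely $2^{o(\widetilde{n})}$); this is standard and the main---essentially only---technical obstacle in writing up the proof.
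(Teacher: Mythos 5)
Your proposal is correct and matches the paper's own argument essentially verbatim: both add a universal vertex~$x$ to the \textsc{Vertex Cover} graph (without completing it to a clique), introduce the communities $\{u,v\}$ and $\{u,v,x\}$ per original edge, set $\ell=\widetilde{m}+\widetilde{k}$, and invoke the ETH lower bound for \textsc{Vertex Cover} in $\widetilde{n}+\widetilde{m}$. The one point you flag as a potential obstacle (needing the $2^{o(\widetilde{n}+\widetilde{m})}$ lower bound for \textsc{Vertex Cover} rather than just $2^{o(\widetilde{n})}$) is indeed standard via the sparsification lemma and is exactly what the paper relies on as well.
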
 
\fi

\section{A Stronger ETH-Bound}
\label{sec:eth-bound}

In \Cref{prop-eth-bound-m-and-c} we observed that algorithms with running time~$2^{o(n+m+c)}\cdot\poly(n+c)$ for \UCon{} and \UStars would violate the ETH.
We now provide a stronger $2^{\Omega(n^2+c)}\cdot\poly(n+c)$-time lower bound for both problems. Notably, this lower bound also applies to the case when all communities have constant size.

  % \subsection{A Simple Brute-Force Algorithm} 

\iflong \subsection{ETH Lower Bound for \UStars}\label{sec:lowerbounds}\fi

First, we present the lower bound for \UStars.

    \begin{theorem}
        \label{thm:eth-based-lowerbound-for-n-square}
        If the ETH is true, then \UStars cannot be solved in $2^{o(n^2+c)} \cdot \poly(n+c)$~time, even if~$G$ is a clique and if each community has size at most~$4$.
    \end{theorem}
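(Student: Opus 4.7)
The plan is to prove the claimed ETH lower bound by a polynomial-time reduction from \textsc{3-SAT} to \UStars{} that compresses a formula~$\phi$ with $N$ variables and $M$ clauses into an instance on a unit-weight clique~$G$ with $n = \Oh(\sqrt{|\phi|})$ vertices and $c = \Oh(|\phi|)$ communities, all of size at most~$4$. By the Sparsification Lemma I may assume $M = \Oh(N)$, so $|\phi| = \Theta(N)$; then any $2^{o(n^2 + c)} \cdot \poly(n+c)$-time algorithm for \UStars{} would decide \textsc{3-SAT} in $2^{o(|\phi|)}$ time, contradicting ETH.

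To compress $\phi$, I set $k \coloneqq \lceil\sqrt{N}\rceil$ and partition its variables into groups $V_1, \ldots, V_k$ of size at most $k$, naming the $j$-th variable of $V_i$ as $x_{i,j}$. The graph $G$ is the clique on the $2k$ vertices $\{v_1, \ldots, v_k\} \cup \{s_1, \ldots, s_k\}$, and variable $x_{i,j}$ is represented by the edge $e_{i,j} \coloneqq \{v_i, s_j\}$ in the sense that a solution subgraph $G'$ corresponds to the assignment in which $x_{i,j}$ is true iff $e_{i,j} \in E(G')$. The remaining clique edges (those within $\{v_1,\ldots,v_k\}$ or within $\{s_1,\ldots,s_k\}$) act as immovable scaffolding: I add a size-$2$ community $\{u,w\}$ for every such pair, forcing the corresponding $\Oh(k^2) = \Oh(N)$ edges to lie in every solution.

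The core gadget encodes a clause using a single size-$4$ community. Assuming (via preprocessing discussed below) that every clause has the form $C = (x_{i,j_1} \vee x_{i,j_2} \vee x_{i,j_3})$ with all three literals positive and in one group $V_i$, I add the community $\{v_i, s_{j_1}, s_{j_2}, s_{j_3}\}$ to~$\mathcal{C}$. Three of its six edges (the $s$-$s$ pairs) are scaffolded, and the other three are the variable edges $e_{i,j_1}, e_{i,j_2}, e_{i,j_3}$. A case analysis on the possible center of a spanning star within this community yields: a star centered at $v_i$ requires all three $e_{i,j_r}$ to be in $G'$, while a star centered at $s_{j_r}$ requires only $e_{i,j_r}$ (the other two incident edges in the community are scaffolded). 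Hence the community has a spanning star iff at least one of $e_{i,j_1}, e_{i,j_2}, e_{i,j_3}$ is in $G'$, exactly matching~$C$.

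The main technical obstacle is the preprocessing step: reducing general \textsc{3-SAT} to the group-local, positive-literal variant used above, while preserving $|\phi|$ up to constants and staying within $\Oh(\sqrt{N})$ vertices. Negated literals are handled by reserving, for each variable, two distinct $v$-$s$ edges of $G$ (a positive-literal edge and a negative-literal edge) together with a constant-size ``exactly-one'' sub-gadget built from a few additional size-$4$ communities; cross-group clauses are handled by introducing $\Oh(1)$ fresh auxiliary variables per clause, placed into suitable groups, that decompose each clause into a constant number of group-local clauses using standard propositional tricks. Once this preprocessing is in place, the parameter bounds $n = \Oh(\sqrt N)$ and $c = \Oh(N+M) = \Oh(|\phi|)$ follow immediately, so $n^2 + c = \Oh(|\phi|)$ and the ETH lower bound is obtained by contraposition.
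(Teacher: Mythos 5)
Your clause gadget is the right kind of trick, and the case analysis you do on $\{v_i,s_{j_1},s_{j_2},s_{j_3}\}$ is sound under the preconditions you assume: with the $s$--$s$ edges scaffolded, the community admits a spanning star if and only if some $e_{i,j_r}$ is present. But the proposal has several gaps, two of which are fatal as written.

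First, you never specify the edge budget~$\ell$ (nor the weight bound~$b$). The community constraints in \UStars{} are monotone: adding edges never destroys a spanning star. Without a budget the clique $G$ itself is a feasible solution to every instance you construct, so the reduction is vacuous. This is not a cosmetic omission, because the budget must be chosen \emph{exactly} so that after the scaffolding is paid for, only one literal edge per variable fits; all of the ``selection'' behaviour of the reduction comes from that tightness. Relatedly, an ``exactly-one'' sub-gadget ``built from a few additional size-$4$ communities'' cannot exist: by monotonicity, communities alone can only enforce ``at least one,'' never ``at most one.'' The paper's construction gets ``exactly one'' by combining a size-$3$ at-least-one community per variable with the globally tight budget $\ell = |X| + 2|\Gamma| + \text{(\#fixed edges)}$; the ``at most one'' direction is purely a counting argument.

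Second, and more seriously, the cross-group preprocessing step does not go through. Your clause community is $\{v_i,s_{j_1},s_{j_2},s_{j_3}\}$, so all three literals of a clause must sit in the same column $v_i$, i.e.\ the same variable group. You claim this can be arranged by Tseitin-style auxiliaries, but that decomposition leaves the original literals in their original groups: replacing $(\ell_1 \vee \ell_2 \vee \ell_3)$ by $(\ell_1 \vee a) \wedge (\lnot a \vee \ell_2 \vee \ell_3)$ still has $\ell_2,\ell_3$ in whatever groups they were in. Making a ``copy'' of a variable in another group requires an equality constraint between two edges in different columns, which is itself a cross-group constraint---the problem you were trying to eliminate. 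Nor can you simply choose the partition of variables into $\Oh(\sqrt N)$ groups so that all clauses are monochromatic: a chain of clauses such as $(x_1 x_2 x_3), (x_3 x_4 x_5), (x_5 x_6 x_7), \ldots$ transitively forces all variables into one group. So the precondition your gadget relies on is not, in general, achievable.

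The paper avoids both issues with a different architecture. It builds two \emph{disjoint} compressed gadgets of size $\Oh(\sqrt{|\phi|})$ each: a variable gadget (vertex classes $U,P_1,P_2$) whose size-$3$ communities expose a pair of \emph{selection} edges per variable, and a clause gadget (vertex classes $Y,Z_1,Z_2,Z_3$) whose size-$3$ communities expose three \emph{free} edges per clause, forcing at least two of them into any solution. These two gadgets are then linked by size-$4$ \emph{assignment} communities, one per literal occurrence, each consisting of the two endpoints of a free edge together with the two endpoints of the selection edge for the complementary literal. After fixing enough inter-gadget edges via size-$2$ communities, a short star analysis shows such a community is satisfied iff the free edge or the selection edge is present. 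A tight budget $\ell = |X| + 2|\Gamma| + \text{(\#fixed)}$ then forces exactly one selection edge per variable and exactly two free edges per clause, and the assignment community of the remaining (third) literal of each clause certifies that the chosen assignment satisfies it. Because consistency is mediated through these bridging communities rather than through co-located edges, the paper never needs clauses to be group-local, which is precisely the step your preprocessing cannot deliver.
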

    \begin{proof}
        We present a reduction from $3$-\textsc{SAT} to \UStars where~$G$ is a clique with maximum community size~$4$ such that the resulting instance has $\Oh(\sqrt {|\phi|})$ vertices and $\Oh(|\phi|)$~communities, where~$\phi$ denotes the total formula length.
        Then, the existence of an $2^{o(n^2+c)} \cdot \poly(n+c)$-time algorithm for \UStars implies the existence of a $2^{o(|\phi|)}$-time algorithm for $3$-\textsc{SAT} violating the ETH~\cite{impagliazzo2001complexity,impagliazzo2001problems}.
        %Then, we adapt the reduction to $4$-\textsc{Stars NWS} and $4$-\textsc{Connectivity NWS}.
The input formula~$\phi$ is over the variable set~$X$ and each clause~$q\in\Gamma$ contains exactly three literals. For a literal~$y$, we denote by~$\overline{y}$ its complement.
        A visualization of the construction is given in \Cref{fig-eth-bound-stars}.
		%$V(G)$ is partitioned into four~sets~$U,P$ (used for the variable gadget), and~$Y,Z$ (used for the clause gadget.
		%The precise definitions of these sets are given below, when we describe the gadgets.
		In all gadgets, we add several communities of size~$2$.
		These communities enforce that each solution has to contain the edge of this community.
		In the following we call such edges \emph{fixed}.

      \begin{figure}[t]
      \centering
      \scalebox{1}{
        \begin{minipage}[c]{0.48\textwidth}
                %\begin{figure}[t]
            \centering
            \begin{tikzpicture}[x=0.75pt,y=0.75pt,yscale=-1,xscale=1]
%uncomment if require: \path (0,235); %set diagram left start at 0, and has height of 235

%Rounded Rect [id:dp21914907185499322]
                \draw  [color={rgb, 255:red, 128; green, 128; blue, 128 }  ,draw opacity=1 ][dash pattern={on 4.5pt off 4.5pt}] (60,14.35) .. controls (60,12.14) and (61.79,10.35) .. (64,10.35) -- (186,10.35) .. controls (188.21,10.35) and (190,12.14) .. (190,14.35) -- (190,26.35) .. controls (190,28.56) and (188.21,30.35) .. (186,30.35) -- (64,30.35) .. controls (61.79,30.35) and (60,28.56) .. (60,26.35) -- cycle ;
%Rounded Rect [id:dp30430333774088125]
                \draw  [color={rgb, 255:red, 128; green, 128; blue, 128 }  ,draw opacity=1 ][dash pattern={on 4.5pt off 4.5pt}] (60,84.35) .. controls (60,82.14) and (61.79,80.35) .. (64,80.35) -- (186,80.35) .. controls (188.21,80.35) and (190,82.14) .. (190,84.35) -- (190,96.35) .. controls (190,98.56) and (188.21,100.35) .. (186,100.35) -- (64,100.35) .. controls (61.79,100.35) and (60,98.56) .. (60,96.35) -- cycle ;
%Straight Lines [id:da8139092565486107]
                \draw [color={rgb, 255:red, 0; green, 0; blue, 0 }  ,draw opacity=1 ]   (90.07,19.45) -- (90.07,40.35) -- (90.07,89.45) ;
%Straight Lines [id:da9526141073355061]
                \draw [color={rgb, 255:red, 0; green, 0; blue, 0 }  ,draw opacity=1 ]  (170.07,19.45) -- (170.07,89.45) ;
%Straight Lines [id:da33422731774004166]
                \draw [color={rgb, 255:red, 0; green, 0; blue, 0 }  ,draw opacity=1 ]  (90.07,89.45) -- (170.07,19.45) ;
%Shape: Circle [id:dp19190397617745936]
                \draw  [fill={rgb, 255:red, 255; green, 255; blue, 255 }  ,fill opacity=1 ] (172.57,19.52) .. controls (172.53,20.9) and (171.38,21.99) .. (170,21.95) .. controls (168.62,21.91) and (167.53,20.76) .. (167.57,19.38) .. controls (167.61,18) and (168.76,16.91) .. (170.14,16.95) .. controls (171.52,16.99) and (172.61,18.14) .. (172.57,19.52) -- cycle ;
%Shape: Circle [id:dp8673891783679247]
                \draw  [fill={rgb, 255:red, 255; green, 255; blue, 255 }  ,fill opacity=1 ] (172.57,89.52) .. controls (172.53,90.9) and (171.38,91.99) .. (170,91.95) .. controls (168.62,91.91) and (167.53,90.76) .. (167.57,89.38) .. controls (167.61,88) and (168.76,86.91) .. (170.14,86.95) .. controls (171.52,86.99) and (172.61,88.14) .. (172.57,89.52) -- cycle ;
%Shape: Circle [id:dp9342307875050735]
                \draw  [fill={rgb, 255:red, 255; green, 255; blue, 255 }  ,fill opacity=1 ] (92.57,89.52) .. controls (92.53,90.9) and (91.38,91.99) .. (90,91.95) .. controls (88.62,91.91) and (87.53,90.76) .. (87.57,89.38) .. controls (87.61,88) and (88.76,86.91) .. (90.14,86.95) .. controls (91.52,86.99) and (92.61,88.14) .. (92.57,89.52) -- cycle ;
%Shape: Circle [id:dp8922159368881891]
                \draw  [fill={rgb, 255:red, 255; green, 255; blue, 255 }  ,fill opacity=1 ] (92.57,19.52) .. controls (92.53,20.9) and (91.38,21.99) .. (90,21.95) .. controls (88.62,21.91) and (87.53,20.76) .. (87.57,19.38) .. controls (87.61,18) and (88.76,16.91) .. (90.14,16.95) .. controls (91.52,16.99) and (92.61,18.14) .. (92.57,19.52) -- cycle ;
                \draw [color={rgb, 255:red, 0; green, 0; blue, 0 }  ,draw opacity=1 ]   (90.07,19.45) -- (170.07,89.45) ;
%Shape: Rectangle [id:dp20304609709206578]
%Shape: Rectangle [id:dp45874477995335905]
                \draw  [color={rgb, 255:red, 0; green, 0; blue, 0 }  ,draw opacity=0 ][fill={rgb, 255:red, 255; green, 255; blue, 255 }  ,fill opacity=1 ] (139.77,32.58) -- (156.77,32.58) -- (156.77,47.58) -- (139.77,47.58) -- cycle ;
%Shape: Rectangle [id:dp8154418999770786]
                \draw  [color={rgb, 255:red, 0; green, 0; blue, 0 }  ,draw opacity=0 ][fill={rgb, 255:red, 255; green, 255; blue, 255 }  ,fill opacity=1 ] (164.77,42.58) -- (181.77,42.58) -- (181.77,63.08) -- (164.77,63.08) -- cycle ;
%Shape: Polygon Curved [id:ds5162895482390107]
                \draw  [blue,line width=0.3mm] (174.27,13.08) .. controls (163.77,3.08) and (79.27,84.58) .. (80.77,95.08) .. controls (82.27,105.58) and (171.77,106.58) .. (178.77,97.08) .. controls (185.77,87.58) and (184.77,23.08) .. (174.27,13.08) -- cycle ;
%Shape: Polygon Curved [id:ds05084983112328734]
                \draw  [blue,line width=0.4mm,dashed] (87.27,13.08) .. controls (95.77,4.08) and (185.27,82.58) .. (177.77,92.58) .. controls (170.27,102.58) and (90.77,101.58) .. (84.77,94.08) .. controls (78.77,86.58) and (78.77,22.08) .. (87.27,13.08) -- cycle ;
%Shape: Rectangle [id:dp8830996207263513]
                \draw  [color=red  ,draw opacity=0 ][fill={rgb, 255:red, 255; green, 255; blue, 255 }  ,fill opacity=1 ] (102.77,32.58) -- (119.77,32.58) -- (119.77,47.58) -- (102.77,47.58) -- cycle ;
                \draw  [color={rgb, 255:red, 0; green, 0; blue, 0 }  ,draw opacity=0 ][fill={rgb, 255:red, 255; green, 255; blue, 255 }  ,fill opacity=1 ] (84,40.35) -- (99.27,40.35) -- (99.27,60.08) -- (84,60.08) -- cycle ;

% Text Node
                \draw (66,14.59) node [anchor=north west][inner sep=0.75pt]    {$U$};
% Text Node
                \draw (65.65,85.25) node [anchor=north west][inner sep=0.75pt]    {$P$};
% Text Node
                \draw (106,34.9) node [anchor=north west][inner sep=0.75pt]    {$x_{1}$};
% Text Node
                \draw (84.5,43.4) node [anchor=north west][inner sep=0.75pt]    {$\overline{x}_{1}$};
% Text Node
                \draw (165,48.4) node [anchor=north west][inner sep=0.75pt]    {$\overline{x}_{2}$};
% Text Node
                \draw (140.5,35.4) node [anchor=north west][inner sep=0.75pt]    {$x_{2}$};
                \draw (60,-10) node [anchor=north west][inner sep=0.75pt]    {$a)$};

            \end{tikzpicture}
            %\caption{An example of the construction showing how the variables are represented and how the $P_3$s are arranged}\label{fig:variable-gadget}
        %\end{figure}
        \newline
        
        \begin{tikzpicture}[x=0.75pt,y=0.75pt,yscale=-0.8,xscale=0.8]
%uncomment if require: \path (0,235); %set diagram left start at 0, and has height of 235

%Straight Lines [id:da716491056546935]
                \draw    (130.07,99.93) -- (130.07,169.93) ;
%Straight Lines [id:da7626533430015954]
                \draw    (50.07,169.93) -- (130.07,99.93) ;
%Straight Lines [id:da9212364814344547]
                \draw    (210.07,169.93) -- (130.07,99.93) ;
%Shape: Circle [id:dp02739618461266069]
                \draw  [fill={rgb, 255:red, 255; green, 255; blue, 255 }  ,fill opacity=1 ] (132.57,100) .. controls (132.53,101.38) and (131.38,102.47) .. (130,102.43) .. controls (128.62,102.39) and (127.53,101.24) .. (127.57,99.86) .. controls (127.61,98.48) and (128.76,97.39) .. (130.14,97.43) .. controls (131.52,97.47) and (132.61,98.62) .. (132.57,100) -- cycle ;
%Shape: Circle [id:dp202502035054812]
                \draw  [fill={rgb, 255:red, 255; green, 255; blue, 255 }  ,fill opacity=1 ] (212.57,170) .. controls (212.53,171.38) and (211.38,172.47) .. (210,172.43) .. controls (208.62,172.39) and (207.53,171.24) .. (207.57,169.86) .. controls (207.61,168.48) and (208.76,167.39) .. (210.14,167.43) .. controls (211.52,167.47) and (212.61,168.62) .. (212.57,170) -- cycle ;
%Shape: Circle [id:dp1775511946108973]
                \draw  [fill={rgb, 255:red, 255; green, 255; blue, 255 }  ,fill opacity=1 ] (132.57,170) .. controls (132.53,171.38) and (131.38,172.47) .. (130,172.43) .. controls (128.62,172.39) and (127.53,171.24) .. (127.57,169.86) .. controls (127.61,168.48) and (128.76,167.39) .. (130.14,167.43) .. controls (131.52,167.47) and (132.61,168.62) .. (132.57,170) -- cycle ;
%Shape: Circle [id:dp1680398873638086]
                \draw  [fill={rgb, 255:red, 255; green, 255; blue, 255 }  ,fill opacity=1 ] (52.57,170) .. controls (52.53,171.38) and (51.38,172.47) .. (50,172.43) .. controls (48.62,172.39) and (47.53,171.24) .. (47.57,169.86) .. controls (47.61,168.48) and (48.76,167.39) .. (50.14,167.43) .. controls (51.52,167.47) and (52.61,168.62) .. (52.57,170) -- cycle ;
%Shape: Polygon Curved [id:ds7669405236497756]
                \draw  [blue,line width=0.3mm] (119.93,79.78) .. controls (129.43,70.28) and (239.93,169.78) .. (229.93,179.28) .. controls (219.93,188.78) and (129.43,189.28) .. (119.93,180.28) .. controls (110.43,171.28) and (110.43,89.28) .. (119.93,79.78) -- cycle ;
%Shape: Polygon Curved [id:ds6021294895187352]
                \draw  [blue,line width=0.5mm,dotted] (139.93,79.78) .. controls (129.93,70.28) and (20.43,170.28) .. (29.93,179.28) .. controls (39.43,188.28) and (129.43,189.28) .. (139.93,180.28) .. controls (150.43,171.28) and (149.93,89.28) .. (139.93,79.78) -- cycle ;
%Shape: Polygon Curved [id:ds2822229166853619]
                \draw [blue,line width=0.4mm,dashed]  (43.93,175.78) .. controls (34.43,165.78) and (119.43,88.78) .. (129.93,89.28) .. controls (140.43,89.78) and (226.43,165.78) .. (215.93,175.78) .. controls (205.43,185.78) and (147.43,121.28) .. (130.43,121.28) .. controls (113.43,121.28) and (53.43,185.78) .. (43.93,175.78) -- cycle ;
%Shape: Rectangle [id:dp3767727989555254]
                \draw  [color={rgb, 255:red, 0; green, 0; blue, 0 }  ,draw opacity=0 ][fill={rgb, 255:red, 255; green, 255; blue, 255 }  ,fill opacity=1 ] (88.07,122.65) -- (103.07,122.65) -- (103.07,137.65) -- (88.07,137.65) -- cycle ;
%Shape: Rectangle [id:dp009077777663877962]
                \draw  [color={rgb, 255:red, 0; green, 0; blue, 0 }  ,draw opacity=0 ][fill={rgb, 255:red, 255; green, 255; blue, 255 }  ,fill opacity=1 ] (159.07,122.65) -- (174.07,122.65) -- (174.07,137.65) -- (159.07,137.65) -- cycle ;
%Shape: Rectangle [id:dp7045997335770793]
                \draw  [color={rgb, 255:red, 0; green, 0; blue, 0 }  ,draw opacity=0 ][fill={rgb, 255:red, 255; green, 255; blue, 255 }  ,fill opacity=1 ] (123.07,140.65) -- (140.07,140.65) -- (140.07,160.65) -- (123.07,160.65) -- cycle ;

% Text Node
                \draw (89.5,126.4) node [anchor=north west][inner sep=0.75pt]    {$x_{1}$};
% Text Node
                \draw (160,126.4) node [anchor=north west][inner sep=0.75pt]    {$x_{3}$};
% Text Node
                \draw (123.5,143.9) node [anchor=north west][inner sep=0.75pt]    {$\overline{x}_{2}$};
\draw (60,80) node [anchor=north west][inner sep=0.75pt]    {$b)$};
            \end{tikzpicture}
        \end{minipage}
        \begin{minipage}[c]{0.48\textwidth}
        \begin{tikzpicture}[x=0.75pt,y=0.75pt,yscale=-1,xscale=1]
%uncomment if require: \path (0,235); %set diagram left start at 0, and has height of 235

%Rounded Rect [id:dp08361057590283216]
                \draw  [color={rgb, 255:red, 128; green, 128; blue, 128 }  ,draw opacity=1 ][dash pattern={on 4.5pt off 4.5pt}] (20,124.83) .. controls (20,122.62) and (21.79,120.83) .. (24,120.83) -- (216,120.83) .. controls (218.21,120.83) and (220,122.62) .. (220,124.83) -- (220,136.83) .. controls (220,139.04) and (218.21,140.83) .. (216,140.83) -- (24,140.83) .. controls (21.79,140.83) and (20,139.04) .. (20,136.83) -- cycle ;
%Rounded Rect [id:dp35139080252627586]
                \draw  [color={rgb, 255:red, 128; green, 128; blue, 128 }  ,draw opacity=1 ][dash pattern={on 4.5pt off 4.5pt}] (20,194.83) .. controls (20,192.62) and (21.79,190.83) .. (24,190.83) -- (216,190.83) .. controls (218.21,190.83) and (220,192.62) .. (220,194.83) -- (220,206.83) .. controls (220,209.04) and (218.21,210.83) .. (216,210.83) -- (24,210.83) .. controls (21.79,210.83) and (20,209.04) .. (20,206.83) -- cycle ;
%Straight Lines [id:da6093235686695971]
              %  \draw [color={rgb, 255:red, 200; green, 200; blue, 200 }  ,draw opacity=1 ]   (50.07,129.93) -- (50.07,150.83) -- (50.07,199.93) ;
%Straight Lines [id:da7031718115956955]
              %  \draw [color={rgb, 255:red, 200; green, 200; blue, 200 }  ,draw opacity=1 ]   (50.07,129.93) -- (210.07,199.93) ;
%Straight Lines [id:da6663946503285815]
              %  \draw [color={rgb, 255:red, 200; green, 200; blue, 200 }  ,draw opacity=1 ]   (210.07,129.93) -- (210.07,199.93) ;
%Straight Lines [id:da4816676458673772]
              %  \draw [color={rgb, 255:red, 200; green, 200; blue, 200 }  ,draw opacity=1 ]   (50.07,129.93) -- (130.07,199.93) ;
%Straight Lines [id:da8074382636473504]
               % \draw [color={rgb, 255:red, 200; green, 200; blue, 200 }  ,draw opacity=1 ]   (130.07,199.93) -- (210.07,129.93) ;
%Straight Lines [id:da40084942177584104]
                \draw    (50.07,199.93) -- (130.07,129.93) ;
%Straight Lines [id:da5372520357341299]
                %\draw [color={rgb, 255:red, 200; green, 200; blue, 200 }  ,draw opacity=1 ]   (50.07,199.93) -- (210.07,129.93) ;
%Shape: Circle [id:dp8605422561063327]
                \draw  [fill={rgb, 255:red, 255; green, 255; blue, 255 }  ,fill opacity=1 ] (132.57,130) .. controls (132.53,131.38) and (131.38,132.47) .. (130,132.43) .. controls (128.62,132.39) and (127.53,131.24) .. (127.57,129.86) .. controls (127.61,128.48) and (128.76,127.39) .. (130.14,127.43) .. controls (131.52,127.47) and (132.61,128.62) .. (132.57,130) -- cycle ;
%Shape: Circle [id:dp05053727940587327]
               % \draw  [fill={rgb, 255:red, 255; green, 255; blue, 255 }  ,fill opacity=1 ] (212.57,130) .. controls (212.53,131.38) and (211.38,132.47) .. (210,132.43) .. controls (208.62,132.39) and (207.53,131.24) .. (207.57,129.86) .. controls (207.61,128.48) and (208.76,127.39) .. (210.14,127.43) .. controls (211.52,127.47) and (212.61,128.62) .. (212.57,130) -- cycle ;
%Shape: Circle [id:dp5855304663893682]
                \draw  [fill={rgb, 255:red, 255; green, 255; blue, 255 }  ,fill opacity=1 ] (212.57,200) .. controls (212.53,201.38) and (211.38,202.47) .. (210,202.43) .. controls (208.62,202.39) and (207.53,201.24) .. (207.57,199.86) .. controls (207.61,198.48) and (208.76,197.39) .. (210.14,197.43) .. controls (211.52,197.47) and (212.61,198.62) .. (212.57,200) -- cycle ;
%Shape: Circle [id:dp022317632892974393]
                \draw  [fill={rgb, 255:red, 255; green, 255; blue, 255 }  ,fill opacity=1 ] (132.57,200) .. controls (132.53,201.38) and (131.38,202.47) .. (130,202.43) .. controls (128.62,202.39) and (127.53,201.24) .. (127.57,199.86) .. controls (127.61,198.48) and (128.76,197.39) .. (130.14,197.43) .. controls (131.52,197.47) and (132.61,198.62) .. (132.57,200) -- cycle ;
%Shape: Circle [id:dp8786252257647231]
                \draw  [fill={rgb, 255:red, 255; green, 255; blue, 255 }  ,fill opacity=1 ] (52.57,200) .. controls (52.53,201.38) and (51.38,202.47) .. (50,202.43) .. controls (48.62,202.39) and (47.53,201.24) .. (47.57,199.86) .. controls (47.61,198.48) and (48.76,197.39) .. (50.14,197.43) .. controls (51.52,197.47) and (52.61,198.62) .. (52.57,200) -- cycle ;
                \draw  [color={rgb, 255:red, 128; green, 128; blue, 128 }  ,draw opacity=1 ][dash pattern={on 4.5pt off 4.5pt}] (60,14.35) .. controls (60,12.14) and (61.79,10.35) .. (64,10.35) -- (186,10.35) .. controls (188.21,10.35) and (190,12.14) .. (190,14.35) -- (190,26.35) .. controls (190,28.56) and (188.21,30.35) .. (186,30.35) -- (64,30.35) .. controls (61.79,30.35) and (60,28.56) .. (60,26.35) -- cycle ;
%Rounded Rect [id:dp4363804618437406]
                \draw  [color={rgb, 255:red, 128; green, 128; blue, 128 }  ,draw opacity=1 ][dash pattern={on 4.5pt off 4.5pt}] (60,84.35) .. controls (60,82.14) and (61.79,80.35) .. (64,80.35) -- (186,80.35) .. controls (188.21,80.35) and (190,82.14) .. (190,84.35) -- (190,96.35) .. controls (190,98.56) and (188.21,100.35) .. (186,100.35) -- (64,100.35) .. controls (61.79,100.35) and (60,98.56) .. (60,96.35) -- cycle ;
%Straight Lines [id:da7866751268255155]
                \draw [color={rgb, 255:red, 0; green, 0; blue, 0 }  ,draw opacity=1 ]   (90.07,19.45) -- (90.07,40.35) -- (90.07,89.45) ;
                \draw   [red] (50.07,199.93) -- (90.07,89.45) ;
                \draw   [red] (90.07,89.45) -- (130.07,129.93) ;
                \draw   [red] (50.07,199.93) to  [bend right=25] (90.07,19.45) ;
%Straight Lines [id:da013969245630714]
              %  \draw [color={rgb, 255:red, 200; green, 200; blue, 200 }  ,draw opacity=1 ]   (170.07,19.45) -- (170.07,89.45) ;
%Straight Lines [id:da2488756749084644]
              %  \draw [color={rgb, 255:red, 200; green, 200; blue, 200 }  ,draw opacity=1 ]   (90.07,89.45) -- (170.07,19.45) ;
%Shape: Circle [id:dp3171203994878835]
               % \draw  [fill={rgb, 255:red, 255; green, 255; blue, 255 }  ,fill opacity=1 ] (172.57,19.52) .. controls (172.53,20.9) and (171.38,21.99) .. (170,21.95) .. controls (168.62,21.91) and (167.53,20.76) .. (167.57,19.38) .. controls (167.61,18) and (168.76,16.91) .. (170.14,16.95) .. controls (171.52,16.99) and (172.61,18.14) .. (172.57,19.52) -- cycle ;
%Shape: Circle [id:dp4119183241030231]
                \draw  [fill={rgb, 255:red, 255; green, 255; blue, 255 }  ,fill opacity=1 ] (172.57,89.52) .. controls (172.53,90.9) and (171.38,91.99) .. (170,91.95) .. controls (168.62,91.91) and (167.53,90.76) .. (167.57,89.38) .. controls (167.61,88) and (168.76,86.91) .. (170.14,86.95) .. controls (171.52,86.99) and (172.61,88.14) .. (172.57,89.52) -- cycle ;
%Shape: Circle [id:dp7171399258579877]
                \draw  [fill={rgb, 255:red, 255; green, 255; blue, 255 }  ,fill opacity=1 ] (92.57,89.52) .. controls (92.53,90.9) and (91.38,91.99) .. (90,91.95) .. controls (88.62,91.91) and (87.53,90.76) .. (87.57,89.38) .. controls (87.61,88) and (88.76,86.91) .. (90.14,86.95) .. controls (91.52,86.99) and (92.61,88.14) .. (92.57,89.52) -- cycle ;
%Shape: Circle [id:dp3803953915262809]
                \draw  [fill={rgb, 255:red, 255; green, 255; blue, 255 }  ,fill opacity=1 ] (92.57,19.52) .. controls (92.53,20.9) and (91.38,21.99) .. (90,21.95) .. controls (88.62,21.91) and (87.53,20.76) .. (87.57,19.38) .. controls (87.61,18) and (88.76,16.91) .. (90.14,16.95) .. controls (91.52,16.99) and (92.61,18.14) .. (92.57,19.52) -- cycle ;
                \draw  [blue, line width=0.3mm] (89.98,12.28) .. controls (106.48,12.78) and (90.98,133.78) .. (102.48,138.28) .. controls (113.98,142.78) and (123.98,114.28) .. (136.98,124.78) .. controls (149.98,135.28) and (46.48,219.78) .. (41.98,207.78) .. controls (37.48,195.78) and (75.98,167.28) .. (80.48,156.78) .. controls (84.98,146.28) and (73.48,11.78) .. (89.98,12.28) -- cycle ;
%Straight Lines [id:da7298510091218705]
                \draw    (210.07,199.93) -- (130.07,129.93) ;
%Straight Lines [id:da648535969864721]
                \draw    (130.07,129.93) -- (130.07,199.93) ;
%Straight Lines [id:da6346289849262575]
                \draw [color={rgb, 255:red, 0; green, 0; blue, 0 }  ,draw opacity=1 ]   (90.07,19.45) -- (170.07,89.45) ;
                \draw  [color={rgb, 255:red, 0; green, 0; blue, 0 }  ,draw opacity=0 ][fill={rgb, 255:red, 255; green, 255; blue, 255 }  ,fill opacity=1 ] (88.07,152.65) -- (103.07,152.65) -- (103.07,167.65) -- (88.07,167.65) -- cycle ;
%Shape: Rectangle [id:dp009077777663877962]
                \draw  [color={rgb, 255:red, 0; green, 0; blue, 0 }  ,draw opacity=0 ][fill={rgb, 255:red, 255; green, 255; blue, 255 }  ,fill opacity=1 ] (159.07,152.65) -- (174.07,152.65) -- (174.07,167.65) -- (159.07,167.65) -- cycle ;
%Shape: Rectangle [id:dp7045997335770793]
                \draw  [color={rgb, 255:red, 0; green, 0; blue, 0 }  ,draw opacity=0 ][fill={rgb, 255:red, 255; green, 255; blue, 255 }  ,fill opacity=1 ] (123.07,160.65) -- (140.07,160.65) -- (140.07,180.65) -- (123.07,180.65) -- cycle ;

\draw  [color=red  ,draw opacity=0 ][fill={rgb, 255:red, 255; green, 255; blue, 255 }  ,fill opacity=1 ] (102.77,32.58) -- (119.77,32.58) -- (119.77,47.58) -- (102.77,47.58) -- cycle ;
                \draw  [color={rgb, 255:red, 0; green, 0; blue, 0 }  ,draw opacity=0 ][fill={rgb, 255:red, 255; green, 255; blue, 255 }  ,fill opacity=1 ] (84,40.35) -- (90.27,40.35) -- (90.27,60.08) -- (84,60.08) -- cycle ;

% Text Node
                \draw (26,125.07) node [anchor=north west][inner sep=0.75pt]    {$Y$};
% Text Node
                \draw (25.67,195.73) node [anchor=north west][inner sep=0.75pt]    {$Z$};
% Text Node
                \draw (66,14.59) node [anchor=north west][inner sep=0.75pt]    {$U$};
% Text Node
                \draw (65.67,85.25) node [anchor=north west][inner sep=0.75pt]    {$P$};
% Text Node
                \draw (106,34.9) node [anchor=north west][inner sep=0.75pt]    {$x_{1}$};
% Text Node
                \draw (84.5,43.4) node [anchor=north west][inner sep=0.75pt]    {$\overline{x}_{1}$};
% Text Node
\draw (89.5,156.4) node [anchor=north west][inner sep=0.75pt]    {$x_{1}$};
% Text Node
                \draw (160,156.4) node [anchor=north west][inner sep=0.75pt]    {$x_{3}$};
% Text Node
                \draw (125.5,163.9) node [anchor=north west][inner sep=0.75pt]    {$\overline{x}_{2}$};
                \draw (20,15) node [anchor=north west][inner sep=0.75pt]    {$c)$};
            \end{tikzpicture}
        \end{minipage}
        }
        
        \caption{Sketch of the construction of \Cref{thm:eth-based-lowerbound-for-n-square}. The communities are \textcolor{blue}{blue} (solid, dashed and dotted). We only show edges which are contained in at least one community and only some fixed edges (\textcolor{red}{red}). $a)$~Part of the variable gadget for~$x_1$ and~$x_2$. $b)$~The variable communities for a clause~$q=x_1\lor \overline{x_2}\lor x_3$. $c)$~The assignment gadget for the first literal~$x_1$ of the clause~$q$. Here, the \textcolor{red}{red} edges are the fixed edges with one endpoint in the variable gadget and one in the clause gadget.}
        \label{fig-eth-bound-stars}
        \end{figure}
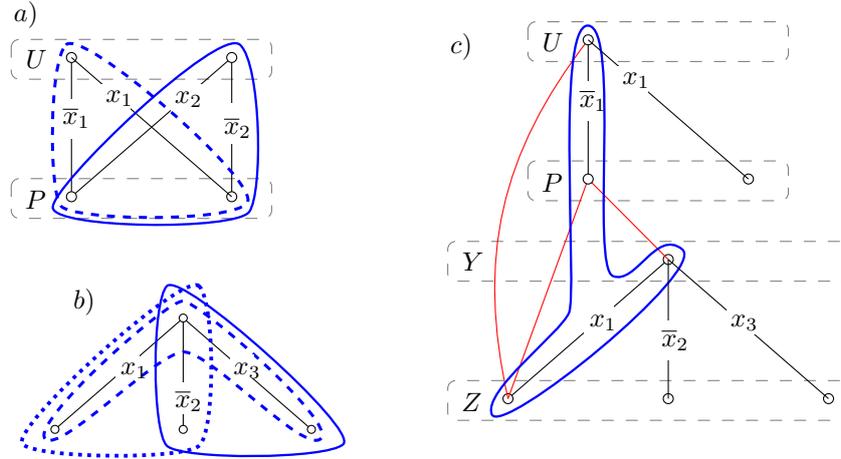

        \textbf{Variable Gadget:}
        We start by describing the construction of the variable gadget $G_X$. 
        Recall that~$G_X$ is a clique.
        The idea is to create for each variable a community~$C$ of size~$3$ with one \emph{fixed} edge. 
        The two remaining edges of~$C$ are referred to as \emph{selection} edges.
        The idea is that each solution contains exactly one selection edge of~$C$.
        One selection edge represents the positive literal, the other one represents the negative literal.
        The fixed edge of the triangle is used to model that one literal must be set to \texttt{true}.
        The selection edges are arranged compactly, to guarantee that~$|V(G_X)| \in \Oh(\sqrt{|\phi|})$.
In the following, we describe the graph~$G_X$ together with communities fulfilling the above-described properties.    
An example of a variable gadget is shown in part~$a)$ of \Cref{fig-eth-bound-stars}.

        Let~$V(G_X)=U\cup P$ where~$U \coloneqq \{u_1, \dots, u_{n_x}\}$,~$P=P_1\cup P_2$, and~$P_i \coloneqq \{p^i_1, \dots, p^i_{n_x}\}$ for~$i\in[2]$ consist of $n_x = \lceil\sqrt {|X|}\rceil$~vertices each.       
        It remains to describe the communities: 
        For each variable~$x \in X$, we add a community~$C_x \coloneqq \{u_j, p^1_{s}, p^2_{s}\}$ for~$j,s\in[\, n_x \,]$.
        This is possible since~$n_x\cdot n_x\ge |X|$.
        We refer to these communities as the \emph{variable communities}~$\mathcal{C}^X$.  
        Afterwards, we set~$\theta(x) \coloneqq \{u_j, p^1_{s}\}$ and~$\theta(\overline{x}) \coloneqq \{u_j, p^2_{s}\}$ to assign the positive and negative literal of $x$ to an edge of the variable gadget.
        Now, we fix the edges of~$G[P]=G[P_1\cup P_2]$. 
        Recall that this means that for each edge~$\{p^{i_1}_{j_1},p^{i_2}_{j_2}\}$ having both endpoints in~$P_1\cup P_2$, we add a community~$\{p^{i_1}_{j_1},p^{i_2}_{j_2}\}$.

       % Let~$G_X$ be a complete graph with vertex set~$U\cup P$ where~$U \coloneqq \{u_1, \dots, u_{n_x}\}$ and~$P \coloneqq \{p_1, \dots, p_{n_x}\}$ consist of~$n_x = 2\lceil\sqrt {|X|}\rceil$ vertices each.
       % Observe that $n_x$ is even.
       % Next, we define a partition~$\mathcal{P} \coloneqq \{\{p_{2i -1}, p_{2i }\} : i \in \{1, \dots, n_x/2\}\}$ of $P$ into sets of size $2$.
       % We now use~$\mathcal{P}$ to define communities with the desired properties.

     % First, note that since~$|U \times \mathcal{P}| = n_x \cdot \frac{n_x}{2} = 2\lceil\sqrt {|X|}\rceil \cdot \frac{2\lceil\sqrt {|X|}\rceil}{2} \ge 2|X| \geq |X|$, we  can construct in polynomial time an injection~$\eta: X \to U \times \mathcal{P}$.
      %  We use this injection to assign each variable a community whose induced subgraph is a triangle.
      %  Then, we add for each variable~$x \in X$ with~$\eta(x) = (u_j, \{p_{2i-1}, p_{2i}\})$ a community~$C_x \coloneqq \{u_j, p_{2i-1}, p_{2i}\}$.
      %  We refer to these communities as the \emph{variable communities}~$\mathcal{C}^X$.  
      %  Afterwards, we set~$\theta(x) \coloneqq \{u_j, p_{2i-1}\}$ and~$\theta(\overline{x}) \coloneqq \{u_j, p_{2i}\}$ to assign the positive and negative literal of $x$ to an edge of the variable gadget.
      %  Second, we \emph{fix} the edges~$\{v_i,v_j\}$ of the clique~$P$.
      %  More precisely, for each edge~$\{p_i,p_j\}$ having both endpoints in~$P$, we add a community~$\{p_i,p_j\}$.
        \iflong Note that edges with both endpoints in~$U$ are not contained in any community.
        We only add these edges to ensure that~$G$ is a clique.
        \fi
     
        Before describing the other gadgets, let us observe that each selection edge is indeed used in at most one variable gadget.
        
        \begin{clm}[$\star$]
            \label{claim-variable-gadget-communities-pairwise-disjoint}
            Each selection edge of~$E(G_X)$ is contained in only one subgraph induced by a variable community in~$\mathcal{C}^X$.
        \end{clm}
        \iflong
        \begin{claimproof}
            Each variable community~$C^x$ consists of one vertex~$u \in U$ and two vertices~$p_i, p_j \in P$ where~$\{u,p_i\}$ and~$\{u,p_j\}$ are the selection edges.
            Suppose there are two distinct variable communities~$C_{1}$ and~$C_{2}$ such that they have at least one selection edge in common.
            This implies that~$C_{1}$ and $C_{2}$ contain the same vertex of~$U$ and have at least one vertex of~$P$ in common.
            By construction, in each variable community~$C_i$ the two vertices of~$P_1\cup P_2$ are either identical or there is no overlap.
            Thus,~$C_1$ and~$C_2$ are identical, a contradiction.
        \end{claimproof}\fi

        \textbf{Clause Gadget:}
        We continue by describing the construction of the clause gadget~$G_\Gamma$.
The idea is that each clause is represented by four vertices of~$V(G_\Gamma)$ in which a triangle is \emph{fixed}.
All three remaining edges of this size-4 clique are referred to as \emph{free}.
Note that these free edges form a star with three~leaves.
Each free edge represents one literal of the clause.
For each pair containing two of these three edges, we then create a community containing the three endpoints of these two edges.
        As in the vertex gadget, these induced subgraphs are arranged compactly, to achieve a clause gadget with~$|V(G_\Gamma)| \in \Oh(\sqrt{|\Gamma|})$.
        In the following, we describe the graph~$G_\Gamma$ together with communities fulfilling these properties. 
        Part~$b)$ of \Cref{fig-eth-bound-stars} shows an example for the representation of a clause.
%       Let~$\nu: \{(q, \ell_1), (q, \ell_2), (q, \ell_3) : q = \{\ell_1, \ell_2, \ell_3\} \in \Gamma\} \to E_\Gamma$ be the mapping of the literal occurrences in clauses to the edges of the clause gadget.
%        This mapping will be populated throughout the following construction.

        Let~$V(G_\Gamma)=Y\cup Z$ where~$Y = \{y_1, \dots, y_{n_c}\}$,~$Z=Z_1\cup Z_2\cup Z_3$, and~$Z_i = \{z^1_1, \dots, z^i_{n_c}\}$ for~$i\in[3]$ consist of $n_c = \lceil\sqrt {|\Gamma|}\rceil$~vertices each.
        In the following, we assign each clause to a clique of~$G_\Gamma$ having vertex set~$y_j,z^1_s,z^2_s,z^3_s$ for~$j,s\in[n_c]$.
        This is possible since~$n_c\cdot n_c\ge |\Gamma|$.
        In this clique, we \emph{fix} the triangle having its endpoints in~$Z_1\cup Z_2\cup Z_3$. 
        Formally, for each clause~$q= \{q_1, q_2, q_3\} \in \Gamma$ we add three communities~$C^1_{q} = \{y_j, z^2_{s}, z^3_{s}\}$, $C^2_{q} = \{y_j, z^1_{s}, z^3_{s}\}$ and $C^3_{q} = \{y_j, z^1_{s}, z^2_{s}\}$.
        We refer to these communities as the \emph{clause} communities~$\mathcal{C}^\Gamma$.
        Afterwards, we set $\nu(q, q_1) \coloneqq \{y_j, z^1_{s}\}$, $\nu(q, q_2) \coloneqq \{y_j, z^2_{s}\}$, and $\nu(q, q_3) \coloneqq \{y_j, z^3_{s}\}$ to assign each literal in clause $q$ to an edge of the clause gadget.
        These edges are referred to as \emph{free}.
        Second, we fix the edges of the clique~$Z_1\cup Z_2\cup Z_3$.

        \iflong Note that, similar to the variable gadget, edges with both endpoints in~$Y$ are not contained in any community.
        Again, we only add these edges to ensure that~$G$ is a clique. \fi
        
Observe that the sets of free edges corresponding to two distinct clauses are disjoint:

               \begin{clm}[$\star$]
            \label{claim-clause-gadget-communities-pairwise-disjoint}
            Each free edge of~$E(G_\Gamma)$ is contained in exactly one subgraph induced by a clause community in~$\mathcal{C}^\Gamma$.
        \end{clm}
        \iflong
        \begin{claimproof}
            Each clause community~$C^q$ consists of one vertex~$y\in Y$ and one vertex~$z^i_s\in Z_i$.
            In~$C^q$ the free edges form a star with~$y$ as there center and the vertices~$z^i_s$ as leaves.
            Suppose there are two distinct clause communities~$C_1$ and~$C_2$ such that they have at least one free edge in common.
            This implies that~$C_1$ and~$C_2$ contain the same vertex of~$Y$ and have at least one vertex~$z^i_s$ of~$Z_i$ in common.
            By construction, in each clause community~$C_i$ the three vertices of~$Z_1\cup Z_2\cup Z_3$ are either identical or there is no overlap.
            Thus,~$C_1$ and~$C_2$ are identical, a contradiction.
        \end{claimproof} 
        \fi

        \textbf{Connecting the Gadgets:}
        We complete the construction by describing how the variable and clause gadget are connected, using new \emph{assignment} communities.
		The idea is to add a new community containing the endpoints of a free edge describing a literal in a clause together with the endpoints of the selection edge describing the same opposite literal in the variable gadget.
        These communities model occurrences of variables in the clauses.
        Roughly speaking, these communities are satisfied if the selection edge of the variable gadget or the free edge of the clause gadget is part of the solution.      
        To enforce this, we fix further edges of~$G$.
        An example of an assignment community for one literal is shown in part~$c)$ of \Cref{fig-eth-bound-stars}.
        
        We create for each clause~$q = \{q_1, q_2, q_3\} \in \Gamma$ three assignment communities~$C^{q_1}_{q} = \nu(q, q_1) \cup \theta(\overline{q_{1}})$, $C^{q_2}_{q} = \nu(q, q_2) \cup \theta(\overline{q_{2}})$, and~$C^{q_3}_{q} = \nu(q, q_3) \cup \theta(\overline{q_{3}})$.
        We denote the assignment communities with~$\mathcal{C}^X_\Gamma$.
To enforce that each solution contains the selection edge or the free edge of each assignment community, we fix all edges between the vertex sets~$U$ and~$Z$, between the vertex sets~$P$ and~$Y$, and between the vertex sets~$P$ and~$Z$.
%We denote all these edges as \emph{fixed}.
        %Also, for each edge~$e$ added this way, we add a community consisting of the endpoints of~$e$.
        %We also call these communities \emph{fixing}.
        
        \iflong Note that edges between~$U$ and~$Y$ are not part of any community and only added to ensure that~$G$ is a clique.\fi
        %These edges are \emph{not} fixed.
        %We only add these edges to make~$G$ a clique.

        Finally, we set~$\ell\coloneqq |X| + 2\cdot |\Gamma|+ \binom{|P|}{2}+\binom{|Z|}{2}+|U|\cdot |Z|+|P|\cdot |Y|+|P|\cdot |Z|$.
        \iflong  Let~$I$ denote the constructed instance of \UStars.
        %Observe that the vertex set~$V$ of~$I$ is~$V_X\cup V_\Gamma$.
        Clearly,~$|V(G)|\in\Oh(\sqrt{|\phi|})$ and the maximum community size is~$4$.
		Instance~$I$ has at most~$\Oh(\sqrt{|\phi|}\cdot\sqrt{|\phi|})=\Oh(|\phi|)$~communities of size~$2$, $|X|+3\cdot|\Gamma|$~communities of size~$3$, and~$3\cdot|\Gamma|$~communities of size~$4$. 
		Thus,~$I$ has~$\Oh(|\phi|)$~communities.       
        \fi

        \textbf{Correctness:}
\iflong\else The correctness is based on the observations that each solution for~$I$ contains~$a)$ all fixed edges, $b)$ exactly $|X|$~selection edges, and~$c)$ exactly $2|\Gamma|$~free edges.  
Fact~$b)$ ensures that this models an assignment of the variables of~$X$ and fact~$c)$ ensures that each clause is satisfied by at least one literal of this assignment.
The detailed correctness proof is deferred to the full version of this article. \fi  
        \iflong
        We show that the formula~$\phi$ is satisfiable if and only if~$I$ is a yes-instance of \UStars.
        Before we prove this statement, we make three observations about the solution.
        
        Since fixed edges are contained in a community of size~$2$, we have the following
        
        \begin{observation}
            \label{claim-parameter-l-observation-1}
            Each solution of~$I$ contains all fixed edges.
        \end{observation}
        %\iflong
        %\begin{claimproof}
        %Observe that each fixing community contains exactly two~vertices, that is, the two endpoints of a fixed edge, and that each two fixing communities have at most one vertex in common.
        %Thus, any solution for~$I$ contains each fixed edge.
        %\todomi{warum wird hier über schnitt maximal einsgeredet??? jede loesung muss die einzigartige kante jeder community der groesse 2 enthalten}
        %\end{claimproof}\fi
        
        Observe that~$I$ consists of precisely $\binom{|P|}{2}+\binom{|Z|}{2}+|U|\cdot |Z|+|P|\cdot |Y|+|P|\cdot |Z| = \ell - (|X| + 2\cdot |\Gamma|)$~fixed edges.
        Hence, each solution can contain at most~$|X|+2\cdot |\Gamma|$ selection and free edges.
        Thus, we obtain the following.

        \begin{clm}[$\star$]
            \label{claim-parameter-l-observation-2}
            Each solution of~$I$ contains at least $|X|$~selection edges.
        \end{clm}
        \iflong
        \begin{claimproof}
        Observe that at least one selection edge of each variable community~$C$ in the variable gadget is contained in the solution to satisfy~$C$.
            By \Cref{claim-variable-gadget-communities-pairwise-disjoint}, these are at least~$|X|$ edges.
        \end{claimproof}\fi
        
        \begin{clm}[$\star$]
            \label{claim-parameter-l-observation-3}
            Each solution of~$I$ contains at least $2\cdot |\Gamma|$~free edges.
        \end{clm}
        \iflong
        \begin{claimproof}
        Observe that at least two free edges of each triple of communities $C_q^1, C_q^2, C_q^3$ of the clause gadget for each clause~$q$ are contained in the solution to fulfill the requirements of these communities.
            By \Cref{claim-clause-gadget-communities-pairwise-disjoint}, and since no free edge is contained in any community of size~$2$, these are at least $2\cdot |\Gamma|$~edges.
        %    Since each variable community and each clause community do not have any vertex in common, this implies that~$I$ is a no-instance of $4$-\textsc{Stars NWS}.
        \end{claimproof}\fi
        
        From \Cref{claim-parameter-l-observation-1,claim-parameter-l-observation-2,claim-parameter-l-observation-3} we conclude that each solution consists of at least $\ell$~edges.
        \iflong\else The remaining part of the correctness proof is deferred to the full version of this article.\fi
        \iflong 
        Now, we show the correctness.

        $(\Rightarrow)$ Let~$A: X \to \{0,1\}$ be an assignment satisfying~$\phi$.
        We describe how to obtain a solution~$G'=(V(G),E')$ with $|E'| = \ell$ from~$A$.
        First, by \cref{claim-parameter-l-observation-1}, the solution contains all~$\ell-|X|-2\cdot |\Gamma|$ fixed edges.
        Second, for each variable, we chose the selection edge of the variable gadget representing the literal that is not satisfied by~$A$.
        For a formal definition, recall that for a variable~$x\in X$,~$\theta(x)$ is the selection edge representing~$x$ and~$\theta(\overline{x})$ is the selection edge representing~$\overline{x}$.
         We set~$E'_X \coloneqq \{\theta(x) : x \in X, A(x) = 0\} \cup \{\theta(\overline{x}) : x \in X, A(x) = 1\}$.
        Note that~$E'_X$ and the fixed edges fulfill the spanning star property of the variable communities.
        Third, we construct the set~$E'_\Gamma$ of free edges which are contained in the solution.
        For each clause $q = \{q_1, q_2, q_3\}\ \in\Gamma$, there is at least one variable~$x$ such that with the assignment~$A$ the literal, say~$q_3$, corresponding to~$x$ satisfies~$q$.
        We add the edges~$\nu(q, q_1)$ and~$\nu(q, q_2)$ to~$E'_\Gamma$.
        Recall that in the construction of the clause gadget of~$q$, we created three clause communities $C^1_{q} = \nu(q, q_1) \cup \nu(q, q_2)$, $C^2_{q} = \nu(q, q_1) \cup \nu(q,q_3)$, and $C^3_{q} = \nu(q, q_2) \cup \nu(q, q_3)$.
        Observe that the endpoint of~$\nu(q, q_1)$ in~$Z$ or the endpoint of~$\nu(q, q_2)$ in~$Z$ is a center for each of these three clause communities.
        Thus, the spanning star property of all clause communities is fulfilled.
        
        Observe that the graph~$G'$ has exactly $\ell$~edges.
        To verify that~$I$ is a yes-instance of \UStars, it remains to show that~$G'$ also fulfills the spanning star property of the assignment communities.
For this, consider the three assignment communities~$C^{q_1}_{q} = \nu(q, q_1) \cup \theta(\overline{q_1})$, $C^{q_2}_{q} = \nu(q, q_2) \cup \theta(\overline{q_2})$, and~$C^{q_3}_{q} = \nu(q, q_3) \cup \theta(\overline{q_3})$ corresponding to clause~$q$.
Recall that we assume that~$q$ is satisfied by literal~$q_3$ and that we defined~$E'$ accordingly.
Also, recall that all fixed edges are contained in~$E'$.
Since~$\nu(q, q_1)$ and~$\nu(q, q_2)$ are edges of~$E'$, we conclude that the unique vertex in~$Z$ of~$C_q^{q_1}$ and~$C_q^{q_2}$ is the center of a star for these assignment communities.
Furthermore, since~$\theta(\overline{q_3})\in E'$, we conclude that the unique vertex in~$P$ of~$C_q^{q_3}$ is the center of the assignment community~$C_q^{q_3}$.
Hence, the requirements of all communities are fulfilled by~$G'$.

        $(\Leftarrow)$
        Let~$G'=(V(G),E')$ be a solution of~$I$ with $\ell$~edges.
From \cref{claim-parameter-l-observation-1,claim-parameter-l-observation-2,claim-parameter-l-observation-3}, we conclude that~$E'$ contains exactly $|X|$~selection edges, denoted by~$E'_X$, and exactly $2\cdot |\Gamma|$~free edges.
Furthermore, note that for each variable community~$C_x$ corresponding to some variable~$x\in X$, the solution contains at least one of the selection edges~$\theta(x)$ or~$\theta(\overline{x})$.
Recall that, according to \cref{claim-variable-gadget-communities-pairwise-disjoint}, the selection edges of each two variable communities are disjoint.
Since~$G'$ contains exactly $|X|$~selection edges, we conclude that~$G'$ contains exactly one selection edge of each variable~$x\in X$.
This, allows us to properly define an assignment~$A$ by setting~$A(x)=0$ if~$\theta(x) \in E'_X$ and~$A(x)=1$ if~$\theta(\overline{x}) \in E'_X $. 

Similarly, according to \cref{claim-clause-gadget-communities-pairwise-disjoint}, the free edges of each two clause communities are disjoint.
Since~$G'$ contains exactly $2\cdot |\Gamma|$~free edges, we conclude that~$G'$ contains exactly two free edges per clause~$q\in\Gamma$ (according to \cref{claim-parameter-l-observation-3}), that is, we have $|E' \cap \{\nu(q, q_1), \nu(q, q_2), \nu(q, q_3)\}| = 2$ for each clause~$q\in\Gamma$.
        
        Since~$G'$ contains exactly~$\ell$ edges and~$|X|$ of them are selection edges,~$2\cdot |\Gamma|$ of them are free edges, and~$\ell-|X|-2\cdot |\Gamma|$ of them are fixed edges, we conclude that no edge which is not a selection edge, a free edge or a fixed edge can be part of~$G'$.
        Recall that we created for each clause~$q = \{q_1, q_2, q_3\}\in\Gamma$ three assignment communities~$C^{q_1}_{q}$, $C^{q_2}_{q}$, and~$C^{q_3}_{q}$ connecting the variable and the clause gadget.
        Observe that community~$C^{q_i}_{q}$ contains three fixed edges, the selection edge~$\nu(q, q_i)$, the free edge~$\theta(\overline{q_i})$, and one edge which has none of these $3$~types.
        By the above argumentation on the tightness of the budget~$\ell$ and the fact that~$G'$ is a solution, we conclude that in the assignment community~$C^{q_i}_{q}$ the edge~$\nu(q, q_i)$ or the edge~$\theta(\overline{q_i})$ is contained in~$E'$ for each~$i\in\{1,2,3\}$ to fulfill the spanning star property of these assignment communities.
        Now recall that we have~$|E' \cap \{\nu(q, q_1), \nu(q, q_2), \nu(q, q_3)\}| = 2$ for each clause~$q$.
        Thus, at least one of the edges~$\theta(\overline{q_1})$, $\theta(\overline{q_2})$, or~$\theta(\overline{q_3})$ is contained in~$E'$.
        Without loss of generality, we assume that~$\theta(\overline{q_3})\in E'$.
		This implies that the assignment of the variable~$x$ corresponding to literal~$q_3$ satisfies clause~$q$.        
        Thus, if~$q_3 = x$, we set~$A(x)=1$, and otherwise if~$q_3 = \overline{x}$, we set~$A(x)=0$.
        This applies to all clauses, and thus~$\phi$ is satisfied by~$A$.\fi\fi
    \end{proof}

        \iflong
        \subsection{Adaptions of \Cref{thm:eth-based-lowerbound-for-n-square}}     
\fi        
\iflong       
In the following, we present two further results: First, the same ETH-based bound for \UCon (\Cref{cor-eth-based-lowerbound-for-n-square-connected}). Second, we show that \UStars remains NP-hard on graphs with constant maximum degree (\Cref{cor-stars-np-h-const-delta}). 
The latter result can be shown by reducing from the NP-hard \textsc{$(3,$B$2)$-SAT}~\cite{berman03} and by using a similar, but uncompressed, construction to the one introduced in the proof of \Cref{thm:eth-based-lowerbound-for-n-square}.
\else
  By adapting the construction, we obtain the following two further results.
\fi
      \iflong  
\subparagraph*{Adaption for \UCon.}

        Now, we modify the construction to replace the star requirement for each community with the requirement of being connected.
        The variable gadget and the clause gadget is constructed exactly as in the construction in \Cref{thm:eth-based-lowerbound-for-n-square} for \UStars.
        The assignment gadget, however, is constructed differently: instead of fixing the edges between the vertex sets~$U\cup P$ and~$Z$, and between~$P$ and~$Y$, we now only fix the edges between~$U$ and~$Z$, and between~$P$ and~$Y$.
        Consequently, all edges with one endpoint in~$P$ and one in~$Z$, or one endpoint in~$U$ and one in~$Y$ are not contained in any community.
        Then, we set~$\ell\coloneqq |X| + 2\cdot |\Gamma|+ \binom{|P|}{2}+\binom{|Z|}{2}+|U|\cdot |Z|+|P|\cdot |Y|$, that is, compared to \UStars, the budget~$\ell$ is decreased by~$|P|\cdot |Z|$.

Now the proof is similar to the proof of \Cref{thm:eth-based-lowerbound-for-n-square}: 
First, the size~$2$ communities still imply that all fixed edges are part of any solution.
Second, since for communities of size 3, the star property and the connectivity property coincide, the argumentation for \UCon is analogously to that of \UStars.
Third, since also in this construction the budget is tight, the solution contains all fixed edges, exactly $|X|$~selection edges, and exactly $2\cdot|\Gamma|$~free edges.
Now, since the edges of these $3$~types in each assignment community form a cycle of length~$4$ ($2$ fixed edges, $1$~selection edge, and $1$~free edge), we conclude that in each assignment community, the selection edge or the free edge (or both) are contained in the solution.
From now on, the argumentation is analogous to the argumentation in \Cref{thm:eth-based-lowerbound-for-n-square} for \UStars.
Thus, we obtain the following.
\fi
\begin{corollary}
        \label{cor-eth-based-lowerbound-for-n-square-connected}
        If the ETH is true, then \UCon cannot be solved in $2^{o(n^2)} \cdot \poly(n+c)$~time, even if~$G$ is a clique and each community has size at most~$4$.
    \end{corollary}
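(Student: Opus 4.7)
The plan is to adapt the reduction from the proof of \Cref{thm:eth-based-lowerbound-for-n-square} so that the connectivity requirement takes over the role previously played by the spanning-star requirement, while keeping the overall clique-on-$\Oh(\sqrt{|\phi|})$-vertices structure intact. I would reuse the variable gadget and the clause gadget verbatim: both the variable communities and the clause communities have size~$3$, and for size-$3$ communities inducing a connected subgraph is equivalent to containing a spanning star, so the earlier arguments (including \Cref{claim-variable-gadget-communities-pairwise-disjoint} and \Cref{claim-clause-gadget-communities-pairwise-disjoint}) transfer unchanged.

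The only delicate change concerns the assignment communities of size~$4$. Instead of fixing all edges between $U\cup P$ and $Z$ and between $P$ and $Y$, I would now fix only the edges between $U$ and $Z$ and between $P$ and $Y$; in particular, edges between $P$ and $Z$ are left unfixed (and $U$--$Y$ edges still belong to no community and only exist to make $G$ a clique). Correspondingly, the budget shrinks by $|P|\cdot|Z|$ to $\ell \coloneqq |X| + 2|\Gamma| + \binom{|P|}{2} + \binom{|Z|}{2} + |U|\cdot|Z| + |P|\cdot|Y|$. The budget remains tight in the same sense as before: any solution must contain all fixed edges, at least $|X|$ selection edges (one per variable community), and at least $2|\Gamma|$ free edges (two per triple of clause communities), which already sums to~$\ell$.

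The key structural observation is that in each assignment community $C^{q_i}_q$ the four edges of the three established types form a cycle~$C_4$ on its four vertices: two fixed edges (one $U$--$Z$ and one $P$--$Y$), one selection edge inside the variable gadget, and one free edge inside the clause gadget. Removing the selection and free edges disconnects this~$C_4$ into two components, so connectedness of $G'[C^{q_i}_q]$ forces at least one of these two edges to be present. This exactly reproduces the role of the star-center argument from \Cref{thm:eth-based-lowerbound-for-n-square}. The main obstacle is then the $(\Leftarrow)$ direction, but it is handled by the same tight-budget counting: the solution has exactly $|X|$ selection edges and exactly $2|\Gamma|$ free edges, so for each clause~$q$ at least one of the three $\theta(\overline{q_i})$ must lie in~$E'$, which yields a satisfying assignment. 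Since the construction still has $\Oh(\sqrt{|\phi|})$ vertices and communities of size at most~$4$, an $2^{o(n^2)}\cdot\poly(n+c)$-time algorithm for \UCon{} would imply a $2^{o(|\phi|)}$-time algorithm for $3$-\textsc{SAT}, contradicting the ETH.
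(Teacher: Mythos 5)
Your proposal matches the paper's proof essentially verbatim: reusing the size-$3$ variable and clause gadgets unchanged (connectivity coincides with spanning star there), fixing only the $U$--$Z$ and $P$--$Y$ edges, lowering the budget by $|P|\cdot|Z|$, observing that the four relevant edges of each assignment community form a $C_4$ whose disconnection forces a selection or free edge, and closing via the tight-budget count. No substantive difference.
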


\iflong
    
\subparagraph*{Adaption for \UStars with Constant Maximum Degree.}

Next, we show that \UStars remains NP-hard on graphs with constant maximum degree.
We achieve this by three adjustments:
first, we do not compress the variable and clause gadget in the construction of \Cref{thm:eth-based-lowerbound-for-n-square}, second we reduce from the NP-hard \textsc{$(3,$B$2)$-SAT}~\cite{berman03} in which each clause consists of three literals and each literal (positive and negative) appears exactly twice, and third, we do not add edges which are not contained in any community.

More precisely, for each variable, we add a triangle with one fixed edge; again this is achieved by adding a community of size~$2$ containing this edge.
Afterwards, for each clause we add a size~$4$-clique with a fixed triangle.
Then, we add the assignment communities and the corresponding fixed edges as described in the construction of \Cref{thm:eth-based-lowerbound-for-n-square}.
Note that each vertex is either contained in a variable triangle or in a size~$4$-clique corresponding to a clause.
Since there are only edges between the variable gadgets and the clause gadgets, each vertex in the variable gadget has exactly two~neighbors in the variable gadget and each vertex in the clause gadget has exactly three neighbors in the clause gadget.
It remains to show that each vertex in the variable gadget has only a constant number of neighbors in the clause gadget and vice versa.

First, we consider a vertex~$v$ in the variable gadget.
Since each literal appears exactly twice, there are exactly four~assignment communities having vertices of the triangle in the variable gadget corresponding to~$v$.
With this observation at hand, one can show that~$v$ has at most four~neighbors in the clause gadget.
Second, we consider a vertex~$v$ in the clause gadget.
Since each clause consists of three literals, there are exactly three~assignment communities having vertices of the size~$4$-clique in the clause gadget corresponding to~$v$.
With this observation at hand, one can show that~$v$ has at most three~neighbors in the clause gadget.
Hence, we obtain the following.
\fi

    \begin{corollary}
    \label{cor-stars-np-h-const-delta}
        \UStars remains NP-hard and, assuming the ETH, cannot be solved in $2^{o(n+m+c)} \cdot \poly(n+c)$~time on graphs with maximum degree six and community size at most~$4$.
    \end{corollary}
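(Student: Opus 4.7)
The plan is to adapt the ETH reduction from~\Cref{thm:eth-based-lowerbound-for-n-square} by discarding the $\sqrt{|X|}\times\sqrt{|X|}$ and $\sqrt{|\Gamma|}\times\sqrt{|\Gamma|}$ compression so that each gadget only meets a bounded number of others. First I would switch the source problem to \textsc{$(3,B2)$-SAT}, which is NP-hard~\cite{berman03} and, by the sparsification lemma, does not admit a $2^{o(|\phi|)}$-time algorithm unless the ETH fails. Each variable $x$ then receives its own private triangle $\{u_x, p_x^1, p_x^2\}$ with a fixed edge $\{p_x^1, p_x^2\}$ (enforced by a size-$2$ community) and two selection edges $\{u_x, p_x^1\}, \{u_x, p_x^2\}$ representing $x$ and $\overline{x}$. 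Each clause $q$ receives its own private $4$-clique with a fixed triangle on three $Z$-vertices and three free edges incident to $y_q$ representing the three literals of $q$. Crucially, I would \emph{not} complete $G$ into a clique, so that only the gadget edges and the explicitly added cross-gadget edges are present.

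For every occurrence of a literal $q_i$ in clause $q$, I would add the size-$4$ assignment community $\nu(q,q_i)\cup\theta(\overline{q_i})$ together with exactly the three fixed cross-gadget edges matching the $U$--$Z$, $P$--$Y$, and $P$--$Z$ pairs used in~\Cref{thm:eth-based-lowerbound-for-n-square}. The remaining potential $U$--$Y$ edge of the community is \emph{deliberately omitted}, so that inside the community only the $P$-vertex (requiring the selection edge to be chosen) or the $Z$-vertex (requiring the free edge to be chosen) can serve as the center of a spanning star. This preserves the ``selection-or-free'' disjunction that drives the correctness argument, so after setting $\ell$ to the sum of all fixed edges, $|X|$ selection edges, and $2|\Gamma|$ free edges, the budget-tightness argument and the translation between satisfying assignments and solutions transfer verbatim from~\Cref{thm:eth-based-lowerbound-for-n-square}.

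The main obstacle I expect is the degree bookkeeping, since at first glance the central vertex $u_x$ of a variable triangle lies in all four assignment communities for $x$. Thanks to the omission of the $U$--$Y$ edges, however, $u_x$ gains only one new neighbor (the $Z$-vertex) per community, contributing at most $4$ external neighbors on top of its $2$ triangle neighbors; each $p_x^i$ lies in only two communities and picks up two neighbors per community, again capping its degree at $6$. A symmetric accounting on the clause side uses that each clause spawns exactly three assignment communities, bounding the external degree of each $y_q$ and $z_q^j$. Overall the construction gives maximum degree at most $6$, maximum community size~$4$, and $n, m, c \in \Oh(|\phi|)$, so a $2^{o(n+m+c)}\cdot\poly(n+c)$-time algorithm for \UStars would imply a $2^{o(|\phi|)}$-time algorithm for \textsc{$(3,B2)$-SAT}, contradicting the ETH.
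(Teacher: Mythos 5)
Your proposal matches the paper's proof: decompress the variable and clause gadgets, reduce from \textsc{$(3,B2)$-SAT}, and include only edges that appear in some community, so that the degree bound follows from the bounded literal multiplicity and from omitting the $U$--$Y$ pairs inside the assignment communities. One small caveat: the ETH lower bound for \textsc{$(3,B2)$-SAT} is obtained via a linear-size reduction from $3$-SAT rather than by invoking the sparsification lemma on \textsc{$(3,B2)$-SAT} directly, but this does not affect the validity of the construction.
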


    \section{Parameterization by the Feedback Edge Number of a Solution}
\label{sec:fes}

    \iflonglong
    \begin{theorem}
        \label{thm:sparsestars-with-l-equals-n-minus-1}
        Let $I=(G=(V,E), \mathcal{C}, \ell)$ where $\ell \coloneqq n-1$ be an instance of \textsc{Stars NWS} where the hypergraph $\mathcal{H}=(V,\mathcal{C})$ is connected.
        Such an instance $I$ is solvable in $\Oh(|\mathcal{C}|^2 \cdot n^2)$ time.
    \end{theorem}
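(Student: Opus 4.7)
The plan is to exploit the rigidity forced by $\ell=n-1$ together with the connectivity of $\mathcal{H}$. Any feasible $G'$ must be connected (inherited from $\mathcal{H}$) and have exactly $n-1$~edges, hence $G'$ is a spanning tree $T$ of $G$. Moreover, for each community $C$, $T[C]$ is a forest on $|C|$~vertices, so it has at most $|C|-1$~edges, while containing a spanning star already requires exactly $|C|-1$~edges; thus $T[C]$ is itself a spanning star of~$C$. The task therefore reduces to choosing a center $c_C\in\can(C)\coloneqq\{v\in C:C\setminus\{v\}\subseteq N_G(v)\}$ for each $C$ so that the star-union $E_T\coloneqq\bigcup_{C\in\mathcal{C}}\{\{c_C,u\}:u\in C\setminus\{c_C\}\}$ is a spanning tree; since $E_T$ is automatically connected and spans $V$ by the connectivity of $\mathcal{H}$, the only condition left is acyclicity.

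First I would compute every $\can(C)$ and reject if any is empty. Then I maintain the sets $\can(C)$, a committed edge set $E_T$, and a union-find over $V$, iteratively applying reduction rules until saturation: (i) whenever $|\can(C)|=1$, commit its unique center, extend $E_T$ by the corresponding star, and strike from every other community's $\can$ set any candidate whose commitment would now close a cycle with $E_T$ (tested through union-find); (ii) whenever $|C_1\cap C_2|\geq 3$, force $c_{C_1}=c_{C_2}\in C_1\cap C_2$, because any other combination creates a triangle or a $4$-cycle of star edges (direct case check); (iii) whenever $C_1\cap C_2=\{u,v\}$ with $|C_1|,|C_2|\geq 3$, restrict each $\can(C_i)$ to $\{u,v\}$ by the analogous short-cycle analysis. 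Each firing strictly shrinks some $\can(C)$ or commits a new star, bounding the number of firings by $O(n\cdot|\mathcal{C}|)$, and scanning all $O(|\mathcal{C}|^2)$ community pairs in $O(n)$ per pair yields the claimed $O(|\mathcal{C}|^2\cdot n^2)$ running time.

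If saturation leaves every $\can(C)$ nonempty, I finish greedily: pick any surviving center in some free community, commit its star, rerun the rules, and iterate. The main obstacle, and the heart of the correctness argument, is showing that greedy choices never lock us out of a feasible assignment. I expect to argue this by an exchange-style lemma: after saturation any two still-free communities $C_1,C_2$ satisfy $|C_1\cap C_2|\leq 1$ (else rule~(ii) or~(iii) would still fire), so their residual star-unions interact only through single vertices and act essentially independently across components of the current $E_T$; any cycle that a greedy choice would introduce is detected by rule~(i)'s union-find check and can be resolved by switching to a surviving alternative in the same $\can$ set. Soundness of each rule is a direct case check on the short cycles that any excluded assignment would be forced to produce, so no valid solution is ever discarded. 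Combined with a final verification that $E_T$ has $n-1$ edges and spans $V$, this yields the theorem.
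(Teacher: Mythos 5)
Your setup is right, and your reduction rules~(ii) and~(iii) are essentially the paper's Claim on centers of intersecting communities: if $|C_1\cap C_2|\geq 2$ then both centers must lie in $C_1\cap C_2$, and if $|C_1\cap C_2|\geq 3$ then the two centers must coincide. But the correctness of the final greedy phase, which you yourself flag as ``the heart of the correctness argument,'' is not actually established, and the sketch you give for it contains an error.

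Concretely, you claim that after saturation any two still-free communities $C_1,C_2$ satisfy $|C_1\cap C_2|\leq 1$. That is false: rule~(iii) restricts both $\can(C_i)$ to the two-element set $C_1\cap C_2=\{u,v\}$ and then stops; both communities can remain undecided with $|\can(C_i)|=2$, so they still intersect in two vertices. More importantly, the argument that the residual stars ``interact only through single vertices and act essentially independently'' does not rule out cycles: a chain of stars, each pair meeting in a single vertex, can still close a cycle of the committed edge set. Your rule~(i)'s union-find test only checks against the \emph{currently committed} $E_T$, so during the greedy phase an arbitrary choice in one community can interact with a later arbitrary choice in another to form a cycle that neither choice alone triggered; the claim that you ``can resolve by switching to a surviving alternative'' is exactly the nontrivial thing that needs to be proven, and an exchange argument for it is not given. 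The paper resolves this by distinguishing \emph{local} cycles (confined to the union of two stars, of length at most~$4$) from \emph{global} cycles. It proves two lemmas: after the intersection constraints are enforced and one center is picked per equivalence class, the resulting graph has \textbf{no} local cycles at all (so no exchange is needed), and any global cycle that arises must be present in \emph{every} solution, hence its appearance certifies that the instance is a no-instance. You need an argument of this kind; without it your greedy step is unjustified.
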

    \begin{proof}
        Before we present our algorithm, we start by making several observations about a yes-instance of \textsc{Stars NWS} for $t = 0$.
        Next, we observe that a sparsified graph of a yes-instance of \textsc{Stars NWS} for $t = 0$ is acyclic and therefore a tree.
        \begin{clm}
            \label{claim-tree-property}
            If $I$ is a yes-instance of \textsc{Stars NWS}, then every sparsified graph $G'=(V,E')$ is a tree.
        \end{clm}
        \begin{claimproof}
            The star requirement implies that for each community $C_i \in \mathcal{C}$ the induced subgraph $G'[C_i]$ is connected.
            Since $\mathcal{H}$ is connected, $G'$ is also connected.
            The graph $G'$ is a tree because $|E'| = n - 1$ and therefore acyclic.
        \end{claimproof}

        Next, we make two observations about the relation of the center vertices of two communities whose intersection is of size at least $2$ or $3$.
        \begin{clm}
            \label{claim-community-center}
            Let $I$ be a yes-instance of \textsc{Stars NWS} for $t = 0$ with a sparsified graph $G'$.
            Let $c_{G'}: \mathcal{C} \to V$ denotes the mapping of communities to some center vertex in $G'$ and let $C_i, C_j \in \mathcal{C}$ be two communities.
            \begin{enumerate}
                \item If $|C_i \cap C_j| \geq 2$, then $c_{G'}(C_i) \in C_i \cap C_j$ and $c_{G'}(C_j) \in C_i \cap C_j$.
                \item If $|C_i \cap C_j| \geq 3$, then $c_{G'}(C_i) = c_{G'}(C_j)$.
            \end{enumerate}
        \end{clm}
        \begin{claimproof}
            Let $S_i$ denote a spanning star contained in $G'[C_i]$ and let $S_j$ denote a spanning star contained in $G'[C_j]$.
            We start with the proof of the first part.
            Without loss of generality we assume $c_{G'}(C_i) \notin C_j$ towards a contradiction.
            First, we observe that $S_i$ and $S_j$ do not have an edge in common because each edge in $S_i$ has at most one endpoint in $C_i \cap C_j$.
            This implies that $|E(G'[C_i \cup C_j])| \geq |C_i| + |C_j| - 2$.
            By Claim~\ref{claim-tree-property}, we know $|E(G'[C_i \cup C_j])| = |C_i \cup C_j| - 1$ which leads to a contradiction: $|E(G'[C_i \cup C_j])| > |E(G'[C_i \cup C_j])| - 1 \geq |C_i| + |C_j| - 2 - 1 \geq |C_i| + |C_j| - |C_i \cap C_j| - 1 = |C_i \cup C_j| - 1 = |E(G'[C_i \cup C_j])|$.

            Now, we prove the second part.
            We assume $c_{G'}(C_i) \not= c_{G'}(C_j)$ towards a contradiction.
            Due to the first part, we have $c_{G'}(C_i) \in C_i \cap C_j$ and $c_{G'}(C_j) \in C_i \cap C_j$ because $|C_i \cap C_j| \geq 3$.
            Observe that $S_i$ and $S_j$ have exactly the edge $\{c_{G'}(C_i), c_{G'}(C_j)\}$ in common.
            Since $|C_i \cap C_j| \geq 3$, there exists $u \in C_i \cap C_j$ with $u \not= c_{G'}(C_i)$ and $u \not= c_{G'}(C_j)$.
            Because $S_i$ and $S_j$ are stars, the edges $\{u, c_{G'}(C_i)\}$ and $\{u, c_{G'}(C_j)\}$ are contained in $G'$.
            Hence, there is a triangle in $G'[C_i \cap C_j]$ which is a contradiction to Claim~\ref{claim-tree-property} that $G'$ is a tree.
        \end{claimproof}

        \emph{Algorithm:}
        We begin by giving an intuition how the algorithm works.
        For each community, the set of vertices which could be potential centers of a spanning star in the induced subgraph $G'[C_i]$ are computed with respect to Claim~\ref{claim-community-center}.
        Then, for each group of communities, having the same center vertex according to Claim~\ref{claim-community-center}, a candidate is selected from the previously computed sets.
        If no such vertex exists, then $I$ is a no-instance of \textsc{Stars NWS}.
        Next, for each community in such a group the edges are computed which form a spanning star with the already selected center vertex.
        If at most $n-1$ edges are selected, then a sparsified graph with at most $n-1$ edges is found and $I$ is a yes-instance of \textsc{Stars NWS}.
        Otherwise, $I$ is a no-instance of \textsc{Stars NWS}.

        We continue by defining the relation $R$ based on the second statement of Claim~\ref{claim-community-center} which states when two communities have the same center vertex.
        \begin{align*}
            R \subseteq \mathcal{C} \times \mathcal{C}, \ R(C_i, C_j) :\Leftrightarrow |C_i \cap C_j| \geq 3 \ \forall C_i, C_j \in \mathcal{C}
        \end{align*}
        To obtain an equivalence relation, we define $\widetilde{R}$ as the reflexive, symmetric and transitive closure of $R$.
        The equivalence classes $\mathcal{C} / \widetilde{R}$ of the equivalence relation $\widetilde{R}$ are the groups of communities which have the same center vertex in $G'$ according to the second statement of Claim~\ref{claim-community-center}.
        Note that does not mean that two communities contained in different equivalence classes cannot have the same center vertex.

        Next, we define several mappings helping us to describe which vertices are candidates for being the center of a spanning star in a subgraph induced by a community.
        The mapping $\nu$ describes which vertices of a community $C_i$ could be potential centers of a spanning star in $G'[C_i]$.
        \begin{align*}
            \nu&: \mathcal{C} \to \mathcal{P}(V), \ C_i \mapsto \{v \in C_i : C_i \subseteq N[v]\}
        \end{align*}
        The mapping $\mu$ describes which vertices of a community $C_i$ can be the center of a spanning star in $G'[C_i]$ with respected to the first statement of Claim~\ref{claim-community-center}.
        The first statement of Claim~\ref{claim-community-center} states that the set of potential centers of the stars of two communities is their intersection if it has a size of at least two.
        \begin{align*}
            \mu&: \mathcal{C} \to \mathcal{P}(V), \ C_i \mapsto \{v \in C_i : \forall C_j \in \mathcal{C}: \ |C_j \cap C_i| \geq 2 \Rightarrow v\in C_j\}
        \end{align*}
        The relation $\widetilde{R}$ and the mappings $\nu$ and $\mu$ describe restrictions which vertices are potential centers.
        The mapping $\varphi$ combines these three restrictions.
        \begin{align*}
            \varphi&: \mathcal{C}/\widetilde{R} \ \to \mathcal{P}(V), \ [C_i]_{\widetilde{R}} \mapsto \bigcap\limits_{C_k \in [C_i]_{\widetilde{R}}} (\nu(C_k) \cap \mu(C_k))
        \end{align*}

        The decision algorithm is shown in Algorithm~\ref{alg:algorithm}.
        The idea is to select for each equivalence class $[C_i]_{\widetilde{R}}$ a vertex forming the center of the spanning stars of the communities belonging to $[C_i]_{\widetilde{R}}$.
        If this is not possible, that is, if there is an equivalence class $[C_i]_{\widetilde{R}}$ with $\varphi([C_i]_{\widetilde{R}})=\emptyset$, then the instance $I$ is a no-instance of \textsc{Stars NWS}.
        Finally, it is checked whether $|E| = n - 1$ to ensure that the sparsified graph $G'$ is a tree as observed in Claim~\ref{claim-tree-property}.

        \begin{algorithm}[t]
            \SetAlgoNoEnd
            \DontPrintSemicolon
            \caption{Algorithm for \textsc{Stars NWS} with $\ell \coloneqq n-1$ ($t = 0$)}
            \label{alg:algorithm}
            \SetKwInOut{Input}{Input}\SetKwInOut{Output}{Output}
            \Input{$I=(G=(V,E), \mathcal{C}, n-1)$}
            \Output{A sparsified graph $G'$ with at most $n-1$ edges or no}
            $E' \leftarrow \emptyset$\;
            \ForAll{$[C_i]_{\widetilde{R}} \in \mathcal{C}/\widetilde{R}$}{\label{alg:outer-loop-begin}
            \If{$|\varphi([C_i]_{\widetilde{R}})| = 0$}{
                // no center candidate available for $[C_i]_{\widetilde{R}}$\;
                \Return{no}\label{alg:first-return-no-instanceA}
            }
                $u \leftarrow$ pick element of $\varphi([C_i]_{\widetilde{R}})$\;\label{alg:center-selection}
                \ForAll{$C_j \in [C_i]_{\widetilde{R}}$}{\label{alg:inner-loop-begin}
                    // select the edges for the star with $u$ as center for $C_j$\;
                    $E' \leftarrow E' \cup \{\{u, v\} : v \in C_j, v \not= u\}$\;\label{alg:edge-selection}
                }\label{alg:inner-loop-end}
            }\label{alg:outer-loop-end}
            $G' \leftarrow (V,E')$\;\label{alg:sparsified-graph-candidate}
            \If{$|E'| > n - 1$}{\label{alg:parameter-check-begin}
            \Return{no}\label{alg:second-return-no-instance}
            }\label{alg:parameter-check-end}
            \Return{$G'$}\label{alg:final-return}
        \end{algorithm}

        \emph{Correctness:}
        We show that $I$ is a yes-instance of \textsc{Stars NWS} for $t = 0$ if and only if the algorithm returns a sparsified graph with $n-1$ edges.

        $(\Leftarrow)$
        By the definition of $\varphi$, the vertex $u$ selected in Line~\ref{alg:center-selection} is a potential center of a spanning star for each community $C_j \in [C_i]_{\widetilde{R}}$.
        In addition with the edge selection in Lines~\ref{alg:inner-loop-begin}--\ref{alg:inner-loop-end}, this implies that the graph returned in Line~\ref{alg:final-return} satisfies the star condition for each subgraph induced by a community $C_i \in \mathcal{C}$.
        Because of the conditional statement in Lines~\ref{alg:parameter-check-begin}--\ref{alg:parameter-check-end}, the sparsified graph returned in Line~\ref{alg:final-return} has at most $n-1$ edges.
        Hence, $I$ is a yes-instance of \textsc{Stars NWS} for $t = 0$.

        $(\Rightarrow)$
        Let $I$ be a yes-instance and let $G''$ be a sparsified graph with $n-1$ edges.
        Let $c_{G''}: \mathcal{C} \to V$ denotes the mapping of communities to some center vertex in $G''$.

        By the definition of a spanning star, that the center vertex is adjacent to each other vertex in the community, we know that $c_{G''}(C_i) \in \nu(C_i)$ for each community $C_i \in \mathcal{C}$.
        By the first statement of Claim~\ref{claim-community-center}, we know that $c_{G''}(C_i) \in \mu(C_i)$ for each community $C_i \in \mathcal{C}$.
        Due to the second statement of Claim~\ref{claim-community-center}, we know that $c_{G''}(C_i) = c_{G''}(C_j)$ if $\widetilde{R}(C_i, C_j)$ for two communities $C_i, C_j \in \mathcal{C}$.
        Hence, we have $c_{G''}(C_i) \in \varphi([C_i]_{\widetilde{R}})$ for each community $C_i \in \mathcal{C}$.
        This implies that $\varphi([C_i]_{\widetilde{R}}) \not= \emptyset$ for each equivalence class $[C_i]_{\widetilde{R}} \in \mathcal{C} / \widetilde{R}$.
        Therefore, the return statement in Line~\ref{alg:first-return-no-instanceA} is never reached.

        By the definition of $\varphi$, the vertex $u$ selected in Line~\ref{alg:center-selection} is a potential center of a spanning star for each community $C_j \in [C_i]_{\widetilde{R}}$.
        In addition with the edge selection in Lines~\ref{alg:inner-loop-begin}--\ref{alg:inner-loop-end}, this implies that the graph $G'$ in Line~\ref{alg:sparsified-graph-candidate} satisfies the star condition for each subgraph induced by a community $C_i \in \mathcal{C}$.

        Next, we make two observations about the graph $G'=(V,E')$ in Line~\ref{alg:sparsified-graph-candidate} regarding cycles.
        We differentiate two kinds of cycles in a sparsified graph $G'$ of a yes-instance $I = (G, \mathcal{C}, \ell)$ of \textsc{Stars NWS}.
        Let $c_{G'}: \mathcal{C} \to V$ denotes the mapping of communities to some center vertex in $G'$.
        Since there might be multiple center vertices in $G'$, the mapping may not be unique.
        Let $S_i$ denote for a community $C_i \in \mathcal{C}$ the spanning star of $G'[C_i]$ having the center vertex $c_{G'}(C_i)$.
        We say a cycle $c$ in $G'$ is \emph{local} if there exists two communities $C_i, C_j \in \mathcal{C}$ such that $c$ is contained in $S_i \cup S_j$.
        Otherwise, we say a cycle $c$ in $G'$ is \emph{global}, that is if $c$ is splittable into at least three paths $p_1, \dots, p_r$ for $r \geq 3$ such that for each path $p_i$ there exist a community $C_i \in \mathcal{C}$ such that $p_i$ is contained in $S_i$.
        Note that each path $p_i$ has length one or length two because each path in a star has only length one or length-two.
        %Furthermore, observe that these two definitions cover all cycles in $G'$.
        An example of both cycle kinds is shown in Figure~\ref{fig:different-kinds-of-cycles}.
        \begin{figure}[t]
            \centering

            \begin{tikzpicture}[x=0.75pt,y=0.75pt,yscale=-1.25,xscale=1.25]
%uncomment if require: \path (0,235); %set diagram left start at 0, and has height of 235

%Straight Lines [id:da2259715754412962]
                \draw [line width=1.5]   (370.07,119.93) -- (350.07,139.93) ;
%Straight Lines [id:da24422520470792664]
                \draw    (350.07,139.93) -- (350.07,159.93) ;
%Straight Lines [id:da5608946566439746]
                \draw  [line width=1.5]  (330.07,119.93) -- (350.07,139.93) ;
%Straight Lines [id:da35012078956229686]
                \draw  [line width=1.5]  (330.07,119.93) -- (370.07,119.93) ;
%Straight Lines [id:da7992572635072334]
                \draw    (370.07,119.93) -- (350.07,99.93) ;
%Straight Lines [id:da21813875992982024]
                \draw  [line width=1.5]  (260.07,99.93) -- (240.07,129.93) ;
%Straight Lines [id:da6415861581490132]
                \draw  [line width=1.5]  (260.07,99.93) -- (280.07,129.93) ;
%Straight Lines [id:da4546172558653442]
                \draw  [line width=1.5]  (280.07,129.93) -- (260.07,159.93) ;
%Straight Lines [id:da7708753010697812]
                \draw  [line width=1.5]  (240.07,129.93) -- (260.07,159.93) ;
%Straight Lines [id:da647539433539173]
                \draw  [line width=1.5]  (190.07,149.93) -- (190.07,109.93) ;
%Straight Lines [id:da9559740807819066]
                \draw  [line width=1.5]  (190.07,149.93) -- (150.07,149.93) ;
%Straight Lines [id:da3336770876912162]
                \draw  [line width=1.5]  (150.07,109.93) -- (190.07,109.93) ;
%Straight Lines [id:da21476907150327285]
                \draw  [line width=1.5]  (150.07,149.93) -- (150.07,109.93) ;
%Shape: Circle [id:dp7245017564007868]
                \draw  [fill={rgb, 255:red, 255; green, 255; blue, 255 }  ,fill opacity=1 ] (152.57,110) .. controls (152.53,111.38) and (151.38,112.47) .. (150,112.43) .. controls (148.62,112.39) and (147.53,111.24) .. (147.57,109.86) .. controls (147.61,108.48) and (148.76,107.39) .. (150.14,107.43) .. controls (151.52,107.47) and (152.61,108.62) .. (152.57,110) -- cycle ;
%Shape: Circle [id:dp08561434714261251]
                \draw  [fill={rgb, 255:red, 255; green, 255; blue, 255 }  ,fill opacity=1 ] (192.57,150) .. controls (192.53,151.38) and (191.38,152.47) .. (190,152.43) .. controls (188.62,152.39) and (187.53,151.24) .. (187.57,149.86) .. controls (187.61,148.48) and (188.76,147.39) .. (190.14,147.43) .. controls (191.52,147.47) and (192.61,148.62) .. (192.57,150) -- cycle ;
%Shape: Circle [id:dp1397016648695958]
                \draw  [fill={rgb, 255:red, 255; green, 255; blue, 255 }  ,fill opacity=1 ] (152.57,150) .. controls (152.53,151.38) and (151.38,152.47) .. (150,152.43) .. controls (148.62,152.39) and (147.53,151.24) .. (147.57,149.86) .. controls (147.61,148.48) and (148.76,147.39) .. (150.14,147.43) .. controls (151.52,147.47) and (152.61,148.62) .. (152.57,150) -- cycle ;
%Shape: Circle [id:dp7958508241172654]
                \draw  [fill={rgb, 255:red, 255; green, 255; blue, 255 }  ,fill opacity=1 ] (192.57,110) .. controls (192.53,111.38) and (191.38,112.47) .. (190,112.43) .. controls (188.62,112.39) and (187.53,111.24) .. (187.57,109.86) .. controls (187.61,108.48) and (188.76,107.39) .. (190.14,107.43) .. controls (191.52,107.47) and (192.61,108.62) .. (192.57,110) -- cycle ;
%Shape: Polygon Curved [id:ds03589526764008033]
                \draw  [dash pattern={on 4.5pt off 4.5pt}] (197,109.92) .. controls (197.5,120.92) and (143,121.42) .. (143,109.92) .. controls (143,98.42) and (196.5,98.92) .. (197,109.92) -- cycle ;
%Shape: Polygon Curved [id:ds5837046232086023]
                \draw  [dash pattern={on 0.84pt off 2.51pt}] (149.5,104.42) .. controls (162.5,104.42) and (161.5,157.42) .. (150,156.92) .. controls (138.5,156.42) and (136.5,104.42) .. (149.5,104.42) -- cycle ;
%Shape: Polygon Curved [id:ds1998000873726815]
                \draw   (193.5,105.42) .. controls (202,113.92) and (206.5,141.42) .. (194.5,153.92) .. controls (182.5,166.42) and (153.5,162.42) .. (145,153.42) .. controls (136.5,144.42) and (185,96.92) .. (193.5,105.42) -- cycle ;
%Shape: Circle [id:dp7880619470912815]
                \draw  [fill={rgb, 255:red, 255; green, 255; blue, 255 }  ,fill opacity=1 ] (262.57,160) .. controls (262.53,161.38) and (261.38,162.47) .. (260,162.43) .. controls (258.62,162.39) and (257.53,161.24) .. (257.57,159.86) .. controls (257.61,158.48) and (258.76,157.39) .. (260.14,157.43) .. controls (261.52,157.47) and (262.61,158.62) .. (262.57,160) -- cycle ;
%Shape: Circle [id:dp22675575455476793]
                \draw  [fill={rgb, 255:red, 255; green, 255; blue, 255 }  ,fill opacity=1 ] (282.57,130) .. controls (282.53,131.38) and (281.38,132.47) .. (280,132.43) .. controls (278.62,132.39) and (277.53,131.24) .. (277.57,129.86) .. controls (277.61,128.48) and (278.76,127.39) .. (280.14,127.43) .. controls (281.52,127.47) and (282.61,128.62) .. (282.57,130) -- cycle ;
%Shape: Circle [id:dp94513747628282]
                \draw  [fill={rgb, 255:red, 255; green, 255; blue, 255 }  ,fill opacity=1 ] (242.57,130) .. controls (242.53,131.38) and (241.38,132.47) .. (240,132.43) .. controls (238.62,132.39) and (237.53,131.24) .. (237.57,129.86) .. controls (237.61,128.48) and (238.76,127.39) .. (240.14,127.43) .. controls (241.52,127.47) and (242.61,128.62) .. (242.57,130) -- cycle ;
%Shape: Circle [id:dp9903364681505187]
                \draw  [fill={rgb, 255:red, 255; green, 255; blue, 255 }  ,fill opacity=1 ] (262.57,100) .. controls (262.53,101.38) and (261.38,102.47) .. (260,102.43) .. controls (258.62,102.39) and (257.53,101.24) .. (257.57,99.86) .. controls (257.61,98.48) and (258.76,97.39) .. (260.14,97.43) .. controls (261.52,97.47) and (262.61,98.62) .. (262.57,100) -- cycle ;
%Shape: Polygon Curved [id:ds023542057582555342]
                \draw  [dash pattern={on 4.5pt off 4.5pt}] (290.11,130.83) .. controls (296.11,144.83) and (274.78,166.5) .. (259.78,166.17) .. controls (244.78,165.83) and (226.44,143.83) .. (230.11,129.83) .. controls (233.78,115.83) and (284.11,116.83) .. (290.11,130.83) -- cycle ;
%Shape: Polygon Curved [id:ds9961302213119176]
                \draw  [dash pattern={on 0.84pt off 2.51pt}] (290.11,130.83) .. controls (294.78,117.83) and (274.11,94.17) .. (260.11,93.83) .. controls (246.11,93.5) and (224.78,114.83) .. (230.11,129.83) .. controls (235.44,144.83) and (285.44,143.83) .. (290.11,130.83) -- cycle ;
%Shape: Circle [id:dp46289253382133766]
                \draw  [fill={rgb, 255:red, 255; green, 255; blue, 255 }  ,fill opacity=1 ] (352.57,160) .. controls (352.53,161.38) and (351.38,162.47) .. (350,162.43) .. controls (348.62,162.39) and (347.53,161.24) .. (347.57,159.86) .. controls (347.61,158.48) and (348.76,157.39) .. (350.14,157.43) .. controls (351.52,157.47) and (352.61,158.62) .. (352.57,160) -- cycle ;
%Shape: Circle [id:dp13679342533002148]
                \draw  [fill={rgb, 255:red, 255; green, 255; blue, 255 }  ,fill opacity=1 ] (372.57,120) .. controls (372.53,121.38) and (371.38,122.47) .. (370,122.43) .. controls (368.62,122.39) and (367.53,121.24) .. (367.57,119.86) .. controls (367.61,118.48) and (368.76,117.39) .. (370.14,117.43) .. controls (371.52,117.47) and (372.61,118.62) .. (372.57,120) -- cycle ;
%Shape: Circle [id:dp6767463311079721]
                \draw  [fill={rgb, 255:red, 255; green, 255; blue, 255 }  ,fill opacity=1 ] (332.57,120) .. controls (332.53,121.38) and (331.38,122.47) .. (330,122.43) .. controls (328.62,122.39) and (327.53,121.24) .. (327.57,119.86) .. controls (327.61,118.48) and (328.76,117.39) .. (330.14,117.43) .. controls (331.52,117.47) and (332.61,118.62) .. (332.57,120) -- cycle ;
%Shape: Circle [id:dp24363147824540332]
                \draw  [fill={rgb, 255:red, 255; green, 255; blue, 255 }  ,fill opacity=1 ] (352.57,100) .. controls (352.53,101.38) and (351.38,102.47) .. (350,102.43) .. controls (348.62,102.39) and (347.53,101.24) .. (347.57,99.86) .. controls (347.61,98.48) and (348.76,97.39) .. (350.14,97.43) .. controls (351.52,97.47) and (352.61,98.62) .. (352.57,100) -- cycle ;
%Shape: Polygon Curved [id:ds37630896916515955]
                \draw  [dash pattern={on 4.5pt off 4.5pt}] (380.11,120.83) .. controls (386.11,134.83) and (364.78,166.5) .. (349.78,166.17) .. controls (334.78,165.83) and (316.44,133.83) .. (320.11,119.83) .. controls (323.78,105.83) and (374.11,106.83) .. (380.11,120.83) -- cycle ;
%Shape: Polygon Curved [id:ds9125985973903145]
                \draw  [dash pattern={on 0.84pt off 2.51pt}] (380.11,120.83) .. controls (384.78,107.83) and (364.11,94.17) .. (350.11,93.83) .. controls (336.11,93.5) and (314.78,104.83) .. (320.11,119.83) .. controls (325.44,134.83) and (375.44,133.83) .. (380.11,120.83) -- cycle ;
%Shape: Circle [id:dp3072391103636751]
                \draw  [fill={rgb, 255:red, 255; green, 255; blue, 255 }  ,fill opacity=1 ] (352.57,140) .. controls (352.53,141.38) and (351.38,142.47) .. (350,142.43) .. controls (348.62,142.39) and (347.53,141.24) .. (347.57,139.86) .. controls (347.61,138.48) and (348.76,137.39) .. (350.14,137.43) .. controls (351.52,137.47) and (352.61,138.62) .. (352.57,140) -- cycle ;
            \end{tikzpicture}
            \caption{Examples for the different kinds of cycles in a sparsified graph.
            The bold edges mark the edges which form the cycles.
            On the left side a global cycle is shown. In the middle and on the right side a local cycle is shown.}
            \label{fig:different-kinds-of-cycles}
        \end{figure}

        \begin{clm}\label{clm:phi-prohibits-local-cycles}
            Let $G'=(V,E')$ be the graph in Line~\ref{alg:sparsified-graph-candidate} and let $c_{G'}: \mathcal{C} \to V$ be the mapping of communities to some center vertex in $G'$.
            The graph $G'$ does not contain a local cycle.
        \end{clm}
        \begin{claimproof}
            We assume towards a contradiction that $c$ is a local cycle in $G'$.
            By definition, there exist two communities $C_i, C_j \in \mathcal{C}$ such that the local cycle $c$ is contained in $S_i \cup S_j$.
            Since $S_i$ and $S_j$ are acyclic because they are stars, we have $|C_i \cap C_j| \geq 2$.
            Next, we distinct the two cases $|C_i \cap C_j| = 2$ and $|C_i \cap C_j| \geq 3$ each leading to a contradiction that $c$ is a local cycle in $S_i \cup S_j$.

            \textbf{Case 1}:
            We assume $|C_i \cap C_j| = 2$.
            Then, there exist exactly two vertices $u, v \in C_i \cap C_j$.
            Since $\varphi$ is defined with respect to Claim~\ref{claim-community-center}, this implies that $c_{G'}(C_i) \in C_i \cap C_j$ and $c_{G'}(C_j) \in C_i \cap C_j$.
            This implies that the edge $\{u,v\}$ is contained in $S_i$ and $S_j$.
            Hence, we have $E(S_i \cup S_j) = |E(S_i)| + |E(S_j)| - 1 = |C_i| + |C_j| - 3 = |C_i \cup C_j| - 1$.
            Because $S_i$ and $S_j$ are stars, the graph $S_i \cup S_j$ is connected and a tree.
            This is a contradiction to the assumption that $c$ is a local cycle in $S_i \cup S_j$.

            \textbf{Case 2}:
            We assume $|C_i \cap C_j| \geq 3$.
            Since $\varphi$ is defined with respect to Claim~\ref{claim-community-center}, we have $c_{G'}(C_i) = c_{G'}(C_j)$.
            This implies that $S_i$ and $S_j$ have $|C_i \cap C_j| - 1$ edges in common.
            Hence, we have $E(S_i \cup S_j) = |E(S_i)| + |E(S_j)| - (|C_i \cap C_j| - 1) = |C_i| - 1 + |C_j| - 1 - (|C_i \cap C_j| - 1) = |C_i| + |C_j| - |C_i \cap C_j| - 1 = |C_i \cup C_j| - 1$.
            Since $S_i$ and $S_j$ are stars and have the same center, the graph $S_i \cup S_j$ is connected and a star.
            This is a contradiction to the assumption that $c$ is a local cycle in $S_i \cup S_j$.
        \end{claimproof}

        \begin{clm}\label{clm:yes-instance-prohibits-global-cycles}
            Let $G'=(V,E')$ be the graph in Line~\ref{alg:sparsified-graph-candidate}.
            The graph $G'$ does not contain a global cycle.
        \end{clm}
        \begin{claimproof}
            We assume towards a contradiction that $c$ is a global cycle in $G'$.
            By definition, the global cycle is splittable into at least three paths $p_1, \dots, p_r$ of length one or two such that each path $p_i$ is contained in $S_i$ for a community $C_i \in \mathcal{C}$.
            Let $s_{i}$ denote the start and $t_{i}$ the end of the path $p_i$.
            Thus, we have $s_i = t_{i-1}$ for $1 < i \leq r$ and $s_1 = t_r$.
            Due to the star requirement, the vertices $s_{i}$ and $t_{i}$ of each path $p_i$ are connected in every sparsified graph for instance $I$ of \textsc{Stars NWS}.
            An example of such a situation is shown in Figure~\ref{fig:cycle-over-three-communities}.
            This implies that the subgraph consisting of the $(s_i, t_i)$-paths of every sparsified graph for instance $I$ of \textsc{Stars NWS} contains a cycle.
            Because of this every sparsified graph for $I$ contains a cycle.
            This implies that the graph $G''$ also contains a cycle.
            This is a contradiction to the assumption that $I$ is a yes-instance.
        \end{claimproof}

        By Claim~\ref{clm:phi-prohibits-local-cycles} and Claim~\ref{clm:yes-instance-prohibits-global-cycles} the graph $G'$ in Line~\ref{alg:sparsified-graph-candidate}
        contains neither local cycles nor global cycles.
        Moreover, $G'[C_i]$ contains a spanning star for each community $C_i \in \mathcal{C}$.
        Hence, the graph $G'$ is acyclic and connected.
        Therefore, Algorithm~\ref{alg:algorithm} finds a sparsified graph $G'$ with $n-1$ edges for the yes-instance $I$ of \textsc{Stars NWS}.

        \begin{figure}[t]
            \centering

            \begin{tikzpicture}[x=0.75pt,y=0.75pt,yscale=-1.25,xscale=1.25]
%uncomment if require: \path (0,235); %set diagram left start at 0, and has height of 235

%Straight Lines [id:da6482806380078332]
                \draw    (210.07,169.93) -- (210.07,89.93) ;
%Straight Lines [id:da153594341693561]
                \draw    (100.07,129.93) -- (130.07,169.93) ;
%Straight Lines [id:da8598780256566864]
                \draw    (130.07,89.93) -- (170.07,59.93) ;
%Straight Lines [id:da6786998762732275]
                \draw    (210.07,169.93) -- (130.07,169.93) ;
%Straight Lines [id:da12522340586097924]
                \draw    (130.07,89.93) -- (210.07,89.93) ;
%Straight Lines [id:da18998512598798079]
                \draw    (130.07,169.93) -- (130.07,89.93) ;
%Shape: Circle [id:dp18863832800822844]
                \draw  [fill={rgb, 255:red, 255; green, 255; blue, 255 }  ,fill opacity=1 ] (132.57,90) .. controls (132.53,91.38) and (131.38,92.47) .. (130,92.43) .. controls (128.62,92.39) and (127.53,91.24) .. (127.57,89.86) .. controls (127.61,88.48) and (128.76,87.39) .. (130.14,87.43) .. controls (131.52,87.47) and (132.61,88.62) .. (132.57,90) -- cycle ;
%Shape: Circle [id:dp28891319194432197]
                \draw  [fill={rgb, 255:red, 255; green, 255; blue, 255 }  ,fill opacity=1 ] (212.57,170) .. controls (212.53,171.38) and (211.38,172.47) .. (210,172.43) .. controls (208.62,172.39) and (207.53,171.24) .. (207.57,169.86) .. controls (207.61,168.48) and (208.76,167.39) .. (210.14,167.43) .. controls (211.52,167.47) and (212.61,168.62) .. (212.57,170) -- cycle ;
%Shape: Circle [id:dp4583740775432179]
                \draw  [fill={rgb, 255:red, 255; green, 255; blue, 255 }  ,fill opacity=1 ] (132.57,170) .. controls (132.53,171.38) and (131.38,172.47) .. (130,172.43) .. controls (128.62,172.39) and (127.53,171.24) .. (127.57,169.86) .. controls (127.61,168.48) and (128.76,167.39) .. (130.14,167.43) .. controls (131.52,167.47) and (132.61,168.62) .. (132.57,170) -- cycle ;
%Shape: Circle [id:dp6857820201731741]
                \draw  [fill={rgb, 255:red, 255; green, 255; blue, 255 }  ,fill opacity=1 ] (172.57,60) .. controls (172.53,61.38) and (171.38,62.47) .. (170,62.43) .. controls (168.62,62.39) and (167.53,61.24) .. (167.57,59.86) .. controls (167.61,58.48) and (168.76,57.39) .. (170.14,57.43) .. controls (171.52,57.47) and (172.61,58.62) .. (172.57,60) -- cycle ;
%Shape: Circle [id:dp5537036036377849]
                \draw  [fill={rgb, 255:red, 255; green, 255; blue, 255 }  ,fill opacity=1 ] (102.57,130) .. controls (102.53,131.38) and (101.38,132.47) .. (100,132.43) .. controls (98.62,132.39) and (97.53,131.24) .. (97.57,129.86) .. controls (97.61,128.48) and (98.76,127.39) .. (100.14,127.43) .. controls (101.52,127.47) and (102.61,128.62) .. (102.57,130) -- cycle ;
%Shape: Circle [id:dp45739145596156183]
                \draw  [fill={rgb, 255:red, 255; green, 255; blue, 255 }  ,fill opacity=1 ] (212.57,90) .. controls (212.53,91.38) and (211.38,92.47) .. (210,92.43) .. controls (208.62,92.39) and (207.53,91.24) .. (207.57,89.86) .. controls (207.61,88.48) and (208.76,87.39) .. (210.14,87.43) .. controls (211.52,87.47) and (212.61,88.62) .. (212.57,90) -- cycle ;
%Shape: Polygon Curved [id:ds015668601176265118]
                \draw  [dash pattern={on 0.84pt off 2.51pt}] (169.46,51.71) .. controls (189.46,52.11) and (218.72,78.88) .. (217.06,90.91) .. controls (215.4,102.94) and (132.72,108.88) .. (125.22,94.88) .. controls (117.72,80.88) and (149.46,51.31) .. (169.46,51.71) -- cycle ;
%Shape: Polygon Curved [id:ds6069342572213311]
                \draw  [dash pattern={on 4.5pt off 4.5pt}] (130.66,82.31) .. controls (142.6,81.74) and (145.86,176.31) .. (130.66,177.91) .. controls (115.46,179.51) and (92.26,149.11) .. (91.86,129.51) .. controls (91.46,109.91) and (118.72,82.88) .. (130.66,82.31) -- cycle ;
%Shape: Polygon Curved [id:ds4171389787919646]
                \draw   (211.46,83.11) .. controls (227.72,94.38) and (229.22,158.88) .. (217.86,175.51) .. controls (206.5,192.14) and (142.66,195.11) .. (126.26,173.91) .. controls (109.86,152.71) and (195.2,71.84) .. (211.46,83.11) -- cycle ;
%Straight Lines [id:da47883986078016105]
                \draw    (300.07,169.93) -- (380.07,89.93) ;
%Straight Lines [id:da38156920849210885]
                \draw    (270.07,129.93) -- (300.07,169.93) ;
%Straight Lines [id:da7732128739953938]
                \draw    (300.07,89.93) -- (340.07,59.93) ;
%Straight Lines [id:da3016709284488911]
                \draw    (380.07,169.93) -- (300.07,169.93) ;
%Straight Lines [id:da32387861789204253]
                \draw    (340.07,59.93) -- (380.07,89.93) ;
%Straight Lines [id:da9822108320894258]
                \draw    (270.07,129.93) -- (300.07,89.93) ;
%Shape: Circle [id:dp920267229477432]
                \draw  [fill={rgb, 255:red, 255; green, 255; blue, 255 }  ,fill opacity=1 ] (302.57,90) .. controls (302.53,91.38) and (301.38,92.47) .. (300,92.43) .. controls (298.62,92.39) and (297.53,91.24) .. (297.57,89.86) .. controls (297.61,88.48) and (298.76,87.39) .. (300.14,87.43) .. controls (301.52,87.47) and (302.61,88.62) .. (302.57,90) -- cycle ;
%Shape: Circle [id:dp09238232132120616]
                \draw  [fill={rgb, 255:red, 255; green, 255; blue, 255 }  ,fill opacity=1 ] (382.57,170) .. controls (382.53,171.38) and (381.38,172.47) .. (380,172.43) .. controls (378.62,172.39) and (377.53,171.24) .. (377.57,169.86) .. controls (377.61,168.48) and (378.76,167.39) .. (380.14,167.43) .. controls (381.52,167.47) and (382.61,168.62) .. (382.57,170) -- cycle ;
%Shape: Circle [id:dp24086606067424088]
                \draw  [fill={rgb, 255:red, 255; green, 255; blue, 255 }  ,fill opacity=1 ] (302.57,170) .. controls (302.53,171.38) and (301.38,172.47) .. (300,172.43) .. controls (298.62,172.39) and (297.53,171.24) .. (297.57,169.86) .. controls (297.61,168.48) and (298.76,167.39) .. (300.14,167.43) .. controls (301.52,167.47) and (302.61,168.62) .. (302.57,170) -- cycle ;
%Shape: Circle [id:dp070766615526617]
                \draw  [fill={rgb, 255:red, 255; green, 255; blue, 255 }  ,fill opacity=1 ] (342.57,60) .. controls (342.53,61.38) and (341.38,62.47) .. (340,62.43) .. controls (338.62,62.39) and (337.53,61.24) .. (337.57,59.86) .. controls (337.61,58.48) and (338.76,57.39) .. (340.14,57.43) .. controls (341.52,57.47) and (342.61,58.62) .. (342.57,60) -- cycle ;
%Shape: Circle [id:dp8553518948074255]
                \draw  [fill={rgb, 255:red, 255; green, 255; blue, 255 }  ,fill opacity=1 ] (272.57,130) .. controls (272.53,131.38) and (271.38,132.47) .. (270,132.43) .. controls (268.62,132.39) and (267.53,131.24) .. (267.57,129.86) .. controls (267.61,128.48) and (268.76,127.39) .. (270.14,127.43) .. controls (271.52,127.47) and (272.61,128.62) .. (272.57,130) -- cycle ;
%Shape: Circle [id:dp4342465674800643]
                \draw  [fill={rgb, 255:red, 255; green, 255; blue, 255 }  ,fill opacity=1 ] (382.57,90) .. controls (382.53,91.38) and (381.38,92.47) .. (380,92.43) .. controls (378.62,92.39) and (377.53,91.24) .. (377.57,89.86) .. controls (377.61,88.48) and (378.76,87.39) .. (380.14,87.43) .. controls (381.52,87.47) and (382.61,88.62) .. (382.57,90) -- cycle ;
%Shape: Polygon Curved [id:ds33453037965029775]
                \draw  [dash pattern={on 0.84pt off 2.51pt}] (339.46,51.71) .. controls (359.46,52.11) and (388.72,78.88) .. (387.06,90.91) .. controls (385.4,102.94) and (302.72,108.88) .. (295.22,94.88) .. controls (287.72,80.88) and (319.46,51.31) .. (339.46,51.71) -- cycle ;
%Shape: Polygon Curved [id:ds2409618892572215]
                \draw  [dash pattern={on 4.5pt off 4.5pt}] (300.66,82.31) .. controls (312.6,81.74) and (315.86,176.31) .. (300.66,177.91) .. controls (285.46,179.51) and (262.26,149.11) .. (261.86,129.51) .. controls (261.46,109.91) and (288.72,82.88) .. (300.66,82.31) -- cycle ;
%Shape: Polygon Curved [id:ds18651787147113164]
                \draw   (381.46,83.11) .. controls (397.72,94.38) and (399.22,158.88) .. (387.86,175.51) .. controls (376.5,192.14) and (312.66,195.11) .. (296.26,173.91) .. controls (279.86,152.71) and (365.2,71.84) .. (381.46,83.11) -- cycle ;

% Text Node
                \draw (163.89,63.19) node [anchor=north west][inner sep=0.75pt]    {$v_{2}$};
% Text Node
                \draw (114.49,72.29) node [anchor=north west][inner sep=0.75pt]    {$v_{1}$};
% Text Node
                \draw (214.39,78.99) node [anchor=north west][inner sep=0.75pt]    {$v_{3}$};
% Text Node
                \draw (196.89,157.89) node [anchor=north west][inner sep=0.75pt]    {$v_{4}$};
% Text Node
                \draw (116.59,177.39) node [anchor=north west][inner sep=0.75pt]    {$v_{5}$};
% Text Node
                \draw (103.49,125.79) node [anchor=north west][inner sep=0.75pt]    {$v_{6}$};
% Text Node
                \draw (333.89,63.19) node [anchor=north west][inner sep=0.75pt]    {$v_{2}$};
% Text Node
                \draw (284.49,72.29) node [anchor=north west][inner sep=0.75pt]    {$v_{1}$};
% Text Node
                \draw (384.39,78.99) node [anchor=north west][inner sep=0.75pt]    {$v_{3}$};
% Text Node
                \draw (367.89,157.89) node [anchor=north west][inner sep=0.75pt]    {$v_{4}$};
% Text Node
                \draw (286.59,177.39) node [anchor=north west][inner sep=0.75pt]    {$v_{5}$};
% Text Node
                \draw (273.49,125.79) node [anchor=north west][inner sep=0.75pt]    {$v_{6}$};

            \end{tikzpicture}
            \caption{An example of two solutions of the same instance with different star centers.
            The left graph contains the cycle $(v_1, v_3, v_4, v_5)$ which can be decomposed into the paths $p_1 = (v_1, v_3), p_2 = (v_3, v_4, v_5), p_3 = (v_5, v_1)$.
            The right graph contains the cycle $(v_1, v_2, v_3, v_5, v_6)$ which can be decomposed into the paths $p'_1 = (v_1, v_2, v_3), p'_2 = (v_3, v_5), p'_3 = (v_5, v_6, v_1)$.
            Observe that the endpoints of the paths $p_i$ and $p'_i$ are the same.}
            \label{fig:cycle-over-three-communities}
        \end{figure}

        \emph{Running Time:}
        The equivalence classes $\mathcal{C}/\widetilde{R}$ are computable in $\Oh(|\mathcal{C}|^2 \cdot n)$ time.
        The mapping $\nu$ is computable in $\Oh(|C| \cdot n)$ time.
        The mapping $\mu$ is computable in $\Oh(|\mathcal{C}|^2 \cdot n^2)$ time.
        The mapping $\varphi$ is computable in $\Oh(|C| \cdot n)$ time.
        The statement in Line~\ref{alg:edge-selection} is executed at most $\sum_{[C_i]_{\widetilde{R}} \in C/\widetilde{R}} |[C_i]_{\widetilde{R}}| = |\mathcal{C}|$ times.
        This leads to a running time of $\Oh(|\mathcal{C}| \cdot n)$ for the nested loops.
        Hence, Algorithm~\ref{alg:algorithm} has a running time of $\Oh(|\mathcal{C}|^2 \cdot n^2)$.
    \end{proof}

    In Theorem~\ref{thm:sparsestars-with-l-equals-n-minus-1} the instances of \textsc{Stars NWS} are restricted to connected hypergraphs.
    Next, we generalize the algorithm to hypergraphs with any number of connected components.
    \begin{corollary}
        Let $I=(G=(V,E), \mathcal{C}, |V|-x)$ be an instance of \textsc{Stars NWS} where $x$ is the number of connected components of the hypergraph $\mathcal{H}=(V,\mathcal{C})$.
        Such an instance $I$ is solvable in $\Oh(\poly(n+|\mathcal{C}|))$ time.
    \end{corollary}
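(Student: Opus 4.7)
The plan is to reduce to the connected case handled by Theorem~\ref{thm:sparsestars-with-l-equals-n-minus-1} by decomposing the instance along the connected components of the community hypergraph $\mathcal{H}=(V,\mathcal{C})$. First I would compute, in polynomial time (e.g.\ by BFS on the bipartite incidence graph of $\mathcal{H}$), the connected components of $\mathcal{H}$, obtaining vertex sets $V_1,\dots,V_x$ with $V=V_1\dot\cup\dots\dot\cup V_x$ and corresponding community collections $\mathcal{C}_1,\dots,\mathcal{C}_x$, where each $C\in\mathcal{C}$ lies entirely inside some $V_i$ (by definition of hypergraph connectivity).

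The key structural observation is that every solution $G'=(V,E')$ must decompose along these components: since for each community $C\in\mathcal{C}_i$ the induced subgraph $G'[C]$ contains a spanning star and hence is connected, and since communities in $\mathcal{C}_i$ chain the vertices of $V_i$ together inside $G'$, the connected components of $G'$ (restricted to edges lying in any community) refine the partition $V_1,\dots,V_x$. On the other hand, $\ell = n-x = \sum_{i=1}^{x}(|V_i|-1)$, which is the minimum possible number of edges needed to connect each $V_i$ internally. Hence any solution must use exactly $|V_i|-1$ edges inside $V_i$ for every $i$, and no edge between two distinct $V_i, V_j$ (such an edge would be redundant, lying in no community, and would force some $V_i$ to be spanned by fewer than $|V_i|-1$ edges, contradicting connectivity of $G'[C]$ for some $C\in\mathcal{C}_i$).

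Therefore I would solve, for each $i\in[1,x]$, the sub-instance $I_i=(G[V_i],\mathcal{C}_i,|V_i|-1)$, whose hypergraph $(V_i,\mathcal{C}_i)$ is connected by construction, using the algorithm of Theorem~\ref{thm:sparsestars-with-l-equals-n-minus-1} in $\Oh(|\mathcal{C}_i|^2\cdot |V_i|^2)$ time. If some $I_i$ is a no-instance, declare $I$ a no-instance; otherwise, output the union of the returned edge sets, which has exactly $\sum_i(|V_i|-1)=n-x$ edges and satisfies the spanning-star requirement for every community since each community lies entirely within one component. Correctness in the yes-direction is immediate from Theorem~\ref{thm:sparsestars-with-l-equals-n-minus-1} applied per component; in the no-direction, the decomposition observation above shows that any putative solution for $I$ restricts to a valid $|V_i|-1$-edge solution for each $I_i$, so if any $I_i$ is infeasible then so is $I$.

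The only mild subtlety is justifying that no inter-component edges can appear in an optimal solution under the tight budget $n-x$; this is the step I would write out carefully, but it follows directly from the edge count $\sum_i(|V_i|-1)$ being attained only when each induced subgraph $G'[V_i]$ is a spanning tree of $V_i$ and no edge crosses between different $V_i$. The overall running time is $\sum_{i=1}^{x}\Oh(|\mathcal{C}_i|^2\cdot |V_i|^2)=\Oh(|\mathcal{C}|^2\cdot n^2)$, which is polynomial in $n+|\mathcal{C}|$ as claimed.
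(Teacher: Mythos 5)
Your proposal is correct and matches the paper's approach exactly: decompose the hypergraph into its connected components $\mathcal{H}_1,\dots,\mathcal{H}_x$, solve each sub-instance $I_i$ with budget $|V_i|-1$ via Theorem~\ref{thm:sparsestars-with-l-equals-n-minus-1}, and combine. The paper states the equivalence without spelling out why the budget $n-x$ forces exactly $|V_i|-1$ edges within each component and none across; your extra justification of that point is sound and is precisely the detail the terse published proof leaves implicit.
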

    \begin{proof}
        We split the hypergraph $\mathcal{H}=(V,\mathcal{C})$ into its connected components $\mathcal{H}_1=(V_1,\mathcal{C}_1), \dots, \mathcal{H}_x=(V_x,\mathcal{C}_x)$.
        The instance $I$ is a yes-instance of \textsc{Stars NWS} if and only if the instances $I_1 = (G_1=(V_1, E(G[V_1])), \mathcal{C}_1, |V_1| - 1), \dots, I_x = (G_x=(V_x, E(G[V_x])), \mathcal{C}_x, |V_x| - 1)$ are all yes-instances of \textsc{Stars NWS}.
        Each instance $I_i$ is solvable in $\Oh(\poly(n_i+|\mathcal{C}_i|))$ time using the algorithm presented in Theorem~\ref{thm:sparsestars-with-l-equals-n-minus-1}.
        Overall the instance $I$ of \textsc{Stars NWS} is solvable in $\Oh(\poly(n+|\mathcal{C}|))$ time.
    \end{proof}
    \fi
    
   The parameter~$\ell$, the number of edges in the solution is in most cases not independent from the size of the input instance of \textsc{Stars NWS} or \textsc{Connectivity NWS}:
    %Since in solutions of \textsc{Stars NWS} and \textsc{Connectivity NWS}, each subgraph induced by a community is connected, a graph $G$ with more vertices leads to a greater parameter $\ell$.
    % Each subgraph induced by a community~$C$ has at least $|C| - 1$~edges and~$G$ has order~$n$. 
    if the hypergraph~$(V,\mc)$ is connected, a solution~$G'$ has at least $n-1$~edges.
    In other words, $n-1$ is a lower bound for~$\ell$ in this case. 
    In this section, we study \textsc{Stars NWS} and \textsc{Connectivity NWS} parameterized above this lower bound.
    Formally, the parameter~$t$ is defined as the size of a minimum feedback edge set of the solution of an instance of \textsc{Stars NWS} or \textsc{Connectivity NWS}.
    Thus, the parameter~$t$ measures how close the solution is to a forest.
    Formally, the definition is~$t \coloneqq \ell-n+x$ where~$x$ denotes the number of connected components of~$G'$.
Recall that~$t$ can be computed in polynomial time (see \Cref{sec:basic-results}.).

\subsection{An XP-Algorithm for \Stars}

\newcommand{\univ}{\mathrm{univ}}
In this subsection, we show that~\Stars parameterized by~$t$ admits an~XP-algorithm with the following running time.
\begin{theorem}\label{xp t}
\Stars can be solved in $m^{4t} \cdot \poly(|I|)$~time.
\end{theorem}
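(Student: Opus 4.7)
The plan is to exploit the distinction between \emph{local} and \emph{global} cycles in any solution graph~$G'$. Since every community~$C$ induces a spanning star in~$G'$, any cycle fully contained in $G'[C_i \cup C_j]$ for two communities~$C_i, C_j$ (a local cycle) has length at most~$4$, because the union of two stars that meet in at least two vertices contains only $3$- or $4$-cycles. All other cycles of~$G'$ (global cycles) traverse the spanning stars of at least three communities. A structural lemma to establish first is that global cycles are essentially forced by the hypergraph structure, so that no solution can avoid them. Consequently, once we know which local cycles occur in an optimum solution, the remainder should be determinable in polynomial time.

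The algorithm therefore has two layers. The outer layer enumerates all candidate sets~$\mathcal{L}$ of at most~$t$ local cycles; since each such cycle is described by at most~$4$ edges, there are at most $m^{4t}$ candidate sets. For each guess~$\mathcal{L}$, we fix the corresponding edges in a partial solution and then attempt, in polynomial time, to extend it to an optimum solution that introduces no further local cycles. We return the best solution found across all guesses whose feedback edge number does not exceed~$t$.

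For the inner polynomial-time extension, for each community~$C$ we maintain a set of \emph{potential centers}: vertices $v \in C$ that could still serve as center of a spanning star of~$C$ without creating a new local cycle, given the edges already fixed. Starting from the naive candidate set $\{v \in C : C \subseteq N_G[v]\}$, we apply a collection of reduction \emph{operations} that further prune these sets using constraints analogous to those used for the $t=0$ case: for example, if $|C_i \cap C_j| \geq 3$ then the two communities must share a center, and if $|C_i \cap C_j| = 2$ then both centers must lie in the intersection, unless the guessed set~$\mathcal{L}$ already accounts for the local cycle this would produce. After the operations stabilize, one greedily picks for each community a remaining potential center that minimizes the weight of the added star edges. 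A simple count of the newly introduced edges then verifies whether the resulting graph has feedback edge number at most~$t$.

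The main obstacle will be establishing \emph{safeness} of the reduction operations together with the greedy choice: we must prove that (i)~when the guess~$\mathcal{L}$ matches the set of local cycles of some optimum solution, the operations never discard a vertex that is the center of such a solution; and (ii)~after the operations are exhausted, \emph{every} choice of one potential center per community yields a solution in which no new local cycle appears, so that only unavoidable global cycles can arise. Once these two invariants are proven, the greedy center selection produces an optimum extension for each guess, and the total running time is $m^{4t} \cdot \poly(|I|)$, as claimed.
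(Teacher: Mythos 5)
Your proposal follows the same two-layer structure as the paper: distinguish local from global cycles, enumerate $m^{4t}$ guesses for the local-cycle edges, and extend each guess in polynomial time via potential centers pruned by reduction operations followed by a greedy center selection. The outer layer and the general shape of the inner layer match the paper's proof. Two aspects of your sketch, however, do not hold up as stated.

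First, your invariant~(ii) --- that after the operations stabilize, \emph{every} choice of one potential center per community produces no new local cycle --- is not what the paper proves, and it is unlikely to be provable in this form. The operations only remove centers that would \emph{necessarily} create an uncovered local cycle with some fixed community; they do not guarantee that two arbitrary choices across different communities cannot interact to produce a new local cycle. The paper instead proves an \emph{exchange argument}: given any minimum solution fitting for the guessed edge set~$E^*$, one can rewire it, one $\FL$-class at a time, to agree with the greedy centers without increasing edge count or weight and without losing the fitting property. Establishing that the rewired graph is still fitting requires several nontrivial claims about how local edges are distributed around the old and new centers; this exchange, not a blanket "no new local cycles" property, is what makes the greedy selection correct.

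Second, the outer-loop count is justified by a different argument than you give. You write that an optimal solution has "at most $t$ local cycles," but a graph of feedback edge number $t$ can have more than $t$ local cycles when they share edges. What the paper actually shows (Lemma \ref{at most 4t}) is that the \emph{union} of the edge sets of all local cycles has size at most $4t$: when iterating over the local cycles, each cycle that contributes a new edge strictly increases the feedback edge number of the partial graph, so at most $t$ cycles contribute edges, each of length at most~$4$. This bound on the total edge count, rather than the count of cycles, is what licenses enumerating edge sets of size at most $4t$.
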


    Korach and Stern~\cite{korach2008complete} asked whether \Stars{} is polynomial-time solvable if~$t=0$.
    \Cref{xp t} answers this question positively.
    
Our XP-algorithm, exploits the fact that there are two different kinds of cycles in~$G'$:
First, there are \emph{global} cycles. These are the cycles in the solutions that are directly caused by cycles in the input hypergraph. No solution may avoid these cycles.
Second, there are \emph{local} cycles. These are cycles which are entirely contained in the subgraph induced by two communities.
Since in each solution, each community contains a spanning star, local cycles can only have length~$3$ or~$4$. 
This allows us to bound the number of possible local cycles and thus to consider all possibilities for the local cycles in XP-time with respect to~$t$.
Then, the crux of our algorithm is that after all local cycles have been fixed, all remaining cycles added by our algorithm have to be global and are thus unavoidable. 
Using this fact, we show that in polynomial time we can compute an optimal solution with feedback edge number at most~$t$ that extends a fixed set of local cycles without introducing any further local cycles.
To do this, for each community~$C$, we store a set of \emph{potential centers}, that is, vertices of~$C$ that may be the center of a spanning star of~$C$ in any solution that does not produce new local cycles.
We define several \emph{operations} that restrict the potential centers of each community.
We show that after all operations have been applied exhaustively, one can greedily pick the best remaining center for each community.

\subparagraph{Algorithm-specific notation.}
Next, we present the formal definition of local cycles; an example is shown in \Cref{figure cycles}.
For a spanning subgraph~$H$ of~$G$ and a community~$C\in \mc$, let~$\univ_H(C)$ denote the vertices of~$C$ that are \emph{universal} for~$C$ in~$H$.
Recall that a vertex~$u$ is universal for some vertex set~$S$ in a graph~$F$ if~$\{u,w\}\in E(F)$ for each~$w\in S$.
Note that~$\univ_H(C) \subseteq \univ_G(C)$.
In the following, we assume that for each community~$C\in\mc$, $\univ_G(C)\neq \emptyset$, as otherwise, there is no solution for the instance~$I$ of \Stars, and~$I$ is a trivial no-instance.

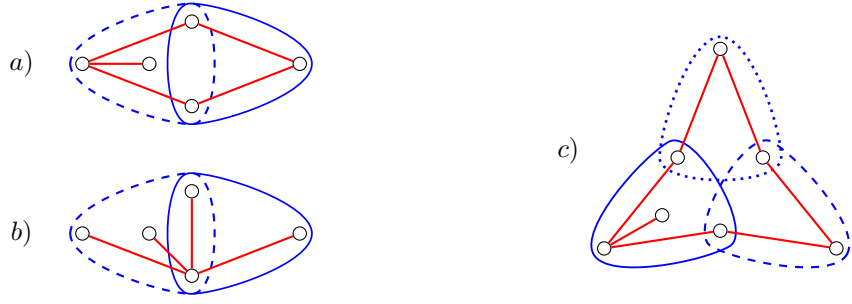
\begin{figure}[t]
\begin{center}
\begin{tikzpicture}[scale = .8]

\tikzstyle{knoten}=[circle,fill=white,draw=black,minimum size=5pt,inner sep=0pt]
\node[knoten] (vA) at (1.2,6) {};

\node (label) at ($(vA) - (1,0)$) {{$b)$}};

\node[knoten] (vA2) at (2.3,6) {};

            \draw  [dashed, line width=.8,blue] (1,6) .. controls (1,5.5) and (2.5,5) .. (3,5) .. controls (3.5,5) and (3.5,7) .. (3,7) .. controls (2.5,7) and (1,6.5) .. (1,6) -- cycle ;

\begin{scope}[xshift=170]
            \draw  [line width=0.65,blue] (-1,6) .. controls (-1,5.5) and (-2.5,5) .. (-3,5) .. controls (-3.5,5) and (-3.5,7) .. (-3,7) .. controls (-2.5,7) and (-1,6.5) .. (-1,6) -- cycle ;

\node[knoten] (vB) at (-1.2,6) {};
\end{scope}

\node[knoten] (vC) at (3,5.3) {};
\node[knoten] (vD) at (3,6.7) {};

\draw[red,thick] (vC) to (vA2);
\draw[red,thick] (vC) to (vA);
\draw[red,thick] (vC) to (vB);
\draw[red,thick] (vC) to (vD);

\begin{scope}[yshift=80]

\node[knoten] (vA) at (1.2,6) {};

\node (label) at ($(vA) - (1,0)$) {{$a)$}};

\node[knoten] (vA2) at (2.3,6) {};

            \draw  [dashed, line width=.8,blue] (1,6) .. controls (1,5.5) and (2.5,5) .. (3,5) .. controls (3.5,5) and (3.5,7) .. (3,7) .. controls (2.5,7) and (1,6.5) .. (1,6) -- cycle ;

\begin{scope}[xshift=170]
            \draw  [line width=0.65,blue] (-1,6) .. controls (-1,5.5) and (-2.5,5) .. (-3,5) .. controls (-3.5,5) and (-3.5,7) .. (-3,7) .. controls (-2.5,7) and (-1,6.5) .. (-1,6) -- cycle ;

\node[knoten] (vB) at (-1.2,6) {};
\end{scope}

\node[knoten] (vC) at (3,5.3) {};
\node[knoten] (vD) at (3,6.7) {};

\draw[red,thick] (vA) to (vA2);
\draw[red,thick] (vA) to (vC);
\draw[red,thick] (vA) to (vD);
\draw[red,thick] (vB) to (vC);
\draw[red,thick] (vB) to (vD);

\end{scope}

\begin{scope}[yshift=30,xshift=280,rotate around={30:(3,3)}]

\node[knoten] (vA) at (1.2,6) {};

\node[knoten] (vA2) at (2.3,6) {};

            \draw  [line width=.7,blue] (1,6) .. controls (1,5.5) and (2.5,5) .. (3,5) .. controls (3.5,5) and (3.5,7) .. (3,7) .. controls (2.5,7) and (1,6.5) .. (1,6) -- cycle ;

%\draw (96.89,38.00) node [anchor=north west][inner sep=0.75pt]    {$u_{2}$};
\begin{scope}[xshift=91.2,yshift=118.1]
\begin{scope}[rotate around={120:(3,3)}]
\begin{scope}[xshift=0]
            \draw  [dashed, line width=.8,blue] (1,6) .. controls (1,5.5) and (2.5,5) .. (3,5) .. controls (3.5,5) and (3.5,7) .. (3,7) .. controls (2.5,7) and (1,6.5) .. (1,6) -- cycle ;

\node[knoten] (vC2) at (3,5.3) {};
\node[knoten] (vB1) at (1.2,6) {};
\end{scope}
\end{scope}
\end{scope}

%\draw (96.89,38.00) node [anchor=north west][inner sep=0.75pt]    {$u_{2}$};
\begin{scope}[xshift=-56.6,yshift=138]
\begin{scope}[rotate around={240:(3,3)}]
\begin{scope}[xshift=0]
            \draw  [dotted, line width=1,blue] (1,6) .. controls (1,5.5) and (2.5,5) .. (3,5) .. controls (3.5,5) and (3.5,7) .. (3,7) .. controls (2.5,7) and (1,6.5) .. (1,6) -- cycle ;

\node[knoten] (vB2) at (1.2,6) {};
\end{scope}
\end{scope}
\end{scope}

\node[knoten] (vC) at (3,5.3) {};
\node[knoten] (vD) at (3,6.7) {};

\node (label) at ($(vD) - (1.5,-1)$) {{$c)$}};

\draw[red,thick] (vA) to (vA2);
\draw[red,thick] (vA) to (vC);
\draw[red,thick] (vA) to (vD);
\draw[red,thick] (vB1) to (vC);
\draw[red,thick] (vB2) to (vD);
\draw[red,thick] (vB1) to (vC2);
\draw[red,thick] (vB2) to (vC2);

\end{scope}
\end{tikzpicture}
\end{center}
\caption{Examples for solutions with and without local cycles. 
Red edges indicate the edges of the solution. 
Part a)~shows an example, where both communities induce a local cycle. 
Part b)~shows an example, where the two communities do not induce local cycle. 
Finally, part c)~shows an example, where the solution contains a cycle but no two communities induce a local cycle.}
\label{figure cycles}
\end{figure}

For a solution~$G'$, we say that two distinct communities~$C_1$ and~$C_2$~\emph{induce a local cycle} if for each~$i\in \{1,2\}$, there is a vertex~$c_i\in \univ_{G'}(C_i)$ such that the graph~$S_1\cup S_2$ contains a cycle.
Here, for each~$i\in \{1,2\}$, $S_i$ is the spanning star of~$C_i$ with center~$c_i$ and~$S_1\cup S_2$ is the union of both these stars defined by$S_1\cup S_2 \coloneqq(C_1\cup C_2, \{\{c_i,w_i\}\colon w_i\in C_i\setminus \{w_i\}, i\in \{1,2\}\})$.  
Moreover, we say that each cycle of~$S_1\cup S_2$ is a~\emph{local cycle} in~$G'$.
Note that each local cycle has length at most four, and if~$C_1$ and~$C_2$ induce a local cycle, then~$|C_1\cap C_2|\geq 2$.

As described above, the first step of the algorithm behind~\Cref{xp t} is to test each possibility for the local cycles of the solution.
For a fixed guess, we let~$E^*$ denote the set of all edges contained in at least one local cycle and in the following we refer to them as \emph{local edges}.
Moreover, we call a minimum solution~$G'$~\emph{fitting for~$E^*$} if each local cycle of~$G'$ uses only edges of~$E^*$ and each edge of~$E^*$ is contained in~$G'$.
%To present the corresponding algorithms for~\Stars, we first establish the following.
%Intuitively, if there is a solution where~$E^*$ are the edges of all local cycles, then that solution is fitting for~$E^*$ and vice versa.
Hence, to determine whether the choice of local edges~$E^*$ can lead to a solution, we only have to check, whether there is a fitting solution for~$E^*$.
In the following, we show that this can be done in polynomial time.

\begin{theorem}\label{thm:fitting}
Let~$I=(G=(V,E),\mathcal{C},\omega,\ell,b)$ be an instance of \Stars, and let~$E^*\subseteq E$.
In polynomial time, we can  
\begin{itemize}
\item find a solution~$G'=(V,E')$ for~$I$ with~$E^*\subseteq E'$, $|E'| \leq \ell$, and~$\omega(E') \leq b$ or
\item correctly output that there is no minimum solution that is fitting for~$E^*$.
%correctly output that there is no minimum solution~$G'$ for~$I$ with at most~$\ell$ edges and total weight at most~$b$, such that each local cycle of~$G'$ uses only edges of~$E^*$ and each edge of~$E^*$ is contained in~$G'$.
\end{itemize}
\end{theorem}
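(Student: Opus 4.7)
The plan is to maintain, for each community $C \in \mc$, a set $\can(C) \subseteq \univ_G(C)$ of candidate centres: the vertices still allowed to serve as the centre of the spanning star of $C$ in a minimum fitting solution for $E^*$. I would initialise $\can(C) \coloneqq \univ_G(C)$ and iteratively apply local reduction operations that remove a candidate $v$ from $\can(C)$ whenever selecting $v$ would either exclude a required edge of $E^*$ from the solution or force, together with another still-possible choice in some overlapping community, a new local cycle outside $E^*$. If some $\can(C)$ becomes empty, the algorithm outputs no; otherwise it greedily picks an optimum centre for each community and returns the induced graph $G'$.

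First, I would derive the initial restrictions implied by $E^*$. Every edge $e=\{u,v\} \in E'$ must be a spoke of a star in some community, so for each community~$C$ with $\{u,v\}\subseteq C$ at least one of $u,v$ must be the centre of $C$; otherwise $e$ is absent from $E'$ and $E^*\not\subseteq E'$. Iterating over $e \in E^*$ yields hard constraints that immediately prune the $\can(C)$-sets. Next, for every ordered pair $(C_1,C_2)$ with $|C_1\cap C_2|\ge 2$ and every pair $(c_1,c_2)\in\can(C_1)\times\can(C_2)$, I would check whether choosing these as centres produces a local cycle in $S_1\cup S_2$. Since such a cycle has length $3$ or $4$ and lies entirely in $G[C_1\cup C_2]$, this test runs in polynomial time. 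If the induced local cycle uses any edge not in $E^*$, the pair is forbidden; whenever every $c_2\in\can(C_2)$ is forbidden together with some $c_1\in\can(C_1)$, I remove $c_1$ from $\can(C_1)$. Each removal strictly shrinks some $\can(C)$, so a fixpoint is reached in polynomial time.

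The central structural claim, which drives the greedy step, is that at the fixpoint either some $\can(C)$ is empty — in which case no minimum fitting solution for~$E^*$ exists — or every independent choice $c_C\in\can(C)$ for $C\in\mc$ yields a spanning-star assignment whose union contains $E^*$ and has no local cycle outside $E^*$. Granting this independence, the greedy step picks, for each community $C$, a candidate $c_C\in\can(C)$ minimising the weight contribution of the star edges unique to $C$; the algorithm finally verifies the budgets $|E'|\le\ell$ and $\omega(E')\le b$ and outputs $G'$ if both hold, otherwise reporting that no minimum fitting solution for~$E^*$ exists. Soundness of the reductions — that no centre used by a minimum fitting solution is ever removed — is verified by checking each operation against the definition of a fitting solution.

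The main obstacle is establishing this independence claim at the fixpoint: one has to show that purely pairwise reductions already eliminate every cross-community conflict, so that choices across different communities commute. I expect the argument to proceed by a case analysis on $|C_1\cap C_2|$ and on the length-$3$ versus length-$4$ local-cycle structure, combined with an exchange argument: given any minimum fitting solution with centre $c$ for community $C$, any other candidate $c'\in\can(C)$ surviving the reductions can be substituted for $c$ without creating a new local cycle and without increasing the total weight. This exchange argument simultaneously justifies both soundness of the reductions and correctness of the per-community greedy selection, and it relies crucially on the fact that local cycles are pairwise phenomena while global cycles — being caused by cycles in the hypergraph — are unavoidable and thus cannot be prevented by any centre choice.
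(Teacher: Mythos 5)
Your overall plan --- maintain per-community candidate-centre sets, prune them, then pick greedily --- matches the paper's skeleton, but the central ``independence claim'' on which the greedy step rests is false as stated, and this is where the paper does something you omit.

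Your pruning rule is an arc-consistency rule: you remove $c_1$ from $\can(C_1)$ only if $c_1$ is incompatible with \emph{every} remaining $c_2 \in \can(C_2)$. Arc-consistency does not imply that all choices commute. Concretely, take $E^*=\emptyset$, two communities $C_1,C_2$ with $|C_1\cap C_2|\ge 3$, and two vertices $a,b\in C_1\cap C_2$, both universal for $C_1$ and for $C_2$. Then at the fixpoint $\{a,b\}\subseteq\can(C_1)$ and $\{a,b\}\subseteq\can(C_2)$, because the pairs $(a,a)$ and $(b,b)$ are conflict-free (coinciding centres on the intersection produce no local cycle), so neither $a$ nor $b$ is forbidden against \emph{all} of the other community's candidates. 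Yet the mixed choice $c_{C_1}=a$, $c_{C_2}=b$ produces the length-3 local cycle $a$--$c$--$b$--$a$ for any third vertex $c\in C_1\cap C_2$, and none of these edges lies in $E^*$. Since the star edges of $C_1$ and $C_2$ generally differ outside the intersection, a per-community weight-minimiser can legitimately make different picks for $C_1$ and $C_2$, so the greedy step as you describe it would output a non-fitting graph. In other words, the constraint here is not pairwise-local but \emph{equality} of centres, and your reduction loop never discovers that.

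This is precisely the structure the paper's proof makes explicit: it defines a partition $\FL$ of the communities (via the auxiliary graph on $\mc$ in which $C$ and $D$ are joined if $|C\cap D|\ge 3$ and no vertex of $C\cup D$ is locally universal for $C\cap D$), proves via Lemma (\texttt{cut3 local}) and Corollary (\texttt{corr equiv eq center}) that every community in the same part must receive the \emph{same} centre in any fitting solution, propagates that equality through $\locpot$ with Operation~2, and then runs the greedy choice once per part of $\FL$, minimising the weight of the new edges needed for the whole part rather than per community. Your exchange-argument intuition is sound and is indeed how the paper proves optimality of the greedy step, but the exchange is carried out per $\FL$-class, and the $\FL$ machinery is what lets the exchange commute with the other communities' choices. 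Without introducing some equivalent grouping --- or, alternatively, a stronger constraint-propagation rule that enforces $\can(C_1)=\can(C_2)$ as a single shared variable whenever $C_1$ and $C_2$ fall into the same class --- your argument does not close.
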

\newcommand{\locpot}[1][E^*]{\mathrm{fit}_{#1}}
\newcommand{\LO}{L_{E^*}}

Based on the definition of fitting solutions, we define for each community~$C\in \mc$ a set~$\locpot(C)$ of possible centers.
We initialize~$\locpot(C)\coloneqq \univ_G(C)$ for each community~$C\in\mc$.
The goal is to reduce these sets of possible centers of each community as much as possible while preserving the following property, which trivially holds for the initial~$\locpot(C)$ for each community~$C\in\mc$.

\iflonglong
\todomi{ist `property` ein ungünstiger name, weil `stars`und `connected`auch properties sind? lieber 'attribute' oder 'feature'?}\fi
\begin{property}\label{prop}
For each minimum solution~$G'$ which is fitting for~$E^*$ and each community~$C\in\mc$, we have~$\univ_{G'}(C) \subseteq \locpot(C)$.
\end{property}
Note that if~\Cref{prop} is fulfilled and if~$\locpot(C) = \emptyset$ for some~$C\in\mc$, then we can correctly output that there is no fitting solution for~$E^*$.

In the following, we define several operations that for some communities~$C\in \mc$ remove vertices from~$\locpot(C)$ which---when taken as a center vertex for~$C$---would introduce new local cycles, violating the properties of a fitting solution.
We show that all of these operations preserve~\Cref{prop} and that after all these operations are applied exhaustively, the task of~\Cref{thm:fitting} can be performed greedily based on~$\locpot$. 
\iflong Examples for each of our operations are shown in \Cref{figure operations}.\fi

In the following, we say that a vertex~$v\in V$ is~\emph{locally universal} for a vertex set~$A\subseteq V$, if for each vertex~$w\in A\setminus \{v\}$, the vertex pair~$\{v,w\}$ is a local edge.
Based on this definition, we are now able to present the first operation. 

\begin{operation}\label{op endpoints of local}
Let~$C\in\mc$ be a community and let~$\{y,z\}\subseteq C$ be a local edge.
Remove each vertex~$v$ from~$\locpot(C)$ which is not locally universal for~$\{y,z\}$.
\end{operation}

The following lemma shows that~\Cref{op endpoints of local} preserves~\Cref{prop}.
\begin{lemma}[$\star$]
Let~$G'$ be a minimum solution for~$I$, let~$C$ be a community of~$\mc$ and let~$x\in \univ_{G'}(C)$ such that~$x$ is not locally universal for some local edge~$\{y,z\}\subseteq C$.
Then, $G'$ is not fitting for~$E^*$.
\end{lemma}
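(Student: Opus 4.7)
I will argue by contradiction. Suppose $G'$ \emph{is} fitting for $E^*$; under the remaining hypotheses I will show that $x$ must be locally universal for $\{y,z\}$, contradicting the assumption. The strategy is to exhibit the triangle $T = \{x,y,z\}$ as a local cycle of $G'$; the fitting property then forces $E(T) \subseteq E^*$, hence $\{x,y\}, \{x,z\} \in E^*$, i.e., $x$ is locally universal for $\{y,z\}$.

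First observe that $T$ is a subgraph of $G'$: the edges $\{x,y\}$ and $\{x,z\}$ lie in $E(G')$ because $x \in \univ_{G'}(C)$ and $y, z \in C$, while $\{y,z\} \in E^* \subseteq E(G')$ by fitting. Moreover, the failure of local universality of $x$ forces $x \notin \{y,z\}$, since otherwise the condition $\{x,w\}\in E^*$ for $w \in \{y,z\}\setminus\{x\}$ holds trivially from $\{y,z\}\in E^*$. To realize $T$ as a local cycle I will produce a second community $C_\dagger \neq C$ with $\{y,z\} \subseteq C_\dagger$ and with at least one of $y, z$ in $\univ_{G'}(C_\dagger)$; then, taking $x$ as the center of $C$ and, say, $y$ as the center of $C_\dagger$, the star $S_{C,x}$ contributes the edges $\{x,y\}$ and $\{x,z\}$ while the star $S_{C_\dagger,y}$ contributes $\{y,z\}$, so $T$ sits inside $S_{C,x}\cup S_{C_\dagger,y}$ and is therefore a local cycle. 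The existence of $C_\dagger$ will come from minimality of $G'$: because $\{y,z\}$ cannot be removed from $G'$ without destroying the spanning-star property of some community, that community must contain both $y$ and $z$ and its every universal vertex in $G'$ must be an endpoint of $\{y,z\}$; thus some community $C_\dagger$ satisfies $y,z \in C_\dagger$ and $\univ_{G'}(C_\dagger) \subseteq \{y,z\}$. This $C_\dagger$ is distinct from $C$ because $x \in \univ_{G'}(C)\setminus\{y,z\}$, and since $C_\dagger$ admits a spanning star we have $\univ_{G'}(C_\dagger) \neq \emptyset$, so we may assume without loss of generality that $y \in \univ_{G'}(C_\dagger)$.

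The main obstacle is the minimality step producing $C_\dagger$; once that witnessing community is in hand, the verification that every edge of $T$ appears in $S_{C,x} \cup S_{C_\dagger,y}$ is immediate and the contradiction closes. It is worth noting that this argument does not invoke the specific local cycle that originally certifies $\{y,z\}$ as a local edge: the witnessing pair $(C, C_\dagger)$ for $T$ is built purely from $x$, from $\{y,z\}$, and from the irremovability of $\{y,z\}$ in the minimum solution $G'$.
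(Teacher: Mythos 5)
Your proof is correct and follows essentially the same route as the paper: use minimality of $G'$ to exhibit a community ($C_\dagger$ in your notation, $D$ in the paper's) that needs the edge $\{y,z\}$, show that $C$ and that community induce a local cycle on $\{x,y,z\}$, and conclude that this cycle uses a non-local edge because $x$ is not locally universal for $\{y,z\}$. The only cosmetic difference is that you frame the argument as a proof by contradiction (assume $G'$ fitting, derive $E(T)\subseteq E^*$, contradict local universality), while the paper argues directly that the local cycle contains a non-local edge and hence $G'$ is not fitting.
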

\iflong
\begin{proof}
If~$\{y,z\}$ is not an edge of~$G'$, then~$G'$ is not fitting for~$E^*$, since~$G'$ does not contain all edges of~$E^*$.
Hence, in the following we assume that~$\{y,z\}$ is an edge of~$G'$.
If the graph~$G''$ obtained by removing the edge~$\{y,z\}$ from~$G'$ is a solution for~$I$, then~$G'$ is not a minimum solution.
Thus, we assume that~$G''$ is not a solution for~$I$. 
Consequently, there is some community~$D\in \mc$ such that~$\univ_{G''}(D)=\emptyset$.
Since~$G'$ is a solution for~$I$, $\univ_{G'}(D)\neq\emptyset$.
Moreover, since~$G''$ is obtained from~$G'$ by removing the edge~$\{y,z\}$, we have~$y \in \univ_{G'}(D)$ or~$z \in \univ_{G'}(D)$. 
Hence, $C$ and~$D$ induce a local cycle in~$G'$ on the vertices of~$\{x,y,z\}$.
Since~$x$ is not locally universal for~$\{y,z\}$, at least one edge of this local cycle is not a local edge.
Consequently, $G'$ is not fitting for~$E^*$.
\end{proof}
\fi

Note that after the exhaustive application of~\Cref{op endpoints of local}, for each community~$C\in \mc$  with at least one local edge, the vertices of~$\locpot(C)$ induce a clique with only local edges.
\iflong
In the following, we assume that~\Cref{op endpoints of local} is applied exhaustively.
\fi

\newcommand{\FL}{\mathfrak{C}}
Next, we define a partition~$\FL$ of the communities of~$\mc$.
The idea of this partition is that in each fitting solution for~$E^*$, all communities of the same part of the partition~$\FL$ have the same unique center.
The definition of the partition~$\FL$ is based on the following lemma.
\begin{lemma}[$\star$]\label{cut3 local}
Let~$C$ and~$D$ be distinct communities of~$\mc$ with~$|C\cap D|\geq 3$ and where no vertex~$v\in C\cup D$ is locally universal for~$C\cap D$.
Let~$G'$ be a solution such that there is no vertex~$w\in C\cap D$ with~$\univ_{G'}(C) = \univ_{G'}(D) = \{w\}$.
Then, $C$ and~$D$ induce a local cycle in~$G'$ that uses at least one edge which is not a local edge. \end{lemma}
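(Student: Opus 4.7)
The plan is to pick distinct centers $c_C \in \univ_{G'}(C)$ and $c_D \in \univ_{G'}(D)$, exhibit a cycle in the union $S_C \cup S_D$ of the two corresponding spanning stars, and then arrange this cycle so that it contains at least one edge outside $E^*$.

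First, I would argue that distinct centers can always be chosen. If $\univ_{G'}(C) \neq \univ_{G'}(D)$, I pick any element of one that lies outside the other; if the two sets coincide but have size at least two, I pick two distinct elements. The only remaining possibility -- that both sets equal a common singleton $\{w\}$ with $w \in C \cap D$ -- is precisely what the hypothesis rules out. So $c_C \neq c_D$ is always achievable.

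Second, I would exhibit a short cycle in $S_C \cup S_D$. If both $c_C \in D$ and $c_D \in C$, then the edge $\{c_C, c_D\}$ lies in both stars, and any vertex $u \in (C \cap D) \setminus \{c_C, c_D\}$ -- available since $|C \cap D| \geq 3$ -- completes a triangle $c_C, u, c_D$. In every other sub-case, at most one of $c_C, c_D$ belongs to $C \cap D$, so I can pick two distinct vertices $u_1, u_2 \in (C \cap D) \setminus \{c_C, c_D\}$, again possible since $|C \cap D| \geq 3$; the four edges $\{c_C, u_i\} \in S_C$ and $\{c_D, u_i\} \in S_D$ then close the 4-cycle $c_C, u_1, c_D, u_2$.

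Third, I would refine the choice of cycle to force a non-local edge onto it. Since $c_C \in C \subseteq C \cup D$, the hypothesis says $c_C$ is not locally universal for $C \cap D$, so there exists some $w \in (C \cap D) \setminus \{c_C\}$ with $\{c_C, w\} \notin E^*$. In the triangle case, if $w = c_D$ then the already-constructed triangle contains the non-local edge $\{c_C, c_D\}$; otherwise I set $u := w$, placing $\{c_C, w\}$ on the triangle. In the 4-cycle case, I set $u_1 := w$; the remaining cycle vertex $u_2 \in (C \cap D) \setminus \{c_C, c_D, w\}$ is still available because, in this sub-case, at most one of $c_C, c_D$ belongs to $C \cap D$, and hence $|(C \cap D) \setminus \{c_C, c_D, w\}| \geq 3 - 2 = 1$.

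The main obstacle is entirely in Step 3, where the interplay between the possibility $w = c_D$ and the sub-case distinction of Step 2 must be reconciled carefully; the size bound $|C \cap D| \geq 3$ together with the constraints on where $c_C$ and $c_D$ sit in each sub-case provides exactly the slack needed for the chosen $w$ to coexist with the other cycle vertices.
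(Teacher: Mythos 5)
Your Steps 1 and 2 are sound and essentially match the paper's selection of two distinct centers. The genuine gap is in Step 3, in the $4$-cycle branch. You derive a vertex $w \in (C \cap D) \setminus \{c_C\}$ with $\{c_C, w\}$ non-local and set $u_1 := w$, but $u_1$ must be a vertex of $(C \cap D) \setminus \{c_C, c_D\}$, and you never rule out $w = c_D$. This can happen inside the $4$-cycle case: take the sub-case $c_C \notin D$ and $c_D \in C$ (so $c_D \in C \cap D$), and it is entirely possible that $c_D$ is the only vertex of $C \cap D$ witnessing that $c_C$ fails to be locally universal, forcing $w = c_D$. Your closing remark about ``slack'' only argues that $u_2$ exists, which is irrelevant: the problem is that $w = c_D$ cannot serve as $u_1$ in the proposed $4$-cycle at all, so that cycle is simply not constructed.

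The repair requires a different cycle shape in exactly that sub-case rather than more vertices: since $c_D \in C$, the edge $\{c_C, c_D\}$ belongs to $S_C$, so the triangle $c_C$--$c_D$--$u$ for any $u \in (C \cap D) \setminus \{c_C, c_D\}$ (nonempty because $|C\cap D|\geq 3$ and only $c_D$ among $\{c_C,c_D\}$ lies in $C\cap D$) is a cycle in $S_C \cup S_D$ carrying the non-local edge $\{c_C, c_D\}$. The paper sidesteps the issue by extracting a non-local edge from \emph{both} sides (an $x'$ for $x$ and a $y'$ for $y$) and splitting on whether the degenerate identification $x=y'$, $y=x'$ occurs; in its remaining case it chooses triangle versus $4$-cycle according to whether $x\in C\cap D$ \emph{or} $y\in C\cap D$, rather than whether both lie in $C\cap D$ — and that ``or'' is precisely what covers the sub-case your version misses.
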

\iflong
\begin{proof}
Let~$G'$ be a solution such that there is no vertex~$w\in C\cap D$ with~$\univ_{G'}(C) = \univ_{G'}(D) = \{w\}$.
Since~$\univ_{G'}(C)$ and~$\univ_{G'}(D)$ are nonempty, there is a vertex~$x\in \univ_{G'}(C)$ and a vertex~$y\in \univ_{G'}(D)$ such that~$x \neq y$.
By the fact that~$x$ is not locally universal for~$C\cap D$, there is some vertex~$x'\in (C\cap D)\setminus \{x\}$, such that~$\{x,x'\}$ is not a local edge.
Similarly, there is some vertex~$y'\in (C\cap D)\setminus \{y\}$, such that~$\{y,y'\}$ is not a local edge.
We distinguish two cases.

\textbf{Case 1:}~$x = y'$ and~$y = x'$\textbf{.}
Since~$|C\cap D|\geq 3$, there is a vertex~$z \in C\cap D$ distinct from both~$x$ and~$y$.
Hence, $C$ and~$D$ induce a local cycle in~$G'$ on the vertices~$\{x,y,z\}$.
This local cycle contains the edge~$\{x,y\}$, which is not a local edge, since~$y = x'$.

\textbf{Case 2:}~$x \neq y'$ or~$y \neq x'$\textbf{.}
Assume without loss of generality that~$x \neq y'$.
Then, if~$x\in C\cap D$ or~$y\in C\cap D$, then~$\{x,y,y'\}$ is a local cycle (induces by~$C$ and~$D$) in~$G'$.
Otherwise, if~$C\cap D$ contains neither~$x$ nor~$y$, let~$z$ be an arbitrary vertex of~$(C\cap D) \setminus \{y'\}$.
Then, $\{x,y,y',z\}$ is a local cycle (induced by~$C$ and~$D$) in~$G'$.
Both local cycles contain the edge~$\{y,y'\}$ which is not a local edge.
\end{proof}
\fi

Consider the auxiliary graph~$G_\FL$ with vertex set~$\mc$ and where two distinct communities~$C$ and~$D$ are adjacent if and only if~$a)$~$|C\cap D|\geq 3$ and~$b)$ there is no locally universal vertex for~$C\cap D$ in~$C\cup D$.
The partition~$\FL$ consists of the connected components of~$G_\FL$ and for a community~$C\in \mc$, we denote by~$\FL(C)$ the collection of communities in the connected component of~$C$ in~$G_\FL$.
An example is shown in \Cref{figure equiv classes}.

By~\Cref{cut3 local} and due to transitivity, we obtain the following.

\begin{corollary}\label{corr equiv eq center}
For each community~$C\in \mc$ with~$|\FL(C)| \geq 2$ and each fitting solution~$G'$ for~$E^*$, there is a vertex~$v\in \bigcap_{\widetilde{C}\in \FL(C)}\widetilde{C}$ such that~$\univ_{G'}(\widetilde{C}) = \{v\}$ for each~$\widetilde{C}\in \FL(C)$.
\end{corollary}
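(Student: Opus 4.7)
The plan is to apply \Cref{cut3 local} to each edge of the auxiliary graph~$G_\FL$ and then to propagate the equality of centers along paths within a connected component of~$G_\FL$.

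First I would fix an arbitrary fitting solution~$G'$ for~$E^*$ and consider any edge~$\{C,D\}$ of~$G_\FL$. By the defining conditions of~$G_\FL$, we have~$|C\cap D|\geq 3$ and no vertex of~$C\cup D$ is locally universal for~$C\cap D$. Now \Cref{cut3 local} is precisely the tool that applies here: it asserts that if there were no common vertex~$w\in C\cap D$ with~$\univ_{G'}(C) = \univ_{G'}(D) = \{w\}$, then~$C$ and~$D$ would induce a local cycle in~$G'$ containing an edge that is not in~$E^*$, contradicting that~$G'$ is fitting for~$E^*$. Consequently, for every edge~$\{C,D\}$ of~$G_\FL$, there exists a unique vertex~$w_{C,D}\in C\cap D$ with~$\univ_{G'}(C)=\univ_{G'}(D)=\{w_{C,D}\}$.

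Next I would propagate this along a path. Let~$C_0, C_1, \ldots, C_k$ be any path in~$G_\FL$ connecting~$C$ to some~$\widetilde{C}\in \FL(C)$. Applying the previous observation to each consecutive pair~$\{C_{i-1},C_i\}$, we obtain vertices~$w_{i-1,i}\in C_{i-1}\cap C_i$ with~$\univ_{G'}(C_{i-1})=\univ_{G'}(C_i)=\{w_{i-1,i}\}$. Since~$\univ_{G'}(C_i)$ is a singleton, the vertex~$w_{i-1,i}$ coincides with~$w_{i,i+1}$ for every intermediate index~$i$, so by a straightforward induction along the path all these witnesses equal a single vertex~$v$, and~$\univ_{G'}(\widetilde{C}) = \{v\}$ for every~$\widetilde{C}\in \FL(C)$.

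It remains to verify that~$v\in\bigcap_{\widetilde{C}\in\FL(C)}\widetilde{C}$. For each~$\widetilde{C}\in \FL(C)$, pick a path in~$G_\FL$ from~$C$ to~$\widetilde{C}$ and let its last edge be~$\{C_{k-1},\widetilde{C}\}$. By construction~$v=w_{k-1,k}\in C_{k-1}\cap \widetilde{C}\subseteq \widetilde{C}$. Since this holds for each~$\widetilde{C}\in \FL(C)$, we conclude~$v\in \bigcap_{\widetilde{C}\in \FL(C)}\widetilde{C}$, completing the proof. The argument is essentially a transitivity closure, and the only subtle point is ensuring that the singleton equality is preserved along paths, which follows immediately from~$\univ_{G'}(C_i)$ being a one-element set determined once via any incident edge of~$G_\FL$.
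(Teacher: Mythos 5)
Your proposal is correct and fills in exactly the argument the paper invokes tersely with ``by~\Cref{cut3 local} and due to transitivity'': applying the contrapositive of the lemma across each edge of~$G_\FL$ (since~$G'$ is fitting, the local cycle in the conclusion cannot occur) and then propagating the resulting singleton~$\univ_{G'}(\cdot)$ along paths in the connected component. The membership of~$v$ in every~$\widetilde{C}\in\FL(C)$ via the last edge of a path is also handled cleanly, so this is the paper's intended proof, spelled out in full.
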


This implies that the following operation preserves~\Cref{prop}.

\begin{operation}\label{op equivclasses}
Let~$C\in\mc$.
Remove each vertex~$v$ from~$\locpot(C)$ if~$v$ is not contained in~$\bigcap_{\widehat{C}\in \FL(C)}\locpot(\widehat{C})$.
\end{operation}

\iflong
After applying~\Cref{op equivclasses} exhaustively, for each community~$\widetilde{C}\in\FL(C)$, $\locpot(\widetilde{C}) = \locpot(C)$.
\iflonglong
\todomi{was ist bis jetzt geschafft?}
\todomi{hier über kombination der operationen reden und erklären, dass $\locpot$ von einer community leer wird durch die beiden regeln}\fi

Next, we describe an operation for communities that do not contain any local edge.
To this end, we observe the following.

\begin{lemma}\label{nonlocal cuts}
Let~$C\in\mc$ be a community that contains no local edge.
Moreover, let~$D\in \mc$ such that~$|C\cap D|\geq 2$.
Then, for each solution~$G'$ where~$\univ_{G'}(C) \not\subseteq C\cap D$, $C$ and~$D$ induce a local cycle in~$G'$ that uses at least one edge which is not a local edge.
\end{lemma}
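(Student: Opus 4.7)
The plan is to directly construct the required local cycle by considering the spanning stars in $G'$ for $C$ and $D$. Fix some $x\in\univ_{G'}(C)\setminus(C\cap D)$, which exists by assumption; note that $x\in C$ but $x\notin D$. Since $G'$ is a solution, we can also fix some $y\in\univ_{G'}(D)$, and in particular $y\in D$. Let $S_C$ denote the star centered at $x$ spanning $C$ and $S_D$ the star centered at $y$ spanning $D$ inside $G'$. To prove the lemma, it suffices to exhibit a cycle in $S_C\cup S_D$ and argue that it contains at least one edge lying entirely inside $C$, since $C$ contains no local edge by assumption.

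I would split into two cases depending on whether $y\in C$ or not. In the first case, $y\in C\cap D$, and using $|C\cap D|\geq 2$ I pick any $z\in(C\cap D)\setminus\{y\}$. Then $\{x,y\}$ and $\{x,z\}$ are edges of $S_C$ (both endpoints in $C$), while $\{y,z\}$ is an edge of $S_D$ (both endpoints in $D$). The three vertices $x,y,z$ are pairwise distinct, since $y,z\in D$ while $x\notin D$, and $y\neq z$ by choice. Hence $\{x,y,z\}$ induces a triangle in $S_C\cup S_D$, which is the desired local cycle; its edge $\{x,z\}$ lies inside $C$ and is therefore not a local edge.

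In the second case, $y\in D\setminus C$, and I pick two distinct vertices $z_1,z_2\in C\cap D$, which exist since $|C\cap D|\geq 2$. Then $\{x,z_1\}$ and $\{x,z_2\}$ are edges of $S_C$, while $\{y,z_1\}$ and $\{y,z_2\}$ are edges of $S_D$. The four vertices $x,y,z_1,z_2$ are pairwise distinct, since $x\notin D$ but $y,z_1,z_2\in D$, $y\notin C$ but $z_1,z_2\in C$, and $z_1\neq z_2$. Thus $(x,z_1,y,z_2)$ is a 4-cycle in $S_C\cup S_D$, and its edges $\{x,z_1\}$ and $\{x,z_2\}$ lie entirely inside $C$, so they are not local edges.

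The only delicate point, which I expect to be the main obstacle, is verifying in each case that the constructed cycle is really a cycle in $S_C\cup S_D$ (i.e., that all vertices are distinct and all required edges are present), and making sure that the exhibited ``certifying'' non-local edge is indeed contained in $C$ rather than just in $C\cup D$; both cases above resolve this by using an edge incident to $x$ whose other endpoint lies in $C\cap D\subseteq C$.
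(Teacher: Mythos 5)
Your proof is correct and follows essentially the same approach as the paper's: both fix a center of $C$ lying outside $C\cap D$ and a center of $D$, close a triangle or $4$-cycle through vertices of $C\cap D$, and conclude that an edge of that cycle incident with the center of $C$ has both endpoints in $C$ and hence cannot be a local edge since $C$ contains no local edge. Your explicit case split on whether the center of $D$ lies in $C$ is a slightly more careful rendering of the paper's terser observation that the resulting cycle has length $3$ or $4$ depending on whether that center lies in $C\cap D$.
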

\begin{proof}
Let~$G'$ be a solution, let~$x$ and~$y$ be distinct vertices of~$C\cap D$, and let~$z\in \univ_{G'}(C) \setminus (C\cap D)$.
Since~$z\notin C\cap D$, $G'$ contains the edges~$\{z,x\}$ and~$\{z,y\}$.
Hence, for each vertex~$w\in \univ_{G'}(D)$, $C$ and~$D$ induce a local cycle in~$G'$ on the vertices~$\{x,y,z,w\}$, since~$z\neq w$.
If~$w\in C\cap D$, then this cycle has length~$3$.
Otherwise, this cycle has length~$4
$.
In both cases, the local cycle contains the edges~$\{z,x\}$ and~$\{z,y\}$.
Since~$C$ contains no local edge, neither of these two edges is a local edge.
\end{proof}

Note that~\Cref{nonlocal cuts} implies that the following operation preserves~\Cref{prop}.
\fi
\iflong
\else
Next, we define an operation for each possibility how two communities may intersect. 
\fi

\begin{operation}\label{op nonlocal coms}
Let~$C\in\mc$ such that~$C$ contains no local edge.
Moreover, let~$D\in \mc$ such that~$|C\cap D|\geq 2$.
Remove all vertices from~$\locpot(C)$ that are not contained in~$C\cap D$.
\end{operation}

\iflong

Next, we describe two operations for communities that contain at least one local edge.
\fi

\newcommand{\myShape}[1]{
            \draw  [#1] (1,6) .. controls (1,5.5) and (2.5,5) .. (3,5) .. controls (3.5,5) and  (4.6,5)  .. (4.7,6) .. controls (4.6,7) and (3.5,7) .. (3,7) .. controls (2.5,7) and (1,6.5) .. (1,6) -- cycle ;
}
\begin{figure}[t]
\begin{center}
\begin{tikzpicture}[scale = 1]

\tikzstyle{knoten2}=[circle,fill=white,minimum size=5pt,inner sep=2pt]
\tikzstyle{knoten}=[circle,fill=white,draw=black,minimum size=5pt,inner sep=2pt]

\begin{scope}[yshift=30,xshift=240,rotate around={30:(3,3)}]

\node[knoten] (vA) at (1.2,6) {};

            \myShape{line width=.65, blue} 

%\draw (96.89,38.00) node [anchor=north west][inner sep=0.75pt]    {$u_{2}$};
\begin{scope}[xshift=91.2,yshift=118.1]
\begin{scope}[rotate around={120:(3,3)}]
\begin{scope}[xshift=0]

            \myShape{dashed, line width=.8,blue}

\node[knoten] (vC2) at (3,5.3) {};
\node[knoten] (vB1) at (1.2,6) {};
\end{scope}
\end{scope}
\end{scope}

%\draw (96.89,38.00) node [anchor=north west][inner sep=0.75pt]    {$u_{2}$};
\begin{scope}[xshift=-56.6,yshift=138]
\begin{scope}[rotate around={240:(3,3)}]
\begin{scope}[xshift=0]

            \myShape{densely dotted, line width=.8,blue}

\node[knoten, fill=black] (vB2) at (1.2,6) {};
\end{scope}
\end{scope}
\end{scope}

\node[knoten] (vC) at (3,5.3) {};
\node[knoten] (vD) at (3,6.7) {};

\draw[thick] (vA) to (vC);
\draw[thick] (vB2) to (vC);
\draw[thick] (vB2) to (vD);
\draw[thick] (vB1) to (vC);
\draw[thick] (vB2) to (vC2);

\node (label) at ($(vA) + (0,.8)$) {{$Y$}};
\node (label) at ($(vB2) + (-.6,0.4)$) {{$X$}};
\node (label) at ($(vB1) + (.6,.6)$) {{$Z$}};

\end{scope}

%%%%%%%%%%%%%%%%%%%%%%%%%%%%%%%%%%%%%%%%%%%%%%%%%%%%%%%%%%%%%

\begin{scope}[yshift=30]

\node[knoten] (vA) at (1.2,6) {};
\node[knoten] (vA2) at (3,6) {};
\node[knoten] (vC) at (3,5.3) {};
\node[knoten] (vD) at (3,6.7) {};

            \draw  [dashed, line width=.8,blue] (1,6) .. controls (1,5.5) and (2.5,5) .. (3,5) .. controls (3.5,5) and (3.5,7) .. (3,7) .. controls (2.5,7) and (1,6.5) .. (1,6) -- cycle ;

\begin{scope}[xshift=210]
            \draw  [line width=0.9,blue,dotted] (-1,6) .. controls (-1,5.5) and (-2.5,5) .. (-3,5) .. controls (-3.5,5) and (-3.5,7) .. (-3,7) .. controls (-2.5,7) and (-1,6.5) .. (-1,6) -- cycle ;

\node[knoten] (vB) at (-1.2,6) {};

\node[knoten] (vBA2) at (-3,6) {};
\node[knoten] (vBC) at (-3,5.3) {};
\node[knoten] (vBD) at (-3,6.7) {};
\end{scope}

            \draw  [line width=0.65,blue, rounded corners] ($(vD)+(-.3,.3)$)--($(vBD)+(.3,.3)$)--($(vBC)+(.3,-.3)$)--($(vC)+(-.3,-.3)$)-- cycle ;

\node[knoten] (vC) at (3,5.3) {};
\node[knoten] (vD) at (3,6.7) {};

\draw[thick] (vA) to (vA2);
\draw[thick] (vA2) to (vC);
\draw[thick] (vB) to (vBC);
\draw[thick] (vD) to (vBA2);

\node (label) at ($(vA) + (0,.8)$) {{$A$}};
\node (label) at ($(vB) + (0,.8)$) {{$C$}};
\node (label) at ($0.5*(vD) +0.5*(vBD) + (0,.6)$) {{$B$}};

\end{scope}

\end{tikzpicture}
\end{center}
\caption{Examples for parts of the partition~$\FL$. 
Only the local edges are shown.
Note that~$A$ and~$C$ are both contained in~$\FL(B)$, since~$A$ and~$C$ share at least three vertices with~$B$ and no vertex of~$A\cup B$ or~$C\cup B$ is locally universal for~$A\cap B$ or~$C\cap B$, respectively.
Hence, after exhaustive application of~\Cref{op equivclasses}, $\locpot(A) =\locpot(B) =\locpot(C) = \emptyset$, since~$A$ and~$C$ share no vertices.
Furthermore, $Y\in \FL(Z)$, since no vertex of~$Y\cup Z$ is locally universal for~$Y \cap Z$.
Note that~$X\notin\FL(Z)$, since the black vertex of~$X$ is locally universal for~$X\cap Y$ and~$X\cap Z$.
Observe that an exhaustive application of~\Cref{op equivclasses} yields~$\locpot(Z) \subseteq Y \cap Z$ and an exhaustive application of~\Cref{op local coms 3} yields~$\locpot(Z) \cap (X\cap Z) = \locpot(Z) \cap (Y\cap Z) = \emptyset$, since~$X$ contains at least one local edge and~$X\cap Z$ contains no local edge.
Hence, for both shown hypergraphs, there is no fitting solution for the given set of local edges.
}
\label{figure equiv classes}
\end{figure}

\todomi{FIGURE CAPTIONS!!!!!!!!!!!!!!!!!!!!!!!!!!!!! AND REF SOMEWHERE!!!!!!!!!!!!!!!!!!!!!!!!!}

\iflong
\begin{figure}[t]
\begin{center}
\begin{tikzpicture}[scale = 1]

\tikzstyle{knoten2}=[circle,fill=white,minimum size=5pt,inner sep=2pt]
\tikzstyle{knoten}=[circle,fill=white,draw=black,minimum size=5pt,inner sep=2pt]

\begin{scope}[yshift=0]

\node[knoten2] (vA) at (1.2,6) {$w$};
\node[knoten2] (vB) at (4.8,6) {$y$};
\node[knoten2] (vC) at (3,5.3) {$v$};
\node[knoten2] (vD) at (3,6.7) {$x$};

\draw[red] (vC) circle (5.5pt);
\draw  [line width=.65,blue] (1,6) .. controls (1,5.5) and (2.5,5) .. (3,5) .. controls (3.5,5) and (5,5.5) .. (5,6) .. controls (5,6.5) and (3.5,7) .. (3,7) .. controls (2.5,7) and (1,6.5) .. (1,6) -- cycle ;

%\node[knoten] (vA2) at (2.3,6) {};

%\draw[red,thick] (vA) to (vC);
%\draw[red,thick] (vA) to (vD);
\draw[black, thick] (vB) to (vD);
\draw[red, thick] (vB) to (vC);
\draw[red, thick] (vC) to (vD);
\draw[thick, red] (vC) to (vA);

\node (label) at ($(vA) + (0,.8)$) {{$C$}};
\end{scope}

\node (label) at ($(vA) - (1,0)$) {{$1)$}};

% \draw  [dashed, line width=.8,blue] (1,6) .. controls (1,5.5) and (2.5,5) .. (3,5) .. controls (3.5,5) and (3.5,7) .. (3,7) .. controls (2.5,7) and (1,6.5) .. (1,6) -- cycle ;

%%%%%%%%%%%%%%%%%%%%%%%%%%%%%%%%%%%%%%%%%%%%%%%%%%%%%%%%%%%%%
%%%%%%%%%%%%%%%%%%%%%%%%%%%%%%%%%%%%%%%%%%%%%%%%%%%%%%%%%%%%%

\begin{scope}[yshift=-70]

\node[knoten2] (vA) at (1.2,6) {$x$};

\node (label) at ($(vA) - (1,0)$) {{$2)$}};

\node[knoten2] (vD) at (3,6.7) {$y$};
\node[knoten2] (vC) at (3,5.3) {$z$};

\node[knoten2] (vA2) at (3.1,5.95) {$v$};

            \draw  [line width=0.65,blue] (1,6) .. controls (1,5.5) and (2.5,5) .. (3,5) .. controls (3.5,5) and (3.5,7) .. (3,7) .. controls (2.5,7) and (1,6.5) .. (1,6) -- cycle ;

\begin{scope}[xshift=170]
\node[knoten2] (vB) at (-1.2,6) {$w$};

            \draw  [dotted, line width=1,blue] (-1,6) .. controls (-1,5.5) and (-2.5,5) .. (-3,5) .. controls (-3.5,5) and (-3.5,7) .. (-3,7) .. controls (-2.5,7) and (-1,6.5) .. (-1,6) -- cycle ;
\draw[red] (vC) circle (5.5pt);
\draw[red,dotted,line width =1] (vA2) circle (5.5pt);
\end{scope}

\draw[thick] (vB) to (vA2);
\draw[thick] (vA2) to (vC);
\draw[thick] (vA) to (vC);

\draw[thick,red,dotted] (vA2) to (vD);
\draw[thick,red,bend left] (vC) to (vD);

\node (label) at ($(vA) + (0,.8)$) {{$C$}};
\node (label) at ($(vB) + (0,.8)$) {{$D$}};
\end{scope}

%%%%%%%%%%%%%%%%%%%%%%%%%%%%%%%%%%%%%%%%%%%%%%%%%%%%%%%%%%%%%

\begin{scope}[yshift=-140]

\node[knoten2] (vA) at (1.3,6) {$x$};

\node (label) at ($(vA) - (1,0)$) {{$3)$}};

\node[knoten2] (vD) at (3,6.7) {$y$};
\node[knoten2] (vC) at (3,5.3) {$z$};

            \draw  [line width=0.65,blue] (1,6) .. controls (1,5.5) and (2.5,5) .. (3,5) .. controls (3.5,5) and (3.5,7) .. (3,7) .. controls (2.5,7) and (1,6.5) .. (1,6) -- cycle ;

\begin{scope}[xshift=170]
\node[knoten2] (vB) at (-1.3,6) {$w$};

            \draw  [dotted, line width=1,blue] (-1,6) .. controls (-1,5.5) and (-2.5,5) .. (-3,5) .. controls (-3.5,5) and (-3.5,7) .. (-3,7) .. controls (-2.5,7) and (-1,6.5) .. (-1,6) -- cycle ;
\draw[red] (vA) circle (5.5pt);
\draw[red,dotted,line width=1] (vB) circle (5.5pt);
\end{scope}

\draw[thick] (vB) to (vC);
\draw[thick] (vB) to (vD);
\draw[thick,red] (vA) to (vC);
\draw[thick,red] (vA) to (vD);

\node (label) at ($(vA) + (0,.8)$) {{$C$}};
\node (label) at ($(vB) + (0,.8)$) {{$D$}};
\end{scope}

%%%%%%%%%%%%%%%%%%%%%%%%%%%%%%%%%%%%%%%%%%%%%%%%%%%%%%%%%%%%%

\begin{scope}[xshift=220]
\begin{scope}[yshift=0]

\node[knoten2] (vA) at (1.3,6) {$v$};

\node (label) at ($(vA) - (1,0)$) {{$4.1)$}};

\node[knoten2] (vD) at (3,6.7) {$x$};
\node[knoten2] (vC) at (3,5.4) {$y$};

            \draw  [line width=0.65,blue] (1,6) .. controls (1,5.5) and (2.5,5) .. (3,5) .. controls (3.5,5) and (3.5,7) .. (3,7) .. controls (2.5,7) and (1,6.5) .. (1,6) -- cycle ;

\begin{scope}[xshift=170]
\node[knoten2] (vB) at (-1.2,6) {$w$};

            \draw  [dotted, line width=1,blue] (-1,6) .. controls (-1,5.5) and (-2.5,5) .. (-3,5) .. controls (-3.5,5) and (-3.5,7) .. (-3,7) .. controls (-2.5,7) and (-1,6.5) .. (-1,6) -- cycle ;
\draw[red] (vA) circle (5.5pt);
\draw[red,dotted,line width=1] (vC) circle (5.5pt);
\end{scope}

\draw[thick] (vA) to (vD);
\draw[thick,red] (vA) to (vC);
\draw[thick,red,dotted] (vB) to (vC);
\draw[thick,red,dotted] (vC) to (vD);
%\draw[thick, dashed,red] (vA) to (vD);

\node (label) at ($(vA) + (0,.8)$) {{$C$}};
\node (label) at ($(vB) + (0,.8)$) {{$D$}};
\end{scope}

%%%%%%%%%%%%%%%%%%%%%%%%%%%%%%%%%%%%%%%%%%%%%%%%%%%%%%%%%%%%%

\begin{scope}[yshift=-70]

\node[knoten2] (vA) at (1.3,6) {$v$};

\node (label) at ($(vA) - (1,0)$) {{$4.2)$}};

\node[knoten2] (vD) at (3,6.6) {$x$};
\node[knoten2] (vC) at (3,5.3) {$y$};

            \draw  [line width=0.65,blue] (1,6) .. controls (1,5.5) and (2.5,5) .. (3,5) .. controls (3.5,5) and (3.5,7) .. (3,7) .. controls (2.5,7) and (1,6.5) .. (1,6) -- cycle ;

\begin{scope}[xshift=170]
\node[knoten2] (vB) at (-1.2,6) {$w$};

            \draw  [dotted, line width=1,blue] (-1,6) .. controls (-1,5.5) and (-2.5,5) .. (-3,5) .. controls (-3.5,5) and (-3.5,7) .. (-3,7) .. controls (-2.5,7) and (-1,6.5) .. (-1,6) -- cycle ;
\draw[red] (vA) circle (5.5pt);
\draw[red,dotted,line width=1] (vD) circle (5.5pt);
\end{scope}

\draw[thick] (vA) to (vD);
\draw[thick, red] (vA) to (vC);
\draw[thick,red,dotted] (vD) to (vC);
\draw[thick,red,dotted] (vB) to (vD);
%\draw[thick, dashed,red] (vA) to (vD);

\node (label) at ($(vA) + (0,.8)$) {{$C$}};
\node (label) at ($(vB) + (0,.8)$) {{$D$}};
\end{scope}

%%%%%%%%%%%%%%%%%%%%%%%%%%%%%%%%%%%%%%%%%%%%%%%%%%%%%%%%%%%%%

\begin{scope}[yshift=-140]

\node[knoten2] (vA) at (1.3,6) {$v$};

\node (label) at ($(vA) - (1,0)$) {{$5)$}};

\node[knoten2] (vD) at (3,6.6) {$x$};
\node[knoten2] (vC) at (3,5.3) {$z$};

\node[knoten2] (vA2) at (2.9,5.8) {$y$};

            \draw  [line width=0.65,blue] (1,6) .. controls (1,5.5) and (2.5,5) .. (3,5) .. controls (3.5,5) and (3.5,7) .. (3,7) .. controls (2.5,7) and (1,6.5) .. (1,6) -- cycle ;

\begin{scope}[xshift=170]
\node[knoten2] (vB) at (-1.2,6) {$w$};

            \draw  [dotted, line width=1,blue] (-1,6) .. controls (-1,5.5) and (-2.5,5) .. (-3,5) .. controls (-3.5,5) and (-3.5,7) .. (-3,7) .. controls (-2.5,7) and (-1,6.5) .. (-1,6) -- cycle ;

\draw[red] (vA) circle (5.5pt);
\draw[red,dotted,line width=1] (vD) circle (5.5pt);
\end{scope}

\draw[thick] (vA) to (vC);

\draw[thick,red,dotted] (vA2) to (vD);
\draw[thick,red,dotted] (vB) to (vD);
\draw[thick,red] (vA) to (vA2);
\draw[thick,red] (vA) to (vD);
\draw[thick,red,bend right,dotted] (vC) to (vD);

\node (label) at ($(vA) + (0,.8)$) {{$C$}};
\node (label) at ($(vB) + (0,.8)$) {{$D$}};
\end{scope}
\end{scope}

\end{tikzpicture}
\end{center}
\caption{Examples of applications of \Cref{op endpoints of local,op equivclasses,op nonlocal coms,op local coms 2,op local coms 3}.
The black edges represent the local edges, the solid (for~$C$) or dashed (for~$D$) red edges show the non-local edges resulting from choosing the respective center for community~$C$ or~$D$.
For example in~2), $z$ is the center of community~$C$ and~$v$ is the center of community~$D$, and the edges~$\{z,y\}$ and~$\{v,y\}$ are non-local edges in the solution.
For each operation, the violation of the property of being a fitting solution is shown, if a vertex~$a$ is selected as a center of a community~$A$ where the application of the corresponding operation would remove~$a$ from~$\locpot(A)$.
In~$1)$, $2)$, $3)$, and~$4.2)$, the vertex selected as center for community~$C$ is removed from~$\locpot(C)$ by the respective operation.
For example, in~$4.2)$, (assuming~$\locpot(C)\cap \{x,y\} = \{x\}$) \Cref{op local coms 2} removes~$v$ from~$\locpot(C)$, as otherwise selecting~$v$ as center of~$C$ results in the depicted non-fitting solution.
In~$4.1)$  and~$5)$, the vertex selected as center for community~$C$ is removed from~$\locpot(C)$ by the respective operation.
For example in~$4.1)$, (assuming~$\locpot(C)\cap \{x,y\} = \emptyset$) \Cref{op local coms 2} removes~$y$ from~$\locpot(D)$, as otherwise selecting~$y$ as center of~$D$ results in the depicted non-fitting solution.
}
\label{figure operations}
\end{figure}
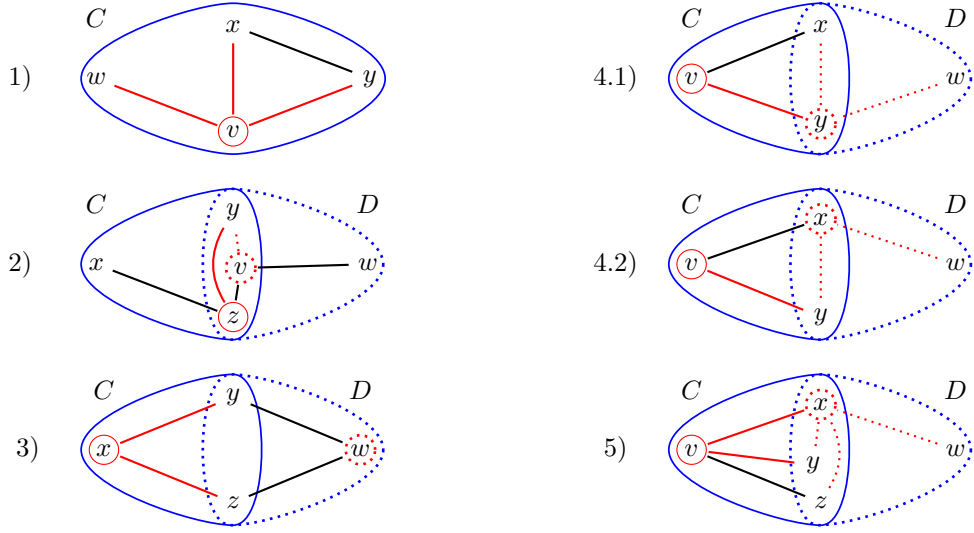
\fi

\begin{operation}\label{op local coms 2}
Let~$C\in\mc$ such that~$C$ contains at least one local edge.
Moreover, let~$D\not\in \FL(C)$ be a community, such that~$|C\cap D|= 2$ and~$\{x,y\}\coloneqq C\cap D$ is not a local edge. 
\begin{enumerate}
\item If~$\locpot(C)\cap \{x,y\} = \emptyset$, then remove~$x$ and~$y$ from~$\locpot(D)$ or
\item if~$\locpot(C)\cap \{x,y\} = \{x\}$, then set~$\locpot(C) \coloneqq \{x\}$. 
\end{enumerate}
\end{operation}
\iflong

\begin{lemma}
If~\Cref{op endpoints of local} is exhaustively applied, then~\Cref{op local coms 2} preserves~\Cref{prop}.
\end{lemma}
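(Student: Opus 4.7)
My plan is to take any minimum fitting solution $G'$ for $E^*$ and verify that both cases of \Cref{op local coms 2} preserve the inclusions $\univ_{G'}(D) \subseteq \locpot(D)$ and $\univ_{G'}(C) \subseteq \locpot(C)$. In each case I will argue by contradiction: assume some removed vertex is a universal vertex of $G'$, and then exhibit a local cycle of $G'$ whose edge set is not contained in $E^*$, violating fittingness.

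Case~1 is the cleaner of the two. I would suppose $x \in \univ_{G'}(D)$ (the case $y \in \univ_{G'}(D)$ is symmetric). Since $y \in D$, the edge $\{x,y\}$ must lie in $E(G')$. Any center $z \in \univ_{G'}(C) \subseteq \locpot(C)$ satisfies $z \notin \{x,y\}$ by the Case~1 hypothesis $\locpot(C) \cap \{x,y\} = \emptyset$, so $\{z,x\},\{z,y\}$ lie in $E(G')$ as well. Hence $\{x,y,z\}$ forms a triangle in $S_C \cup S_D$, i.e., a local cycle of $G'$ containing the non-local edge $\{x,y\}$, a contradiction.

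For Case~2, I need $\univ_{G'}(C) \subseteq \{x\}$. I would suppose some $z \in \univ_{G'}(C)$ has $z \neq x$; since $y \notin \locpot(C) \supseteq \univ_{G'}(C)$, also $z \neq y$. Pick any $w \in \univ_{G'}(D)$. If $w \in \{x,y\}$, then the triangle $\{x,y,z\}$ yields the same contradiction as in Case~1. Otherwise $w \notin \{x,y\}$, and because $z \in C \setminus (C\cap D)$ implies $z \notin D$ we also have $w \neq z$; so the four distinct vertices $z,x,w,y$ form a $4$-cycle in $S_C \cup S_D$ that is a local cycle of $G'$. Fittingness then forces each of its edges, in particular $\{z,y\}$, to lie in $E^*$, making $\{z,y\}$ a local edge with both endpoints in $C$.

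The key interaction to exploit is that \Cref{op endpoints of local} has already been applied exhaustively before \Cref{op local coms 2}. For the just-identified local edge $\{z,y\} \subseteq C$, every vertex of $\locpot(C)$ outside $\{z,y\}$ must be locally universal for $\{z,y\}$. Applied to $x \in \locpot(C)$, this forces both $\{x,z\}$ and $\{x,y\}$ to be local edges---but $\{x,y\}$ is non-local by hypothesis. So $x$ would already have been removed from $\locpot(C)$, contradicting the Case~2 prerequisite $\locpot(C) \cap \{x,y\} = \{x\}$. The hard part is spotting this interplay: it is precisely what rules out the ``non-triangle'' $4$-cycle configuration and therefore forces $x$ to be the unique possible center for $C$ in every fitting minimum solution.
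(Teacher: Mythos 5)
Your proof is correct and follows essentially the same approach as the paper: exhibiting a triangle or a $4$-cycle in $S_C\cup S_D$ containing the non-local edge $\{x,y\}$, and in Case~2 invoking exhaustive application of \Cref{op endpoints of local} on $x\in\locpot(C)$ to rule out the $4$-cycle configuration. The only cosmetic difference is that you frame the $4$-cycle sub-case as a contradiction with the operation's own precondition ($x$ should already have been removed by \Cref{op endpoints of local}), whereas the paper instead shows directly that the edge between the center of~$C$ and~$y$ cannot be a local edge---logically equivalent arrangements of the same facts.
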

\iflonglong
\todo[inline]{noten, dass man Operation 1 exhaustively applied nur bei Case 2 braucht?}
\fi
\begin{proof}
Suppose that~\Cref{prop} holds.
%Let~$C\in\mc$ such that~$C$ contains at least one local edge.
%Moreover, let~$D\not\in \FL(C)$ such that~$|C\cap D|= 2$ and let~$\{x,y\} \coloneqq C\cap D$.
%Suppose that~$\{x,y\}$ is not a local edge.

First we show that~\Cref{op local coms 2} preserves~\Cref{prop} if~$\locpot(C)\cap \{x,y\} = \emptyset$.
Suppose that~$\locpot(C) \cap \{x,y\} = \emptyset$.
Let~$G'$ be a solution for~$I$ containing all edges of~$E^*$.
We show that~$G'$ is not fitting for~$E^*$ if~$\univ_{G'}(D) \cap \{x,y\} \neq \emptyset$.
Suppose that there is some vertex~$z\in \univ_{G'}(D) \cap \{x,y\}$.
Since~\Cref{prop} holds and~$\locpot(C) \cap \{x,y\} = \emptyset$, there is some vertex~$w\in \univ_{G'}(C) \setminus \{x,y\}$.
Then, $C$ and~$D$ induce a local cycle in~$G'$ on the vertices of~$\{w,x,y\}$.
Since~$\{x,y\}$ is not a local edge, $G'$ is not fitting for~$E^*$.

Next, we show that~\Cref{op local coms 2} preserves~\Cref{prop} if~\Cref{op endpoints of local} is exhaustively applied and if~$\locpot(C)\cap \{x,y\} = \{x\}$.
Suppose that~\Cref{op endpoints of local} is exhaustively applied and that~$\locpot(C) \cap \{x,y\} = \{x\}$.
Let~$G'$ be a solution for~$I$ containing all edges of~$E^*$ and let~$z$ be an arbitrary vertex of~$\univ_{G'}(D)$.
We show that~$G'$ is not fitting for~$E^*$ if there is some vertex~$w\in \univ_{G'}(C)$ distinct from~$x$.
Since~$y\notin \locpot(C)$, $w$ is not a vertex of~$C\cap D = \{x,y\}$.

We show that~$C$ and~$D$ induce a local cycle in~$G'$ containing at least one edge which is not a local edge.
If~$z\in \{x,y\}$, then~$C$ and~$D$ induce a local cycle in~$G'$ on the vertices of~$\{w,x,y\}$.
Since~$\{x,y\}$ is not a local edge, $G'$ is not fitting for~$E^*$.
Otherwise, that is, if~$z\notin \{x,y\}$, $C$ and~$D$ induce a local cycle in~$G'$ on the vertices of~$\{w,x,y,z\}$.
This local cycles contains the edge~$\{w,y\}$.
Hence, to show that~$G'$ is not fitting for~$E^*$, it is sufficient to show that~$\{w,y\}$ is not a local edge.
Recall that~$x$ is contained in~$\locpot(C)$ and that~\Cref{op endpoints of local} is exhaustively applied.
Hence, $x$ is locally universal for each local edge~$\{u,v\}$ in~$C$.
By assumption, $C$ contains a local edge and~$\{x,y\}$ is not a local edge.
Consequently, no local edge in~$C$ is incident with~$y$ and thus~$\{w,y\}$ is not local edge.
Hence, $G'$ is not fitting for~$E^*$.
\end{proof}
\fi

\begin{operation}\label{op local coms 3}
Let~$C\in\mc$ be a community containing at least one local edge.
Moreover, let~$D\not\in \FL(C)$ such that~$|C\cap D| \geq 3$.
For each pair of distinct vertices~$x$ and~$y$ of~$C\cap D$, where~$\{x,y\}$ is not a local edge, remove~$x$ and~$y$ from~$\locpot(D)$.
\end{operation}

\iflong
\begin{lemma}
If~\Cref{op endpoints of local} is exhaustively applied, then~\Cref{op local coms 3} preserves~\Cref{prop}.
\end{lemma}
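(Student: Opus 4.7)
The plan is to show that, assuming \Cref{prop} holds before the application of \Cref{op local coms 3}, it still holds afterwards; equivalently, that for every minimum solution $G'$ fitting for $E^*$ we must have $\univ_{G'}(D)\cap\{x,y\}=\emptyset$. By the symmetric roles of $x$ and $y$, it suffices to rule out $x\in\univ_{G'}(D)$. I would therefore assume $x\in\univ_{G'}(D)$ for contradiction, fix some $w\in\univ_{G'}(C)$ (nonempty because $G'$ is a solution, and contained in $\locpot(C)$ by \Cref{prop}), and distinguish the cases $w\neq x$ and $w=x$.

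For $w\neq x$, my plan is to exhibit a triangle that is a local cycle induced by $(C,D)$ with centers $w$ and $x$ and that contains the edge $\{x,y\}$, which is not local, thereby contradicting that $G'$ is fitting. If additionally $w\neq y$, the triangle is on $\{w,x,y\}$, with edges $\{w,x\},\{w,y\}$ coming from the star of $C$ at $w$ and $\{x,y\}$ from the star of $D$ at $x$. If instead $w=y$, I would use $|C\cap D|\geq 3$ to pick a vertex $u\in(C\cap D)\setminus\{x,y\}$ and take the triangle on $\{y,u,x\}$, which still contains $\{x,y\}$.

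The main obstacle is the case $w=x$, in which the spanning stars of $C$ and $D$ in $G'$ have the same center and hence their union is itself a star that induces no cycle. Here I would invoke the hypotheses $D\notin\FL(C)$ and $|C\cap D|\geq 3$ to produce a vertex $\tilde v\in C\cup D$ that is locally universal for $C\cap D$; since $\{x,y\}$ is not a local edge while $x,y\in C\cap D$, this forces $\tilde v\notin\{x,y\}$, and in particular $\{\tilde v,y\}$ is a local edge. If $\tilde v\in C$, then $\{\tilde v,y\}$ is a local edge lying inside $C$, and exhaustive application of \Cref{op endpoints of local} to $C$ combined with $x=w\in\locpot(C)$ forces $x$ to be locally universal for $\{\tilde v,y\}$, so $\{x,y\}$ would be a local edge---a contradiction. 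Otherwise $\tilde v\in D\setminus C$ and $\{\tilde v,y\}$ is a local edge inside $D$; since $x\in\univ_{G'}(D)\subseteq\locpot(D)$, exhaustive application of \Cref{op endpoints of local} to $D$ yields the same contradiction, completing the argument.
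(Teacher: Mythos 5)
Your proof is correct and reaches the same essential contradictions as the paper---a local triangle containing the non-local edge $\{x,y\}$ in the generic case, and a derivation that $\{x,y\}$ would have to be a local edge in the degenerate case---but the case split is organized around a different object. You case on whether the chosen center $w\in\univ_{G'}(C)$ coincides with $x$ (and, within $w\neq x$, whether $w=y$), invoking the locally universal vertex $\tilde v\in C\cup D$ guaranteed by $D\notin\FL(C)$ only to dispose of the obstacle case $w=x$ via exhaustive application of \Cref{op endpoints of local}. The paper instead cases on where that universal vertex $v$ lies: if $v\in D$, the argument shows that $\locpot(D)$ already excluded $x$ and $y$ before the operation, so the removal is vacuous; if $v\in C\setminus D$, it uses \Cref{op endpoints of local} to conclude that every vertex of $\locpot(C)$ (hence $w$) is locally universal for $C\cap D$, so $w\notin\{x,y\}$, and then builds the same triangle on $\{w,x,y\}$. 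Your route avoids having to observe that the operation can be vacuous, at the modest cost of the additional subcase $w=y$, which you resolve with a second non-local triangle on $\{y,u,x\}$ for an auxiliary $u\in(C\cap D)\setminus\{x,y\}$. Either organization yields a complete and elementary proof.
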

\iflonglong
\todomi{anpassen wie oben? für long?}
\fi
\begin{proof}
Suppose that~\Cref{prop} holds.
%Let~$C\in\mc$ such that~$C$ contains at least one local edge.
%Moreover, let~$D\not\in \FL(C)$ such that~$|C\cap D|\geq 3$ and let~$x$ and~$y$ be distinct vertices of~$C\cap D$.
%Suppose that~$\{x,y\}$ is not a local edge.

We show that~\Cref{op local coms 3} preserves~\Cref{prop} if~\Cref{op endpoints of local} is exhaustively applied.
Hence, assume in the following that~\Cref{op endpoints of local} is exhaustively applied.
Let~$G'$ be a solution for~$I$ containing all edges of~$E^*$.
We show that~$G'$ is not fitting for~$E^*$ if~$\univ_{G'}(D)$ contains~$x$ or~$y$.

Since~$|C\cap D|\geq 3$ and~$D\notin \FL(C)$, there is some vertex~$v\in C\cup D$ which is locally universal for~$C\cap D$.
Note that since~$\{x,y\}$ is not a local edge, $v$ is neither~$x$ nor~$y$.
We distinguish two cases.

\textbf{Case 1:}~$v\in D$\textbf{.}
Since~\Cref{op endpoints of local} is exhaustively applied and~$\{v,x\}$ and~$\{v,y\}$ are local edges in~$D$, we conclude that each vertex on~$\locpot(D)$ is locally universal for $x$~and~$y$.
Since~$\{x,y\}$ is not a local edge, $\locpot(D)$ contains neither~$x$ nor~$y$.
%Consequently, since~\Cref{prop} holds, $G'$ is not fitting for~$E^*$ if~$\univ_{G'}(D)$ contains~$x$ or~$y$.

\textbf{Case 2:}~$v\in C\setminus D$\textbf{.}
Since~\Cref{op endpoints of local} is exhaustively applied, we conclude that each vertex of~$\locpot(C)$ is locally universal for~$C\cap D$.
Let~$w$ be an arbitrary vertex of~$\univ_{G'}(C)$.
If~$w\notin\locpot(C)$, then~$G'$ is not fitting for~$E^*$ since~\Cref{prop} holds.
Otherwise, $w\in \locpot(C)$.
Consequently, $w$ is distinct from both~$x$ and~$y$, since~$\{x,y\}$ is not a local edge.
Since~$\{x,y\}\subseteq C\cap D$ and~$w\in \univ_{G'}(C)$, $G'$ contains the edges~$\{w,x\}$ and~$\{w,y\}$.
If~$\univ_{G'}(D)$ contains~$x$ or~$y$, $G'$ contains the edge~$\{x,y\}$.
Hence, $C$ and~$D$ induce a local cycle in~$G'$ on the vertices~$\{x,y,w\}$.
This local cycle contains the edge~$\{x,y\}$ which is not a local edge.
Hence, $G'$ is not fitting for~$E^*$ if~$\univ_{G'}(D)$ contains~$x$ or~$y$.
\begin{comment}
Let~$\locpot'$ denote the updated function after the application of~\Cref{op local coms 3}.
Assume towards a contradiction that~\Cref{op local coms 3} does not preserve~\Cref{prop}.
Hence, there is a fitting solution~$G'$ for~$E^*$, a community~$C\in\mc$ that contains a local edge, a community~$D\in\mc\setminus \FL(C)$ with~$|C\cap D|\geq 3$, and a vertex~$x\in \univ_{G'}(D) \setminus \locpot'(D)$.
Since~$\locpot$ fulfills~\Cref{prop}, $x$ is contained in~$\locpot(D)$.
Moreover, since~$x\notin \locpot'(D)$, $x\in C\cap D$ and there is a vertex~$y\in C\cap D$ such that~$\{x,y\}$ is not a local edge.
Since~$|C\cap D|\geq 3$ and~$D\notin \FL(C)$, there is some vertex~$v\in C\cup D$ such that~$v$ is locally universal for~$C\cap D$.
If~$v\in D$, then~$D$ contains at least one local edge incident with~$y$.
Since~$\{x,y\}$ is not a local edge, $v$ is distinct from~$x$.
Hence, since~\Cref{op endpoints of local} is exhaustively applied, $x\notin \locpot(D)$, a contradiction.
Otherwise, $v\in C\setminus D$.
Let~$z$ be an arbitrary vertex of~$\univ_{G'}(C)$.
Since~\Cref{op endpoints of local} is exhaustively applied, $z$ is locally universal for~$C\cap D$.
Since~$\{x,y\}$ is not a local edge, $z$ is neither~$x$ nor~$y$.
Hence, $C$ and~$D$ induce the local cycle on the vertices~$\{x,y,z\}$.
Since~$\{x,y\}$ is not a local edge, $G'$ is not fitting for~$E^*$, a contradiction.
\end{comment}
\end{proof}
\fi
\iflong
\else
\begin{lemma}[$\star$]
\Cref{op nonlocal coms} preserves~\Cref{prop}.
Moreover, if~\Cref{op endpoints of local} is exhaustively applied, then~\Cref{op local coms 2} and~\Cref{op local coms 3} preserve~\Cref{prop}.
\end{lemma}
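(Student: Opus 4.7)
My general strategy for all three operations is the same: fix a minimum solution $G'$ that is fitting for $E^*$ and assume that some vertex $v$ removed by the operation lies in $\univ_{G'}(C)$ for the affected community~$C$; I will then exhibit two communities that induce a local cycle in~$G'$ containing an edge that is not in $E^*$, contradicting the fitting property. Throughout, I will use (i)~\Cref{prop} on the pre-operation $\locpot$, (ii)~the fact that $E^* \subseteq E(G')$ so every local edge is present in $G'$, and (iii)~after exhaustive application of~\Cref{op endpoints of local}, every vertex of $\locpot(C')$ is locally universal for every local edge contained in~$C'$.

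\textbf{Operation~\ref{op nonlocal coms}.} This is the easiest case and reduces directly to \Cref{nonlocal cuts}: if $v \in \univ_{G'}(C) \setminus (C \cap D)$, then $\univ_{G'}(C) \not\subseteq C \cap D$, so $C$ and $D$ induce a local cycle with a non-local edge, and $G'$ is not fitting.

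\textbf{Operation~\ref{op local coms 2}.} I split on which subcase of the operation applies. If $\locpot(C) \cap \{x,y\} = \emptyset$ and $z \in \{x,y\} \cap \univ_{G'}(D)$, then by \Cref{prop} any $w \in \univ_{G'}(C)$ lies outside $\{x,y\}$, so $S_C \cup S_D$ (with centers $w$ and $z$) contains the triangle on $\{w,x,y\}$, and this triangle uses the non-local edge $\{x,y\}$. If $\locpot(C) \cap \{x,y\} = \{x\}$ and some $w \in \univ_{G'}(C)$ differs from $x$, then $w \neq y$ as well (since $y \notin \locpot(C)$), and for any $z \in \univ_{G'}(D)$ the stars $S_C, S_D$ produce either the triangle on $\{w,x,y\}$ (if $z \in \{x,y\}$) or the $4$-cycle $w,x,z,y$ (otherwise). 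In the first case $\{x,y\}$ is non-local. In the second case I argue that $\{w,y\}$ is non-local: exhaustive \Cref{op endpoints of local} forces $x$, which lies in $\locpot(C)$, to be locally universal for every local edge of~$C$; but then no local edge of $C$ can be incident to $y$, since such an edge $\{y,u\}$ would make $\{x,y\}$ local.

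\textbf{Operation~\ref{op local coms 3}.} Since $|C \cap D| \geq 3$ and $D \notin \FL(C)$, the definition of $G_\FL$ provides a vertex $v \in C \cup D$ that is locally universal for $C \cap D$; note $v \notin \{x,y\}$ because $\{x,y\}$ is not local. If $v \in D$, then $\{v,x\}$ and $\{v,y\}$ are local edges of~$D$. By exhaustive \Cref{op endpoints of local}, every vertex of $\locpot(D)$ is locally universal for both of these edges, and in particular $x \in \locpot(D)$ would make $\{x,y\}$ local, a contradiction; so neither $x$ nor $y$ can lie in $\univ_{G'}(D)$ by \Cref{prop}, which is exactly what the operation asserts. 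If instead $v \in C \setminus D$, then $\{v,x\}$ and $\{v,y\}$ are local edges of~$C$, so exhaustive \Cref{op endpoints of local} makes every vertex of $\locpot(C)$ locally universal for $C \cap D$; consequently any $w \in \univ_{G'}(C)$ is distinct from both $x$ and $y$. Assuming $x \in \univ_{G'}(D)$, the stars centered at $w$ and $x$ yield the triangle $w,x,y$ in $S_C \cup S_D$, which uses the non-local edge $\{x,y\}$; the symmetric case for $y \in \univ_{G'}(D)$ is identical. The main subtlety to handle cleanly in the write-up is the degenerate configurations where a chosen center coincides with a vertex of $C \cap D$; these are resolved by picking a third vertex $u \in C \cap D \setminus \{x,y\}$, which exists because $|C \cap D| \geq 3$, and using the triangle on $\{u,x,y\}$ or $\{w,x,u\}$ instead.
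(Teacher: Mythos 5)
Your proof matches the paper's argument essentially line for line: \Cref{op nonlocal coms} is dispatched via \Cref{nonlocal cuts}, and \Cref{op local coms 2,op local coms 3} are treated by the same case distinctions, each time producing a local cycle containing a non-local edge and using that exhaustive application of \Cref{op endpoints of local} forces every candidate center to be locally universal. The closing aside about degenerate center configurations is unnecessary (once $w\notin\{x,y\}$ is established, the triangle on $\{w,x,y\}$ is already non-degenerate) but harmless.
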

\fi

Based on these operations, we are now able to present the algorithm (see \Cref{alg:algorithmFit}) behind~\Cref{thm:fitting}.
%This algorithm is shown in~\Cref{alg:algorithmFit}.
\Cref{alg:algorithmFit} works as follows:
First, we apply~\Cref{op endpoints of local,op equivclasses,op nonlocal coms,op local coms 2,op local coms 3} exhaustively.
Next, if there is a community~$C\in \mc$ with~$\locpot(C)= \emptyset$, then we return that there is no fitting solution for~$E^*$.
This is correct, since all defined operations preserve~\Cref{prop}.
Afterwards, we start with an auxiliary graph~$G_A$ with vertex set~$V$ and edge set~$E^*$ and we iterate over the partition~$\FL$.
Recall that since~\Cref{op equivclasses} is exhaustively applied, for each~$\mathcal{L}\in \FL$, $\locpot(C) = \locpot(D)$ for any two communities~$C$ and~$D$ of~$\mathcal{L}$.
For each~$\mathcal{L}\in \FL$, we find a vertex~$y\in \locpot(C)$ that minimizes the total weight of non-local edges required to make~$y$ the center of all communities of~$\mathcal{L}$, where~$C$ is an arbitrary community of~$\mathcal{L}$.
Finally, we add all edges between~$y$ and each vertex of any community of~$\mathcal{L}$ to~$G_A$.
After the iteration over the partition~$\FL$ is completed, we output~$G_A$ if it contains at most~$\ell$ edges and has total weight at most~$b$.
Otherwise, we return that there is no fitting solution for~$E^*$.
It remains to show that this greedy choice for the center vertices is correct.

\newcommand{\cent}{\mathrm{center}}
\newcommand{\agree}{\mathrm{Agree}}

        \begin{algorithm}[t]
            \SetAlgoNoEnd
            \DontPrintSemicolon
            \caption{Algorithm solving the problem described in~\Cref{thm:fitting}.}
            \label{alg:algorithmFit}
            \SetKwInOut{Input}{Input}\SetKwInOut{Output}{Output}
            \Input{$I=(G=(V,E), \mathcal{C}, \omega, \ell, b), E^*\subseteq E$}
            \Output{A solution~$G'=(V,E')$ with at most $\ell$ edges and total weight at most~$b$, or no, if there is no minimal solution which is fitting for~$E^*$}
            Compute the partition~$\FL$ of~$\mc$\;\label{line-compute parition}
            For each~$C\in\mc$, initialize~$\locpot(C)\leftarrow \univ_G(C)$ and apply~\Cref{op endpoints of local}\;\label{line-initial-fit}
                      Apply~\Cref{op endpoints of local,op equivclasses,op nonlocal coms,op local coms 2,op local coms 3} exhaustively\;\label{line-operations}
            
            \lIf{$\locpot(C)=\emptyset$ for some~$C\in\mc$}{
                \Return{no} \label{alg:first-return-no-instance}     
            }

            $G_A \leftarrow (V,E^*)$\;\label{line-local-edges}
            \ForAll{$\mathcal{L}\in \FL$}{\label{alg:outer-loop-begin}
            $C\leftarrow$ some community of~$\mathcal{L}$\;
            $V_{\mathcal{L}}\leftarrow \bigcup_{\widetilde{C}\in \mathcal{L}} \widetilde{C}$\;
            $y \leftarrow \arg\min_{u\in \locpot(C)} \omega(\{\{u,v\}  :  v\in V_{\mathcal{L}} \setminus \{u\}\}\setminus E^*)$\;\label{alg:center-selection}
                add all edges of~$\{\{y,v\}  :  v\in V_{\mathcal{L}} \setminus \{y\}\}$ to~$G_A$\;
            }\label{alg:outer-loop-end}
            \lIf{$|E'| \leq \ell$ and $\omega(E') \leq b$}{\label{alg:parameter-check-begin}
            \Return{$G_A$}\label{alg:final-return}
            }\label{alg:parameter-check-end}
            \Return{no}\label{alg:second-return-no-instance}
        \end{algorithm}
        
\begin{lemma}
\Cref{alg:algorithmFit} is correct.
\end{lemma}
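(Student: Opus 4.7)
The plan is to verify two directions. Direction (i) (soundness): if the algorithm returns a graph~$G_A$ in line~\ref{alg:final-return}, then~$G_A$ is a solution of~$I$ with~$E^*\subseteq E(G_A)$, $|E(G_A)|\le\ell$, and~$\omega(E(G_A))\le b$. Direction (ii) (completeness): if there exists a minimum solution fitting for~$E^*$, then the algorithm does not return ``no'' in line~\ref{alg:first-return-no-instance} or in line~\ref{alg:second-return-no-instance}.

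For direction (i), the edge set~$E^*$ is placed into~$G_A$ in line~\ref{line-local-edges} and no edge is ever removed, so~$E^*\subseteq E(G_A)$. The budget check in line~\ref{alg:parameter-check-begin} ensures the bounds on~$|E(G_A)|$ and~$\omega(E(G_A))$. It remains to show that~$G_A[\widetilde{C}]$ contains a spanning star for every community~$\widetilde{C}\in\mc$. Fix~$\widetilde{C}$ and let~$\mathcal{L}\in\FL$ be its class. Exhaustive application of~\Cref{op equivclasses} in line~\ref{line-operations} enforces~$\locpot(C)=\locpot(\widetilde{C})$ for the representative~$C$ chosen in the loop, so the vertex~$y$ picked in line~\ref{alg:center-selection} lies in~$\locpot(\widetilde{C})\subseteq\univ_G(\widetilde{C})$. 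Since the loop adds every edge from~$y$ to~$V_\mathcal{L}\setminus\{y\}\supseteq \widetilde{C}\setminus\{y\}$, the vertex~$y$ is universal for~$\widetilde{C}$ in~$G_A$.

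For direction (ii), let~$G^*$ be a minimum solution fitting for~$E^*$. The preservation lemmas established above for \Cref{op endpoints of local,op equivclasses,op nonlocal coms,op local coms 2,op local coms 3} together with an induction on the number of operation applications show that \Cref{prop} holds after line~\ref{line-operations}. In particular, for every~$C\in\mc$ we have~$\emptyset\ne\univ_{G^*}(C)\subseteq\locpot(C)$, so the algorithm does not return ``no'' in line~\ref{alg:first-return-no-instance}. By~\Cref{corr equiv eq center}, for every class~$\mathcal{L}\in\FL$ with~$|\mathcal{L}|\ge 2$ there is a single vertex~$c_\mathcal{L}\in V$ with~$\univ_{G^*}(\widetilde{C})=\{c_\mathcal{L}\}$ for every~$\widetilde{C}\in\mathcal{L}$, and for singleton classes we may take~$c_\mathcal{L}$ to be any element of~$\univ_{G^*}(C)$. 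Then the star edges of~$G^*$ form, for each~$\mathcal{L}$, exactly the edge set~$F_\mathcal{L}(c_\mathcal{L})\coloneqq\{\{c_\mathcal{L},v\}:v\in V_\mathcal{L}\setminus\{c_\mathcal{L}\}\}$, and since~$G^*$ is a minimum solution it equals~$E^*\cup\bigcup_{\mathcal{L}\in\FL}F_\mathcal{L}(c_\mathcal{L})$.

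The crux of the argument is that the per-class contributions~$F_\mathcal{L}(c_\mathcal{L})\setminus E^*$ for different classes can be minimized independently. Since~\Cref{prop} holds and each~$c_\mathcal{L}$ is a valid center in~$G^*$, we have~$c_\mathcal{L}\in\locpot(C_\mathcal{L})$ for the representative~$C_\mathcal{L}$ of~$\mathcal{L}$, so~$c_\mathcal{L}$ is a feasible choice for the arg-min in line~\ref{alg:center-selection}. Writing~$y_\mathcal{L}$ for the vertex actually chosen by the algorithm, we obtain
\[
\omega(F_\mathcal{L}(y_\mathcal{L})\setminus E^*)\le\omega(F_\mathcal{L}(c_\mathcal{L})\setminus E^*)
\qquad\text{and}\qquad
|F_\mathcal{L}(y_\mathcal{L})\setminus E^*|\le|F_\mathcal{L}(c_\mathcal{L})\setminus E^*|,
\]
the latter by substituting unit weights into the same optimization. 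Summing over~$\mathcal{L}\in\FL$ and adding~$E^*$, we deduce~$\omega(E(G_A))\le\omega(E(G^*))\le b$ and~$|E(G_A)|\le|E(G^*)|\le\ell$, so the algorithm reaches line~\ref{alg:final-return} and returns~$G_A$. The main obstacle is checking that the greedy choice made independently per class does not violate~$\univ_{G_A}(\widetilde{C})\ne\emptyset$ for any~$\widetilde{C}$ through cross-class interference; this is already handled by direction (i), which uses nothing beyond the exhaustive application of~\Cref{op equivclasses} and the membership~$y_\mathcal{L}\in\univ_G(\widetilde{C})$.
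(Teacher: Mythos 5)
Your direction (i) (soundness) is fine and matches the opening paragraph of the paper's argument. Direction (ii), however, departs substantially from the paper's route and contains two genuine gaps.

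\textbf{Gap 1: the per-class sum does not control the budget.} You deduce $\omega(E(G_A))\le\omega(E(G^*))$ by "summing over $\mathcal{L}\in\FL$ and adding $E^*$." This treats the decomposition $E^*\cup\bigcup_{\mathcal{L}}F_\mathcal{L}(\cdot)$ as though the sets $F_\mathcal{L}(\cdot)\setminus E^*$ were pairwise disjoint across classes, but they need not be. If $c_\mathcal{L}\in V_{\mathcal{L}'}$, $c_{\mathcal{L}'}\in V_\mathcal{L}$, and $\{c_\mathcal{L},c_{\mathcal{L}'}\}\notin E^*$, then this edge lies in both $F_\mathcal{L}(c_\mathcal{L})\setminus E^*$ and $F_{\mathcal{L}'}(c_{\mathcal{L}'})\setminus E^*$; this can happen, e.g., for two singleton classes $\{C\},\{D\}$ with $C\cap D=\{c_\mathcal{L},c_{\mathcal{L}'}\}$ of size two and the connecting edge not local. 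When such overlap occurs in $G^*$, $\omega(E(G^*))$ is \emph{strictly smaller} than $\omega(E^*)+\sum_\mathcal{L}\omega(F_\mathcal{L}(c_\mathcal{L})\setminus E^*)$, so your chain of inequalities does not terminate at $\omega(E(G^*))$. The same objection applies to $|E(G_A)|\le|E(G^*)|$.

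\textbf{Gap 2: the edge-count inequality is not justified.} You claim $|F_\mathcal{L}(y_\mathcal{L})\setminus E^*|\le|F_\mathcal{L}(c_\mathcal{L})\setminus E^*|$ "by substituting unit weights into the same optimization." This does not follow: line~\ref{alg:center-selection} minimizes $\omega$-weight, and the weight minimizer and the cardinality minimizer over $\locpot(C)$ can differ. The statement is in fact an \emph{equality} — $|F_\mathcal{L}(u)\setminus E^*|$ is invariant over $u\in\locpot(C)$ — but this requires the nontrivial fact (the paper's \Cref{claim eq univ}) that for any $u,u'\in\locpot(C)$ and any $z\in V_\mathcal{L}\setminus\{u,u'\}$, $\{u,z\}$ is local iff $\{u',z\}$ is, which rests on the exhaustive application of \Cref{op endpoints of local} and is nowhere proved in your draft.

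The paper sidesteps both problems by an iterative exchange: assuming a fitting solution $G_F$ maximizes the set $\agree(G_F)$ of communities whose $G_F$-center matches the algorithm's choice, it produces, if $\agree(G_F)\ne\mc$, a fitting solution $G_F'$ with $\agree(G_F')\supsetneq\agree(G_F)$, no more edges and no more weight. The exchange touches only one class $\FL(C)$ at a time; \Cref{claim supset of univ} guarantees $\univ_{G_F'}(D)=\univ_{G_F}(D)$ for all $D\notin\FL(C)$, so the thorny cross-class overlap issue never arises. Eventually $\agree=\mc$ forces $G_A\subseteq G_F$ and the bounds transfer directly. If you wish to keep your decomposition, you would need to prove the invariance of $|F_\mathcal{L}(\cdot)\setminus E^*|$ over $\locpot(C)$ and carry the inter-class overlap correction explicitly through the inequality chain, neither of which is done here.
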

\begin{proof}
If~\Cref{alg:algorithmFit} reaches Line~\ref{alg:first-return-no-instance}, then~$\locpot$ does not fulfill~\Cref{prop}, and thus \Cref{alg:algorithmFit} correctly outputs that there is no fitting solution for~$E^*$.
Otherwise, let~$G_A$ denote the graph constructed by~\Cref{alg:algorithmFit} and let for each community~$C\in \mc$, $\cent(C)$ denote the vertex~$y$ chosen to be the center of all communities of~$\FL(C)$ in Line~\ref{alg:center-selection}.
By construction, $G_A$ is a solution since for each community~$C\in \mc$, $\cent(C)$ is a vertex of~$\locpot(C)\subseteq \univ_G(C)$.
If~$G_A$ contains at most~$\ell$ edges and has total weight at most~$b$, then the algorithm correctly outputs the solution~$G_A$ which is fitting for~$E^*$.

Hence, in the following assume that~$G_A$ contains more than~$\ell$ edges or has weight more than~$b$.
% We show that there is no fitting solution for~$E^*$.
Assume towards a contradiction that there is a fitting solution for~$E^*$.
Let~$G_F$ be a fitting solution for~$E^*$ such that~$\agree(G_F) \coloneqq \{C\in\mc :  \cent(C) \in\univ_{G_F}(C)\}$ is as large as possible. 

\textbf{Case 1:}~$\agree(G_F) = \mc$\textbf{.}
By construction, $G_A$ contains all edges of~$E^*$ and only the required edges to achieve that for each community~$C\in\mc$, $\cent(C) \in \univ_{G_A}(C)$.
Consequently, $G_A$ is a subgraph of~$G_F$ and thus $G_F$ contains more than~$\ell$ edges or has weight more than~$b$, a contradiction.

\textbf{Case 2:} There is a community~$C\in \mc\setminus\agree(G_F)$\textbf{.}
In the following, we define a fitting solution~$G_F'$ for~$E^*$ with~$\agree(G_F')\supsetneq\agree(G_F)$.
By definition, $\cent(C) = \cent(\widetilde{C})$ for each community~$\widetilde{C}\in\FL(C)$.
Let~$V_C \coloneqq \bigcup_{\widetilde{C}\in \FL(C)} \widetilde{C}$ and let~$y \coloneqq \cent(C)$.
Moreover, let~$x$ be an arbitrary vertex of~$V_C$ such that~$x\in \univ_{G_F}(\widetilde{C})$ for each community~$\widetilde{C}\in \FL(C)$.
Due to~\Cref{corr equiv eq center} and since~$G_F$ is fitting for~$E^*$, this vertex exists and is unique if~$|\FL(C)|\geq 2$.
Note that~$C\in \mc\setminus\agree(G_F)$ implies that~$x\neq y$.
This also implies that~$C$ has size at least 3, and thus, each community of~$\FL(C)$ has size at least 3.
We obtain~$G_F'$ as follows: 
First, initialize~$G_F'$ as~$G_F$.
Second, for each community~$\widetilde{C}\in\FL(C)$, remove all edges that are not local edges of~$G_F[\widetilde{C}]$ from~$G_F'$.
Finally, for each community~$\widetilde{C}\in\FL(C)$, add the minimum number of edges to~$G_F'$ such that~$y\in \univ_{G_F'}(\widetilde{C})$, that is, the edges~$\{\{y, v\} :  v\in V_C \setminus \{y\}\}\setminus E^*$.
\iflonglong
\todomi{Guideline: Erst zeigen wir das, dann das, dann das, dann das....}
\fi

First, we show that~$G_F'$ contains at most as many edges as~$G_F$.
To this end, we first observe the following.
\begin{clm}[$\star$]\label{claim eq univ}
For each~$z\in V_C\setminus \{x,y\}$, the edge~$\{x,z\}$ is a local edge if and only if~$\{y,z\}$ is a local edge.
\end{clm}
\iflong
\begin{claimproof}
%Note that only edges incident with~$y$ and vertices of~$V_C$ were added to obtain~$G_F'$.
Assume towards a contradiction that there is some~$z\in V_C\setminus\{x,y\}$ such that exactly one of~$\{y,z\}$ and~$\{x,z\}$ is a local edge.
Assume that~$\{y,z\}$ is a local edge and~$\{x,z\}$ is not a local edge.
The case if~$\{x,z\}$ is a local edge and~$\{y,z\}$ is not a local edge then follows by similar arguments.
Let~$\widetilde{C}\in \FL(C)$ be a community that contains~$z$.
Note that~$\widetilde{C}$ also contains the vertices~$x$ and~$y$, since they are both contained in~$\locpot(\widetilde{C})$.
Based on the facts that~$\{y,z\}$ is a local edge and~\Cref{op endpoints of local} is exhaustively applied, $x$ is locally universal for both~$y$ and~$z$, since~$x\in \locpot(\widetilde{C})$.
Consequently, $\{x,z\}$ is a local edge, a contradiction.
\end{claimproof}\fi

Recall that each edge which is in~$G_F'$ and not in~$G_F$ is incident with~$y$ and some vertex of~$V_C\setminus\{x,y\}$. 
Hence, for each~$z\in V_C\setminus \{x,y\}$ where the edge~$\{y,z\}$ was added to obtain~$G_F'$, the edge~$\{x,z\}$ was removed to obtain~$G_F'$.
Consequently, $G_F'$ contains at most as many edges as~$G_F$.
Moreover, this implies that the difference between the total weight of~$G_F'$ and the total weight of~$G_F$ is at most~$\omega(\{\{y,z\} :  z\in V_C\setminus \{y\}\} \setminus E^*) - \omega(\{\{x,z\} :  z\in V_C\setminus \{x\}\} \setminus E^*)$.
Due to Line~\ref{alg:center-selection}, this weight difference is not positive.
Hence, since~$G_F$ has total weight at most~$b$, $G_F'$ has total weight at most~$b$.

To show that~$G_F'$ is a solution, it thus remains to show that for each community~$C\in\mc$, $C$~has at least one center in~$G_F'$.
To show this, it suffices to show that all communities outside of~$\FL(C)$ have the same centers in~$G_F$ and~$G_F'$, since~$y$ is a center of all communities of~$\FL(C)$. 
\begin{clm}\label{claim supset of univ}
For each community~$D\in \mc\setminus \FL(C)$, $\univ_{G_F}(D) = \univ_{G_F'}(D)$.
\end{clm}

\begin{claimproof}
Due to symmetry, we only show that~$\univ_{G_F}(D) \subseteq \univ_{G_F'}(D)$.
Assume towards a contradiction that there is a vertex~$z\in \univ_{G_F}(D) \setminus \univ_{G_F'}(D)$.
Since~$z\notin\univ_{G_F'}(D)$, there is an edge~$\{z,w\}$ which is contained in~$G_F$ but not in~$G_F'$.
Moreover, $\{z,w\}$ is not a local edge, since~$G_F'$ contains all local edges.
This further implies that there is a community~$\widetilde{C}\in\FL(C)$ such that~$\{z,w\}\subseteq \widetilde{C}$.
Since~$G_F$ is fitting for~$E^*$, $x$ is one endpoint of~$\{z,w\}$, as otherwise, $\widetilde{C}$ and~$D$ induce a local cycle in~$G_F$ on the vertices~$\{x,z,w\}$ and the edge~$\{z,w\}$ is not a local edge.
Next, we distinguish the cases whether~$\widetilde{C}$ contains a local edge.

\textbf{Case 1:} there is no local edge in~$\widetilde{C}$\textbf{.}
Since~\Cref{op nonlocal coms} is exhaustively applied, $\{x,y\}\subseteq \locpot(\widetilde{C}) \subseteq \widetilde{C}\cap D$.
Hence, if~$|\widetilde{C}\cap D| = 2$, then~$x=z$ and~$y=w$, or vice versa.
Consequently, the edge~$\{z,w\}$ is contained in~$G_F'$, a contradiction.
Otherwise, assume~$|\widetilde{C}\cap D|\geq 3$.
We show that in this case, there is no fitting solution for~$E^*$.
Since~$D$ is not in~$\FL(C)$, there is some vertex of~$\widetilde{C}\cup D$ which is locally universal for~$\widetilde{C}\cap D$.
Hence, $\locpot(\widetilde{C}) \subseteq \widetilde{C}\cap D$, since~$\widetilde{C}$ contains no local edge and~\Cref{op nonlocal coms} is exhaustively applied.
Moreover, since~\Cref{op local coms 3} is exhaustively applied and there is no local edge between any two vertices of~$\widetilde{C}\cap D$, $\locpot(\widetilde{C}) \cap  (\widetilde{C}\cap D) = \emptyset$.
We conclude that $\locpot(\widetilde{C}) = \emptyset$, which implies that there is no fitting solution for~$E^*$, a contradiction to the fact that~$G_F$ is a fitting solution for~$E^*$.

\textbf{Case 2:} there is some local edge in~$\widetilde{C}$\textbf{.}
Recall that \Cref{op local coms 2} and~\Cref{op local coms 3} are applied exhaustively with respect to~$\widetilde{C}$.
If~$x=z$ and~$y= w$, or vice versa, then the edge~$\{z,w\}$ is contained in~$G_F'$, a contradiction.
Otherwise, let~$w^*$ be the unique vertex of~$\{z,w\}\setminus \{x\}$. 
Since~$\{x,w^*\} = \{z,w\}$ is not a local edge, $x\in \locpot(\widetilde{C})$, and~\Cref{op endpoints of local} is exhaustively applied, no vertex of~$\locpot(\widetilde{C})$ is locally universal for~$w^*$ and~$w^* \notin \locpot(\widetilde{C})$.
Hence, if~$|\widetilde{C}\cap D| = 2$, then since~\Cref{op local coms 2} is exhaustively applied, $\locpot(\widetilde{C})$ has size at most one, a contradiction.
Otherwise, if~$|\widetilde{C}\cap D| \geq 3$, then since~\Cref{op local coms 3} is exhaustively applied~$x\notin\locpot(D)$ and~$w^*\notin\locpot(D)$.
Consequently, $z\notin\locpot(D)$, a contradiction.
\end{claimproof}

Since~$G_F$ is a solution, for each community~$D\in \mc\setminus \FL(C)$, \Cref{claim supset of univ} implies that~$\univ_{G_F'}(D)= \univ_{G_F}(D)$ is nonempty.
Hence, $G_F'$ is a solution.
Moreover, since~$G_F$ is a minimum solution, \Cref{claim eq univ} implies that~$G_F'$ is a minimum solution.

Next, we show that~$G_F'$ is fitting for~$E^*$.
To show that~$G_F'$ is a fitting solution for~$E^*$, it remains to show that each local cycle of~$G_F'$ uses only edges of~$E^*$.

\iflong
To show this, we first observe that~$y$ is the only center of each community of~$\FL(C)$ in~$G_F'$.
\begin{clm}[$\star$]\label{claim new sol less centers}
For each community~$D\in \FL(C)$, $\univ_{G_F'}(D) = \{y\}$.
\end{clm}
\iflong
\begin{claimproof}
Recall that each community in~$\FL(C)$ has size at least three.
This includes the community~$D$.
Moreover, $y$ is not a center of~$D$ in~$G_F$, since~$y$ is not a center of~$C$ in~$G_F$ and~$x$ is the unique center of~$D$ in~$G_F$, if~$\FL(C)$ has size at least two.  
Assume towards a contradiction, that there is a vertex~$z\in \univ_{G_F'}(D)$ distinct from~$y$.
By definition of~$G_F'$, each non-local edge of~$G_F'[D]$ is incident with~$y$.
This implies that for each vertex~$w\in D\setminus \{y,z\}$, the edge~$\{z,w\}$ is a local edge, since~$z$ is a center of~$D$ in~$G_F'$.
Since~$D$ has size at least three, this implies that each vertex of~$D\setminus \{y\}$ is incident with at least one local edge in~$D$.
Since~$y\in \locpot(D)$ and~\Cref{op endpoints of local} is exhaustively applied, $y$ is locally universal for~$D$ in~$G_F$.
Hence, $y$ is a center of~$D$ in~$G_F$, a contradiction.
\end{claimproof}
\fi

We are now ready to show that~$G_F'$ is a fitting solution.
\fi

\begin{clm}[$\star$]
Each local cycle of~$G_F'$ uses only edges of~$E^*$.
\end{clm}
\iflong
\begin{claimproof}
Assume towards a contradiction that there is a local cycle in~$G_F'$ with vertex set~$L$ and edge set~$E_L$ such that~$E_L\setminus E^* \neq \emptyset$.
Let~$L$ be a smallest cycle with this property.
Moreover, let~$Y$ and~$D$ be communities that induce the local cycle on the edges~$E_L$ in~$G_F'$.
First, we argue that we can assume without loss of generality that~$Y$ is contained in~$\FL(C)$ and that~$D$ is not contained in~$\FL(C)$.

On the one hand, due to~\Cref{claim supset of univ}, at least one of~$Y$ and~$D$ is contained in~$\FL(C)$, as otherwise, $Y$ and~$D$ induce the same local cycle on the edges of~$E_L$ in~$G_F$.
On the other hand, at most one of~$Y$ and~$D$ is from~$\FL(C)$, since due to~\Cref{claim new sol less centers}, $\univ_{G_F}(Y) = \univ_{G_F}(D) \subseteq \{y\}$.

Hence, assume without loss of generality that~$Y\in \FL(C)$ and that~$D\in \mc\setminus \FL(C)$.
Let~$d$ be an arbitrary vertex of~$\univ_{G_F'}(D) = \univ_{G_F}(D)$ such that the spanning star for~$D$ with center~$d$ and the spanning star for~$Y$ with center~$y$ induce the local cycle with the vertex set~$L$ and edge set~$E_L$ in~$G_F'$.

Let~$L'\coloneqq L \setminus \{y,d\}$.
Note that~$L'\subseteq Y\subseteq V_C$.
Hence, due to~\Cref{claim eq univ}, for each~$w\in L'\setminus \{x\}$, $\{x,w\}$ is a local edge if and only if~$\{y,w\}$ is a local edge.
Moreover, since~\Cref{op endpoints of local} is exhaustively applied, we observe the following.

\begin{fact}\label{myFact}
If there is at least one local edge in~$Y$, then~$\{x,y\}$ is a local edge.
\end{fact}

We distinguish three cases.

\textbf{Case 1:}~$x\in L\setminus L' = \{y,d\}$\textbf{.}
Note that this implies that~$x = d$, since~$x$ is distinct from~$y$.
Let~$e$ be a non-local edge in~$E_L$.

If~$e = \{x,y\} = \{d,y\}$, then~$|L| = 3$, since~$L$ is a smallest local cycle in~$G_F'$ containing at least one edge that is not contained in~$E^*$.
Let~$w$ be the unique vertex of~$L'$.
By construction of~$G_F'$ and since~$w\in Y$, $\{x,w\}$ is a local edge.
Hence, there is at least one local edge in~$Y$ and thus~$\{x,y\} = e$ is a local edge due to~\Cref{myFact}, a contradiction.

Otherwise, $e$ is incident with some vertex~$w\in L'$.
Note that the other endpoint of~$e$ is either~$x$ or~$y$.
Hence, due to~\Cref{claim eq univ}, both~$\{x,w\}$ and~$\{y,w\}$ are not local edges.
By construction of~$G_F'$ and since~$w\in Y$, $\{x,w\}$ is not contained in~$G_F'$, a contradiction.

\textbf{Case 2:}~$x \in L'$\textbf{.}

\textbf{Case 2.1:}~$L' = \{x\}$\textbf{.}
Recall that since~$Y$ and~$D$ induce the local cycle on the vertices~$L$, there are at least two vertices in~$Y\cap D \cap L$.
Hence~$y\in D$ or~$d\in Y$.
If~$d\notin Y$, then~$y\in D$ and thus~$G_F$ contains both edges~$\{d,x\}$ and~$\{d,y\}$. 
Consequently, since~$\{x,y\}$ is an edge of~$G_F$, $Y$ and~$D$ induce the same local cycle with the edges~$E_L$, a contradiction.
Otherwise, if~$d\in Y$, then~$\{x,d\}$ is a local edge if and only if~$\{y,d\}$ is a local edge, due to~\Cref{claim eq univ}.
Hence, $\{x,d\}$ and~$\{y,d\}$ are local edges, as otherwise, $\{x,d\}$ is not an edge of~$G_F'$.
Consequently~$\{x,y\}$ is the unique edge of~$E_L$ which is not a local edge.
Since~$Y$ contains at least one local edge, \Cref{myFact} implies that~$\{x,y\}$ is a local edge, a contradiction.

\textbf{Case 2.2:}~$|L'| = 2$\textbf{.}
Let~$w$ be the unique vertex besides~$x$ in~$L'$.
Recall that since~$w\in Y$, $\{x,w\}$ is an edge of~$G_F$ and that~$\{x,w\}$ is a local edge if and only if~$\{y,w\}$ is a local edge.
Hence, $Y$ and~$D$ induce a local cycle on the vertices~$\{w,x,d\}$ in~$G_F$.
Since~$G_F$ is fitting for~$E^*$, each edge of this local cycle is a local edge, that is, $\{x,w\}$, $\{x,d\}$, and~$\{y,w\}$ are local edges.
By the above, this implies that~$\{y,w\}$ is a local edge and since~$Y$ contains at least one local edge, $\{x,y\}$ is a local edge, due to~\Cref{myFact}.
Hence, $E_L$ contains only local edges.
A contradiction.

\textbf{Case 3:}~$x \notin L$\textbf{.}

\textbf{Case 3.1:}~$|L'| = 1$\textbf{.}
Let~$w$ be the unique vertex of~$L'$.
Recall that since~$Y$ and~$D$ induce the local cycle on the vertices~$L$, there are at least two vertices in~$Y\cap D \cap L$.
Hence~$y\in D$ or~$d\in Y$.
If~$y\in D$, then~$Y$ and~$D$ induce a local cycle in~$G_F$ with the edges~$E_L'\coloneqq \{\{d,y\},\{d,w\},\{x,y\},\{x,w\}\}$. 
Since~$G_F$ is fitting for~$E^*$, each edge of~$E_L \cap E_L' = \{\{d,y\},\{d,w\}\}$ is a local edge and~$\{y,w\}$ is the unique edge of~$E_L$ which is not a local edge.
Hence, due to~\Cref{claim eq univ}, $\{x,w\}$ is not a local edge and thus not all edges of~$E_L'$ are local edges, a contradiction.
Otherwise, if~$d\in Y$, then~$Y$ and~$D$ induce a local cycle in~$G_F$ with the edges~$E_L'\coloneqq \{\{d,w\},\{w,x\},\{x,d\}\}$.
Since~$G_F$ is fitting for~$E^*$, $\{d,w\}\in E_L \cap E_L'$ is a local edge.
Let~$e$ be an edge of~$E_L$ which is not a local edge.
By the above, $e$ is incident with~$y$.
Let~$w'$ be the other endpoint of~$e$.
Since~$w'$ is either~$d$ or~$w$, $w'$ is in~$Y$.  
Hence, due to~\Cref{claim eq univ}, $\{x,w'\}$ is not a local edge and thus not all edges of~$E_L'$ are local edges, a contradiction.

\textbf{Case 3.2:}~$|L'| = 2$\textbf{.}
Let~$w_1$ and~$w_2$ be the two vertices of~$L'$.
Recall that since~$L'\subseteq D$, for each~$i\in \{1,2\}$, $\{x,w_i\}$ is an edge of~$G_F$ and~$\{x,w_i\}$ is a local edge if and only if~$\{y,w_i\}$ is a local edge.
Hence, $Y$ and~$D$ induce the local cycle in~$G_F$ on the edges~$E_L' \coloneqq  \{\{x,w_i\},\{d,w_i\} :  i\in \{1,2\}\}$.
Since~$G_F$ is fitting for~$E^*$, each edge of~$E_L \cap E_L' = \{\{d,w_1\},\{d,w_2\}\}$ is a local edge.
Let~$e$ be an edge of~$E_L$ which is not a local edge.
By the above, $e = \{y,w_i\}$ for some~$i\in \{1,2\}$.
Hence, $\{x,w_i\}$ is not a local edge and thus not all edges of~$E_L'$ are local edges, a contradiction. 

Altogether, each local cycle in~$G_F'$ uses only edges of~$E^*$.
\end{claimproof}
\fi

Finally, we show that~$\agree(G_F')$ is a proper superset of~$\agree(G_F)$.
By construction, $\FL(C) \subseteq \agree(G_F')$, and due to~\Cref{claim supset of univ}, for each community~$D\in \mc\setminus \FL(C)$, $\univ_{G_F'}(D) = \univ_{G_F}(D)$.
Hence, $\agree(G_F)\subseteq \agree(G_F')$.
Moreover, since~$C\notin \agree(G_F')\setminus \agree(G_F)$ we obtain that~$\agree(G_F')$ is a proper superset of~$\agree(G_F)$.
Altogether, $G_F'$ is a fitting solution for~$E^*$ with~$\agree(G_F')\supsetneq \agree(G_F)$.
This contradicts our choice of~$G_F$.

Hence, if~$G_A$ contains more than~$\ell$ edges or has weight more than~$b$, then the algorithm correctly outputs that there is no fitting solution for~$E^*$.
\end{proof}

\iflong
Hence, to show~\Cref{thm:fitting}, it remains to show the running time of~\Cref{alg:algorithmFit}.
\fi
\begin{proof}[Proof of~\Cref{thm:fitting}]
Clearly, the partition~$\FL$ of~$\mathcal{C}$ and also the initialization of~$\locpot$ in Lines~\ref{line-compute parition} and~\ref{line-initial-fit} can be computed in polynomial time.
Note that~\Cref{op endpoints of local,op equivclasses,op nonlocal coms,op local coms 2,op local coms 3} can be exhaustively applied in polynomial time by iterating over all local edges and all pairs of communities, since for each community~$C\in\mc$, $\locpot(C)$ initially has size at most~$|C|<n$ and each application of any operation may only remove elements from~$\locpot(C)$.
Hence, Lines~\ref{line-operations}--\ref{line-local-edges} can be performed in polynomial time. 
Afterwards, Lines~\ref{alg:outer-loop-begin}--\ref{alg:outer-loop-end} can be performed in polynomial time since for each partite set of~$\FL$ we compute the vertex~$y$ with minimal cost such that~$y$ serves as the center of all communities in this partite set.
Finally, the check whether the solution has at most $\ell$~edges and weight at most~$b$ can be done in polynomial time.
Thus, \Cref{alg:algorithmFit} runs in polynomial time.
\end{proof}

\subparagraph*{Finding the correct edge set~$E^*$:}
To solve~\Stars, the main algorithmic difficulty now lies in finding an edge set~$E^*$ that contains all edges of local cycles of any optimal solution of~$I$. 
\iflonglong
In this subsection, we show that finding such an edge set can be done in XP-time for~$t$ and FPT-time for~$\Delta(G) + t$, where~$\Delta(G)$ denotes the maximum degree of~$G$.
Both algorithms rely on the following.
\else 
Hence, to prove~\Cref{xp t}, it remains to show that such an edge set can be found in $m^{4t}\cdot \poly(n+c)$~time, if it exists.
\fi

\begin{lemma}[$\star$]\label{at most 4t}
If~$I$ is a yes-instance of~\Stars, then for every optimal solution~$G'=(V,E')$, there is an edge set~$E^*\subseteq E'$ of size at most~$4t$ such that the edge set of each local cycle of~$G'$ is a subset of~$E^*$.
\end{lemma}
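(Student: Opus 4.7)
The plan is to take $E^* \coloneqq \bigcup_{L} E(L)$, where the union ranges over all local cycles $L$ of $G'$, and show $|E^*| \le 4t$. Since every local cycle has length at most $4$ (as noted in the algorithm-specific notation, because a local cycle lies in the union $S_1\cup S_2$ of two spanning stars and is therefore either a triangle through the two centers or a $4$-cycle alternating between them), it will suffice to cover $E^*$ by at most $t$ such cycles.

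First I would construct a sequence $L_1,L_2,\ldots,L_s$ of local cycles greedily: having chosen $L_1,\ldots,L_i$, if some local cycle $L$ satisfies $E(L)\not\subseteq E(L_1)\cup\cdots\cup E(L_i)$, pick such an $L$ as $L_{i+1}$; otherwise stop. Since $G'$ has only finitely many cycles the process terminates, and by construction $\bigcup_{j=1}^s E(L_j) = \bigcup_L E(L) = E^*$. Using $|E(L_j)|\le 4$, this yields $|E^*|\le 4s$, so the whole task reduces to proving $s\le t$.

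For each $i$, set $H_i\coloneqq (V,\bigcup_{j\le i}E(L_j))$, so that $H_0$ has no edges and $H_s$ is a spanning subgraph of $G'$. I would show $t(H_{i+1})\ge t(H_i)+1$ by a standard cycle-space argument: by the choice of $L_{i+1}$, there is an edge $e^*\in E(L_{i+1})\setminus E(H_i)$; in $\mathbb{F}_2^{E(H_{i+1})}$ the characteristic vector of the cycle $L_{i+1}$ has a $1$ at coordinate $e^*$, while every element in the image of the cycle space of $H_i$ has a $0$ at $e^*$. Consequently $L_{i+1}$ is linearly independent of the cycle space of $H_i$, so $\dim$(cycle space of $H_{i+1}$) $\ge \dim$(cycle space of $H_i$)$+1$, i.e.\ $t(H_{i+1})\ge t(H_i)+1$.

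Iterating gives $t(H_s)\ge s$. Since $H_s$ arises from $G'$ by edge deletions and deleting an edge never increases the feedback edge number (removing a bridge leaves $t$ unchanged, while removing a non-bridge decreases $t$ by one), we have $t(H_s)\le t(G')=t$. Thus $s\le t$ and $|E^*|\le 4s\le 4t$, as required. There is no substantive obstacle: the proof rests only on the length-$\le 4$ bound for local cycles (immediate from the two-star structure) and the elementary cycle-space dimension fact used above.
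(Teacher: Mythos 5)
Your proof is correct, and its overall structure mirrors the paper's: both take $E^*$ to be the union of all local-cycle edge sets, exploit that each local cycle has length at most~$4$, and then bound by~$t$ the number of cycles in an enumeration (the paper iterates over a fixed ordering; you build a greedy sequence --- this is cosmetic) that contribute at least one new edge. Where you differ is in the key step proving that each such cycle raises the feedback edge number: the paper gives a direct combinatorial argument, tracking the connected components of the partial graph $G_{i-1}$ that the new cycle touches and showing the number of newly added edges is at least the number of merged components; you instead identify the feedback edge number with the $\mathbb{F}_2$-dimension of the cycle space and observe that a cycle containing a new edge $e^*$ cannot lie in the span of the previous cycle space, so the dimension strictly increases. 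Both arguments are sound; yours is slicker and avoids the component bookkeeping, while the paper's is more elementary and self-contained (it does not invoke the cycle-space characterization). One small note: you should also justify that a local cycle is a simple cycle (so its characteristic vector is a genuine element of the cycle space) and, as you do, that deleting edges never increases the feedback edge number to transfer the bound from $H_s$ to $G'$ --- both are standard but worth a sentence.
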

\iflong
\begin{proof}
Suppose that~$I$ is a yes-instance of~\Stars and let~$G'$ be an optimal solution for~$I$.
Consequently, the feedback edge number of~$G'$ is at most~$t$.
Let~$\mathcal{E}$ denote the collection of edge sets of all local cycles in~$G'$ and let~$E^* \coloneqq  \bigcup_{\widetilde{E}\in \mathcal{E}}$.
We show that~$|E^*| \leq 4t$.
To this end, we fix some arbitrary ordering on~$\mathcal{E}$ and let~$\mathcal{E}(i)$ denote the~$i$th element of~$\mathcal{E}$ according to this ordering.

For each~$i\in [1, |\mathcal{E}|]$, we consider the edge set~$E_i$ and the graph~$G_i = (V,E_i)$, where~$E_i$ is the union of the first~$i$ edge sets in~$\mathcal{E}$.
We show that for each~$i\in [1, |\mathcal{E}|]$, if~$E_i$ is a proper superset of~$E_{i-1}$, then the feedback edge number of~$G_i$ is larger than the feedback edge number of~$G_{i-1}$.
In other words, while iterating over the order of~$\mathcal{E}$, at most~$t$ different edge sets of~$\mathcal{E}$ can introduce new edges.

Let~$i\in [1, |\mathcal{E}|]$ such that~$E_i$ is a proper superset of~$E_{i-1}$ and let~$\bar{E} \coloneqq E_i \setminus E_{i-1}$.
Let~$X_i$ denote the endpoints of the edges of~$\mathcal{E}(i)$ and let~$\mathcal{V}$ denote the connected components of~$G_{i-1}$ containing at least one vertex of~$X_i$.
Since~$\mathcal{E}(i)$ is the edge set of a cycle, $\widetilde{V} \coloneqq  \bigcup_{\widehat{V}\in \mathcal{V}}\widehat{V}$ is a connected component in~$G_i$, and since~$\bar{E}$ is nonempty, $|\bar{E}| \geq |\mathcal{V}|$.
Hence, the feedback edge number of~$G_{i}$ is larger than the feedback edge number of~$G_{i-1}$.

Since each edge set of~$\mathcal{E}$ has size either 3 or 4, this implies, that~$E^* \coloneqq  E_{|\mathcal{E}|}$ has size at most~$4t$.
\end{proof}
\fi

\iflong
We are finally able to show~\Cref{xp t}.

\begin{proof}[Proof of~\Cref{xp t}]
Let~$I=(G=(V,E),\mathcal{C},\omega,\ell,b)$ be an instance of~\Stars.
The algorithm works as follows:
For each edge set~$E^*$ of size at most~$4t$, use \Cref{alg:algorithmFit} to find a solution for~$I$ with at most~$\ell$ edges and total weight at most~$b$ that contains~$E^*$ or correctly output that there is no fitting solution for~$E^*$.
If for some~$E^*$, a solution~$G'$ with at most~$\ell$ edges and total weight at most~$b$ is found, the algorithm outputs~$G'$.
Otherwise, the algorithms outputs ''no''.

Note that the algorithm runs in the stated running time, since there are at most~$m^{4t}$ edges sets of size at most~$4t$, all of them can be enumerated in $m^{4t}\cdot \poly(|I|)$~time, and for each such edge set~$E^*$, solving the subroutine can be done in polynomial time due to~\Cref{thm:fitting}.

Finally, we show that the algorithm is correct.
If~$I$ is a yes-instance of~\Stars, let~$G'$ be an optimal solution for~$I$.
Due to~\Cref{at most 4t}, $G'$ contains an edges set~$E^*$ of size at most~$4t$ such that each local cycle of~$G'$ uses only edges of~$E^*$.
Hence, $G'$ is fitting for~$E^*$.
Consequently for~$E^*$, a solution for~$I$ with at most~$\ell$ edges and total weight at most~$b$ is outputted by the algorithm.
Otherwise, if~$I$ is a no-instance of~\Stars, then there is no solution for~$I$ with at most~$\ell$ edges and total weight at most~$b$.
Hence, each subroutine outputs ''no'' and the whole algorithms correctly outputs ''no''.
\end{proof}
\else
\begin{proof}[Proof of~\Cref{xp t}]
The algorithm works as follows: iterate over all possible edge sets~$E^*$ of size at most~$4t$ and apply the algorithm behind~\Cref{thm:fitting}.
If~$I$ is a yes-instance, then for some edge set~$E^*$,~\Cref{thm:fitting} yields an optimal solution for~$I$ with at most~$\ell$ edges and weight at most~$b$.
Since there are~$\Oh(m^{4t})$ edges sets of size at most~$4t$, the algorithm achieves the stated running time.
\end{proof}
\fi

\iflonglong

Next, we establish the following.
\begin{theorem}\label{fpt t delta}
\Stars can be solved in $t^{4t}\cdot \Delta(G)^{\Oh(t)} \cdot \poly(|I|)$~time.
\end{theorem}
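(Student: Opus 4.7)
The plan is to refine the XP-algorithm of Theorem~\ref{xp t} by replacing the brute-force enumeration of the set of local edges~$E^*$ with a structured search that exploits the bounded-degree assumption. First, I would note that if every community~$C\in\mathcal{C}$ contains at least one universal vertex in~$G$ (else the instance is a trivial no-instance), then the universal vertex has degree at least~$|C|-1$ in~$G$, which yields $|C| \le \Delta(G)+1$. In particular, each community contains at most~$\binom{\Delta+1}{2}$ edges, and every local edge has both endpoints in a common community.

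Next, by Lemma~\ref{at most 4t}, for every optimal solution~$G'$ of a yes-instance there is a set~$E^*\subseteq E(G')$ of size at most~$4t$ whose edges cover all local cycles of~$G'$, and in particular every edge of~$E^*$ lies on a local cycle. The idea is to enumerate, instead of the edge set itself, the abstract labeled graph~$H=(V(E^*),E^*)$ up to a choice of vertex labels: since each connected component of~$H$ must contain at least one cycle, the feedback edge number of~$H$ is at least its number of components, so $|V(H)| = |E(H)| + c(H) - t(H) \le |E(H)| \le 4t$ (with $c(H)\le t$). Therefore, the number of abstract structures is bounded by $t^{O(t)}$, in particular by $t^{4t}$.

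For each abstract structure, the next step is to embed~$H$ into~$G$. The plan is to treat the connected components of~$H$ one at a time, interleaved with calls to the fitting algorithm of Theorem~\ref{thm:fitting}: start with the embedded edges so far forming a partial~$E^*$, call the fitting algorithm, and either return a solution or use the community~$C$ reported with $\locpot(C)=\emptyset$ to anchor the next component. Because every subsequent component of~$H$ must share at least one vertex with a community constrained by earlier embedded components (otherwise the fitting algorithm would not have failed at~$C$), we can pick the starting vertex of each new component from $C \cup \bigcup_{\widetilde C\in\FL(C)} \widetilde C$, which has size~$\Oh(\Delta)$ per community. From the starting vertex we extend via BFS along~$G$-neighborhoods; each step contributes at most~$\Delta$ choices, and $|V(H)|\le 4t$ bounds the total number of steps. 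Hence each component is embedded in $\Delta^{\Oh(t)}\cdot\poly(|I|)$ time, giving a total of $\Delta^{\Oh(t)}\cdot\poly(|I|)$ embeddings once the abstract structure is fixed.

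The main obstacle is precisely the handling of the multiple connected components of~$H$: a naive BFS-from-every-vertex approach would incur an $n^{c(H)} = n^{\Oh(t)}$ blowup and fall short of the FPT bound. The interleaving with the fitting algorithm described above avoids this by reusing its failure information to localize the next component within a $\poly(\Delta)$-sized candidate set; establishing this step rigorously (in particular, arguing that some vertex of the next component must lie near a community identified by the fitting algorithm, and that the BFS up to depth~$\Oh(t)$ indeed reaches all vertices of the component) is the technically delicate part of the proof. Multiplying the $t^{4t}$ structure choices, the $\Delta^{\Oh(t)}$ embedding cost per structure, and the polynomial cost of Theorem~\ref{thm:fitting} for each candidate~$E^*$ yields the claimed total running time of $t^{4t}\cdot\Delta(G)^{\Oh(t)}\cdot\poly(|I|)$.
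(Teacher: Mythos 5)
Your proposal takes a genuinely different route from the paper's (sketched) argument. The paper works with the operations applied once, up front, to $E^*=\emptyset$: it first shows that any community~$C$ with $\locpot[\emptyset](C)=\emptyset$ forces, in every solution, a local cycle lying inside $N^4_G[C]$ (the ball of $G$-radius~$4$ around~$C$), and that some optimal solution has every local cycle involving such a community. Greedily packing communities~$C$ with $\locpot[\emptyset](C)=\emptyset$ and pairwise disjoint $N^4_G[\cdot]$-balls either witnesses a no-instance (more than~$t$ vertex-disjoint forced cycles) or localizes all local cycles of some optimal solution to within $G$-distance~$11$ of the packing, a region of $\poly(t,\Delta(G))$ vertices over which one enumerates~$E^*$ and invokes \Cref{thm:fitting}. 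Your plan instead enumerates abstract isomorphism types of the graph on~$E^*$ and embeds them component by component, interleaving calls to \Cref{thm:fitting}; the benefit of the paper's route is that the localization argument is established once rather than needing to be re-derived for every partial~$E^*$.

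The gap in your proposal is precisely the anchoring step, and it is not merely a matter of filling in details. You claim that when the fitting algorithm fails with $\locpot(C)=\emptyset$ for the current partial~$E^*$, the next component of the abstract graph must meet $C\cup\bigcup_{\widetilde C\in\FL(C)}\widetilde C$, and that this set has size $\Oh(\Delta(G))$. Neither assertion holds as stated. First, $\FL(C)$ can contain arbitrarily many communities (membership chains through intersections of size at least~$3$), so $\bigcup_{\widetilde C\in\FL(C)}\widetilde C$ is not bounded by any function of~$\Delta(G)$ and can contain $\Omega(n)$ vertices, which would push the embedding cost back up to $n^{\Oh(t)}$. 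Second, even the weaker claim that the missing local edges must touch this set at all requires a structural argument: adding local edges changes which operations fire (for instance, \Cref{op nonlocal coms} stops applying to~$C$ once~$C$ receives a local edge, and new local edges can remove communities from~$\FL(C)$ by making a vertex locally universal for some~$C\cap D$), and this dependency chain can run far from~$C$. The paper's localization lemma deals with exactly this danger for $E^*=\emptyset$ by showing that a long $\FL$-chain emanating from~$C$ already forces a local cycle inside $N^4_G[C]$; you would need an analogue for each of your partial~$E^*$, but you neither state nor prove one, and without it the $\Delta(G)^{\Oh(t)}$ bound on the embedding cost does not follow.
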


\newcommand{\pot}{\locpot[\emptyset]}

To this end, we show in the following that we can identify vertices with a small distance to local cycles by computing for~$E^* \coloneqq  \emptyset$ the partition~$\FL$ of~$\mc$ and the mapping~$\locpot[\emptyset]$.

Note that since~$E^* = \emptyset$, for each community~$C$, each community~$D$ with~$|C\cap D|\geq 3$ is in~$\FL(C)$.
Again, initialize~$\pot(C) = \univ_G(C)$ for each community~$C\in\mc$ and exhaustively apply~\Cref{op equivclasses} and~\Cref{op nonlocal coms}.
Note that these two operations are the only operations that can be applied if~$E^* = \emptyset$.

\begin{lemma}\label{pot empty eq loc cycle}
Let~$C\in \mc$ with~$\pot(C) = \emptyset$.
For each solution~$G'$, there is a community~$D\in \FL(C)$ with~$|C\cap D| \geq 1$ such that~$D$ induces a local cycle with some community~$D'\in \mc$.
\end{lemma}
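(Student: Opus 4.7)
The plan is to prove the contrapositive: assume that in a given solution $G'$, no community $D\in\FL(C)$ with $|C\cap D|\ge 1$ induces a local cycle with any other community of $\mc$, and derive that $\pot(C)\neq\emptyset$. The key object will be a single vertex $z$ which I build up to be in $\pot(\widetilde C)$ for every $\widetilde C\in\FL(C)$ simultaneously.

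First I would fix an arbitrary center $z\in\univ_{G'}(C)$ and argue that $z\in D$ for every $D\in\mc$ with $|C\cap D|\ge 2$. If not, take $u,v\in C\cap D$ with $u\neq v$ and any $z'\in\univ_{G'}(D)$; then $z\neq z'$ (since $z'\in D\not\ni z$) and the four edges $\{z,u\},\{z,v\},\{z',u\},\{z',v\}$, all present in $G'$ because $z$ is universal for $C$ and $z'$ for $D$, yield a triangle or a $4$-cycle. Hence $C$ and $D$ induce a local cycle, contradicting the assumption (since $C\in\FL(C)$ and $|C\cap C|\ge 1$).

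Next I would propagate this inductively along a path $C=\widetilde C_0,\widetilde C_1,\dots,\widetilde C_k=\widetilde C$ in the auxiliary graph $G_\FL$, maintaining three invariants: (a) $z\in\widetilde C_i$, (b) $z\in\univ_{G'}(\widetilde C_i)$, and (c) $z\in D$ for every $D$ with $|\widetilde C_i\cap D|\ge 2$. Assuming these for $\widetilde C_i$, adjacency in $G_\FL$ gives $|\widetilde C_i\cap\widetilde C_{i+1}|\ge 3$, so (c) immediately forces $z\in\widetilde C_{i+1}$, establishing (a). For (b), take any $z''\in\univ_{G'}(\widetilde C_{i+1})$; if $z''\neq z$, the same triangle/$4$-cycle argument applied to $\widetilde C_i$ and $\widetilde C_{i+1}$ (with centers $z$ and $z''$, using that they share at least $3$ vertices) produces a local cycle, which contradicts the standing assumption because $z\in\widetilde C_i\cap C$ shows $|\widetilde C_i\cap C|\ge 1$. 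Hence $z''=z$. Finally, since $z\in\widetilde C_{i+1}\cap C$ ensures $|\widetilde C_{i+1}\cap C|\ge 1$, repeating the argument of the previous paragraph with $\widetilde C_{i+1}$ in place of $C$ gives (c).

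Once (a)--(c) hold for every $\widetilde C\in\FL(C)$, the vertex $z$ lies in $\univ_G(\widetilde C)$ and in $\widetilde C\cap D$ for every $D$ with $|\widetilde C\cap D|\ge 2$, which are precisely the conditions enforced by the exhaustive application of \Cref{op nonlocal coms} (the only operation other than \Cref{op equivclasses} with nontrivial effect when $E^*=\emptyset$, since no community contains a local edge). Thus $z\in\pot(\widetilde C)$ for every $\widetilde C\in\FL(C)$, and \Cref{op equivclasses} preserves this, so $z\in\pot(C)$, contradicting $\pot(C)=\emptyset$. The main obstacle is the propagation step: one must simultaneously carry forward $z$'s membership, its being the (unique) center in $G'$ of the next community, and the applicability of the hypothesis to each intermediate community in the $G_\FL$-path. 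These three invariants, fortunately, interlock so that $z\in C$ combined with $z\in\widetilde C_i$ automatically yields the $|\widetilde C_i\cap C|\ge 1$ needed to invoke the assumption at each step.
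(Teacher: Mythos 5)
Your proof is correct but takes a genuinely different route from the paper's. The paper argues the statement directly and splits into two cases according to whether $\FL(C)$ contains a community disjoint from~$C$: in the disjoint case it chains \Cref{cut3 local} along a $G_\FL$-path to the first disjoint community, forcing equal centers everywhere and a contradiction with disjointness; otherwise it argues that all centers of communities in $\FL(C)$ coincide with a single vertex~$x_C$, traces back that $x_C$ must have been deleted from $\pot$ by some application of \Cref{op nonlocal coms}, and invokes \Cref{nonlocal cuts} once on the responsible pair. You instead prove the contrapositive by propagating three interlocking invariants for a fixed center~$z$ of~$C$ along $G_\FL$-paths, and conclude that $z$ survives every application of \Cref{op nonlocal coms} and \Cref{op equivclasses}, so $\pot(C)\ne\emptyset$. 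This unifies the paper's two cases (a community of $\FL(C)$ disjoint from $C$ would violate your invariant~(a)) and avoids citing \Cref{cut3 local,nonlocal cuts} by re-deriving their triangle/$4$-cycle arguments inline. What each buys: the paper applies \Cref{nonlocal cuts} only once at the very end, whereas you re-establish invariant~(c) at each path step; in exchange, your single induction replaces the paper's case distinction and its implicit ``$x_C$ was removed at some point during the fixpoint computation'' reasoning. One small caveat on your propagation of invariant~(b): labelling it ``the same triangle/$4$-cycle argument'' glosses over the fact that $z$ now lies in \emph{both} $\widetilde{C}_i$ and $\widetilde{C}_{i+1}$ (unlike the base step, where $z\notin D$), so the auxiliary shared vertices must be chosen to avoid~$z$; this is precisely why $|\widetilde{C}_i\cap\widetilde{C}_{i+1}|\ge 3$ rather than $\ge 2$ is needed, which you do note parenthetically, and a short case check (triangle on $z,z''$ and a third shared vertex when $z''\in\widetilde{C}_i$, a $4$-cycle otherwise) confirms it goes through.
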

\begin{proof}
By assumption~$\univ_G(C) \neq \emptyset$, as otherwise there is no solution for~$I$.
Let~$G'$ be a solution for~$I$.
We distinguish whether there is a community~$\widetilde{C}\in \FL(C)$ with~$C \cap \widetilde{C} =\emptyset$.

\textbf{Case 1:} There is a community~$\widetilde{C}\in \FL(C)$ with~$C \cap \widetilde{C} =\emptyset$\textbf{.}
Since~$\widetilde{C}$ is contained in~$\FL(C)$, by definition of~$G_{\FL}$, there is a sequence of communities~$(C = C_1, \dots, C_r = \widetilde{C})$  such that for each~$i\in[2,r]$, $|C_{i-1} \cap C_i| \geq 3$.
Let~$q$ be the smallest index of~$[1,r]$ such that~$C \cap C_q = \emptyset$.
Note that~$q \geq 3$.
We show that there is some~$i\in [2,q]$, such that~$C_{i-1}$ and~$C_i$ induce a local cycle in~$G'$.
Since by definition of~$q$, $C_{i-1}\cap C_1 \neq \emptyset$ for each~$i\in [2,q]$, this then implies the statement.

Assume towards a contradiction that there is no~$i\in [2,q]$, such that~$C_{i-1}$ and~$C_i$ induce a local cycle in~$G'$.
For each~$i\in [1,q]$, let~$v_i$ be an arbitrary vertex of~$\univ_{G'}(C_i)$.
Due to~\Cref{cut3 local}, for each~$i\in [2,q]$, $v_{i-1} = v_i$.
Consequently, $v_1 = v_i$ for each~$i\in[1,q]$ and in particular~$v_1 = v_q$.
Since~$C_1=C$ is disjoint with~$C_q$, $v_1 \neq v_q$, a contradiction.

\textbf{Case 2:} For each community~$\widetilde{C}\in \FL(C)$, $C \cap \widetilde{C} \neq\emptyset$\textbf{.}
For each~$\widetilde{C}\in\FL(C)$, let~$x_{\widetilde{C}}$ be an arbitrary vertex of~$\univ_{G'}(\widetilde{C})$.
Due to~\Cref{cut3 local} and by transitivity, if~$x_{D} \neq x_{D'}$ for communities~$D$ and~$D'$ of~$\FL(C)$, then two communities of~$\FL(C)$ induce a local cycle in~$G'$.
Since each community in~$\FL(C)$ shares at least one vertex with~$C$, the statement holds.
Hence, assume that~$x_{\widetilde{C}} = x_C$ for each community~$\widetilde{C}\in \FL(C)$.
Since~$x_C \in \univ_G(\widetilde{C})$ for each community~$\widetilde{C}\in \FL(C)$ and~$x_C$ is not contained in~$\pot(C) = \emptyset$, $x_C$ was removed from~$\pot(D)$ for some~$D\in \FL(C)$ during an application of~\Cref{op nonlocal coms}.
Hence, there is a community~$D'$ such that~$|D\cap D'| \geq 2$ and~$x_C\notin D\cap D'$.
Consequently, due to~\Cref{nonlocal cuts}, $D$ and~$D'$ induce a local cycle in~$G'$.
Since~$D$ shares at least one vertex with~$C$, the statement holds.
\end{proof}

\todomi{Def $N^i[C]$, in prelims}
Note that~\Cref{pot empty eq loc cycle} implies the following by the fact that the diameter in~$G[C]$ is at most two for each community~$C\in \mc$, and that two communities that induce a local cycle share at least two vertices.

\begin{observation}\label{vier-nachbarschaft}
Let~$C\in\mc$ with~$\pot(C) = \emptyset$.
For every solution~$G'$ there are distinct communities~$D$ and~$D'$ of~$\mc$ with~$D \cup D' \subseteq N^4_{G}[C]$ such that~$D$ and~$D'$ induce a local cycle in~$G'$.
\end{observation}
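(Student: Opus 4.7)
The plan is to combine \Cref{pot empty eq loc cycle} with the fact that every community has a universal vertex in~$G$ (otherwise~$I$ is a trivial no-instance), which makes the diameter of~$G[C]$ at most~$2$ for every~$C\in \mc$. This diameter bound is the only metric fact I will need: applying it twice will account for the factor~$4$ in~$N^4_G[C]$.

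First, I will invoke \Cref{pot empty eq loc cycle} on the community~$C$ with~$\pot(C)=\emptyset$ and the given solution~$G'$. This yields a community~$D\in \FL(C)$ with~$|C\cap D|\geq 1$ together with some community~$D'\in\mc$ such that~$D$ and~$D'$ induce a local cycle in~$G'$. It thus suffices to verify that~$D\cup D'\subseteq N^4_G[C]$.

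For~$D$, I will pick a vertex~$v\in C\cap D$. Since~$D$ has a universal vertex in~$G$, every~$u\in D$ satisfies~$d_{G[D]}(u,v)\leq 2$, hence~$d_G(u,C)\leq d_G(u,v)\leq 2$, which gives~$D\subseteq N^2_G[C]\subseteq N^4_G[C]$. For~$D'$, I will use that a local cycle needs at least two common vertices of~$D$ and~$D'$: choose any~$w\in D\cap D'$. Then~$w\in D\subseteq N^2_G[C]$, and the same diameter-$2$ argument applied inside~$D'$ shows that every~$u\in D'$ satisfies~$d_G(u,w)\leq 2$, so~$d_G(u,C)\leq 4$ by the triangle inequality, giving~$D'\subseteq N^4_G[C]$.

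The only step that requires a sentence of justification rather than a direct calculation is the claim that two communities inducing a local cycle share at least two vertices. This follows directly from the definition: a local cycle of~$D$ and~$D'$ lives in~$S_D\cup S_{D'}$ where each~$S_X$ is a spanning star of~$G'[X]$; since a single star is acyclic, the cycle must use edges of both stars, and hence at least two of its vertices must lie in~$D\cap D'$. Apart from that, no real obstacle is expected, and the observation will follow as an essentially immediate corollary of \Cref{pot empty eq loc cycle} once the diameter bound is applied twice.
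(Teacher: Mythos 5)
Your proof is correct and follows exactly the route the paper has in mind: the paper dispenses with this observation in a single sentence, appealing to the same three facts you use — \Cref{pot empty eq loc cycle} producing a $D\in\FL(C)$ with $|C\cap D|\geq 1$ inducing a local cycle with some $D'$, the diameter-$2$ bound on each $G[D]$ coming from the assumed universal vertex, and the fact that two communities inducing a local cycle must share at least two vertices. Your explicit unpacking of the distance bookkeeping (applying the diameter bound twice to reach $N^4_G[C]$) and your derivation that $|D\cap D'|\geq 2$ from the acyclicity of each spanning star are both sound.
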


Next, we show that there is an optimal solution where for each pair of distinct communities~$C_1$ and~$C_2$ that induce a local cycle, $\pot(C_1) = \emptyset$ or~$\pot(C_2) = \emptyset$.
Our goal is then to find an optimal solution with this property.
\todo{def optimal solution}
\begin{lemma}\label{at least one empty in local cycle}
There is an optimal solution~$G'$ such that for distinct communities~$C_1$ and~$C_2$ that induce a local cycle in~$G'$, $\pot(C_1) = \emptyset$ or~$\pot(C_2) = \emptyset$.
\end{lemma}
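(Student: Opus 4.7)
The plan is an exchange argument. Among all optimal solutions choose one, $G'$, minimizing
\[
\mu(G') := \bigl|\bigl\{\{C_1, C_2\} \subseteq \mc : C_1 \neq C_2,\ C_1 \text{ and } C_2 \text{ induce a local cycle in } G',\ \pot(C_1), \pot(C_2) \neq \emptyset\bigr\}\bigr|,
\]
and suppose for contradiction that $\mu(G') > 0$. Pick such a pair $\{C_1, C_2\}$. Because $E^* = \emptyset$, no edge is a local edge and no vertex is locally universal, so only \Cref{op equivclasses,op nonlocal coms} may have restricted $\pot$. In particular, if $|C_1 \cap C_2| \geq 3$ then $C_2 \in \FL(C_1)$ and \Cref{op equivclasses} gives $\pot(\widetilde C) = \pot(C_1)$ for every $\widetilde C \in \FL(C_1)$, while if $|C_1 \cap C_2| = 2$ then two applications of \Cref{op nonlocal coms} force $\pot(C_1), \pot(C_2) \subseteq C_1 \cap C_2$.

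From $G'$ I would build a new solution $G''$ by re-selecting centers, mirroring Case~2 of the correctness proof of \Cref{alg:algorithmFit}: pick $y \in \pot(C_1)$ minimizing $\omega\bigl(\{\{y,v\} : v \in V_{\FL(C_1)} \setminus \{y\}\} \setminus E(G')\bigr)$, where $V_{\FL(C_1)} := \bigcup_{\widetilde C \in \FL(C_1)} \widetilde C$; then for every $\widetilde C \in \FL(C_1)$ remove from $G'$ those edges incident with the former center of $\widetilde C$ that are not required by any other community; and finally insert every edge $\{y, v\}$ with $v \in V_{\FL(C_1)} \setminus \{y\}$. In the case $|C_1 \cap C_2| = 2$, perform the analogous exchange also on $\FL(C_2)$, so that the chosen centers for $\FL(C_1)$ and $\FL(C_2)$ both lie in $C_1 \cap C_2$. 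Following the bookkeeping in the proof of \Cref{thm:fitting} I would verify three facts: (i) $|E(G'')| \leq |E(G')|$ and $\omega(E(G'')) \leq \omega(E(G'))$, (ii) $G''$ is still a valid solution since $y \in \pot(C_1) \subseteq \univ_G(\widetilde C)$ for each $\widetilde C \in \FL(C_1)$, and (iii) the bad cycle between $C_1$ and $C_2$ vanishes in $G''$ because either all $\FL(C_1)$-communities share the single center $y$ (so their pairwise star-unions are stars) or, in the size-$2$ case, both chosen centers lie in $C_1 \cap C_2$ and the union of the two stars is acyclic.

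The hard part will be strengthening item~(iii) to rule out any new bad local cycle introduced between some $\widetilde C \in \FL(C_1)$ and an outside community $D$ with $\pot(D) \neq \emptyset$. Any such cycle requires $|D \cap \widetilde C| \geq 2$, and a case distinction on whether $|D \cap \widetilde C|$ equals two or is at least three, in the spirit of \Cref{nonlocal cuts} and \Cref{cut3 local} specialized to the $E^* = \emptyset$ regime, will show that either $D \in \FL(C_1)$ (contradicting $D$ being outside) or \Cref{op nonlocal coms} forces both the new center $y$ and the old center of $D$ into $D \cap \widetilde C$, so that the union of the relevant stars remains acyclic. With this verified, $\mu(G'') < \mu(G')$, contradicting the minimality of $G'$ and proving the lemma.
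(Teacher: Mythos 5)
Your approach matches the paper's own proof attempt for this lemma: define the conflict set, pick a bad pair, re-center the two $\FL$-classes to vertices of $\locpot[\emptyset]$, and argue by induction that the conflict set shrinks. Be aware, though, that the paper's proof is genuinely incomplete: this lemma lives inside an \texttt{iflonglong} conditional that is never compiled, and after reducing to the case $x_1 \ne x_2$ (so $\locpot[\emptyset](C_1)=\{x_1\}$, $\locpot[\emptyset](C_2)=\{x_2\}$, and $|C_1\cap C_2|=2$), the paper's argument breaks off with three unresolved \texttt{todomi} comments --- including one noting that the rebuilt subgraph ``is not clearly a tree'' --- and then asserts ``Hence, $G''$ contains less edges than $G'$'' without justification. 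So you cannot simply defer to the paper for the sticky case.

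Your assessment of where the difficulty lies is largely right, but the relative weights are off. Item~(i) on the edge count does hold, but not by a straight appeal to ``the bookkeeping of \Cref{thm:fitting}'' --- there, one star with a single new center replaces a single $\FL$-class, whereas here two distinct target centers are chosen for two distinct classes. A short extra observation is needed: any edge that in $G'$ is a star edge for some community of $\FL(C_1)$ and also for some community of $\FL(C_2)$ must have both endpoints in $D_1\cap D_2$ for $D_1\in\FL(C_1)$, $D_2\in\FL(C_2)$, and then \Cref{op nonlocal coms} forces $D_1\cap D_2=\{x_1,x_2\}$; hence the two original star forests share at most one edge, so the replaced subgraph had at least $|V_{\FL(C_1)}|+|V_{\FL(C_2)}|-3=|E(H'')|$ edges. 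The weight half of~(i) is shakier: with $\locpot[\emptyset](C_1)$ a singleton, your cost-minimizing choice of $y$ has no freedom, and nothing bounds the weight of the inserted star edges by the weight of the removed ones. Finally, item~(iii) --- ruling out freshly created conflicting pairs --- is the part that needs a genuinely new idea: the validity argument the paper does supply shows only that $\univ_{G''}(D)\supseteq\univ_{G'}(D)$ for communities $D$ outside $\FL(C_1)\cup\FL(C_2)$, and since the inserted star edges can also turn new vertices of such a $D$ universal, it is not automatic that the re-centering fails to create a new local cycle between $D$ and some other outside community. Your sketched case analysis would have to close precisely this gap, which the paper never does.
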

\begin{proof}
\newcommand{\Conf}{\mathrm{Conflict}}
For an optimal solution~$G'$, let~$\Conf(G')$ the pairs of distinct communities~$\{C_1,C_2\}$ that induce a local cycle in~$G'$, where~$\pot(C_1) \neq \emptyset$ and~$\pot(C_2) \neq \emptyset$.
Let~$G'$ be an optimal solution such that~$\Conf(G') \neq \emptyset$. 
We show that there is an optimal solution~$G''$ with~$\Conf(G'') \subsetneq \Conf(G')$.
By induction, this then implies that there is an optimal solution~$G^*$ with~$\Conf(G^*) = \emptyset$.
If there is a pair~$\{C_1', C_2'\}\in \Conf(G')$ with~$C_1' \in \FL(C_2')$, let~$C_1 = C_1'$ and let~$C_2 = C_2'$.
Otherwise, let~$\{C_1, C_2\}$ be an arbitrary pair of~$\Conf(G')$.
Since~$C_1$ and~$C_2$ induce a local cycle in~$G'$, $|C_1\cap C_2|\geq 2$.
Note that since~\Cref{op equivclasses} is exhaustively applied, for each~$i\in\{1,2\}$, $\pot(C_i) = \pot(C_i')$ for each community~$C_i'\in \FL(C_i)$.
Moreover, since~$\{C_1,C_2\}\in\Conf(G')$, $\pot(C_1)\neq \emptyset$ and~$\pot(C_2) \neq \emptyset$.

We define a graph~$G''$ as follows:
We initialize~$G''$ as~$G'$ and for each community~$D\in\FL(C_1)\cup \FL(C_2)$, we remove all edges of~$G'[D]$ from~$G''$.
Next, we chose for each~$i\in \{1,2\}$ a vertex~$x_i\in \pot(C_i)$.
If~$\pot(C_1) \cap \pot(C_2) \neq \emptyset$, chose~$x_1$ and~$x_2$ to be the same vertex of~$\pot(C_1) \cap \pot(C_2)$.
Finally, for each~$i\in\{1,2\}$, we add for each community~$C_i'\in \FL(C_i)$, a minimal number of edges to~$G''$ such that~$x_i$ is the unique universal vertex of~$C_i'$ for~$C_i'$ in~$G''$.

First, we show that~$G''$ is an optimal solution. 
To this end, we first show that~$G''$ is a solution.
Assume towards a contradiction that~$G''$ is not a solution, that is, there is a community~$\widetilde{C}\in \mc$ such that~$\univ_{G''}(\widetilde{C}) = \emptyset$.
By construction of~$G''$, $\widetilde{C}\notin \FL(C_1) \cup \FL(C_2)$.
Since~$\univ_{G'}(\widetilde{C}) \neq \emptyset$, there is some edge~$\widetilde{e}$ of~$G'[\widetilde{C}]$ which is not contained in~$G''$. 
By construction of~$G''$, the only edges that are removed from~$G'$ to obtain~$G''$ are edges of~$G'[D]$ for some community~$D\in\FL(C_1) \cup \FL(C_2)$.
Let~$D\in\FL(C_1) \cup \FL(C_2)$ such that~$\widetilde{e}\subseteq D$.
Assume without loss of generality that~$D \in \FL(C_1)$.
Since~$x_1 \in \univ_{G''}(D)$ and~$\widetilde{e}$ is no edge of~$G''$, $x_1$ is no endpoint of~$\widetilde{e}$.
Since~$x_1\in \pot(D)$ and~\Cref{op nonlocal coms} is exhaustively applied, $x_1\in \widetilde{C}$.
Hence, since~$x_1$ is no endpoint of~$\widetilde{e}$, $|\widetilde{C} \cap D| \geq 3$.
Consequently, $\widetilde{C}\in \FL(D) = \FL(C_1)$, a contradiction.

Next, we show that~$G''$ at most as many edges as~$G'$.
Let~$V' \coloneqq  \cup_{D\in \FL(C_1) \cup \FL(C_2)}D$.
Recall that the hypergraph induced by the communities of~$\FL(C_1) \cup \FL(C_2)$ is connected.
Hence, the subgraph~$H' = \cup_{D\in \FL(C_1) \cup \FL(C_2)} G'[D]$ is connected.
Let~$H'' = \cup_{D\in \FL(C_1) \cup \FL(C_2)} G''[D]$.
Note that by construction of~$G''$, $G'$ and~$G''$ agree on all edges that are contained in neither~$H'$ nor~$H''$.

If~$x_1 = x_2$, then, $H''$ is a star and contains exactly~$|V'|-1$ edges.
Since~$C_1$ and~$C_2$ induce a local cycle in~$G'$ and~$H'$ is connected, $H'$ contains at least~$|V'|$ edges.
Hence, $G''$ contains less edges than~$G'$.
Since~$G'$ is an optimal solution, this is not possible.
Consequently, $x_1 \neq x_2$.
Note that this implies that~$\pot(C_1) \cap\pot(C_2) = \emptyset$ and moreover~$\FL(C_1) \neq \FL(C_2)$.
Since~$C_1$ and~$C_2$ induce a local cycle, $|C_1\cap C_2| = 2$.
Moreover, since~\Cref{op nonlocal coms} is exhaustively applied, both~$\pot(C_1)$ and~$\pot(C_2)$ are subsets of~$C_1 \cap C_2$.
Since~$\pot(C_1) \cap\pot(C_2) = \emptyset$, this implies~$\pot(C_1) = \{x_1\}$ and~$\pot(C_2) = \{x_2\}$.

\todomi{argumentieren was bei~$v_1 \neq v_2$ passiert?}
\todomi{irgendwas ist doch da komisch. ist nicht klar, dass das ein baum ist}
\todomi{ein sehr böses axolotl!}
Hence, $G''$ contains less edges than~$G'$ and thus, $G'$ is not an optimal solution.
\end{proof}

Hence, $\pot$ identifies communities that are close to local cycles in every optimal solution.

We call a subset~$\mc'\subseteq \{C\in \mc :  \pot(C) = \emptyset\}$ a~\emph{cycle packing} if for each distinct communities~$C$ and~$D$ of~$\mc'$, $N^4_{G}[C]\cap N^4_G[D] = \emptyset$.

\begin{lemma}
Let~$\mc'$ be an inclusion-maximal cycle packing.
a)~If~$|\mc'|> t$, then~$I$ is a no-instance of~\Stars.
b)~There is an optimal solution~$G'$ such that for each distinct communities~$C$ and~$D$ of~$\mc$ that induce a local cycle, $C\cup D \subseteq N^{11}_G[\cup_{\widetilde{C}\in \mc'}\widetilde{C}]$. 
\end{lemma}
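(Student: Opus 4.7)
For part~(a), the plan is to combine the~\Cref{vier-nachbarschaft} with the fact that~$G'$ has feedback edge number~$t$. Suppose towards a contradiction that~$|\mc'|>t$ but~$I$ is a yes-instance, and fix an optimal solution~$G'$. For each~$C\in\mc'$, \Cref{vier-nachbarschaft} yields two communities~$D_C,D_C'$ whose union is contained in~$N^4_G[C]$ and which induce a local cycle~$\zeta_C$ in~$G'$. Since the neighborhoods~$N^4_G[C]$ are pairwise disjoint over~$C\in\mc'$ by the definition of a cycle packing, the cycles~$\{\zeta_C:C\in\mc'\}$ are pairwise vertex-disjoint, and hence edge-disjoint. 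Any feedback edge set of~$G'$ must then contain at least one edge from each~$\zeta_C$, giving~$t\ge|\mc'|$, a contradiction.

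For part~(b), the plan is to start from the optimal solution~$G'$ furnished by~\Cref{at least one empty in local cycle}, so that for any two distinct communities~$C_1,C_2$ inducing a local cycle in~$G'$ we have~$\pot(C_1)=\emptyset$ or~$\pot(C_2)=\emptyset$. Say~$\pot(C_1)=\emptyset$. Invoking the maximality of~$\mc'$: either~$C_1\in\mc'$, or adding~$C_1$ to~$\mc'$ would violate the packing condition, meaning there is some~$C^*\in\mc'$ with~$N^4_G[C_1]\cap N^4_G[C^*]\neq\emptyset$. In the first case, $C_1\subseteq \cup_{\widetilde C\in\mc'}\widetilde C$. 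In the second case, picking a witness vertex~$v$ in the intersection yields a vertex of~$C_1$ within distance~$8$ of~$C^*$, and since~$\univ_G(C_1)\neq\emptyset$ implies that~$G[C_1]$ has diameter at most~$2$, we obtain~$C_1\subseteq N^{10}_G[C^*]$. Either way, $C_1\subseteq N^{10}_G[\cup_{\widetilde C\in\mc'}\widetilde C]$.

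It then remains to propagate this closeness to~$C_2$. Since~$C_1$ and~$C_2$ induce a local cycle in~$G'$ their intersection contains at least two vertices, so some~$y\in C_1\cap C_2$ lies in~$N^{10}_G[\cup_{\widetilde C\in\mc'}\widetilde C]$. A careful expansion along the universal vertex of~$C_2$ (noting that~$y$ is adjacent to~$\univ_G(C_2)$ in~$G[C_2]$, so every vertex of~$C_2$ is within distance~$1$ of the unique path from~$y$ through the center to the nearest community of~$\mc'$) shows~$C_2\subseteq N^{11}_G[\cup_{\widetilde C\in\mc'}\widetilde C]$. The main technical step will be this last distance count; the exact constant depends on how tightly one uses the diameter-$2$ bound inside each community together with the shared vertex~$y$. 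The looser bound of~$12$ is immediate from the triangle inequality; saving one unit uses that~$y$ itself is the shared vertex (distance~$0$ to itself) rather than having to step through the center of~$C_2$ twice.
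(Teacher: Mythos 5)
Your proposal takes essentially the same route as the paper's proof: part~(a) packs vertex-disjoint cyclic subgraphs $G'[N^4_G[C]]$ via \Cref{vier-nachbarschaft} to lower-bound the feedback edge number of any solution, and part~(b) starts from the solution of \Cref{at least one empty in local cycle}, uses inclusion-maximality of $\mc'$ together with the diameter-$2$ bound inside communities to place the community with empty potential-center set in a constant-radius ball around $V' \coloneqq \bigcup_{\widetilde C\in\mc'}\widetilde C$, and then carries the partner community along through a shared vertex. Your explicit case split on whether $C_1\in\mc'$ is subsumed by the paper's uniform argument, since $C_1\subseteq V'$ already yields $N^4_G[C_1]\cap N^4_G[V']\neq\emptyset$.

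One remark on the constant. Your honest bookkeeping gives $C_1\subseteq N^{10}_G[V']$ and $C_2\subseteq N^2_G[y]\subseteq N^{12}_G[V']$; the proposed saving of one unit does not hold in general, because the universal vertex of $C_2$ need not lie in $C_1\cap C_2$, so $C_2\not\subseteq N^1_G[y]$. For comparison, the paper's own chain $N^4_G[\widetilde C]\cap N^4_G[V']\neq\emptyset \Rightarrow \widetilde C\cap N^7_G[V']\neq\emptyset \Rightarrow \widetilde C\subseteq N^9_G[V'] \Rightarrow C\cup D\subseteq N^{11}_G[V']$ already appears to drop a unit at the first implication (the triangle inequality only gives $\widetilde C\cap N^8_G[V']\neq\emptyset$), so $12$ may be the honest radius on both sides. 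Since any fixed constant radius suffices for the lemma's downstream use, this is immaterial to the argument's validity, but the ``save one'' step should not be presented as established.
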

\begin{proof}
First, we show that~$I$ is a no-instance of~\Stars if~$|\mc'|> t$.
Since~$\pot(C) = \emptyset$ for each community~$C\in\mc'$,~\Cref{vier-nachbarschaft} implies that for each solution~$G'$, $G'[N^4_G[C]]$ is not acyclic.
Since by definition of a cycle packing, $G'[N^4_G[C]]$ and~$G'[N^4_G[D]]$ are vertex-disjoint for distinct~$C$ and~$D$ in~$\mc'$.
Hence, the feedback edge number of~$G'$ is at least~$|\mc'| > t$.
Hence, $I$ is a no-instance of~\Stars.

Next, we show the second part of the statement.
Recall that the diameter of~$G[C]$ is at most two for each~$C\in \mc$.
Let~$V' \coloneqq  \cup_{C\in \mc'}C$.
Since~$\mc'$ is inclusion-maximal, for each community~$\widetilde{C} \in \mc$ with~$\pot(\widetilde{C}) = \emptyset$, $N^4_G[\widetilde{C}] \cap N^4_G[V']$ is non-empty.
Consequently, $\widetilde{C} \cap N^7_G[V']$ is non-empty and thus~$\widetilde{C} \subseteq N^9_G[V']$.
Due to~\Cref{at least one empty in local cycle}, there is an optimal solution~$G'$ such that if communities~$C$ and~$D$ of~$\mc$ induce a local cycle, then~$\pot(C) = \emptyset$ or~$\pot(D) = \emptyset$.
Hence, for each local cycle~$D$ in~$G'$, there is a community~$\widetilde{C}\in \mc$ with~$\pot(\widetilde{C}) = \emptyset$ such that all vertices of~$D$ are contained in~$N^2_G[\widetilde{C}]$.
Since~$\widetilde{C} \subseteq N^9_G[V']$ for each~$\widetilde{C}\in \mc$ with~$\pot(\widetilde{C}) = \emptyset$, for each community~$\widehat{C}\in\mc$ that induces a local cycle with some community~$\widehat{C}'\in\mc$, $\widehat{C}\cup \widehat{C}' \subseteq N^{11}_G[V']$. 
\end{proof}

Hence, we are finally able to prove~\Cref{fpt t delta}.
\begin{proof}[of~\Cref{fpt t delta}]

\end{proof}
\fi

\subsection{\Con}

%It is known that one can check in linear time whether an instance of \Con where~$G$ is a clique with~$t=0$ is a yes-instance~\cite{duchet1978,flament1978,JohnsonP87,slater1978}.
%These proofs, however, do not provide algorithms to find such a guaranteed solution.
Korach and Stern presented an algorithm for \Con where~$G$ is a clique and~$t=0$ with running time~$\Oh(c^4n^2)$~\cite{KorachS03}.
This result was then improved by Klemz et al.~\cite{KMN14} who provided an $\Oh(m\cdot(c+\log(n)))$-time algorithm for \Con with~$t=0$.
Guttmann-Beck et al.~\cite{GuttmannBeckSS19} presented a similar algorithm for \UCon with~$t=0$.

\iflong
The algorithms of Klemz et al.~\cite{KMN14} and Guttmann-Beck et al.~\cite{GuttmannBeckSS19} for~$t=0$ first construct an auxiliary edge-weight function~$\omega$ for the underlying graph~$G$ and then use Kruskal's algorithm~\cite{K56} to find a minimum spanning tree with respect to~$\omega$ which, as they show, corresponds to a solution of the \UCon instance.
Recall that the communities~$\mathcal{C}$ define a \emph{hypergraph}~$(V(G),\mathcal{C})$. 
For the weighted case, Klemz et al.~\cite{KMN14} define an isomorphic hypergraph and then use the algorithm for the unweighted case. 
In the following, we present an, in our opinion, simpler algorithm for \Con with~$t=0$.
Our approach does not need another hypergraph, instead it directly uses Kruskal's algorithm~\cite{K56} by exploiting a different auxiliary weight function of the  underlying graph~$G$. 

%First, we show that the case~$t=0$ can be solved in polynomial-time.
%Second, we show that \UCon{} is NP-hard even for~$t=1$.
\iflonglong
\subparagraph*{Polynomial-time algorithm for~$t=0$.}\fi

%We use Kruskal's algorithm~\cite{K56} on the graph of the \Con{} instance, equipped with a specific edge-weight function based on the community structure to solve \Con in polynomial time.
%Here, $\alpha(n)$ is the inverse of the single-valued Ackermann-function.

\iflonglong
First, we show the statement for instances~$I$ for which the hypergraph is connected.
In \cref{thm:sparse-conn-with-l-equals-n-minus-1} we show a slightly stronger statement: a solution with minimal weight~$b'$ is computed, if one exists.
To verify that~$I$ is a yes-instance, if and only if~$b'\le b$.
Second, we use this algorithm to handle the case that the hypergraph is not connected.\fi

\subparagraph{A simpler algorithm.}
We first show our result for connected hypergraphs.
Let~$I=(G=(V,E), \mathcal{C},\omega, n-1,b)$ be an instance of \Con{} where the hypergraph~$\mathcal{H}=(V,\mathcal{C})$ is connected.
If~$G$ is not connected, we output that there is no solution for~$I$.
Otherwise, there is at least one spanning tree for~$G$.
First, we construct an edge-weight function~$\widetilde{\omega}$ for which we then apply Kruskal's algorithm~\cite{K56}.
The weight~$\widetilde{\omega}$ of an edge~$e$ in~$G$ is the number of communities containing edge~$e$ plus~$q(e)\coloneqq \frac{x-\omega(e)}{x}$.
Here, $x-1$ is the maximum edge weight of~$G$.
Note that~$z<\widetilde{\omega}(e)<z+1$ where~$z$ is the number of communities containing edge~$e$.
Also observe that for two edges~$e$ and~$e'$ that are contained in the same number of communities, $\widetilde{\omega}(e)\ge\widetilde{\omega}(e')$ if and only if~$\omega(e)
\le \omega(e')$.
Second, we use Kruskal's algorithm~\cite{K56} to find a maximum-weight spanning tree~$T$ on~$G$ equipped with~$\widetilde{\omega}$.
%We show that the edges of such a spanning tree in correspond to a solution~$G'$ of~$G$ of minimal weight, if one exist for~$t=0$.
%More precisely, we compute the minimal weight~$b'$ of a solution, if one exists, and then we compare~$b'$ with~$b$ to decide whether we have a yes- or a no-instance.
If this spanning tree~$T$ is a solution for~$I$, we output~$T$.
Otherwise, we output that there is no solution for~$I$.

    \begin{theorem}[$\star$]
        \label{thm:sparse-conn-with-l-equals-n-minus-1}
        Let~$I=(G=(V,E), \mathcal{C},\omega, n-1,b)$ be an instance of \Con{} where the hypergraph~$\mathcal{H}=(V,\mathcal{C})$ is connected.
        Then in time~$\Oh(m\cdot(c+\log(n)))$, the algorithm described above finds a minimum-weight solution for~$I$ or correctly outputs that there is no solution for~$I$.
    \end{theorem}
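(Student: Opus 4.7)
Because both $G$ and the hypergraph $\mathcal{H}=(V,\mathcal{C})$ are connected, any solution $G'=(V,E')$ with $|E'|\le n-1$ must be a spanning tree of $G$ whose restriction $G'[C]$ is connected for every $C\in \mathcal{C}$; call such a spanning tree a \emph{tree support}. The task therefore reduces to deciding whether a tree support exists and, if so, finding one of minimum $\omega$-weight. Writing $z(e)$ for the number of communities containing~$e$ and $k_C(T)$ for the number of connected components of~$T[C]$, I would first record the double-counting identity
\[
\sum_{e \in E(T)} z(e) \;=\; \sum_{C\in\mathcal{C}} |E(T[C])| \;=\; \sum_{C\in\mathcal{C}} \bigl(|C| - k_C(T)\bigr),
\]
valid for every spanning tree $T$. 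Hence $\sum_{e\in T} z(e)\le Z^\star$, where $Z^\star \coloneqq \sum_{C\in\mathcal{C}} (|C|-1)$, with equality if and only if $T$ is a tree support.

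The next step is to show that Kruskal's algorithm with weight $\widetilde{\omega}$ returns a spanning tree $T^\star$ that is lexicographically optimal in the pair $(\sum_{e}z(e),\,-\sum_{e}\omega(e))$. The edge ordering induced by $\widetilde{\omega}$ sorts primarily by $z(e)$ descending and breaks ties by $\omega(e)$ ascending, which is immediate from $z(e)<\widetilde{\omega}(e)<z(e)+1$ and the definition of $q$. For any $L>\omega_{\max}$ the separated weight $w(e)\coloneqq L\cdot z(e)-\omega(e)$ induces the \emph{same} edge ordering, so Kruskal produces the same output $T^\star$ whether driven by $\widetilde{\omega}$ or by $w$. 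If in addition $L>(n-1)\,\omega_{\max}$, then for any two spanning trees $T_1,T_2$ with $\sum_{e\in T_1}z(e)>\sum_{e\in T_2}z(e)$ one has $\sum_{e\in T_1}w(e)>\sum_{e\in T_2}w(e)$; the matroid-greedy property of Kruskal on the graphic matroid, stating that it outputs a maximum-$w$-weight basis, then forces $T^\star$ to first maximise $\sum z(e)$ and, subject to that, minimise $\sum\omega(e)$.

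Combining both parts yields correctness: $T^\star$ achieves $\sum_{e\in T^\star} z(e)=Z^\star$ if and only if a tree support exists, and in that case $T^\star$ is one of minimum $\omega$-weight; otherwise no tree support exists. The algorithm therefore outputs $T^\star$ precisely when it is a tree support with $\omega(E(T^\star))\le b$ and rejects otherwise. The main obstacle, and the only subtle point I anticipate, is that $\widetilde{\omega}$ does not on its own separate the integer part $z(e)$ from the fractional part $q(e)$ once summed over $n-1$ edges, so one cannot argue directly from $\sum_e\widetilde{\omega}(e)$; the invariance of Kruskal's algorithm under order-preserving reweighting via the auxiliary weight $w$ is precisely what rescues the argument. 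For the running time, computing all counts $z(e)$ takes $\Oh(mc)$ time by sweeping over community--edge incidences, Kruskal's algorithm runs in $\Oh(m\log n)$ time via sort and union--find, and the verification of $\sum z(e)=Z^\star$ and $\omega(E(T^\star))\le b$ is linear. The total is $\Oh(m(c+\log n))$, as claimed.
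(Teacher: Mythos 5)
Your proof is correct, but it proceeds by a genuinely different route than the paper. The paper argues locally via an exchange lemma: it shows that whenever Kruskal selects the next edge $\{u,v\}$ under $\widetilde{\omega}$, any minimum-weight solution extending the partial forest $L$ can be transformed into one that additionally contains $\{u,v\}$ without increasing the weight — the crucial observation being that the unique $u$--$v$ path $P$ in such a solution lies entirely inside every community that contains both $u$ and $v$, so any edge $\{a,p\}\in E(P)\setminus L$ satisfies $z(\{a,p\})\ge z(\{u,v\})$, while Kruskal's greedy choice forces $\widetilde{\omega}(\{a,p\})\le\widetilde{\omega}(\{u,v\})$, hence $z(\{a,p\})=z(\{u,v\})$ and $\omega(\{a,p\})\ge\omega(\{u,v\})$, and the swap $\{a,p\}\mapsto\{u,v\}$ is valid. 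You instead argue globally: the double-counting identity $\sum_{e\in T}z(e)=\sum_{C}\bigl(|C|-k_C(T)\bigr)\le Z^\star$ characterises tree supports as precisely the spanning trees maximising $\sum_e z(e)$, and the order-equivalent separable weight $w(e)=Lz(e)-\omega(e)$ with $L>(n-1)\omega_{\max}$ makes the lexicographic optimality of Kruskal's output immediate from matroid greediness. Your approach buys conceptual clarity (no case analysis on the path $P$, no need to verify that the swap preserves connectivity of every $G'[C]$) and cleanly isolates the design intent of $\widetilde{\omega}$ as a lexicographic encoding; the paper's approach is more elementary and self-contained in that it never invokes the matroid-greedy optimality theorem but instead re-derives the needed piece of it by hand. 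One small omission: you should note, as the paper does, that the algorithm must first check that $G$ itself is connected (otherwise Kruskal produces only a spanning forest and the identity-based certificate must be handled with care), but this is easily patched and does not affect the substance of the argument.
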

\todomi{bis hier iflong?}

       \iflong
\begin{proof}
Observe that the correctness follows, if the following statement is shown:
There exists a solution~$G'$ for~$I$ containing the edges in~$L$ of total weight~$b'$ if and only if there exists a solution~$G^*$ for~$I$ containing the edges in~$L\cup\{\{u,v\}\}$ of weight~$b'$.
Here, $b'$ is the minimal weight of any solution and~$\{u,v\}$ is the next edge chosen by Kruskal's algorithm, after exactly the edge of~$L$ have been chosen so far.

$(\Rightarrow)$ Let~$G'$ be a solution containing the edges~$L$.
We now show that there also exists a solution~$G^*$ containing the edges~$L\cup\{\{u,v\}\}$.
If~$\{u,v\}\in E(G')$, nothing is to show and thus, in the following we assume that~$\{u,v\}\notin E(G')$.

Let~$\widetilde{\omega}(\{u,v\})=y$ and let~$z\coloneqq \lfloor y\rfloor$.
By definition of~$\widetilde{\omega}$, there exist exactly~$z$ pairwise distinct communities~$C_i$ with~$u\in C_i$ and~$v\in C_i$ for each~$i\in[1,z]$.
Since~$\{u,v\}\notin E(G')$, for each~$i\in[1,z]$ there exists a path~$P_i\coloneqq (u=p^i_1, p^i_2, \ldots, p^i_z=v)$ such that each edge of~$P_i$ is contained in~$E(G')$.
Since~$G'$ is acyclic, we observe that~$P_{i_1}=P_{i_2}$ for each two communities~$C_{i_1}$ and~$C_{i_2}$ containing~$u$ and~$v$.
%Now, if~$z\ge 2$, consider the paths~$P_{i_1}$ and~$P_{i_2}$ connecting~$u$ and~$v$ in~$C_{i_1}$ and~$C_{i_2}$, respectively.
%Observe that if~$P_{i_1}\ne P_{i_2}$, then the edges of~$P_{i_1}$ and~$P_{i_2}$ together are not acyclic, a contradiction to the fact that~$G'$ is a solution, that is, $G'$ is a spanning tree for~$G$.
%Thus, for each disjoint indices~$i_1$ and~$i_2$ of~$z$, we obtain that~$P_{i_1}=P_{i_2}$.\todom{geht das nicht leichter? 'u und v sind in einer gemeinsamen community, also gibt es einen pfad zwischen ihnen in G'. G' ist kreisfrei, also ist der pfad eindeutig'}
Now, let~$P$ be the unique path in~$G'$ connecting~$u$ and~$v$.
Observe that~$V(P)\subseteq C_i$ for each~$i\in[1,z]$.
Thus, $\widetilde{\omega}(e)\ge z$ for each edge~$e$ on~$P$.

Recall that~$\{u,v\}$ is an edge chosen by Kruskal's algorithm.
Thus, $L\cup\{\{u,v\}\}$ is acyclic in~$G$.
Hence, $E(P)\not\subseteq L$ and there exists at least one edge~$\{a,p\}\in E(P)$ which is not contained in~$L$.
Observe that since~$\{u,v\}$ is chosen by Kruskal's algorithm, each other edge~$e$, such that~$L\cup\{e\}$ is acyclic, has weight at most~$y$.
Thus, $z\le\widetilde{\omega}(\{a,p\})\le y$.
Recall that~$\lfloor\widetilde{\omega}(\{a,p\})\rfloor=z=\lfloor\widetilde{\omega}(\{u,v\})\rfloor$ and that we have~$\{a,p,u,v\}\subseteq C_i$ for each~$i\in[1,z]$.
In other words, $\{u,v\}$ and~$\{a,p\}$ are contained in the same collection of communities.

Now, we can replace the edge~$\{a,p\}$ of the solution~$G'$ with the edge~$\{u,v\}$.
Since each community, which contains~$a$ and~$p$, contains~$u$ and~$v$ as well, we obtain a solution~$G^*$ containing the edges~$L\cup\{\{u,v\}\}$.
Since~$\widetilde{\omega}(\{u,v\})\ge \widetilde{\omega}(\{a,p\})$ and~$\lfloor\widetilde{\omega}(\{a,p\})\rfloor=z=\lfloor\widetilde{\omega}(\{u,v\})\rfloor$, we have~$q(\{u,v\})\ge q(\{a,p\})$ and thus~$\omega(\{u,v\})\le\omega(\{a,p\})$.
Since~$G'$ has total weight~$b'$, we thus conclude that~$G^*$ also has weight~$b'$.

$(\Leftarrow)$ If there exists a solution~$G^*$ for~$(G, \mathcal{C}, \ell)$ of weight~$b'$ containing the edges in~$L\cup\{\{u,v\}\}$, then~$G^*$ clearly also contains the edges in~$L$.

\textbf{Running Time:}
Initially, we check whether~$G$ is connected in $\Oh(n + m)$~time.
Afterwards, in $\Oh(m\cdot|\mathcal{C}|)$~time we compute the  edge-weight function~$\widetilde{\omega}$, followed by applying Kruskal's algorithm in $\Oh(m\cdot \log(n))$~time~\cite{K56}, to find a maximum weight spanning tree~$T$ for~$G$ with respect to~$\widetilde{\omega}$.
Finally, we check whether the spanning tree~$T$ is a solution in $\Oh(m\cdot|\mathcal{C}|)$~time.
Hence, we obtain the stated overall running time.
\end{proof}
\fi

\iflong
    In \Cref{thm:sparse-conn-with-l-equals-n-minus-1} the instances of \Con are restricted to connected hypergraphs.
    Next, we generalize the algorithm to hypergraphs with any number of connected components.

    \begin{corollary}[$\star$]
        Let~$I=(G=(V,E), \mathcal{C}, \omega, n-x,b)$ be an instance of \Con where~$x$ is the number of connected components of the hypergraph~$\mathcal{H}=(V,\mathcal{C})$.
        Such an instance~$I$ is solvable in $\Oh(m\cdot(c+\log(n)))$~time.
    \end{corollary}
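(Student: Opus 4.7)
The plan is to reduce the general case to the connected case handled by \Cref{thm:sparse-conn-with-l-equals-n-minus-1} by decomposing the hypergraph along its connected components.

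First, I would compute the connected components $\mathcal{H}_1=(V_1,\mathcal{C}_1),\dots,\mathcal{H}_x=(V_x,\mathcal{C}_x)$ of the hypergraph $\mathcal{H}=(V,\mathcal{C})$; this can be done in $\Oh(n+\sum_{C\in\mathcal{C}}|C|)$ time via BFS on the incidence graph. Observe that since each community lies entirely within one component, the connectivity requirement of any solution decouples across components: no solution edge needs to cross two distinct $V_i$. For each $i\in[1,x]$, form the subinstance $I_i=(G[V_i],\mathcal{C}_i,\omega|_{E(G[V_i])},|V_i|-1,b)$ on a connected hypergraph; apply the algorithm of \Cref{thm:sparse-conn-with-l-equals-n-minus-1} to each $I_i$ to obtain either a minimum-weight solution $G'_i$ with exactly $|V_i|-1$ edges or a proof that no solution exists. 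If any $I_i$ has no solution, then $I$ is a no-instance. Otherwise, let $G' \coloneqq \bigcup_{i=1}^x G'_i$ (spanning $V$), which has exactly $\sum_{i=1}^x (|V_i|-1) = n-x$ edges. Output $G'$ as the candidate solution if $\omega(E(G')) \leq b$, and no-instance otherwise.

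The correctness argument has two directions. For soundness, any $G'_i$ satisfies the connectivity property for all $C\in\mathcal{C}_i$ by \Cref{thm:sparse-conn-with-l-equals-n-minus-1}, and since every $C\in\mathcal{C}$ belongs to exactly one $\mathcal{C}_i$, the union $G'$ satisfies it for all communities. For completeness, any solution $H$ for $I$ with at most $n-x$ edges restricts to a spanning subgraph $H_i$ of $G[V_i]$ that is a solution for $I_i$ (each $C\in\mathcal{C}_i$ is connected in $H$ and hence, since $C\subseteq V_i$, in $H_i$); moreover, $\sum_i |E(H_i)| \leq n-x$ and each $H_i$ must have at least $|V_i|-1$ edges to connect its communities' vertices as required, forcing equality $|E(H_i)|=|V_i|-1$. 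Thus $\omega(E(H)) = \sum_i \omega(E(H_i)) \geq \sum_i \omega(E(G'_i)) = \omega(E(G'))$ by minimum-weight optimality of each $G'_i$, so $\omega(E(G')) \leq b$.

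For the running time, \Cref{thm:sparse-conn-with-l-equals-n-minus-1} runs in $\Oh(m_i(c_i+\log n_i))$ on component $i$, where $m_i=|E(G[V_i])|$, $c_i=|\mathcal{C}_i|$, and $n_i=|V_i|$. Summing over components yields $\Oh\bigl(\sum_i m_i(c_i+\log n_i)\bigr) \subseteq \Oh(m(c+\log n))$, since $\sum_i m_i \leq m$, $\sum_i c_i = c$, and $\log n_i \leq \log n$. Adding the $\Oh(n+m)$ time for the initial decomposition and the final weight check gives the claimed bound. The only mild obstacle is verifying the completeness direction cleanly, but it follows from the simple observation that a spanning subgraph satisfying connectivity on every community must contain a spanning tree of each component-induced subgraph and therefore use at least $|V_i|-1$ edges per component.
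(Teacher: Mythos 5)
Your proposal is correct and follows essentially the same route as the paper: decompose the hypergraph into its connected components, solve each component with the algorithm of \Cref{thm:sparse-conn-with-l-equals-n-minus-1}, take the union, and compare the total weight against~$b$. The paper's own proof states the correctness without elaboration, whereas you spell out the counting argument (each $H_i$ needs at least $|V_i|-1$ edges, so the budget forces all edges inside components and equality throughout), which is a helpful clarification but not a different approach.
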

    \iflong
    \begin{proof}    
    
        Initially, we split the hypergraph~$\mathcal{H}=(V,\mathcal{C})$ into its connected components $\mathcal{H}_1=(V_1,\mathcal{C}_1), \dots, \mathcal{H}_x=(V_x,\mathcal{C}_x)$.
        For each of these components~$\mathcal{H}_i$ we use the algorithm described in \Cref{thm:sparse-conn-with-l-equals-n-minus-1} to compute the minimal weight~$b_i$ of a solution for~$\mathcal{H}_i$ which has exactly $|V_i|-1$~edges.
        If any~$\mathcal{H}_i$ does not have a solution which is a tree, we output no. 
        Otherwise, we calculate the sum~$B$ of the minimal weights~$b_i$.
        If~$B\leq b$, we output the union of the solutions of all components~$\mathcal{H}_i$.
        Otherwise, we output no,
        
Note that this algorithm is correct.
        For the overall running time, note that the connected components~$\mathcal{H}_i$ can be determined in linear time.
		Then, according to \Cref{thm:sparse-conn-with-l-equals-n-minus-1}, the component~$\mathcal{H}_i$ can be solved in $\Oh(m_i\cdot(c_i+\log(n_i)))$~time, where~$m_i$ is the number of edges of the subgraph induced by~$V_i$, $c_i=|\mathcal{C}_i|$, and~$n_i=|V_i|$.
       Thus, the instance~$I$ of \Con is solvable in $\Oh(m\cdot(c+\log(n)))$~time.
    \end{proof}\fi
    
\fi
\iflonglong
\subparagraph*{NP-Hardness for~$t\ge 1$.}\fi
\fi

Next, we show that the positive result for~$t=0$ cannot be lifted to~$t=1$; in this case \Con{} is NP-hard.
% Note that Fluschnik and Kellerhals provided a reduction from \textsc{Hitting Set} to \UCon~ showing that \UCon{} has no polynomial kernel for~$\ell$, unless NP $\subseteq$ coNP/poly~\cite[Proposition 3]{FK21A}.
% This reduction also shows that \UCon{} is \W{2}-hard with respect to~$t$.
% Our result strengthens their result. \todok{I would remove this discussion. They do not claim W2-hardness, so ``strengthening'' something they do not care about is a bit weak sauce and needs a lengthy wind-up. Also, we need space.}
We obtain our result by reducing from the~\NP-hard~\textsc{Hamiltonian Cycle}-problem~\cite{ANS80,GJ79}, which asks for a given graph~$G=(V,E)$ if there is a~\emph{Hamiltonian cycle} in~$G$, that is, a cycle containing each vertex of~$G$ exactly once.
    \begin{theorem}
    \label{thm-con-np-h-for-t=1}
    Let~$\Pi$ be a graph class on which~\textsc{Hamiltonian Cycle} is~\NP-hard, then        \UCon is~\NP-complete on~$\Pi$ even if~$t=1$.
    \end{theorem}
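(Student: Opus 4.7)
The plan is to reduce from \textsc{Hamiltonian Cycle} on the graph class $\Pi$. Given an instance $G=(V,E)$ of \textsc{Hamiltonian Cycle} with $n\ge 3$, I construct the \UCon{} instance with the same underlying graph $G$ (which lies in $\Pi$), unit weights, budget $\ell:=n$, and community collection
\[
\mathcal{C} \;:=\; \{\,V\,\}\;\cup\;\{\,V\setminus\{v\}\,:\,v\in V\,\}.
\]
NP-membership of \UCon{} is immediate; the construction is clearly polynomial and produces at most $n+1$ communities.

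The key idea is that the community $V$ forces the solution $G'$ to be connected, while each community $V\setminus\{v\}$ forces $v$ to not be a cut vertex of $G'$. Together with the tight budget $\ell=n$ this will force $G'$ to be a Hamiltonian cycle of $G$. For the easy direction, if $H$ is a Hamiltonian cycle of $G$, then $G':=H$ has $n$ edges, is connected, and for every $v\in V$ the graph $H-v$ is a path and hence connected, so all community constraints are met.

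The main work is the converse. Suppose $G'=(V,E')$ is a solution with $|E'|\le n$. Community $V$ gives that $G'$ is connected, so every vertex has degree at least $1$. If some vertex $w$ had $\deg_{G'}(w)=1$ with unique neighbor $u$, then $w$ would be isolated in $G'[V\setminus\{u\}]$, a set of size $n-1\ge 2$; this contradicts the community $V\setminus\{u\}$. Hence every vertex has degree at least $2$ in $G'$, giving $|E'|\ge n$ by handshaking, so $|E'|=n$ and every vertex has degree exactly $2$. A $2$-regular connected graph on all of $V$ is precisely a Hamiltonian cycle of $G$.

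For the parameter bound, the solution has $n$ edges, $n$ vertices, and one connected component, so $t=\ell-n+x=n-n+1=1$, yielding NP-hardness already for $t=1$. The only subtle point (the ``obstacle'') is ruling out degree-$1$ vertices: this is exactly what the $n$ size-$(n-1)$ communities buy, and it is what separates this construction from the $t=0$ case, where a spanning tree solution with leaves would be permitted.
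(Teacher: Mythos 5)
Your proof is correct and takes essentially the same approach as the paper: reduce from \textsc{Hamiltonian Cycle} on the same graph $G$, use the $n$ size-$(n-1)$ communities $V\setminus\{v\}$ to force minimum degree~$2$, and conclude from the tight budget $\ell = n$ that the solution is a Hamiltonian cycle. The one cosmetic difference is that you add the extra community $V$ to guarantee connectivity of $G'$ explicitly; the paper omits this community, and connectivity is still forced implicitly (for $n\ge 3$, if $G'$ had two components $A,B$ with $|A|\ge 2$, then for any $v\in A$ the community $V\setminus\{v\}$ would induce a disconnected subgraph), so your addition is harmless but not necessary.
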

    \begin{proof}
        Let $I\coloneqq (V,E)$ be an instance of \textsc{Hamiltonian Cycle} containing at least three vertices. 
        We obtain an equivalent instance~$I' \coloneqq (G=(V,E),\mc,\ell)$ of~\UCon as follows:
        We start with an empty set~$\mc$ and add for each vertex~$v\in V$ a community~$C_v \coloneqq V\setminus \{v\}$ to~$\mc$.
        Finally, we set~$\ell  \coloneqq |V|$.
\iflong

\fi           
        Note that~$t= \ell-n+x$, where~$x$ is the number of connected components of the graph.
        Thus, $t =n-n+1=1$.
        
        \iflong\else The correctness is deferred to the full version of this article.\fi
\iflong
        \textbf{Correctness:} We show that~$I$ is a yes-instance of~\textsc{Hamiltonian Cycle} if and only if~$I'$ is a yes-instance of~\UCon.

        $(\Rightarrow)$
Suppose that there is a Hamiltonian cycle in~$I$.
Let~$E_C$ be the edges of this Hamiltonian cycle and let~$G_C  \coloneqq (V, E_C)$.
We show that~$G_C$ is a solution for~$I'$.
Since~$C_v \coloneqq V \setminus \{v\}$ and~$G_C$ is a cycle the subgraph~$G_C[C_v]$ is connected.
Moreover, $G_C$ contains~$|V| =\ell$ edges.
Hence, $I'$ is a yes-instance of~\UCon.

$(\Leftarrow)$
Let~$G_C  \coloneqq (V, E_C)$ be a solution of~$I'$. 
We show that~$G_C$ is a Hamiltonian cycle.
To this end, we show that each vertex~$v\in V$ is incident with at least two edges in~$G_C$.
Assume towards a contradiction that there is a vertex~$v\in V$ which is incident with at most one edge in~$G_C$.
Since~$v$ is contained in each community of~$\mc\setminus \{C_v\}$ and each community of~$\mc$ has size~$|V|-1 \geq 2$, $v$ is incident with at least one edge of~$G_C$.
Suppose that~$w$ is the unique neighbor of~$v$ in~$G_C$.
Hence, $v$ has no neighbor in the community~$C_w = V \setminus \{w\}$ and thus~$G_C[C_w]$ is not connected, a contradiction.
Consequently, each vertex of~$V$ is incident with at least two edges in~$G_C$.
Since~$G_C$ contains exactly~$\ell = |V|$ edges, this implies that~$G_C$ is a Hamiltonian cycle.
Thus, $I$ is a yes-instance of~\textsc{Hamiltonian Cycle}.\fi
    \end{proof}
\iflong
\iflong

 Since \textsc{Hamiltonian Cycle} is NP-hard on cubic bipartite planar graphs~\cite{ANS80}, we obtain the following.\else From a hardness result for \textsc{Hamiltonian Cycle}~\cite{ANS80}, we obtain the following.\fi\fi
    
    \begin{corollary}
    \label{cor-con-nph-delta-3}
    \UCon{} is NP-complete even if~$t=1$ on subcubic bipartite planar graphs.
    \end{corollary}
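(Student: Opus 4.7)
The plan is to obtain this corollary as a direct consequence of \Cref{thm-con-np-h-for-t=1}, since the reduction presented in the proof of that theorem does not alter the input graph at all: the constructed \UCon{} instance uses~$G=(V,E)$ exactly as the graph of the Hamiltonian cycle instance, and only augments it with communities of the form~$C_v=V\setminus\{v\}$ together with the budget~$\ell=|V|$. Consequently, if the input graph for \textsc{Hamiltonian Cycle} belongs to some graph class~$\Pi$, the resulting underlying graph for \UCon{} still belongs to~$\Pi$, and the value of~$t$ remains equal to~$1$ by the same computation~$t=\ell-n+x=1$.

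Therefore, to establish the corollary it suffices to exhibit a graph class~$\Pi$ on which \textsc{Hamiltonian Cycle} is NP-hard and which is contained in the class of subcubic bipartite planar graphs. This is exactly what the cited result of Akiyama, Nishizeki, and Saito~\cite{ANS80} provides: \textsc{Hamiltonian Cycle} is NP-hard on cubic (and in particular subcubic) bipartite planar graphs. Plugging this class into \Cref{thm-con-np-h-for-t=1} yields the claimed NP-completeness of \UCon{} on subcubic bipartite planar graphs with~$t=1$.

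Membership in NP is immediate: a candidate solution~$G'$ can be verified in polynomial time by checking that~$|E(G')|\leq\ell$ and that~$G'[C]$ is connected for every community~$C\in\mathcal{C}$. No additional work is needed; the main (and only) obstacle is simply invoking the correct prior hardness result for \textsc{Hamiltonian Cycle}, which is off-the-shelf from~\cite{ANS80}.
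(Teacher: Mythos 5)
Your proposal is correct and matches the paper's own proof: both derive the corollary directly from \Cref{thm-con-np-h-for-t=1} by taking~$\Pi$ to be the class of cubic bipartite planar graphs (hence subcubic bipartite planar), on which \textsc{Hamiltonian Cycle} is NP-hard by~\cite{ANS80}. The observation that the reduction leaves the input graph unchanged is exactly why the theorem is stated for an arbitrary graph class~$\Pi$, so nothing further is needed.
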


    \section{\Stars Parameterized by the Number of Communities}
    \label{sec:hyperedges}
    
 \UCon is NP-hard even for~$c=7$~\cite[Proposition~4]{FK21A}.
In contrast, \Stars admits an XP-algorithm for~$c$ with running time~$n^{\Oh(c)}$: 
For each community~$C$, test each of the at most $|C|\leq n$ potential center vertices. Then, for each potential solution check whether it consists of at most $\ell$~edges of total weight at most~$b$. 
For \Stars, we show that it is unlikely that this brute-force algorithm can be improved, by showing W[1]-hardness. For \UStars, we obtain an FPT-algorithm for~$c$.
%In contrast, we show that \UStars admits an FPT-algorithm for~$c$.
%This FPT-result is quite brittle: \Stars is \W{1}-hard with respect to~$c$, as we show.

\iflong
\subsection{\W{1}-hardness for \Stars}

We first show that the simple $n^{\Oh(c)}$-time algorithm for \Stars cannot be lifted to an FPT-algorithm.
%We show that the FPT-algorithm for the number of communities~$c$ for \UStars{} (see \Cref{thm-stars-fpt-c}) cannot be extended to the weighted version, unless FPT$=$\W{1}.
\fi

\begin{theorem}[$\star$]
\label{thm-weighted-stars-w-hard-for-c}
\Stars is \W{1}-hard when parameterized by~$c$ even if~$G$ is a clique and each edge weight is~$1$ or~$2$.
\end{theorem}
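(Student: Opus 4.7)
The plan is to give a parameterized reduction from the W[1]-hard \textsc{Multicolored Clique} problem, parameterized by the number of color classes $k$. Given an instance $(H,V_1,\dots,V_k)$ of \textsc{Multicolored Clique}, I would construct a \Stars instance with $c = k + \binom{k}{2}$ communities on a clique $G$ whose edge weights lie in $\{1,2\}$, so that a $k$-multicolored clique in $H$ corresponds bijectively to a solution meeting a tight weight budget~$b$.

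The vertex set of $G$ would consist of a representative $p_v$ for each $v\in V(H)$, together with small auxiliary gadget vertices, one per color class and one per pair of color classes. For each $i\in[k]$ I would introduce a \emph{selector community} $C_i$ whose minimum-weight center choice designates the vertex of $V_i$ chosen as the color-$i$ representative; for each $i<j$ I would introduce a \emph{pair community} $C_{ij}$ whose minimum-weight center choice designates an edge of $H$ between $V_i$ and $V_j$. The weights would be assigned so that all edges ``intended'' by a selector star have weight $1$, while edges incident to vertices of a pair community have weight $1$ precisely when they are consistent with an actual edge of $H$, and weight $2$ otherwise. Because $C_i$ shares vertices with $C_{ij}$, the edges jointly contributed by the two stars would be counted only once, and I would arrange that this shared contribution has total weight $1$ exactly when the representative chosen in $C_i$ is an endpoint of the edge chosen in $C_{ij}$. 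A careful edge-by-edge accounting then yields a fixed lower bound on the cardinality and total weight of the union of all star edges, with equality iff the selections are globally consistent, i.e.\ iff they form a multicolored $k$-clique.

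The main obstacle will be making the ``tight budget'' work using only weights $1$ and $2$: since a weight-$2$ edge costs only one unit more than a weight-$1$ edge, each forbidden inconsistency must produce at least one additional weight-$2$ edge that cannot be absorbed by sharing. I would handle this by attaching to each pair community $C_{ij}$ a dedicated private auxiliary vertex appearing in no other community, so that any inconsistency between the centers of $C_i$, $C_j$, and $C_{ij}$ forces a weight-$2$ edge incident to this private vertex; since such edges cannot be shared with any other star, they cannot be amortized. Setting $b$ equal to the exact cost of a consistent configuration and $\ell$ equal to the corresponding edge count would make the instance solvable iff a multicolored clique exists. Since $c=k+\binom{k}{2}=\Oh(k^{2})$, the construction is a valid parameterized reduction, establishing W[1]-hardness of \Stars with respect to~$c$ in the restricted setting where $G$ is a clique and edge weights lie in $\{1,2\}$.
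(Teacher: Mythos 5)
Your proposal takes a genuinely different route from the paper, and the core mechanism you describe has a gap that I do not see how to close.

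The paper reduces from \textsc{Regular Multicolored Clique} using only $c=\kappa$ communities, one per color class, with no pair communities at all. Each $C_i := V(G)\cup S_i$, where $S_i$ is a private block of $n(G)^3$ vertices; the weights force the center of $C_i$ to lie in $V_i$, and the $r$-regularity of the input graph makes each star's total weight independent of which vertex of $V_i$ is chosen as center. The $\binom{\kappa}{2}$ edges between the $\kappa$ chosen centers lie in every pair of stars but are counted once in the solution, and the budget is set so tightly that every such shared edge must have weight~$2$, which by the weight assignment means it is an edge of the original graph. That shared-edge accounting \emph{is} the entire consistency check; regularity is what makes the tight budget work, and there is no separate edge-selection gadget.

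Your construction has two concrete gaps. First, the center of a community in \Stars{} is a single vertex, yet you want the center of a pair community $C_{ij}$ to ``designate an edge of $H$''; your vertex set contains only one representative $p_v$ per vertex of $H$ plus a handful of auxiliary vertices, and no edge-vertices, so it is never explained how one vertex encodes a choice of edge. Second, and more seriously, the inconsistency detector you propose cannot work: the private auxiliary vertex $z_{ij}$ appears only in $C_{ij}$, so in any solution the only relevant solution edge incident to $z_{ij}$ is the one to the center of $C_{ij}$ (or $z_{ij}$ is itself that center). The weight of that single edge is a function of the center of $C_{ij}$ alone; it has no way to ``see'' the centers of $C_i$ or $C_j$, so it cannot be forced to weight~$2$ when the three centers are mutually inconsistent. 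Consistency between centers of different communities can only be read off from edges inside the intersection of those communities, which is exactly what the paper's shared-center-edge accounting exploits; a vertex private to one community cannot carry that information. Consequently the amortization concern you raise at the end is not resolved by the private-vertex device; the paper resolves it with the regularity hypothesis instead.
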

\iflong
\begin{proof}
We provide a parameter-preserving reduction from the \W{1}-hard \textsc{Regular Multicolored Clique} problem~\cite{cygan2015parameterized}.
The input consists of an $r$-regular graph~$G$, an integer~$\kappa$, and a partition~$(V_1,\ldots V_\kappa)$ of~$V(G)$.
The question is whether there exists a clique of size~$\kappa$ containing exactly one vertex of each partite set~$V_i$.

We construct an equivalent instance~$(G',\mathcal{C},\omega,\ell,b)$ of \Stars as follows.
The vertex set of~$V(G')$ consists of a copy of~$V(G)$ and $\kappa$~additional vertex sets~$S_i$,~$i\in[\kappa]$, each of size~$n(G)^3$. We make~$G'$ a clique by adding all edges between vertices of~$V(G')$. To complete the construction, we specify the communities and edge weights.
First, for each color class~$i\in[1,\kappa]$, we add a community~$C_i\coloneqq V(G)\cup S_i$.
Afterwards, we define the edge weights: For each edge~$\{a,b\}\in E(G')$ such that~$\{a,b\}\in E(G)$, we set~$\omega(\{a,b\})\coloneqq 2$, for each edge~$\{a,b\}$ with~$a\in S_i$ and~$b\notin V_i$, we set~$\omega(\{a,b\})\coloneqq 2$, for each edge~$\{a,b\}$ with~$a\in S_i$ and~$b\in S_j$, we set~$\omega(\{a,b\})\coloneqq 2$, and for each remaining edge~$\{a,b\}\in E(G')$ we set~$\omega(\{a,b\})\coloneqq 1$.
Finally, we set~$\ell\coloneqq \kappa\cdot(n(G)^3+n(G)-1)-\binom{\kappa}{2}$ and~$b\coloneqq \kappa\cdot (n(G)^3+n(G)-1+r)-2\binom{\kappa}{2}$. 
Note that~$c=\kappa$, it thus remains to show the equivalence of the two instances.

$(\Rightarrow)$
Let~$K\coloneqq \{v_i: v_i\in V_i\}$ be a multicolored clique in~$G$.
We construct a solution of~$(G',\mathcal{C},\omega,\ell,b)$ as follows.
For community~$C_i$ we choose vertex~$v_i$ as its center.
Observe that each community has exactly $n(G)^3+n(G)$~vertices. 
Thus, in the solution $n(G)^3+n(G)-1$~edges are contained in each community.
Some edges are, however, counted twice:
Since the centers of all communities are disjoint and since the center of community~$C_i$ is contained in each other community we count each edge where both endpoints are centers twice.
Thus, the solution consists of exactly $\kappa\cdot(n(G)^3+n(G)-1)-\binom{\kappa}{2}$~edges.
Observe that since the graph~$G$ is $r$-regular and since~$v_i\in V_i$, the weight of all edges contained in the spanning star of community~$C_i$ is exactly~$n(G)^3+n(G)-1+r$.
The total weight of the solution is thus~$\kappa\cdot (n(G)^3+n(G)-1+r)$ minus the weight of the edges where both endpoints are centers since these edges are counted twice in the sum. Since~$K$ is a clique, all these edges have weight~$2$. Thus, the solution has weight~$\kappa\cdot (n(G)^3+n(G)-1+r)-2\binom{\kappa}{2}=b$.

$(\Leftarrow)$
Let~$G''$ be a solution of~$(G',\mathcal{C},\omega,\ell,b)$. First, observe that for each~$S_i$, the solution contains at least~$n(G)^3-1$ edges with one endpoint in~$S_i$ and the other endpoint in~$V(G)\cup S_i$. Let~$E_i$ denote this edge set for each~$S_i$. Since these edge sets~$E_i$ are pairwise disjoint, the total weight of these edges in~$E_1\cup \ldots \cup E_\kappa$ is at least~$\kappa\cdot (n(G)^3-1)$. Now, if for some~$C_i$, the center vertex is either contained in~$S_i$ or in~$V(G)\setminus V_i$, then the set~$E_i$ has weight at least~$2\cdot (n(G)^3-1)$. Thus, in that case the total edge weight of $E_1\cup \ldots \cup E_\kappa$ is at least~$(\kappa+1)\cdot (n(G)^3-1)$ which exceeds~$b$ since~$n^3(G)>\kappa\cdot (n(G)+r)$. As a consequence, the center the center~$c_i$ of each community~$C_i$ is contained in the color class~$V_i$.
We show that~$K\coloneqq \{c_i: i\in[1,\kappa]\}$ is a clique of size~$\kappa$ in~$G$. First, observe that since each center~$c_i$ has edge weight one to all vertices in~$C_i$ except to its~$r$ neighbors in~$G$, the sum of the edge weights in the spanning star of~$C_i$ is exactly $n(G)^3+n(G)-1+r$. Hence, the total edge weight of all spanning stars is~$\kappa\cdot (n(G)^3+n(G)-1+r)$ minus the weight of all edges that are contained in multiple spanning stars. Observe that these edges are exactly the~$\binom{\kappa}{2}$ edges between  the~$\kappa$ distinct center vertices. 
%The weight of all edges in the solution is~$k\cdot(n(G)\cdot k + r)$ minus the sum of all weights of edges which are counted twice.
Now since~$b=\kappa\cdot(n^3+n(G)-1+r)-2\binom{\kappa}{2}$,  the sum of all weights of edges which are counted twice is~$2\binom{\kappa}{2}$. Hence, each edge that is contained in multiple spanning stars has weight~$2$. Thus, each edge~$\{c_i,c_j\}$ between two centers~$c_i$ and~$c_j$ has weight~$2$ in~$G'$. 
Hence,~$\{c_i,c_j\}$ is an edge in~$G$, and we may conclude that~$K$ is a clique of size~$\kappa$ in~$G$.
\end{proof}
\fi

% Recall that there is a generic reduction from general instances of \Stars to instances of \Stars where the input graph~$G$ is a clique:
% Simply, replace each non-edge by an edge of weight~$b+1$. 
% %Finally, we provide a generic reduction to transfer hardness results of \Stars to instances of \Stars in which the input graph is a clique:\todok{We actually mentioned this reduction already in the intro. Thus, rather recall?}
% %Let~$I$ be an instance of \Stars with weight constraint~$b$.
% %Now, we construct an equivalent instance~$I'$ of \Stars.
% %Initially, $I'$ is a copy of~$I$.
% %Then, for each non-edge in the graph of instance~$I$, we add an edge in the graph of instance~$I'$ and assign this edge a weight of~$b+1$.
% %The equivalence of the two instances follows by the observation that no solution of~$I'$ can contain any edge of weight~$b+1$.
% Hence, we obtain the following.

% \begin{corollary}
% \label{cor-hardness-weighted-stars-cliques-c}
% \Stars is \W{1}-hard when parameterized by~$c$, even if~$G$ is a clique and there are only three different edge weights.
% \end{corollary}

  \iflong  
    \subsection{Analysis of \UStars}

Next, we show that, in contrast to the weighted case, \UStars admits an FPT-algorithm with respect to~$c$.
We also show that a polynomial kernel for~$c$ is unlikely.

\fi
    \begin{theorem}[$\star$]
    \label{thm-stars-fpt-c}
        \UStars is solvable in $\Oh(4^{c^2} \cdot (n+m) + n^2\cdot c)$~time.
    \end{theorem}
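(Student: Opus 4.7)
The plan is to exploit the fact that, since there are only $c$ communities, the vertices of $G$ are partitioned by their \emph{type} $\tau(v)\coloneqq\{C\in\mathcal{C} : v\in C\}$ into at most $2^c$ classes. Two vertices of the same type are interchangeable from the viewpoint of community membership, so whether a type-$T$ vertex may serve as the center of a spanning star of some community $C\in T$ depends only on its adjacencies in $G$ to the other vertices of $C$. In an $\Oh(n^2\cdot c)$ preprocessing phase, I would compute the type of every vertex and, for every pair $(v,C)$ with $v\in C$, record whether $v$ is adjacent in $G$ to every other vertex of~$C$; call such a~$v$ a \emph{candidate} for~$C$.

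Next, enumerate all \emph{templates}, where a template consists of (a)~a partition $\mathcal{P}$ of $\mathcal{C}$ specifying the groups of communities that are to share a common center, and (b)~for each class $\mathcal{L}\in\mathcal{P}$, a guessed type $T_\mathcal{L}\subseteq \mathcal{C}$ for the shared center (necessarily with $\mathcal{L}\subseteq T_\mathcal{L}$). The number of partitions of $\mathcal{C}$ is at most $c^c\leq 2^{c^2}$, and the number of ways to assign each of the at most $c$ classes a type from $2^{\mathcal{C}}$ is at most $(2^c)^c=2^{c^2}$, so in total at most $4^{c^2}$ templates are enumerated.

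For each template, test \emph{realizability}: for every class $\mathcal{L}$, search via the precomputed candidate table for a vertex of type $T_\mathcal{L}$ that is simultaneously a candidate for every community of $\mathcal{L}$; equivalently, a vertex of type $T_\mathcal{L}$ adjacent in $G$ to every vertex of $V_\mathcal{L}\coloneqq\bigcup_{C\in\mathcal{L}}C$ other than itself. If some class admits no such vertex, discard the template. Otherwise, the resulting solution graph is the union of the stars $\{\{v_\mathcal{L},u\} : u\in V_\mathcal{L}\setminus\{v_\mathcal{L}\}\}$ over all $\mathcal{L}\in\mathcal{P}$. Its edge count can be computed in $\Oh(n+m)$ time and, crucially, does not depend on which specific candidate is chosen inside each class: the overlap between the stars of two distinct classes $\mathcal{L}_1,\mathcal{L}_2$ consists of at most the single edge $\{v_{\mathcal{L}_1},v_{\mathcal{L}_2}\}$, which is present precisely when $T_{\mathcal{L}_1}\cap\mathcal{L}_2\neq\emptyset$ and $T_{\mathcal{L}_2}\cap\mathcal{L}_1\neq\emptyset$, a condition determined by the template alone. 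Return the lightest realizable template whose edge count is at most $\ell$.

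Correctness is based on the fact that every feasible solution induces a partition of $\mathcal{C}$ via the relation ``share a common center vertex'' together with a well-defined type for each shared center, so the enumeration considers at least one template matching an optimum. The main obstacle in the analysis will be to argue that guessing only the \emph{type} of each shared center loses no generality: if an optimum uses centers $v^*_\mathcal{L}$ of types $T^*_\mathcal{L}$, then the template $(\mathcal{P}^*,(T^*_\mathcal{L})_\mathcal{L})$ is enumerated, is realizable (witnessed by $v^*_\mathcal{L}$), and yields the same edge count as the optimum by the overlap description above. Combining the preprocessing with $\Oh(n+m)$ work per template gives the overall running time $\Oh(4^{c^2}\cdot(n+m)+n^2\cdot c)$.
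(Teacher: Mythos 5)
Your approach is correct and attains the claimed running time, but it is organized differently from the paper's proof, so a comparison is worthwhile.

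The paper partitions $V(G)$ into \emph{center-twin} classes, where two vertices are center twins if they lie in the same set of communities \emph{and} are potential centers of the same set of communities; this yields at most $4^c$ classes. It then proves an exchange lemma (the analogue of your "loses no generality" step): in some minimum solution, all communities whose centers lie in a common twin class share the same fixed representative center. With this lemma, a depth-$c$ search tree branching over twin classes (branching factor at most $4^c$) gives the $4^{c^2}$ bound. You instead enumerate combinatorial \emph{templates} directly — a partition of $\mathcal{C}$ into center-sharing groups and, for each group, the type (community-membership set) of the shared center. Your types are coarser than the paper's twin classes, since you defer the "is this vertex actually adjacent to everything in the community?" check to realization time; in exchange, you avoid having to prove the exchange lemma, because enumerating the partition directly already covers the one induced by any minimum solution. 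The crucial observation that the number of edges of the resulting union of stars is a function of the template alone plays the role of the exchange lemma: it guarantees that any realization of the optimum's template is as good as the optimum.

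One place you should be more careful: the statement that two classes' stars overlap in "at most the single edge $\{v_{\mathcal{L}_1},v_{\mathcal{L}_2}\}$" tacitly assumes $v_{\mathcal{L}_1}\neq v_{\mathcal{L}_2}$. If $T_{\mathcal{L}_1}=T_{\mathcal{L}_2}$, the realizability search could return the same vertex for both classes, in which case the stars may share many edges and the actual graph has fewer edges than the template's predicted count. This does not break correctness — for the template induced by an optimum, the centers are distinct by definition of the partition, so the prediction is exact; and for every other template, the prediction only overestimates, which is safe — but the claim "does not depend on which specific candidate is chosen" should be reformulated as "the template-determined count upper-bounds, and for the optimum's template equals, the number of edges." A final small remark: the count of partitions is the Bell number $B_c$, for which the estimate $B_c\leq c^c\leq 2^{c^2}$ you use is valid.
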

\iflong
    \begin{proof}
Our FPT-algorithm relies on branching on a specific partition of the vertices.
We say that two vertices~$u$ and~$w$ are \emph{center twins} if~$a)$~$u$ and~$w$ are contained in the same set of communities~$\mathcal{C}'\subseteq\mathcal{C}$ and if~$b)$ both vertices are \emph{potential centers} of the same set of communities~$\mathcal{D'}\subseteq\mathcal{C}'\subseteq\mathcal{C}$.
Here, a vertex~$v$ is a potential center for a community~$C$, if both~$v\in C$ is a vertex of~$C$ and each other vertex of~$C$ is adjacent to~$v$.
Based on center twins, we can partition the vertex set~$V(G)$ into sets~$\mathcal{T}\coloneqq \{T_1,\ldots,T_p\}$, that is, all vertices in each set~$T_i$ are center twins. 
%Note that~$T_i$ is a clique for each~$i\in[p]$.
        Next, we make an observation about yes-instances of \UStars regarding center twins.

        \begin{clm}
            \label{clm:twins-observation-in-sparse-stars}
            Let~$I$ be an yes-instance of \UStars with a minimal solution~$G'=(V,E')$ where~$\cen_{G'}: \mathcal{C} \to V(G)$ denotes a mapping of each community~$C\in\mathcal{C}$ to one vertex of~$C$ which is universal for~$C$ in~$G'$.
            Let~$u \in V(G)$ be a vertex, let~$T_u$ be the partite set of~$\mathcal{T}$ containing~$u$, and let~$\mathcal{C}_u \subseteq \mathcal{C}$ be the set of communities having its center, with respect to~$\cen_{G'}$, in~$T_u$.
            Then, there exists a solution~$G''=(V,E'')$ with~$|E''| \leq |E'|$ such that~$u$ is universal for each community of~$\mc_u$ in~$G''$ and for each community~$C\in \mc\setminus \mc_u$, $\cen_{G'}(C)$ is universal for~$C$ in~$G''$.
        \end{clm}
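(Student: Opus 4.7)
The plan is to construct $G''$ explicitly by rerouting stars: for each $C\in\mc_u$, replace the star in $G'[C]$ centered at $\cen_{G'}(C)\in T_u$ with the star on $C$ centered at $u$, and keep every other star unchanged. Concretely, set
\[
E''=\bigcup_{C\notin\mc_u}E(S_C)\;\cup\;\bigcup_{C\in\mc_u}\{\{u,w\}:w\in C\setminus\{u\}\},
\]
where $S_C$ denotes the star on $C$ centered at $\cen_{G'}(C)$. That $u$ is universal for each $C\in\mc_u$ in $G''$ and that $\cen_{G'}(C)$ remains universal for each $C\notin\mc_u$ in $G''$ follows immediately from the definition of center twin: every $v\in T_u$ lies in exactly the same set of communities as $u$ and is a potential center of exactly the same set of communities as $u$, so $u\in C$ and $u$ is adjacent in $G$ to every vertex of $C$ whenever $C\in\mc_u$.

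The main work is the edge-count bound $|E''|\le|E'|$. Since $G'$ is minimum, I may assume $E'=\bigcup_C E(S_C)$. Writing $F=\bigcup_{C\notin\mc_u}E(S_C)$, $B=\bigcup_{C\in\mc_u}E(S_C)$, $X_u=\bigcup_{C\in\mc_u}C$, and $A=\{\{u,w\}:w\in X_u\setminus\{u\}\}$, we have $E'=F\cup B$ and $E''=F\cup A$, so it suffices to show $|A\setminus F|\le|B\setminus F|$ by exhibiting an injection $\iota\colon A\setminus F\to B\setminus F$. The enabling symmetry is that for any $v\in T_u$ and any vertex $w$, the edge $\{v,w\}$ lies in $F$ if and only if $\{u,w\}$ does. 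Indeed, every edge of $F$ has the form $\{\cen_{G'}(C'),y\}$ with $\cen_{G'}(C')\notin T_u$, so membership in $F$ of an edge incident to a vertex of $T_u$ depends only on whether its other endpoint is such a center $\cen_{G'}(C')$ and whether the $T_u$-endpoint lies in $C'$; by the center-twin property $v\in C'\Leftrightarrow u\in C'$, and the claim follows.

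To define $\iota$, for each $w\in X_u\setminus\{u\}$ pick $C_w\in\mc_u$ with $w\in C_w$ and $\cen_{G'}(C_w)\neq w$ whenever possible, let $v_w:=\cen_{G'}(C_w)\in T_u$, and set $\iota(\{u,w\}):=\{v_w,w\}$. In the degenerate case where no such $C_w$ exists (which forces every $C\in\mc_u$ with $w\in C$ to have center $w$, so $\{u,w\}\in B$ directly), set $\iota(\{u,w\}):=\{u,w\}$. Each image edge $\{v_w,w\}$ lies in $S_{C_w}\subseteq B$ because $w\in C_w\setminus\{v_w\}$, and by the symmetry lies in $B\setminus F$. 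Injectivity follows because every image contains $w$ as one endpoint, so distinct $w$'s give distinct image edges; a small case check rules out collisions between the default and degenerate branches. The main obstacle I anticipate is precisely this bookkeeping, in particular verifying the symmetry $\{v,w\}\in F\Leftrightarrow\{u,w\}\in F$ and handling the degenerate case cleanly; once these pieces are in place, the inequality $|A\setminus F|\le|B\setminus F|$ and hence the claim follow.
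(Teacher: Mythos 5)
Your construction of~$G''$ and the decomposition~$E'=F\cup B$, $E''=F\cup A$ match the paper's approach, and the symmetry~$\{v,w\}\in F\Leftrightarrow\{u,w\}\in F$ for~$v\in T_u$ is correct and useful. However, the injectivity of~$\iota$ is not established as claimed, and in fact fails for the~$\iota$ you define. The slogan ``every image contains~$w$ as one endpoint, so distinct~$w$'s give distinct image edges'' implicitly assumes the~$w$-endpoint is recoverable from the image, but this breaks when~$w\in T_u$: then both endpoints of~$\{v_w,w\}$ lie in~$T_u$, and two preimages can swap roles. (Note~$T_u\subseteq X_u$ whenever~$\mc_u\neq\emptyset$, so~$w\in T_u$ does arise.) Concretely, suppose~$\{u,w_1,w_2\}\subseteq T_u$ and~$C_1,C_2\in\mc_u$ with~$\cen_{G'}(C_1)=w_1$,~$\cen_{G'}(C_2)=w_2$, and~$\{u,w_1,w_2\}\subseteq C_1\cap C_2$. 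Then~$\{u,w_1\},\{u,w_2\}\in A\setminus F$, neither is degenerate, and the only available choices force~$\iota(\{u,w_1\})=\{w_2,w_1\}=\iota(\{u,w_2\})$. Your ``small case check between branches'' does not address this collision, which occurs entirely within the default branch.

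The repair is to map~$\{u,w\}$ to itself whenever~$\{u,w\}\in B\setminus F$ (not only in the degenerate case), and to reroute to~$\{\cen_{G'}(C_w),w\}$ only when~$\{u,w\}\notin B$. In the rerouted branch one then checks that~$w$ is \emph{not} the~$G'$-center of any community of~$\mc_u$ (otherwise~$\{u,w\}\in B$), so each rerouted image has exactly one endpoint that is a~$\mc_u$-center, and~$w$ becomes recoverable; rerouted images also avoid~$u$ as an endpoint, so they cannot collide with self-mapped edges. This is in effect the paper's decomposition, which maps~$E''\setminus E'$ into~$E'\setminus E''$ and splits on whether the second endpoint~$b$ of~$\{u,b\}$ is a~$G'$-center. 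So the approach is sound and essentially the same as the paper's, but the injection as you specify it is broken and the injectivity argument needs this extra case distinction.
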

        \begin{claimproof}
            We define a new mapping~$\cen''$ of each community~$C\in\mathcal{C}$ to one vertex of~$C$ which serves as a center for~$C$ as follows: for each community~$C\in\mathcal{C}_u$, that is, each community having its center~$\cen_{G'}(C)$ in~$T_u$, we set~$\cen''(C)\coloneqq u$, and for each remaining community~$C$, that is, $C\in\mathcal{C}\setminus\mathcal{C}_u$, we set~$\cen''(C) \coloneqq \cen_{G'}(C)$.
            Next, we define based on the mapping~$\cen''$ a new solution~$G''\coloneqq (V,E'')$. 
            We set~$E'' \coloneqq \{\{\cen''(D),v\} : D \in \mathcal{C}, v\in D\setminus\{\cen''(D)\}\}$.
            Observe that~$G''[D]$ contains a spanning star for each community~$D$.

            Thus, it remains to show that~$|E''| \leq |E'|$.
To show this statement, we show that for each edge~$\{x,b\}\in E''\setminus E'$ there exists a corresponding edge in~$E'\setminus E''$ which accounts for~$\{x,b\}$.
Observe that~$G'$ and~$G''$ only differ in the centers of all communities in~$\mathcal{C}_u$.
Hence, one of the endpoints of~$\{x,b\}$, say~$x$, is~$u$.
Next, we consider all possibilities for~$b$.

\textbf{Case 1:}~$b$ is not a center of any community in~$\mathcal{C}$ with respect to~$G'$\textbf{.}
Since~$\{u,b\}\in E''$, there exists a community~$C\in \mathcal{C}$ such that~$\cen''(C)=u$, $b\in C$, and~$\cen''(C)=w\ne u$.
Note that~$C\in\mathcal{C}_u$.
Since~$b$ is no center of any community in~$G'$, we observe that~$\{w,b\}\in E'\setminus E''$ and thus~$\{w,b\}$ accounts for~$\{u,b\}$.

\textbf{Case 2:}~$b$ is the center of some community~$C_b$ with respect to~$G'$\textbf{.}
First, we consider that case that~$C_b\in \mathcal{C}_u$.
Since~$u$ and~$b$ are center twins, we have~$u\in C_b$.
Thus, $\{u,b\}$ is contained in~$E'$ and in~$E''$ as well.
Second, we consider the case that~$C_b\in\mathcal{C}\setminus\mathcal{C}_u$.
Let~$C_w$ be a community in~$\mathcal{C}_u$ whose center is moved from~$w$ in~$G'$ to~$u$ in~$G''$.
If there is no community~$C_b\in\mathcal{C}\setminus \mathcal{C}_u$ which contains~$u$ and thus also not~$w$, then, similar to Case~$1$, $\{w,b\}$ is an edge in~$E'\setminus E''$ and thus~$\{w,b\}$ accounts for~$\{u,b\}$.
This is possible since~$u$ and~$w$ are center twins.
Hence, in the following we can safely assume that there is a community~$C_b\in\mathcal{C}\setminus \mathcal{C}_u$ which contains~$u$ and thus also~$w$.
Since~$b$ is the center of~$C_b$ and since~$u,w\in C_b$, we observe that both~$\{u,b\}$ and~$\{w,b\}$ are contained in~$E'\cap E''$.

Hence, $|E''|\le |E'|$.   
        \end{claimproof}

    \begin{algorithm}[t]
        \SetAlgoNoEnd
        \DontPrintSemicolon
        \caption{Algorithm for \UStars: $\mathit{SolveSNWS}$}
        \label{alg:SolveSS}
        \SetKwInOut{Input}{Input}\SetKwInOut{Output}{Output}
        \Input{$G=(V,E), \mathcal{C}, \ell, E'$}
        \Output{A solution~$G'$ with at most $\ell$ edges which contains all edges in~$E'$, or $no$}

        \If{$\mathcal{C} = \emptyset$} {\label{alg:termination-condition-search-tree}
        \If{$\ell < |E'|$}{\label{alg:ensure-sparsified-graph-size}
        \Return{no}
        }
        \Return{$G'=(V,E')$}\label{alg:output-yes}
        }

        $C \leftarrow$ pick community of $\mathcal{C}$\;\label{alg:select-community-param-community-count}
        \ForAll{$T\in\mathcal{T}$ such that $\can(T)$ is a potential center for~$C$}{\label{alg:begin-branching-param-community-count}
            $E'' \leftarrow E' \cup \{\{\can(T), v\} : v \in C \setminus \{\can(T)\}) $\;\label{alg:ensure-community-contains-star}
            \If{$\mathit{SolveSNWS}(G, \mathcal{C} \setminus \{C\}, \ell, E'')$ returns a graph $G'$}{
                \Return{$G'$}
            }
        }\label{alg:end-branching-param-community-count}

        \Return{no}
    \end{algorithm}

        \textbf{Algorithm:}
        Let~$\mathcal{T}$ be a partitioning into center twins.
        According to \Cref{clm:twins-observation-in-sparse-stars}, we know that if there is a solution with at most $\ell$~edges, than there is also a solution with at most $\ell$~edges such that all communities having its center in some~$T_i\in\mathcal{T}$ have the same center.
        Thus, we can safely assume that all communities~$C$ having their center in~$T_i$ have the same arbitrary but fixed vertex~$\can(T_i)\in T_i$ as their center.
        
        Our depth-bounded search tree algorithm \textit{SolveSNWS} is shown in \Cref{alg:SolveSS}.
We branch for each community~$C \in \mathcal{C}$ and each partite set~$T$ of~$\mathcal{T}$, whether the center of~$C$ is in~$T$, that is, whether~$\can(T)$ is the center of~$T$.
Note that it is a necessary that~$\can(T)$ is a potential center of~$C$.
        After a center has been selected for each community, it is checked whether the resulting solution has at most $\ell$ edges.
        If some branch leads to a solution, then~$I$ is a yes-instance of \UStars, and otherwise, if no branch leads to a solution with at most $\ell$~edges, then~$I$ is a no-instance of \UStars.

        \textbf{Correctness:} We show that $I$ is a yes-instance of \UStars if and only if the algorithm returns a graph.

        $(\Rightarrow)$
        Let~$I$ be a yes-instance of \UStars and let~$G'=(V,E')$ be a solution.
        By applying \Cref{clm:twins-observation-in-sparse-stars}, we are able to obtain a graph~$G''=(V,E'')$ such that~$\cen_{G''}(C_i) = \cen_{G''}(C_j)$ for all pairs of communities~$C_i, C_j \in \mathcal{C}$ where~$c_{G'}(C_i)$ and~$c_{G'}(C_j)$ are center twins.
        Observe that~$G''$ is found by traversing the search tree built by \Cref{alg:SolveSS} and selecting center~$\can(T)$ for each center twin class.

        $(\Leftarrow)$
        Let~$G'$ be the solution returned by \Cref{alg:SolveSS} in Line~\ref{alg:output-yes}.
        The conditional statement in Line~\ref{alg:ensure-sparsified-graph-size} ensures that~$G'$ has at most $\ell$~edges.
        The termination condition in Line~\ref{alg:termination-condition-search-tree} ensures together with the statement in Line~\ref{alg:ensure-community-contains-star} that
        for each community~$C_i \in \mathcal{C}$ the induced subgraph~$G'[C_i]$ contains a spanning star.
        Hence, a solution with at most $\ell$~edges has been found by the algorithm which implies that~$I$ is a yes-instance of \UStars.

        \textbf{Running time:}
        Let~$C \in \mathcal{C}$ be the community selected for branching in Line~\ref{alg:select-community-param-community-count}.        
        The branching vector of the branching in the loop in Lines~\ref{alg:begin-branching-param-community-count}-\ref{alg:end-branching-param-community-count} has at most $|\mathcal{T}|$~entries of value~$1$, each decreasing the number of communities by~$1$.
We now bound~$|\mathcal{T}|$: the vertex set~$V(G)$ can be partitioned into at most~$2^c$ sets according to part~$a)$ of the definition of center twins, that is, vertices belonging to the same set of communities.
Then, for each such set there are up to $2^c$~possibilities on how to partition it according to part~$b)$ of the definition of center twins.
Thus, $|\mathcal{T}| \le  2^{c} \cdot 2^{c} = 4^{c}$.
        Since the maximum depth of the search tree is bounded by~$c$, the search tree has size~$\Oh(4^{c^2})$.
        Since the partition~$\mathcal{T}$ is computable in $\Oh(n^2\cdot c)$~time and
since the edge set in Line~\ref{alg:ensure-community-contains-star} is computable in $\Oh(n+m)$~time, the overall running time of~$\Oh(4^{c^2} \cdot (n+m) + n^2\cdot c)$ follows.
    \end{proof}

To complete the parameterized complexity picture, we show that a polynomial kernel for~$c$ is unlikely.
\fi
    \begin{theorem}[$\star$]
    \label{thm-stars-no-poly-kernel-c}
        \UStars parameterized by~$c$ does not admit a polynomial kernel unless NP~$\subseteq$ coNP/poly.
    \end{theorem}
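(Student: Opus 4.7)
The plan is to establish the negative kernelization result via an OR-cross-composition into \UStars parameterized by~$c$, using the framework of Bodlaender, Jansen, and Kratsch. As the source problem, we use an NP-hard restriction of \UStars itself, for example the one with clique input and community size at most~$4$ given by \Cref{thm:eth-based-lowerbound-for-n-square}, or the subcubic-like variant of \Cref{cor-stars-np-h-const-delta}.

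We first fix a polynomial equivalence relation on instances under which two input \UStars instances are equivalent whenever their structural parameters (vertex count, edge count, community count~$c_0$, maximum community size, solution budget~$\ell_0$) all coincide and their communities are enumerated in a canonical order. Since these parameters are bounded by polynomials in the input length, this yields only polynomially many equivalence classes for inputs of length at most~$N$, as required by the cross-composition framework.

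Given $t$~equivalent instances $I_1,\dots,I_t$, we compose them into one \UStars instance~$I^\star$ as follows. Its vertex set consists of $t$~vertex-disjoint copies of the input graphs, together with $\Oh(\log t)$~\emph{selector vertices} that will encode a binary index $i^\star \in [1,t]$ of the ``active'' instance, plus a small number of universal auxiliary vertices. We then add $\Oh(\log t)$ \emph{selector communities} whose spanning-star requirement forces any solution to consistently specify exactly one such index~$i^\star$ through the chosen centers. Finally, for each position~$r \in [1,c_0]$ in the canonical ordering of communities, we add a single \emph{aggregated community} $C^\star_r$ that contains the vertices of the $r$-th community of every~$I_j$ together with selector vertices; $C^\star_r$ is designed so that it is trivially satisfied in every copy~$j \neq i^\star$ (via a universal auxiliary vertex activated by the selector state) and enforces exactly the original star requirement in copy~$j = i^\star$. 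With a suitable budget~$\ell^\star$, the resulting instance $I^\star$ is a \textsc{yes}-instance iff at least one $I_j$ is. Since the total community count of~$I^\star$ is $c_0 + \Oh(\log t)$, which is polynomial in the input size plus $\log t$, this is a valid OR-cross-composition, and the theorem follows.

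The main obstacle will be designing the aggregated communities~$C^\star_r$ and the selector gadget so that $(i)$ the center selection in a star solution can indeed simultaneously deactivate the star requirement for all $t-1$~inactive copies while preserving it for the unique active one, and $(ii)$ the global edge budget~$\ell^\star$ is tight enough that no solution can ``cheat'' by partially activating multiple copies. This is typically handled by making the selector vertices universally adjacent to the auxiliary ``deactivation'' vertex of each copy, while the corresponding non-deactivation vertex in the active copy is forced to serve as star center by a parity/consistency argument over the selector communities. All remaining verifications (polynomial running time of the composition, correctness of the OR semantics) are routine.
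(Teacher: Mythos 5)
Your route is genuinely different from the paper's: you attempt an OR-cross-composition from an NP-hard variant of \UStars itself, whereas the paper gives a short polynomial-parameter transformation from \textsc{Hitting Set} parameterized by the number of sets~$|\mathcal{F}|$. In the paper's construction, the vertex set is $U \cup Z$ with~$|Z| = |U|^3$, $U$ is a clique, $Z$ is an independent set, all $U$--$Z$ edges are present, each set~$F$ becomes a community~$F\cup Z$, and the budget is~$k|U|^3 + |U|^2$. Because $Z$ is independent, every community center must lie in $U$, and the $|U|^3$-padding makes the budget so tight that at most $k$ vertices of $U$ can serve as centers; those centers form a hitting set. This yields $c = |\mathcal{F}|$ with essentially no gadgetry.

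Your cross-composition plan has a gap that I do not think can be patched in the form sketched. The spanning-star condition for a single aggregated community $C^\star_r$ is an \emph{existential condition over one center vertex}: there must be a single vertex of $C^\star_r$ adjacent, in the solution, to \emph{every other} vertex of $C^\star_r$, including those in all $t$ copies. There is no mechanism in $\Pi$-\textsc{NWS} by which the community requirement can be ``deactivated'' for $t-1$ copies while being enforced for the active one, because the star must span the whole community, not one copy at a time. If you try to realize the deactivation via a universal auxiliary vertex adjacent to all of $C^\star_r$ in $G$, then that vertex is always a potential center, so every $C^\star_r$ can be satisfied regardless of $i^\star$, and the OR semantics fails unless the budget $\ell^\star$ rules this out; but the ``honest'' solution (using the auxiliary for the inactive copies) needs roughly the same order of edges as the ``cheat,'' so the budget calibration you would need is exactly the part you defer as the ``main obstacle,'' and it is not at all clear it can be made to work. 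Conversely, if the auxiliary vertex is not universal for $C^\star_r$, then the center must be a vertex inside one of the copies, which would need cross-copy edges to every other copy's $r$-th community -- an amount of edges incompatible with any reasonable budget. So the core gadget you rely on is missing, and without it the plan does not go through. I would recommend switching to the paper's direct PPT from \textsc{Hitting Set}, which sidesteps all of these difficulties.
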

    \iflong
    \begin{proof}
        We give a polynomial parameter transformation from \textsc{Hitting Set} which does not admit a polynomial kernel when parameterized by the number~$|\mathcal{F}|$ of sets, unless NP $\subseteq$ coNP/poly~\cite{cygan2015parameterized}.
        Let~$I_\textsc{HS} \coloneqq (U,\mathcal{F},k)$ be an instance of \textsc{Hitting Set}.
        We assume that~$|U| \ge 2$.
        
        Now, we construct an equivalent instance of \UStars.
        An example of the construction is shown in \Cref{fig:construction-example-no-poly-kernel-sparse-stars}.
        We start by defining the corresponding graph~$G$.
        First, we add a copy of~$U$ to~$V(G)$, and second, we add a vertex set~$Z$ of size~$|U|^3$ to~$V(G)$.
Next, we add edges, such that~$U$ is a clique and such that each vertex in~$U$ is adjacent to each vertex in~$Z$.   
Now, we define the communities: for each set~$F\in\mathcal{F}$, we add a community~$C_F\coloneqq\{F\cup Z\}$.     
        Finally, we set the parameter~$\ell \coloneqq k \cdot |U|^3 + |U|^2$.
        Let~$I_{\textsc{SNS}}$ denote the resulting instance of \textsc{Stars NWS}.
        Note that~$c = |\mathcal{F}|$, $|E| = {|U|\choose2} + |U|^4$ and~$|V| = |U| + |U|^3$.

        \begin{figure}[t]
            \centering
            \begin{tikzpicture}[x=0.75pt,y=0.75pt,yscale=-1.25,xscale=1.25]
%uncomment if require: \path (0,235); %set diagram left start at 0, and has height of 235

%Straight Lines [id:da41044386097454577]
                \draw    (209.98,59.94) -- (280.18,119.94) ;
%Straight Lines [id:da8550837315397126]
                \draw    (209.98,59.94) -- (160.08,120.04) ;
%Straight Lines [id:da8488843012190189]
                \draw    (209.98,59.94) -- (219.98,99.94) ;
%Straight Lines [id:da225410776846807]
                \draw    (280.18,119.94) -- (160.08,120.04) ;
%Straight Lines [id:da9793218479154459]
                \draw    (199.98,59.94) -- (160.08,120.04) ;
%Straight Lines [id:da09730158181888215]
                \draw    (199.98,59.94) -- (219.98,99.94) ;
%Straight Lines [id:da050662423628642594]
                \draw    (239.98,59.94) -- (280.18,119.94) ;
%Straight Lines [id:da29762425614675914]
                \draw    (219.98,99.94) -- (160.08,120.04) ;
%Straight Lines [id:da25036959945219905]
                \draw    (280.18,119.94) -- (219.98,99.94) ;
%Straight Lines [id:da35265338680017255]
                \draw    (239.98,59.94) -- (219.98,99.94) ;
%Straight Lines [id:da3577527428758345]
                \draw    (199.98,59.94) -- (280.18,119.94) ;
%Straight Lines [id:da016624084512765758]
                \draw    (240.08,59.94) -- (160.08,120.04) ;
                
%Shape: Circle [id:dp16723142827684745]
                \draw  [fill={rgb, 255:red, 255; green, 255; blue, 255 }  ,fill opacity=1 ] (282.68,120) .. controls (282.64,121.38) and (281.49,122.47) .. (280.11,122.44) .. controls (278.73,122.4) and (277.64,121.26) .. (277.68,119.88) .. controls (277.71,118.5) and (278.86,117.41) .. (280.24,117.44) .. controls (281.62,117.48) and (282.71,118.62) .. (282.68,120) -- cycle ;
                
%Shape: Circle [id:dp07484692941217275]
                \draw  [fill={rgb, 255:red, 255; green, 255; blue, 255 }  ,fill opacity=1 ] (222.48,100) .. controls (222.45,101.38) and (221.3,102.47) .. (219.92,102.44) .. controls (218.54,102.4) and (217.45,101.26) .. (217.48,99.88) .. controls (217.52,98.5) and (218.66,97.41) .. (220.04,97.44) .. controls (221.42,97.48) and (222.52,98.62) .. (222.48,100) -- cycle ;
                
%Shape: Circle [id:dp05173096426419321]
                \draw  [fill={rgb, 255:red, 255; green, 255; blue, 255 }  ,fill opacity=1 ] (162.58,120.1) .. controls (162.54,121.48) and (161.4,122.57) .. (160.02,122.54) .. controls (158.64,122.5) and (157.55,121.35) .. (157.58,119.97) .. controls (157.61,118.59) and (158.76,117.5) .. (160.14,117.54) .. controls (161.52,117.57) and (162.61,118.72) .. (162.58,120.1) -- cycle ;
                
%Shape: Circle [id:dp4041845797259914]
                \draw  [fill={rgb, 255:red, 255; green, 255; blue, 255 }  ,fill opacity=1 ] (32.57,50) .. controls (32.53,51.38) and (31.38,52.47) .. (30,52.43) .. controls (28.62,52.39) and (27.53,51.24) .. (27.57,49.86) .. controls (27.61,48.48) and (28.76,47.39) .. (30.14,47.43) .. controls (31.52,47.47) and (32.61,48.62) .. (32.57,50) -- cycle ;
                
%Shape: Circle [id:dp31148134768435964]
                \draw  [fill={rgb, 255:red, 255; green, 255; blue, 255 }  ,fill opacity=1 ] (92.57,50) .. controls (92.53,51.38) and (91.38,52.47) .. (90,52.43) .. controls (88.62,52.39) and (87.53,51.24) .. (87.57,49.86) .. controls (87.61,48.48) and (88.76,47.39) .. (90.14,47.43) .. controls (91.52,47.47) and (92.61,48.62) .. (92.57,50) -- cycle ;
                
%Shape: Circle [id:dp9183165497716363]
                \draw  [fill={rgb, 255:red, 255; green, 255; blue, 255 }  ,fill opacity=1 ] (62.57,110) .. controls (62.53,111.38) and (61.38,112.47) .. (60,112.43) .. controls (58.62,112.39) and (57.53,111.24) .. (57.57,109.86) .. controls (57.61,108.48) and (58.76,107.39) .. (60.14,107.43) .. controls (61.52,107.47) and (62.61,108.62) .. (62.57,110) -- cycle ;
                
%Shape: Polygon Curved [id:ds9757630213179496]
                \draw  [blue,line width=0.5mm,dotted] (22.97,50.39) .. controls (21.63,37.72) and (98.43,37.05) .. (96.97,51.06) .. controls (95.5,65.06) and (24.3,63.06) .. (22.97,50.39) -- cycle ;
                
%Shape: Polygon Curved [id:ds5733078489480286]
                \draw  [blue,line width=0.4mm,dashed] (57.63,115.72) .. controls (40.97,111.72) and (72.97,42.39) .. (92.97,45.06) .. controls (112.97,47.72) and (74.3,119.72) .. (57.63,115.72) -- cycle ;
                
%Shape: Circle [id:dp9913648798130666]
                \draw  [fill={rgb, 255:red, 128; green, 128; blue, 128 }  ,fill opacity=1 ] (242.58,60) .. controls (242.54,61.38) and (241.4,62.47) .. (240.02,62.44) .. controls (238.64,62.4) and (237.54,61.26) .. (237.58,59.88) .. controls (237.61,58.5) and (238.76,57.41) .. (240.14,57.44) .. controls (241.52,57.48) and (242.61,58.62) .. (242.58,60) -- cycle ;
                
%Shape: Polygon Curved [id:ds613298033377913]
                \draw  [blue,line width=0.5mm,dotted] (154.63,123.06) .. controls (148.6,113.19) and (186.67,59.42) .. (197.17,53.92) .. controls (207.67,48.42) and (241.17,47.42) .. (246.17,56.42) .. controls (251.17,65.42) and (234.17,99.42) .. (224.19,104.74) .. controls (214.21,110.07) and (160.67,132.92) .. (154.63,123.06) -- cycle ;
                
%Shape: Polygon Curved [id:ds9921603494750693]
                \draw  [blue,line width=0.4mm,dashed] (285.97,123.06) .. controls (294.17,108.92) and (252.17,60.42) .. (243.67,54.42) .. controls (235.17,48.42) and (200.67,48.42) .. (195.17,55.92) .. controls (189.67,63.42) and (208.21,101.07) .. (216.19,105.74) .. controls (224.17,110.42) and (277.77,137.19) .. (285.97,123.06) -- cycle ;
                
%Shape: Circle [id:dp6738813278008782]
                \draw  [fill={rgb, 255:red, 128; green, 128; blue, 128 }  ,fill opacity=1 ] (202.48,60) .. controls (202.45,61.38) and (201.3,62.47) .. (199.92,62.44) .. controls (198.54,62.4) and (197.45,61.26) .. (197.48,59.88) .. controls (197.52,58.5) and (198.66,57.41) .. (200.04,57.44) .. controls (201.42,57.48) and (202.52,58.62) .. (202.48,60) -- cycle ;
%Shape: Circle [id:dp9328780082574688]

                \draw  [fill={rgb, 255:red, 0; green, 0; blue, 0 }  ,fill opacity=1 ] (225.43,60) .. controls (225.42,60.28) and (225.19,60.5) .. (224.92,60.49) .. controls (224.64,60.48) and (224.42,60.25) .. (224.43,59.98) .. controls (224.44,59.7) and (224.67,59.48) .. (224.94,59.49) .. controls (225.22,59.5) and (225.44,59.73) .. (225.43,60) -- cycle ;
                
%Shape: Circle [id:dp6538414346900252]
                \draw  [fill={rgb, 255:red, 0; green, 0; blue, 0 }  ,fill opacity=1 ] (230.43,60) .. controls (230.42,60.28) and (230.19,60.5) .. (229.92,60.49) .. controls (229.64,60.48) and (229.42,60.25) .. (229.43,59.98) .. controls (229.44,59.7) and (229.67,59.48) .. (229.94,59.49) .. controls (230.22,59.5) and (230.44,59.73) .. (230.43,60) -- cycle ;
                
%Shape: Circle [id:dp11596306562972014] 
                \draw  [fill={rgb, 255:red, 0; green, 0; blue, 0 }  ,fill opacity=1 ] (220.43,60) .. controls (220.42,60.28) and (220.19,60.5) .. (219.92,60.49) .. controls (219.64,60.48) and (219.42,60.25) .. (219.43,59.98) .. controls (219.44,59.7) and (219.67,59.48) .. (219.94,59.49) .. controls (220.22,59.5) and (220.44,59.73) .. (220.43,60) -- cycle ;
                
%Shape: Circle [id:dp5937009819768183]
                \draw  [fill={rgb, 255:red, 128; green, 128; blue, 128 }  ,fill opacity=1 ] (212.48,60) .. controls (212.45,61.38) and (211.3,62.47) .. (209.92,62.44) .. controls (208.54,62.4) and (207.45,61.26) .. (207.48,59.88) .. controls (207.52,58.5) and (208.66,57.41) .. (210.04,57.44) .. controls (211.42,57.48) and (212.52,58.62) .. (212.48,60) -- cycle ;
% Text Node
                \draw (96.89,38.00) node [anchor=north west][inner sep=0.75pt]    {$u_{2}$};
% Text Node
                \draw (34.16,46.46) node [anchor=north west][inner sep=0.75pt]    {$u_{1}$};
% Text Node
                \draw (53.23,117.49) node [anchor=north west][inner sep=0.75pt]    {$u_{3}$};
% Text Node
                \draw (153.63,128.46) node [anchor=north west][inner sep=0.75pt]    {$u_{1}$};
% Text Node
                \draw (212.23,110.19) node [anchor=north west][inner sep=0.75pt]    {$u_{2}$};
% Text Node
                \draw (274.56,128.83) node [anchor=north west][inner sep=0.75pt]    {$u_{3}$};
% Text Node
                \draw (194.5,55.5) node [anchor=north west][inner sep=0.75pt]  [font=\Huge,rotate=-270]  {$\Bigr\}$};
% Text Node
                \draw (212.33,29.9) node [anchor=north west][inner sep=0.75pt]    {$Z$};
            \end{tikzpicture}
            \caption{Sketch of the construction of \Cref{thm-stars-no-poly-kernel-c}. 
            The left side shows the \textsc{Hitting Set} instance, and the right side shows the \textsc{Stars NWS} instance.
            The \textcolor{gray}{grey} vertices are the new vertices in the set~$Z$.
            The different line styles show the mapping between the sets in the \textsc{Hitting Set} instance and the communities in the \textsc{Stars NWS} instance.}\label{fig:construction-example-no-poly-kernel-sparse-stars}
        \end{figure}
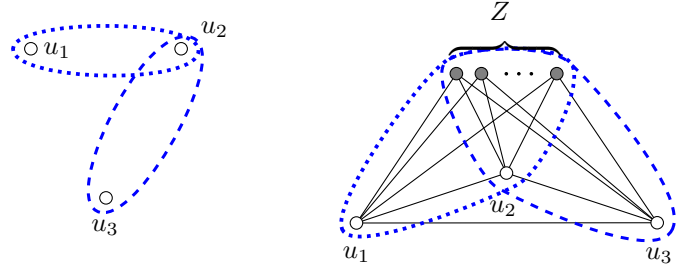

        \textbf{Correctness:}
        We show that~$I_{\textsc{HS}}$ is a yes-instance of \textsc{Hitting Set} if and only if~$I_{\textsc{SNS}}$ is a yes-instance of \UStars.

        $(\Rightarrow)$
        Let $S$ be a hitting set of size at most $k$.
        We show how to obtain a solution~$G' = (V, E')$ with~$|E'| \leq \ell$.
        We set~$E' \coloneqq \{\{s,z\} : s \in S, z \in Z\} \cup \{\{u,s\} : s \in S, u \in U \setminus \{s\}\} $.
        Observe that~$|E'| \leq k \cdot (|Z| + |U| - 1)  \leq k \cdot |U|^3 + |U|^2$.
        Recall that for each community~$C_F \in \mathcal{C}$ there exists a set~$F \in \mathcal{F}$  such that~$C_F = F \cup Z$.
        Since~$S$ is a hitting set, for each community~$C_F$, there exists at least one element~$s \in S$ with~$s \in C_F$.
        Such an element~$s$ is the center of a spanning star in~$G'[C_F]$ because~$\{\{s,z\} : z \in Z\} \subseteq E'$ and~$\{\{s,y\} : y \in C_F \setminus Z\} \subseteq \{\{s,u\} : u \in U\} \subseteq E'$.
        Thus, $G'$ is a solution which implies that $I_\textsc{SNS}$ is a yes-instance of \UStars.

        $(\Leftarrow)$
        Let~$I_{\textsc{SNS}}$ be a yes-instance of \UStars and let~$G' = (V, E')$ be a solution where~$\cen: \mathcal{C} \to V$ denotes the mapping of communities to some center vertex in~$G'$.
        We set~$S \coloneqq \{\cen(C_F) : C_F \in \mathcal{C}\}$ and show that~$S$ is a hitting set for~$I_{\textsc{HS}}$ of size at most~$k$.
        Since~$Z \subseteq C_F$ and~$Z$ is an independent set in~$G$, we obtain that~$\cen(C_F) \notin Z$ for each community~$C_F \in \mathcal{C}$.
        This implies that~$S \subseteq U$.
        Since~$|E'| \leq \ell = k \cdot |U|^3 + |U|^2$ and~$|U|\ge 2$, we conclude that $|S| \leq k$.
        Since for each~$F \in \mathcal{F}$, there exists a community~$C_F$ such that~$F = C_F \setminus Z$, we conclude that for each~$C_F\in\mathcal{F}$ there exists at least one element~$u=\cen(C)$ with~$u \in S$ and~$u \in F$.
        Thus, $X$ is a hitting set with~$|S| \leq k$ and~$I_\textsc{HS}$ is a yes-instance of \textsc{Hitting Set}.
    \end{proof}
\fi

    \section{Conclusion}
    
Presumably the most interesting open question is whether~\Stars parameterized by~$t$ admits an FPT-algorithm.
In~\Cref{thm:fitting} we showed that~\Stars can be solved in polynomial time if for some optimal solution the edge set of all local cycles is known. Hence, to obtain an FPT-algorithm, it would be sufficient to find such an edge set in FPT-time.
\iflonglong
In our research, we further discovered the following: in polynomial time we obtain an equivalent instance of~\Stars where the size of each community is upper-bounded by~$g(t)$ for some computable function~$g$\iflong\footnote{Intuitively, this can be done by three reduction rules: the first one ensures that two communities intersect only in~$\Oh(t)$ vertices.
The second rule then uses sunflower-arguments to bound for any given community~$C$ the number of vertices that are contained in intersections of size at least two of~$C$ with any other community.
The final rule then bounds the remaining vertices of~$C$.}\fi.
Since this result did not provide any kernelization or FPT-algorithm, we did not include it in \iflong{}this work\else{}the extended abstract\fi{}.
Still, this results implies the following:
To find (a superset of) the edge sets of all local cycles of any optimal solution, one only needs to determine the communities that induce these local cycles, since finding the concrete local cycles can then be done by branching on each of the at most $g(t)$~many vertices as possible centers for each of these communities in FPT-time for~$t$. 
Based on this observation, we conjecture that at least for the unweighted version, these communities can be determined in FPT-time for~$t$.\fi
%In other words, we conjecture that~\UStars admits an FPT-algorithm when parameterized by~$t$.  
Moreover, it is open whether \UCon{} can be solved in polynomial time when~$t$ is constant and the input graph is a clique. In other words, it is open whether a minimum-edge hypergraph support can be found in polynomial time when it has a constant feedback edge number. 

It would also be interesting to close the gap between the running time lower bound of~$2^{\Omega(c)}\cdot\poly(|I|)$ (see \Cref{prop-eth-bound-m-and-c}) and the upper bound of $2^{\Oh(c^2)}\cdot\poly(|I|)$ (see \Cref{thm-stars-fpt-c}) for \UStars.

    Finally, it is interesting to study $\Pi$-\textsc{NWS} for other graph properties.
    For example, Fluschnik and Kellerhals~\cite{FK21} \iflong\else{}and Wallisch~\cite{W23} \fi{}studied a variant of \textsc{Unweighted}~$\Pi$-\textsc{NWS}, denoted as \textsc{$d$-Diam NWS}, where the subgraph induced by each community is required to have diameter at most~$d$.
  \iflong{}  They showed that the problem is linear-time solvable for~$d=1$ and that it is NP-hard even if~$c=1$ and~$d=2$.
    Furthermore, Wallisch~\cite{W23} showed that \textsc{$2$-Diam NWS} is NP-hard even if~$\Delta=5$ and each community has size at most~$3$.
    
    \else
    For each~$d\geq 2$, \textsc{$d$-Diam NWS} is NP-hard~\cite{FK21}.
\fi    
     Our XP-algorithm for \Stars{} parameterized by~$t$ implies that \textsc{$d$-Diam NWS} can be solved in polynomial time when~$d=2$ and we are searching for a forest, that is, when~$t=0$ because in that case, stars are the only possibility to achieve diameter~2. 
    Can this positive result be extended to an XP-algorithm for~$t$?
    Is it also possible to achieve a polynomial-time algorithm for~$t=0$ when~$d\ge 3$? 
    
  The proofs of our ETH-bounds for \UCon and \UStars can also be used to exclude $2^{o(n^2+c)}\cdot\poly(n+c)$~time algorithms for \textsc{$d$-Diam NWS}:
  Recall that in \Cref{cor-eth-based-lowerbound-for-n-square-connected} we showed this lower bound for \UCon even if each community has size at most~$4$.
  Since for communities of size at most~$4$ the properties of being connected and having diameter at most~$d$ for~$d\ge 3$ coincide, the proof of \Cref{cor-eth-based-lowerbound-for-n-square-connected} directly implies the same running time  lower bound for~\textsc{$d$-Diam NWS} with~$d\ge 3$.
For the case~$d=2$,  the proof of \Cref{thm:eth-based-lowerbound-for-n-square} can be used to obtain the lower bound:
 For communities of size at most~$3$, the properties of having a spanning star and having diameter at most~$2$ coincide. Communities of size~$4$ with diameter 2 either have a spanning star or contain a cycle of length~$4$.
 In the instance constructed in~\Cref{thm:eth-based-lowerbound-for-n-square}, however, the solution may not contain such cycles since the budget is tight. Hence, we obtain the following.
 \begin{corollary}
\label{cor-eth-based-lowerbound-for-n-square-diameter}
If the ETH is true, then \textsc{$d$-Diam NWS}, for each~$d\ge 2$, cannot be solved in $2^{o(n^2+c)} \cdot \poly(n+c)$~time even if~$G$ is a clique and each community has size at most~$4$.
\end{corollary}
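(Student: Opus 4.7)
The plan is to reuse the two ETH-based constructions already given, separating the argument by whether $d\ge 3$ or $d=2$. In both cases, the goal is to show that a polynomial-time transformation sends yes/no-instances of \UCon{} or \UStars{} (on the restricted inputs already used) to yes/no-instances of \textsc{$d$-Diam NWS} with identical budget and essentially the same graph. Since both source problems admit no $2^{o(n^2+c)}\cdot\poly(n+c)$-time algorithm under ETH on inputs with community size at most~$4$ and $G$ a clique, the same bound then transfers to \textsc{$d$-Diam NWS}.

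For $d\ge 3$, I would invoke \Cref{cor-eth-based-lowerbound-for-n-square-connected} verbatim: take the same instance (clique $G$, communities of size at most~$4$, unit weights) and interpret it as an instance of \textsc{$d$-Diam NWS}. The key observation is that on at most~$4$ vertices, any connected subgraph has diameter at most~$3\le d$, and any subgraph with finite diameter is connected. Hence \Con{} and \textsc{$d$-Diam NWS} have exactly the same feasible solutions on this family of instances, and the ETH bound carries over unchanged.

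The case $d=2$ is the main obstacle, because on a 4-vertex community there is a connected subgraph that has diameter~$2$ but is not a star, namely $C_4$. Here I would reuse the construction from \Cref{thm:eth-based-lowerbound-for-n-square} and, viewed as an instance of \textsc{$2$-Diam NWS}, show the same equivalence as before. The forward direction is immediate: any \UStars-solution also realises diameter at most~$2$ in each community, since each community has a spanning star. For the reverse direction I would re-examine the counting argument that yields \cref{claim-parameter-l-observation-1,claim-parameter-l-observation-2,claim-parameter-l-observation-3}: those claims only rely on the fixed-edge subcommunities of size~$2$ and on the two- respectively three-vertex subcommunities, and hold regardless of whether the global requirement is ``star'' or ``diameter~$2$''. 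Consequently any \textsc{$2$-Diam NWS}-solution still contains all fixed edges, exactly $|X|$~selection edges, and exactly $2|\Gamma|$~free edges; the budget $\ell$ is tight and leaves no room for adding a fourth edge inside a community of size~$4$ to create a $C_4$ instead of a star. Therefore every solution necessarily induces a spanning star in each size-$4$ community, which reduces the problem to \UStars{} and transports the lower bound.

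If any step in the $d=2$ argument turned out to be subtle, it would be verifying that no solution has room to substitute a $C_4$ for a spanning star in some clause-gadget community while saving an edge elsewhere. Because the three claims above are tight and account for the entire budget, any such substitution would violate at least one of them, so the budget-tightness is doing the real work and needs to be stated explicitly. With that in place, both reductions preserve the parameters $n$ and $c$ up to constants, and the $2^{o(n^2+c)}\cdot\poly(n+c)$ lower bound follows directly.
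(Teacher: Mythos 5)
Your proposal is correct and follows essentially the same two-case structure as the paper's argument: invoke \Cref{cor-eth-based-lowerbound-for-n-square-connected} for $d\ge 3$ (connectivity coincides with diameter at most~$d$ on communities of size~$\le 4$), and reuse the \Cref{thm:eth-based-lowerbound-for-n-square} construction for $d=2$ by arguing that the tight budget rules out a $C_4$ replacing a spanning star in any size-$4$ community. One small imprecision: in the $d=2$ case, the issue is not that there is ``no room for a fourth edge'' inside a size-$4$ assignment community — a valid solution does place a fourth edge there (the free or selection edge), and that fourth edge always yields a star, never a $C_4$. The real point, which your budget argument does correctly deliver, is that a $C_4$ on the four vertices of an assignment community would require the one edge between the $U$-vertex and the $Y$-vertex, and that edge is neither fixed, nor a selection edge, nor a free edge, so the tight budget can never afford it.
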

% Thus, all problems considered in this work allow trivial $2^m\cdot c\cdot \poly(n)$~time algorithm which are tight according to the ETH.
% For future work, it is interesting to search for network sparsification problems which allow $2^{\Oh(n)}\cdot\poly(n,c)$~time algorithms which are tight assuming that the ETH is true.
In light of these further hardness results we may ask the following: are there properties~$\Pi$ such that $\Pi$-NWS is NP-hard but can be solved in~$2^{\Oh(n)}\cdot \poly(n+c)$ time?

\end{document}